\newcommand{\nw}{\wireoverride{n}}
\newcommand{\mctrl}[1]{
  \gate[style={inner sep=-1, minimum height=1.pt, minimum width=1pt}]{}\vqw{#1}
}
\newcommand{\select}{\mbox{S\scriptsize ELECT}}
\newcommand{\sel}{\mbox{S\scriptsize EL}}
\newcommand{\sw}{\mbox{S\scriptsize{WAP}}}
\newcommand{\up}{\mbox{U\scriptsize{P}}}
\newcommand{\swup}{\mbox{\sw \up}}
\newcommand{\selswap}{\mbox{\select \sw}}
\newcommand{\adder}{\mbox{A\scriptsize{DDER}}}
\newcommand{\data}{\mbox{D\scriptsize{ATA}}}
\definecolor{quantinuumteal}{cmyk}{0.44, 0.0000, 0.07, 0.37}
\theoremstyle{definition}
\newtheorem{definition}{Definition}
\newtheorem{theorem}{Theorem}
\newtheorem{proposition}{Proposition}
\newtheorem{corollary}{Corollary}
\newtheorem{lemma}[theorem]{Lemma}
\newtheorem*{example}{Example}
\theoremstyle{remark}
\newtheorem*{remark}{Remark}
\begin{document}
\title{The Quantum Paldus Transform: Efficient Circuits with Applications}

\author{Jędrzej Burkat}
\email{jbb55@cam.ac.uk}
\affiliation{Quantinuum, Terrington House, 13--15 Hills Road, Cambridge CB2 1NL, UK \\ Cavendish Laboratory, Department of Physics, University of Cambridge, Cambridge CB3 0US, UK}

\author{Nathan Fitzpatrick}
\email{nathan.fitzpatrick@quantinuum.com}
\affiliation{Quantinuum, Terrington House, 13--15 Hills Road, Cambridge CB2 1NL, UK}


\begin{abstract}

We present the Quantum Paldus Transform: an efficient quantum algorithm for block-diagonalising fermionic, spin-free Hamiltonians in the second quantisation. Our algorithm implements an isometry between the occupation number basis of a fermionic Fock space of $2d$ modes, and the Gelfand-Tsetlin (GT) states spanning irreducible representations of the group $U(d) \times SU(2)$. The latter forms a basis indexed by well-defined values of total particle number $N$, global spin $S$, spin projection $M$, and $U(d)$ GT patterns. This realises the antisymmetric unitary-unitary duality discovered by Howe and developed into the Unitary Group Approach (UGA) for computational chemistry by Paldus and Shavitt in the 1970s. The Paldus transform lends tools from the UGA readily applicable to quantum computational chemistry, leading to maximally sparse representations of spin-free Hamiltonians, efficient preparation of Configuration State Functions, and a direct interpretation of quantum chemistry reduced density matrix elements in terms of $SU(2)$ angular momentum coupling. The transform also enables the encoding of quantum information into novel Decoherence-Free Subsystems for use in communication and error mitigation. Our work can be seen as a generalisation of the quantum Schur transform for the second quantisation, made tractable by the Pauli exclusion principle. Alongside self-contained derivations of the underlying dualities we provide fault-tolerant circuit compilation methods with full gate counts for the Paldus transform, resulting in $\mathcal{O}(d^3)$ Toffoli complexity, where a transform on $50$ spatial orbitals would require a modest $5500$ Toffoli gates. This paves the way for significant advancements in quantum simulation on quantum computers enabled by the UGA paradigm.

\end{abstract}

\maketitle
\onecolumngrid

\tableofcontents

\section{Introduction}

Group and representation theory is an indispensable tool in quantum mechanics. In fundamental physics, it is widely known for its success in the formation of the standard model, and has since become a framework for systematising the search for grand unified theories. In quantum chemistry, point groups are a standard tool for classifying molecules and predicting their properties. In the theory of quantum information, groups and representations have played an equally important role. Many problems where quantum computers are set to achieve the most significant speed-ups (such as database search and factoring) reduce to hidden subgroup problems (HSPs) \cite{mosca1999, Jozsa_2001, lomont2004}. The group Fourier transform forms a key subroutine in quantum algorithms for solving such problems. However, it is not limited to HSPs in its utility, also appearing in another protocol known as generalised phase estimation \cite{harrowphd, BCH_2005, childs2007}, with applications in estimating representation-theoretic quantities \cite{bravyi2024kronecker, larocca2025} or quantum property testing and tomography \cite{hu2024, montanaro2018}.

Another related class of group-theoretic quantum algorithms are symmetry-adapted basis transformations, which map the computational basis into disjoint subspaces of the Hilbert space. Such subspaces are closed under the action of a group, meaning that group action (represented through a unitary matrix) takes on a block-diagonal form in the new basis. In the language of representation theory, this constitutes a transform into invariant subspaces of the irreducible representations of the group: the new, block-diagonal form of the group action is exactly its decomposition into irreps. The most well-known of these is the \textit{quantum Schur transform} \cite{BCH_2005, BCH_2006, harrowphd, Krovi_2019, wills2024}, followed by the recently introduced mixed Schur transform \cite{nguyen2023, grinko2023, Grinko_2025}. The two protocols form a subroutine in a variety of quantum information tasks, from tomography and learning \cite{Keyl_2001, christandl2006spectra} to quantum communication \cite{Bartlett_2003}, error mitigation \cite{Kempe_2001, lidar_2014}, as well as Hamiltonian simulation \cite{Gu_2021} and equivariant quantum machine learning \cite{zheng2022, Zheng_2023, Nguyen_2024}. At the heart of these applications lies the ability to decompose the representations of $U(d) \times S_n$ on the computational basis of $n$ qudits, given by qudit permutations (representing the symmetric group $S_n$) and tensor product of single-qudit gates $U^{\otimes n}$ (representing the unitary group $U(d)$). This is achieved by transforming the computational basis into the \textit{Schur basis}, under which the group action takes on a block-diagonal form and realises the isomorphism:

\begin{equation}
(\mathbb{C}^d)^{\otimes n} \overset{S_n \times U(n)}{\cong} \ \bigoplus_{\lambda} W^{U(d)}_\lambda \otimes W^{S_n}_\lambda \label{eq:swduality}\\
\end{equation}

\noindent By Schur-Weyl duality, the multiplicities of $U(d)$ irreps are the dimension of $S_n$ irreps, and vice versa. Typical implementations of the Schur transform output the Schur basis in the tensor product form $\ket{\lambda} \otimes \ket{q} \otimes \ket{p}$, where $\lambda$ labels a joint irrep of the symmetric and unitary groups. Under $U(d)$ ($S_n$) group action only the $\ket{q}$ ($\ket{p}$) register is affected, with the other $\ket{p}$ ($\ket{q}$) register left untouched as an irrep multiplicity label. By virtue of the tensor product form, in the Schur basis information about the irrep index and multiplicity for either group can be directly manipulated, or retrieved by a measurement. 

Over the last half century, similar techniques have also been used to great effect in classical computational quantum chemistry, through a formalism known as the \textit{Unitary Group Approach} (UGA) \cite{Paldus2020, Paldus2020a, Paldus2020b}. The key observation of UGA is that the chemical spin-free Hamiltonians can effectively be treated as a representation of the $\mathfrak{u}(d) \times \mathfrak{u}(2)$ Lie algebra, and thus their action will block-diagonalise in an appropriate basis, known as the \textit{Gelfand-Tsetlin (GT) basis}. Here $\mathfrak{u}(d)$ acts on the $d$-dimensional spatial orbital Hilbert space, and $\mathfrak{u}(2)$ acts on the $2$-dimensional spin Hilbert space. By working in the GT basis, the UGA has enabled more efficient calculations of spin-free Hamiltonian elements owing to their increased sparsity, as well as simple rules for their evaluation~\cite{Robb1984}. Up until now, it was not known how to implement this formalism in a quantum computational setting. In this work, we bridge the gap between the classical unitary group approach and quantum computational chemistry. For a fermionic system of $d$ orbitals (i.e. $2d$ fermionic modes) we derive rules for decomposing the Fock space into symmetry sectors with common overall spin ($S$) and particle number ($N$), spanned by GT states indexed by their spin projection ($M$) and $U(d)$ Gelfand-Tsetlin patterns. The latter of these indices can be easily compressed into a $2d$-bit string $\mathbf{d}$, referred to as a \textit{step vector}. Through associating evolution by spin-free Hamiltonians on the fermionic Fock space with the representation of the group $U(d) \times SU(2)$ on the antisymmetric \textit{exterior product space}, the fermionic Fock space can be shown to decompose into irrep spaces as:

\begin{equation}
    \bigwedge( \mathbb{C}^d \otimes \mathbb{C}^2 ) \overset{U(d) \times SU(2)}{\cong} \ \bigoplus_{N=0}^{2d} \bigoplus_{S=0}^{N/2} W^{U(d)}_{N, S} \otimes W^{SU(2)}_{S} \label{eq:intro_paldus_duality}
\end{equation}

In the antisymmetric representation, the groups $U(d)$ and $SU(2)$ constitute a \textit{reductive dual pair}, leading to a specific instance of $GL(n) \times GL(m)$ \textit{Howe duality} \cite{Howe1989, Howe_1995}, which we refer to as \textit{Paldus duality}. Remarkably, Paldus duality and its corollaries are a physically motivated and direct consequence of the Pauli exclusion principle, which forbids any two electrons inhabiting one spatial orbital from carrying an equal spin projection, and mathematically indentifies the multi-electron Fock space with an exterior product of the one-electron Hilbert space. To the best of our knowledge, this work presents the first explicit realisation of this duality within a quantum computational framework. We introduce a quantum algorithm that implements the isomorphism in Equation \eqref{eq:intro_paldus_duality}, termed the \textit{Quantum Paldus Transform}. This transformation efficiently block-diagonalises spin-free quantum chemistry Hamiltonians, as well as other Hamiltonians of analogous structure such as the Fermi-Hubbard model. Formally, the quantum Paldus transform is an isometry between Gelfand-Tsetlin states expressed in the computational basis (as superpositions over bitstrings) and the \textit{UGA basis}, wherein each index of the GT state is encoded in a distinct quantum register:

\begin{equation}
     \ket{N, S, M; \mathbf{d}} \overset{U_{\text{Paldus}}}{\longrightarrow} \ \ket{N} \otimes \ket{S} \otimes \ket{M} \otimes \ket{\mathbf{d}} \label{eq:intro_uga_basis}
\end{equation}

Our algorithm for the quantum Paldus transform is efficient, and we provide an explicit circuit construction requiring $\mathcal{O}(d^3 \text{poly}(\log d))$ two-qubit gates and $\mathcal{O}(\log(d))$ ancillary qubits, together with a detailed $T$ gate and Toffoli resource estimate for a fault-tolerant implementation. However, it can also be performed in linear time complexity: we show that with the $SU(2)$ Clebsch-Gordan transform construction of \cite{BCH_2005} as a subroutine, the Paldus transform admits an implementation with $\mathcal{O}(d \log d)$ two-qubit gates. Following the algorithm, we discuss applications of the Paldus transform in, and beyond the realm of quantum computational chemistry: configuration state function (CSF) preparation, maximally-sparse Hamiltonian simulation, as well as encoding quantum information into novel decoherence-free subsystems (DFSs). 

\subsection{Summary of Results and Outline}

Our main contributions are as follows:
\begin{itemize}
    \item We derive an algorithm for the \textit{Quantum Paldus Transform}: an isometry between the Fock space basis and the Gelfand-Tsetlin basis which generalises the qubit Schur transform to the second quantisation formalism. We provide \textit{two} efficient implementations: an explicit algorithm readily implementable on near-term quantum hardware, as well as a theoretical algorithm with linear complexity making use of the Clebsch-Gordan transforms of \cite{BCH_2005, BCH_2006}. In addition, we give detailed resource estimates for the algorithm using modern circuit primitives in Appendix 
    \ref{sec:circuit_compilation}.
    
    \item We consider the \textit{ladder operators} $E_{ij} = \sum_\mu a_{i \mu}^\dagger a_{j \mu}$ and $\mathcal{E}_{\mu \nu} = \sum_{i} a_{i \mu}^\dagger a_{i \nu}$ constructed out of fermionic creation and annihilation operators, and show that they form a representation of the Lie algebra $\mathfrak{u}(d) \oplus \mathfrak{u}(2)$. This allows us to identify the full set of unitaries block-diagonised under the GT basis (and thus the quantum Paldus transform) to be exponents of Hamiltonians in the \textit{universal enveloping algebra} of $\mathfrak{u}(d) \oplus \mathfrak{u}(2)$. This is shown to include the Fermi-Hubbard Hamiltonian, as well as any second-quantised spin-free Hamiltonian of the form: 
    \begin{equation}
        H = \sum_{ij, \mu} h_{ij} a_{i \mu}^\dagger a_{j \mu} + \frac{1}{2} \sum_{ijkl, \mu \nu} v_{ij, kl} a_{i \mu}^\dagger a_{j \nu}^\dagger a_{l \nu} a_{k \mu}
    \end{equation}
    
    \item As an example, we explicitly work out the form of $U(d)$ and $U(2)$ representations generated by Hamiltonians belonging to $\mathfrak{u}(d)$ or $\mathfrak{u}(2)$ -- single-body Hamiltonians exclusively containing terms linear in $E_{ij}, \mathcal{E}_{\mu \nu}$ -- and show that such unitaries always form instances of \textit{matchgate circuits} composed of parity-preserving two-qubit gates. As arbitrary matchgate circuits do not typically block-diagonalise beyond subspaces of constant Hamming weight, this identifies a new class of matchgate circuits with a maximally-sparse form under the GT basis.
\end{itemize}

In addition to our algorithm, we offer a selection of applications for the quantum Paldus transform:
\begin{itemize}
    \item \textbf{Hamiltonian Simulation.} Our algorithm block-diagonalises a wide range of Hamiltonians as well as their induced unitary evolution. By virtue of the irrep decomposition, this is the \textit{most sparse representation} achievable for spin-free Hamiltonians -- meaning that the symmetry sectors which it acts on have the lowest possible dimensions. We anticipate this property to become extremely useful in the context of quantum simulation, where a dimensionality reduction often leads to a speed-up. One method for achieving this speed-up is the Hamiltonian fast-forwarding algorithm of \cite{Gu_2021}, where the structure of the Schur transform is used to implement each `block' of $e^{-iHt}$ as a unitary controlled on an irrep label. This strategy is especially promising when evolving wavefunctions constrained to a given symmetry sector: then, Hamiltonian simulation only needs to be carried out in the (polynomially smaller) block that the state is known to reside in.   
    \item \textbf{Preparation of Configuration State Functions (CSFs).} Our work allows for an efficient preparation of any CSF belonging to a given symmetry sector, i.e. a wavefunction with well-defined values of $N, S, M$. This is done by preparing the desired wavefunction in the basis to the right of Equation \eqref{eq:intro_uga_basis}, and applying the inverse Paldus transform to map it into the occupation (Fock state) basis. Similarly, one can project any input state into a given symmetry sector by applying the Paldus transform and measuring the $N, S, M$ registers of the output. To give an example, we provide an $\mathcal{O}(\sqrt{d})$ time protocol for preparing uniform superpositions of all CSFs, i.e. states of the form:
    \[
        \ket{\Psi} = \frac{1}{\sqrt{|\{ \mathbf{d} \}|}}  \sum_{\mathbf{w} \in \{ \mathbf{d} \} } \ket{N}_N \otimes \ket{S}_S \otimes \ket{S}_M \otimes \ket{\mathbf{w}}_\mathbf{d}
    \]
    where the $SU(2)$ irrep labels are set to their highest values $M = S$, and $\{ \mathbf{d} \}$ is the set of all valid $U(d)$ GT patterns. 
    \item \textbf{Encoding and Decoding into Decoherence-Free Subsystems (DFSs).} Taking advantage of the underlying $U(d) \times U(2)$ duality in the Paldus transform we are able to construct new types of DFSs made up of unitary evolution by Hamiltonians in the universal enveloping algebra $\mathfrak{U}(\mathfrak{u}(2))$. Considering a simple example, we show that this allows for protecting quantum information against decoherence by unitaries $U^{\otimes d}$, where $U$ is a member of the group $U(2)$ embedded in a two-qubit gate. Notably, this is a gate with three nontrivial parameters whose action the `standard' qubit Schur transform cannot protect against. 
\end{itemize}

Finally, we offer self-contained derivations of the mathematical techniques which enable our construction:
\begin{itemize}
     \item Using the theory of symmetric polynomials, we prove the \textit{antisymmetric unitary-unitary duality} between the groups $U(m)$ and $U(n)$ in the antisymmetric representation of the group $U(mn)$. As the two groups form a reductive dual pair, this allows the decomposition of an exterior product space $\bigwedge( \mathbb{C}^m \otimes \mathbb{C}^n )$ into tensor products of irrep spaces of $U(m)$ and $U(n)$. We pay particular attention to the case of $U(d) \times SU(2)$, where we interpret the exterior product space of the representation as the Fock space of $2d$ fermionic modes.
    
    \item Focusing specifically on the $U(d) \times SU(2)$ example, we derive the branching rules of this representation under group subduction and lifting. The branching is multiplicity-free, and allows for the construction of a Gelfand-Tsetlin basis for the antisymmetric representation, which we explicitly identify. We relate our derivations to the theory of \textit{Shavitt graphs}: structures that resemble Bratelli diagrams and form a useful graphical tool in the unitary group approach. 
\end{itemize}

\

Our paper is structured as follows: in Section \ref{sec:background} we begin by deriving the Paldus duality and Gelfand-Tsetlin basis, which form the mathematical preliminaries of the Paldus transform. Using basic properties of Schur polynomials, we derive the antisymmetric unitary-unitary duality, with the $U(d) \times SU(2)$ Paldus duality as a corollary. We explicitly construct the relevant Gelfand-Tsetlin states, clarifying their relation to Shavitt graphs and $SU(2)$ spin coupling in the process. In Section \ref{sec:paldus_transform} we define the quantum Paldus transform, and present a circuit for its efficient implementation. We take a pedagogical approach, giving detailed specifics of each step in the circuit, which we hope finds use for readers unfamiliar with the related Schur transform. Section \ref{sec:applications} is dedicated to applications of the Paldus transform: we begin by introducing the unitary group approach and representation-theoretic properties of spin-free Hamiltonians, with examples of the $U(d) \times U(2)$ representations they induce (and which block-diagonalise under the algorithm). Applications to Hamiltonian evolution, state preparation, and decoherence-free subsystems are then presented along with contrasts between our algorithm and the Schur transform. 

The rest of the paper contains additional derivations and proofs: Appendix \ref{app:uga_history} gives a brief history of the UGA. Appendix \ref{app:liegroups} states key results from the theory of Lie groups and representations used throughout the main text. Appendix \ref{app:symmetricpolynomials} reviews the relevant theory of symmetric polynomials, including proofs of the Cauchy identity and Pieri's rule referred to in the main text. Appendix \ref{app:dualities} contains derivations of dualities used in the main text, including Schur-Weyl duality and the double centralizer theorem. Appendix \ref{app:unitarygroup} derives branching rules for subductions of $U(n)$ representations from the perspective of symmetric polynomials. Appendix \ref{app:dimension_formula} contains a short calculation of the dimensionality of the $U(d) \times SU(2)$ Gelfand-Tsetlin basis. Finally, Appendix \ref{sec:circuit_compilation} gives details and thorough resource estimates of how to implement the quantum Paldus transform via leading circuit compilation techniques for data lookup, multiplexing and incrementations in the algorithm. 

Given the extensive page count, we recommend a selective reading: Section \ref{sec:background} may be initially skipped by those primarily interested in the algorithm or its applications, whereas individual appendices mostly form self-contained, additional content for the mathematically minded audience and practitioners looking to implement our algorithm.

\section{Background} \label{sec:background}

In this section we present mathematical results which enable the construction of the Paldus transform, and the associated Gelfand-Tsetlin basis. By studying \textit{antisymmetric representations} of the group $U(m) \times U(n)$ through the theory of symmetric polynomials we will arrive at the Paldus duality, which allows for a partition of the fermionic Fock space into symmetry sectors indexed by the quantum numbers $( N, S, M, \mathbf{d})$. Practically, this is achieved by mapping the computational basis into the Gelfand-Tsetlin basis, whose explicit form we derive. Such states -- along with the subgroup branching used to obtain them -- admit a simple graphical representation through a tool known as the Shavitt graph, which we elucidate towards the end. Supplementary to this section are Appendices \ref{app:symmetricpolynomials}, \ref{app:dualities} and \ref{app:unitarygroup}, which provide additional derivations, as well as a self-contained introduction into the theory of symmetric polynomials. Whilst the intention of this section is to build intuition and provide the minimal mathematical background required ahead of Section \ref{sec:paldus_transform}, it may be skipped on a first reading. 

\subsection{Second Quantisation and the Exterior Product Space}\label{sec:ON}

As the antisymmetric representation of groups relates very closely to the Fock space representation of quantum systems, we begin with a short review of second quantisation. Throughout the text, we will identify the Fock space of $2d$ fermionic mode with the exterior algebra $\bigwedge \mathbb{C}^{2d}$ of a single-particle Hilbert space. The equivalence between the two is well-known, and we provide a brief overview here for completeness. We will consider a chemical system of $d$ spatial orbitals occupied by electrons, each of which (by the Pauli exclusion principle) can only contain up to two electrons with opposite spin projections -- i.e, a system with $d$ spatial degrees of freedom and $2$ spin degrees of freedom. A single electron may be then represented as a vector in the $2d$-dimensional single-particle space $\mathbb{C}^{2d} \cong \mathbb{C}^d \otimes \mathbb{C}^2$, with the basis states $\ket{e_1}, \ket{e_2}, \ldots, \ket{e_{2d}}$ representing the spin-orbital it occupies. For two electrons, we can then represent the state of the system as a vector in the tensor product space $\mathbb{C}^{2d} \otimes \mathbb{C}^{2d}$, with the basis states $\ket{e_i} \otimes \ket{e_j}$ representing the occupation of spin-orbitals $i$ and $j$. However, for such a representation to be physical, we must also impose the indistinguishability of two particles and antisymmetry of their state vector under permutation. Thus, the two-particle states can be represented as vectors of the form:

\begin{equation}
\ket{i, j} = \frac{1}{\sqrt{2}}\left(\ket{e_i} \otimes \ket{e_j} - \ket{e_j} \otimes \ket{e_i}\right) \ \ i < j \in \llbracket 2d \rrbracket^2
\end{equation}

\noindent Such a state is antisymmetric under the exchange of particles represented by the action of a permutation $\hat{\sigma}_{ij}$:  $\hat{\sigma}_{ij} \ket{i, j} = - \ket{i, j}$. In particular, this ensures that $\ket{i, i} = 0$ so the antisymmetrised basis states obey the Pauli exclusion principle.  Moving an electron into an \textit{unoccupied} orbital does not necessarily incur a negative sign:

\begin{equation}
  \hat{\sigma}_{ik} \ket{i, j} = \frac{1}{\sqrt{2}}\left(\ket{e_k} \otimes \ket{e_j} - \ket{e_j} \otimes \ket{e_k}\right) 
  = \begin{cases} \ket{k, j} & \text{ for } k < j \\ -\ket{k, j} & \text{ for } k > j \end{cases}
\end{equation}

\noindent Keeping in mind that all valid states representing two electrons must obey this antisymmetry, we can denote their space as the antisymmetric tensor product space $\bigwedge^2 \mathbb{C}^{2d} $. Extending this idea, we can write any state $\ket{i_1, \ldots, i_k}$ of $k$ electrons inhabiting the spin-orbitals $\{ i_1, \ldots, i_k \}$ as an antisymmetrised tensor product:

\begin{align}
  \ket{i_1, i_2, \ldots, i_k}  
  & = \frac{1}{\sqrt{k!}} \sum_{\sigma \in S_k} \text{sgn}(\sigma) \ket{e_{i_{\sigma(1)}}} \otimes \ket{e_{i_{\sigma(2)}}} \otimes \ldots \otimes \ket{e_{i_{\sigma(k)}}}, \ {1 \leq i_1 < \ldots < i_k \leq 2d} \label{eq:wedgeproductstate} \\
  &\in \bigwedge^k \mathbb{C}^{2d}
\end{align}

\noindent where $\dim (  \bigwedge^k \mathbb{C}^{2d} ) = {2d \choose k}$. We can then take the direct sum of $k$-particle spaces to obtain the space of $\{0, 1, \hdots, m \}$-electron states:

\begin{equation}
  \bigwedge^{\bullet m} \mathbb{C}^{2d} = \mathbb{C} \oplus \mathbb{C}^{2d} \oplus \bigwedge^2 \mathbb{C}^{2d} \oplus \cdots \oplus \bigwedge^{m} \mathbb{C}^{2d} = \bigoplus_{k=0}^{m} \bigwedge^k \mathbb{C}^{2d}
\end{equation}

\noindent where each state in $\bigwedge^k \mathbb{C}^{2d} $ is of the form in Equation \eqref{eq:wedgeproductstate} and $ \mathbb{C}$ denotes the one-dimensional vacuum space. For $m > 2d$ the antisymmetric product space $\bigwedge^m \mathbb{C}^{2d} $ is dimensionless, so we also write the full electronic Fock space on $2d$ spin-orbitals as:

\begin{equation}
  \bigwedge \mathbb{C}^{2d} = \bigwedge^{\bullet 2d} \mathbb{C}^{2d}  = \bigoplus_{k=0}^{2d} \bigwedge^k \mathbb{C}^{2d}
\end{equation}

\noindent As we implicitly assume that the indices in Equation \eqref{eq:wedgeproductstate} are ordered, we can write each state as a using a shorthand $\ket{x_1 x_2 \ldots x_{2d}}$, where each $x_i \in \{ 0, 1 \}$ denotes whether the $i^\text{th}$ spin-orbital is occupied or not. For example, $\ket{0000} \cong \ket{\emptyset}$, or $\ket{0111} \cong \ket{2, 3, 4}$. This mapping allows us to represent the antisymmetrised multi-electron wavefunction on a space of $2d$ qubits. The states in $\bigwedge \mathbb{C}^{2d} $ obey the correct exchange symmetry under permutations, whereas for states in $(\mathbb{C}^2)^{\otimes 2d}$ the exchange symmetries are encoded in the action of \textit{creation and annihilation} operators $ a_i, a_j^\dagger$, which satisfy the following anticommutation relations:

\begin{equation}
  \{a_i, a_j\} = \{a_i^\dagger, a_j^\dagger\} = 0, \quad \{a_i, a_j^\dagger\} = \delta_{ij}
\end{equation}

\noindent The creation and annihilation operators map between different $k$-particle spaces as $a_i: \bigwedge^k \mathbb{C}^{2d} \rightarrow \bigwedge^{k-1} \mathbb{C}^{2d}$ and $a^\dagger_i : \bigwedge^k \mathbb{C}^{2d} \rightarrow \bigwedge^{k+1} \mathbb{C}^{2d}$. A qubit state $\ket{\mathbf{x}}$ representing a multi-electron state from $\bigwedge\mathbb{C}^{2d}$ may then be written as:

\begin{equation}
    \ket{\mathbf{x}} =  (a_1^\dagger)^{x_1}  (a_2^\dagger)^{x_2} \hdots  (a_{2d}^\dagger)^{x_{2d}} \ket{\mathbf{0}} \cong \ket{1^{x_1}, 2^{x_2}, \hdots, (2d)^{x_{2d}}} \label{eq:fockstate}
\end{equation}

\noindent A permutation of multi-electron states in $\bigwedge \mathbb{C}^{2d} $ is then represented on bitstrings in $(\mathbb{C}^2)^{\otimes 2d}$ by a permutation of the indices in Equation \eqref{eq:fockstate}:

\begin{equation}
    \hat{\sigma} \ket{\mathbf{x}} =  (a_{\sigma(1)}^\dagger)^{x_{\sigma(1)}}  (a_{\sigma(2)}^\dagger)^{x_{\sigma(2)}} \hdots  (a_{\sigma(2d)}^\dagger)^{x_{\sigma(2d)}} \ket{\mathbf{0}}
\end{equation}

\noindent Which after re-arranging into the original ordering of Equation \eqref{eq:fockstate} (using the anticommutation relations) reproduces the correct $\pm 1$ sign of fermion exchange. 

Clearly, the antisymmetrised tensor products on $\bigwedge \mathbb{C}^{2d}$ are in one-to-one correspondence with the occupation number states on $(\mathbb{C}^2)^{\otimes 2d}$. Consequently, the two spaces are isomorphic as representations of the fully antisymmetric irrep of the symmetric group -- in the former, group action consists of permutations of the tensor products in Equation \eqref{eq:wedgeproductstate}, and in the latter it is given by permutations on the ordering of creation operators in Equation \eqref{eq:fockstate} (assuming an appropriate choice of creation and annihilation operators). We are free to take advantage of this isomorphism; we will work in $\bigwedge \mathbb{C}^{2d}$ for the derivation of Paldus duality, and apply the result to $(\mathbb{C}^2)^{\otimes 2d}$ when explicitly constructing the Gelfand-Tsetlin states and the quantum Paldus transform.

\subsection{Schur Polynomials and Semi-Standard Young Tableaux}\label{sec:sym}

Schur polynomials $s_\lambda(x_1,\ldots,x_d)$ are a family of symmetric polynomials with the defining property of remaining invariant under any permutation of the variables. There is a one-to-one correspondence between the partitions of $d$ into $\lambda = (\lambda_1,\ldots,\lambda_d)$ and the Schur polynomials. A detailed introduction, along with many proofs of the relevant statements, is presented in Appendix~\ref{app:symmetricpolynomials}. In particular, many results concerning Schur polynomials rely on the famous bialternant formula discovered by Cauchy~\cite{Cauchy_2009} as stated in Theorem~\ref{thm:bialternant}. For the purposes of this work it will be sufficient to utilise the Semi-Standard Young Tableaux (SSYT) form of Schur polynomials, which is isomorphic to the conventional Schur polynomial form (as we show in Theorem~\ref{thm:un_branching}).

\begin{definition}[SSYT Schur Polynomial]
  A Semi-Standard Young Tableau (SSYT) is a filling of a Young diagram with positive integers such that the entries are weakly increasing along rows and strictly increasing down columns. The Schur polynomial $s_\lambda(x_1,\ldots,x_d)$ is the sum of monomials over all SSYT $T$ of shape $\lambda$ with entries from $\{1,\ldots,d\}$, where the monomial $x^T$ is defined as the product of the variables $x_i$ corresponding to the entries of $T$.
  \label{def:schur_polynomial}
\end{definition}

\begin{example}
The Schur polynomial \( s_{(2,1)}(x_1, x_2, x_3) \) corresponding to the partition \( \lambda = (2,1)\) and variables \( x_1, x_2, x_3 \) is obtained from the sum over all semi-standard Young tableaux \( T \) of shape \( (2,1) \) with entries from \( \{1, 2, 3\} \). Then, \( x^{T} = x_1^{m_1} x_2^{m_2} x_3^{m_3} \):

\begin{equation}
\begin{ytableau}
1 & 1 \\
2
\end{ytableau}
\quad
\begin{ytableau}
1 & 1 \\
3
\end{ytableau}
\quad
\begin{ytableau}
1 & 2 \\
2
\end{ytableau}
\quad
\begin{ytableau}
1 & 2 \\
3
\end{ytableau}
\quad
\begin{ytableau}
1 & 3 \\
2
\end{ytableau}
\quad
\begin{ytableau}
1 & 3 \\
3
\end{ytableau}
\quad
\begin{ytableau}
2 & 2 \\
3
\end{ytableau}
\quad
\begin{ytableau}
2 & 3 \\
3
\end{ytableau}
\end{equation}

\noindent Adding all the monomials, we obtain the Schur polynomial:

\begin{equation}
s_{(2,1)}(x_1, x_2, x_3) = x_1^2 x_2 + x_1^2 x_3 + x_1 x_2^2 + 2x_1 x_2 x_3 + x_1 x_3^2 + x_2^2 x_3 + x_2 x_3^2
\end{equation}
\end{example}

\noindent Remarkably, the Schur polynomials provide the characters of the irreducible representations of the unitary group, which can be derived from the celebrated Weyl character formula~\cite{Weyl1925}, presented in Definition~\ref{def:weyl_character_formula} (a detailed exposition on the unitary group is provided in Appendix~\ref{app:unitarygroup}). The variables in the Schur polynomials then correspond to the complex eigenvalues in the standard representation of \( U(d) \).

\begin{definition}[Standard Irrep of $U(d)$]
  The standard irrep of the unitary group $U(d)$ is its $d \times d$ matrix representation. Every $U(d)$ matrix can be brought to diagonal form by a unitary transformation, which does not affect its trace. Therefore, every $U(d)$ matrix is conjugate to a diagonal matrix:
  
  \begin{equation}
      U = W \begin{pmatrix}
      z_1 & 0 & \cdots & 0 \\
      0 & z_2 & \cdots & 0 \\
      \vdots & \vdots & \ddots & \vdots \\
      0 & 0 & \cdots & z_d 
      \end{pmatrix} W^{-1}
  \end{equation}
  
  \noindent where the unique diagonal matrix elements $z_i = e^{i\theta_i}$ define a conjugacy class of $U(d)$, parameterised by the set $\{z_1,z_2,..,z_d\}$. The trace of $U$ gives the character of its conjugacy class in the standard representation of $U(d)$, represented by a single box Young diagram $\lambda  = (1)$:
  
  \begin{equation}
    \chi^{U(d)}_{(1)}(z_1,z_2, \hdots ,z_d) = \sum_{i=1}^{d} z_i
  \end{equation}

  \label{def:unitary_standard_irrep}
  
  \end{definition}
  
  \noindent There are clearly infinitely many conjugacy classes, parameterised by the set of angles $\{\theta_i\}$. Its Gelfand-Tsetlin basis is known as the standard basis $\{\psi_{1},\hdots, \psi_{d}\}$ and is the same for all conjugacy classes, satisfying $U \psi_{i} = e^{i\theta_i} \psi_i$.
  Larger, polynomial irreducible representations of $U(d)$ can be constructed by taking the $k$-th order tensor product of the standard representation with itself. By diagonalising the basis with respect to the Cartan subalgebra of the Lie algebra $\mathfrak{u}(d)$ (see Definition \ref{def:cartan} and Appendix \ref{app:unitarygroup}), one obtains a set of eigenvalues that form a partition $\lambda = (\lambda_1,\dots,\lambda_n)$. The characters of these representations are given by the Schur polynomials $s_\lambda(x_1,\dots,x_n)$, which can be derived from the semi-standard Young tableaux associated with $\lambda$:

  \begin{equation}
    \chi^{U(d)}_\lambda (z_1, \hdots ,z_d) = s_\lambda(z_1, \hdots ,z_d)
  \end{equation}

  \noindent This connection between the characters of $U(d)$ irreps and Schur polynomials allows the decomposition of $U(d)$ irreps to be expressed in terms of symmetric polynomial identities. We will use it to derive relevant branching rules, as well as the decomposition of the antisymmetric representation of $U(d) \times U(2)$ into its irreducible representations.

\subsection{Antisymmetric Unitary-Unitary Duality} \label{sec:uuduality}

In many fermionic systems, it is possible to factorise the wavefunction. For example, in electronic systems it can be factorised into dual spatial and spin components. This section presents the underlying theory of this factorisation, with more details and extended proofs presented in the Appendix~\ref{app:shavitt_branching}. In our discussion we will work with the antisymmetric representations, which can be related to the fermionic Fock space reviewed in Section \ref{sec:ON}. The $k$-fold antisymmetric tensor product spaces $\bigwedge^k \mathbb{C}^{n}$ introduced there are isomorphic to the antisymmetric representation of the permutation group $S_k$, with the projection given by Equation \eqref{eq:wedgeproductstate}. Following the notation of Hamermesh \cite{hamermesh1962group}, the irreducible representations of $S_k$ are labeled by $(\lambda)$ and those of $U(n)$ by $\{\lambda\}$, where $\lambda \vdash k$ is an ordered partition of $k$. This antisymmetric representation is then denoted by $(1^k)$, i.e. a partition of $k$ integers into $k$ parts, corresponding to a Young diagram consisting of a single column of $k$ boxes. By Schur–Weyl duality (explained in Appendix~\ref{app:dualities}), this representation carries an irreducible representation of $U(n)$ denoted by $\{1^k\}$. Using Weyl's character formula (shown in Definition~\ref{def:weyl_character_formula}), it is well known that the character of the $U(n)$ representation on $\bigwedge^k \mathbb{C}^n$ is given by:

\begin{equation}
  \begin{split}
  \chi_{U(n)}\Bigl( \bigwedge^k \mathbb{C}^n \Bigr) &= \sum_{1 \le i_1 < i_2 < \cdots < i_k \le n} x_{i_1} x_{i_2} \hdots x_{i_k} \\
  &= e_k(x_1,x_2,\dots,x_n),
  \end{split}
\end{equation}

\noindent where $e_k$ is the $k^\text{th}$ elementary symmetric polynomial (presented in Definition~\ref{def:elementary_symmetric_polynomial}) and the $x_i$ are the eigenvalues of a unitary operator in the standard $n\times n$ representation of $U(n)$. Thus, the character of the full fermionic Fock space is:

\begin{equation}
  \begin{split}
  \chi_{U(n)}\Bigl( \bigwedge \mathbb{C}^n \Bigr) &= \chi_{U(n)}\Bigl(\bigoplus_{k=0}^n \bigwedge^k \mathbb{C}^n\Bigr) \\
  &= \sum_{k=0}^n e_k(x_1,x_2,\dots,x_n) \\
  &= \prod_{i=1}^n (1 + x_i).
  \end{split}
\end{equation}

\noindent Remarkably, this character is closely related to the generating function of the elementary symmetric polynomials,
$E_n(t) = \prod_{i=1}^n (1 + x_i t)$, which underpins the connection between symmetric polynomials and the representation theory of the unitary group.

When fermions exist in a direct product space $\mathbb{C}^m \otimes \mathbb{C}^m$, such as the spatial orbital and spin spaces of an electron in a molecule, a dual pair is formed. In representation theory, a dual pair (or reductive dual pair) refers to a pair of subalgebras or groups that act on a vector space in a mutually commuting manner, with each serving as the centralizer of the other within a larger symmetry group or algebra. This leads to the following decomposition:

\begin{theorem}[Antisymmetric Unitary-Unitary Duality] 
  The irreducible decomposition of the exterior power of the tensor product space $\mathbb{C}^m \otimes \mathbb{C}^n$ on which the elements of $U(n) \otimes U(m)$ act is given by:

  \begin{align}
    \bigwedge(\mathbb{C}^m \otimes \mathbb{C}^n) &: U(mn) \downarrow U(m) \times U(n) \cong \bigoplus_{k=0}^{m \times n} \bigoplus_{\substack{\lambda_k \vdash k \\ \ell(\tilde{\lambda}_k) \le m \\ \ell(\lambda_k) \le n }} W^{U(m)}_{\tilde{\lambda}_k} \otimes W^{U(n)}_{\lambda_k}
  \end{align}

  \noindent where $U(mn) \downarrow U(m) \times U(n)$ refers to the \textit{restriction} of $U(mn)$ to its subgroup $U(m) \times U(n)$, and $W_{\lambda_k}$ and $W_{\tilde{\lambda}_k}$ are the irreducible representations of $U(m)$ and $U(n)$ indexed by mutually conjugate partitions $\tilde{\lambda}_k$ and $\lambda_k$. The sum runs over all partitions $\lambda_k \vdash k$ with at most $n$ rows and $m$ columns. The conjugate partition $\tilde{\lambda}_k$ is obtained by transposing the Young diagram of $\lambda_k$.
  \label{thm:ffspace_umn_decomp}
  \end{theorem}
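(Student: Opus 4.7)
The plan is to reduce the claimed irrep decomposition to a character identity, and then invoke the fact that finite-dimensional representations of the compact group $U(m) \times U(n)$ are completely reducible and determined up to isomorphism by their character. This turns the theorem into an application of the dual Cauchy identity for Schur polynomials, which is proved in Appendix \ref{app:symmetricpolynomials}.

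First I would compute the character of $\bigwedge(\mathbb{C}^m \otimes \mathbb{C}^n)$ as a $U(m) \times U(n)$ module. Fix representatives of a conjugacy class $U = \mathrm{diag}(x_1,\ldots,x_m) \in U(m)$ and $V = \mathrm{diag}(y_1,\ldots,y_n) \in U(n)$. The $mn$ eigenvalues of $U \otimes V$ on $\mathbb{C}^m \otimes \mathbb{C}^n$ are exactly the products $\{x_i y_j\}_{i,j}$. Using the formula for the character of the exterior algebra recalled in Section \ref{sec:uuduality} (i.e.\ $\chi(\bigwedge W) = \prod_k(1+z_k)$ when $W$ has eigenvalues $z_k$), this gives
\begin{equation}
    \chi\left(\bigwedge(\mathbb{C}^m \otimes \mathbb{C}^n)\right)(x,y) = \prod_{i=1}^m \prod_{j=1}^n (1 + x_i y_j).
\end{equation}
Restricting to the degree-$k$ component $\bigwedge^k(\mathbb{C}^m \otimes \mathbb{C}^n)$ simply extracts the degree-$k$ homogeneous part of this generating function.

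Second, I would apply the dual Cauchy identity to expand the right-hand side in Schur polynomials:
\begin{equation}
    \prod_{i=1}^m \prod_{j=1}^n (1 + x_i y_j) \;=\; \sum_{\lambda} s_{\tilde{\lambda}}(x_1,\ldots,x_m)\, s_{\lambda}(y_1,\ldots,y_n),
\end{equation}
where the sum ranges over partitions $\lambda$ fitting inside an $n \times m$ box, i.e.\ $\ell(\lambda)\le n$ and $\lambda_1 \le m$ (equivalently $\ell(\tilde\lambda)\le m$). Grouping terms by $k = |\lambda|$ yields exactly the double sum in the statement. From Section \ref{sec:sym}, $s_{\tilde{\lambda}}(x)$ is the character $\chi^{U(m)}_{\tilde{\lambda}}$ of the irrep $W^{U(m)}_{\tilde{\lambda}}$ (valid because $\ell(\tilde\lambda)\le m$), and similarly $s_\lambda(y) = \chi^{U(n)}_\lambda$. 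Hence each term is the character of the external tensor product irrep $W^{U(m)}_{\tilde{\lambda}} \otimes W^{U(n)}_{\lambda}$ of $U(m) \times U(n)$.

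Third, I would close the argument with semisimplicity: the characters of distinct irreps of the compact group $U(m)\times U(n)$ are linearly independent on the maximal torus, so the character identity lifts uniquely to the stated decomposition of representations. The main obstacle in this program is the dual Cauchy identity itself; everything else is essentially bookkeeping. That identity can be proved either combinatorially via a sign-reversing involution on column-strict fillings of rectangular arrays, or algebraically using Jacobi--Trudi together with the involution $\omega$ sending $e_k \leftrightarrow h_k$, which swaps $s_\lambda \leftrightarrow s_{\tilde\lambda}$ and converts the standard Cauchy identity $\prod(1-x_iy_j)^{-1} = \sum_\lambda s_\lambda(x)s_\lambda(y)$ into its dual form. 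Since the statement is relegated to Appendix \ref{app:symmetricpolynomials}, I would simply cite it here and conclude.
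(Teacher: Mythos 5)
Your proposal is correct and follows essentially the same route as the paper: the character of $\bigwedge(\mathbb{C}^m \otimes \mathbb{C}^n)$ is identified as $\prod_{i,j}(1+x_iy_j)$, the dual Cauchy identity (Theorem~\ref{thm:dualcauchy}) expands this into products of conjugate-pair Schur polynomials, and these are recognised as the irreducible characters of $U(m)$ and $U(n)$. Your explicit appeal to complete reducibility and linear independence of irreducible characters to lift the character identity to an isomorphism of representations is a point the paper leaves implicit, but it is the same argument.
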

  
  \begin{proof} 
  The famous dual Cauchy identity (derived in Theorem~\ref{thm:dualcauchy}) can be used to obtain the decomposition into irreducible representations via the characters. The reducible character of the exterior power of the tensor product space $\mathbb{C}^m \otimes \mathbb{C}^n$ is given by:
  
  \begin{equation}
    \chi_{U(m) \times U(n)}\Bigl( \bigwedge(\mathbb{C}^m \otimes \mathbb{C}^n) \Bigr) = \prod^m_i \prod^n_j (1 + x_i y_j),
  \end{equation}

  \noindent where $x_i$ and $y_j$ are the eigenvalues of the unitary operators in the standard $m \times m$ and $n \times n$ representations of $U(m)$ and $U(n)$ respectively. For two sets of variables $x = (x_1, x_2, \ldots, x_n)$ and $y = (y_1, y_2, \ldots, y_m)$, we have:
  
  \begin{equation}
      \prod_{i=1}^{n} \prod_{j=1}^{m} (1 + x_i y_j) = \sum_{k=1}^{m \times n} \sum_{\substack{\lambda_k \vdash k \\ \ell(\tilde{\lambda}_k) \le m \\ \ell(\lambda_k) \le n }} s_{\tilde{\lambda}_k}(x) s_{\lambda_k}(y). \label{eq:background_cauchy}
  \end{equation}
  
  \noindent The Schur polynomials $s_{\tilde{\lambda}_k}(x)$ and $s_{\lambda_k}(y)$ are the characters of the irreducible representations of the $k^\text{th}$ tensor products of $U(m)$ and $U(n)$ respectively, thus Equation \eqref{eq:background_cauchy} gives the stated decomposition of the antisymmetric representation into irreps.
  \end{proof}
  \begin{remark}
    The elements of the subgroup \( U(m) \times U(n) \) embedded in \( \mathbb{C}^{mn} \) are given by \( \{ u \otimes v \mid u \in U(m), v \in U(n) \} \). Although both \( U(m) \) and \( U(n) \) are infinite-dimensional groups, the tensor product action of \( U(m) \times U(n) \) on \( \mathbb{C}^{mn} \) constitutes a subset of the unitary group \( U(mn) \). This is because the dimension of \( U(m) \times U(n) \) is \( m^2 + n^2 \), corresponding to the unique matrix elements, whereas the dimension of \( U(mn) \) is \( (mn)^2 \).
  \end{remark}

\subsection{Paldus Duality} \label{sec:paldus_duality}

For molecular systems, the Pauli exclusion principle dictates that electrons are paired into $d$ spatial orbitals and transform according to global $SU(2)$ symmetry in their spin degrees of freedom. Therefore, we study the decomposition of the antisymmetric representation $\{1^{2d}\}$ under the following group restriction chain:

\begin{equation}
  \bigwedge (\mathbb{C}^d \otimes \mathbb{C}^2): U(2d) \downarrow  U(d) \times U(2) \downarrow   U(d) \times SU(2). \label{eq:paldus_duality}
\end{equation}

\noindent The fully antisymmetric Fock space can be decomposed under $U(n) \times SU(2)$, which is referred to here as \textit{Paldus Duality}. In 1979, Paldus employed a basis adadpted to this decomposition to calculate the matrix elements of spin-adapted quantum chemistry Hamiltonians, thereby determining the eigenvalues and eigenspectrum of the quantum chemistry wavefunction~\cite{Paldus1979,Paldus1980,Paldus2020,Paldus2020a,Paldus2020b}. This constitutes a reduction from the more general $GL(m) \times GL(n)$ Howe duality~\cite{MathSoc1989} discovered later in 1988.

\begin{theorem}[Paldus Duality]
For an antisymmetric representation of $U(2d)$, the decomposition into irreps under restriction to the group $U(d) \times SU(2)$ is given by:

\begin{align}
    \bigwedge(\mathbb{C}^d \otimes \mathbb{C}^2) &: U(2d) \downarrow U(d) \times SU(2) \cong \bigoplus_{\substack{N=0}}^{2d} \bigoplus_{S=0}^{N / 2} W^{U(d)}_{(N,S)} \otimes W^{SU(2)}_S
\end{align}
where $W^{U(d)}_{(N,S)}$ and $W^{SU(2)}_S$ are the irreducible representations of $U(d)$ and $SU(2)$ respectively, with $N$ being the total number particles in occupied spin-orbitals and $S$ the global spin.
\label{thm:paldus_duality}
\end{theorem}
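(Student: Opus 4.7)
The plan is to derive Paldus duality as a direct corollary of Theorem \ref{thm:ffspace_umn_decomp} by specialising the antisymmetric unitary-unitary duality to $m=2$, $n=d$, and then branching the resulting $U(2)$ factor down to $SU(2)$. Applying the theorem with these values first yields the $U(d)\times U(2)$ decomposition
\[
\bigwedge(\mathbb{C}^d \otimes \mathbb{C}^2) \;\cong\; \bigoplus_{k=0}^{2d}\;\bigoplus_{\substack{\tilde{\lambda}_k \vdash k \\ \ell(\tilde{\lambda}_k) \le 2,\; \ell(\lambda_k) \le d}} W^{U(2)}_{\tilde{\lambda}_k} \otimes W^{U(d)}_{\lambda_k},
\]
and what remains is to re-index the summands by the quantum numbers $(N, S)$ and identify each restricted $U(2)$ irrep with its $SU(2)$ spin content.

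For the re-indexing, I would write each admissible $U(2)$ partition as $\tilde{\lambda}_k = (a,b)$ with $a \ge b \ge 0$ and $a+b=k$, and introduce $N := a+b$, $S := (a-b)/2$. The invertible change of variables $a = N/2 + S$, $b = N/2 - S$ carries the admissibility conditions $a,b \in \mathbb{Z}_{\ge 0}$ into $0 \le S \le N/2$ together with $N \equiv 2S \pmod 2$ (so $S$ is an integer for even $N$ and a half-integer for odd $N$), while the constraint $\ell(\lambda_k) = a \le d$ imposes the additional bound $N + 2S \le 2d$ that makes $W^{U(d)}_{(N,S)}$ vanish outside this range. The conjugate partition $\lambda_k$ is then uniquely determined by $(N, S)$, consisting of $N/2 - S$ rows of length $2$ followed by $2S$ rows of length $1$, which I relabel as $W^{U(d)}_{(N,S)}$.

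For the $U(2)\downarrow SU(2)$ step, every irrep factorises as $W^{U(2)}_{(a,b)} \cong (\det)^{b} \otimes \mathrm{Sym}^{a-b}(\mathbb{C}^2)$. Since $\det u = 1$ for every $u \in SU(2)$, the determinant twist restricts to the trivial representation, and $W^{U(2)}_{(a,b)}\!\downarrow_{SU(2)}$ is the standard spin-$S$ irrep $W^{SU(2)}_S$ of dimension $a-b+1 = 2S+1$. No information is discarded, because the $U(1)$ weight $b$ (equivalently the particle number $N$) is already recorded in the $U(d)$ label on the opposite tensor factor. Substituting $W^{U(2)}_{\tilde{\lambda}_k} \mapsto W^{SU(2)}_S$ in the decomposition above and re-collecting the direct sum by $(N, S)$ in place of $(k, \tilde{\lambda}_k)$ yields the claimed form.

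The main, though mild, obstacle is the combinatorial bookkeeping: verifying that $(k, \tilde{\lambda}_k) \leftrightarrow (N, S)$ is a bijection onto the correct index set, and that the constraint $\ell(\lambda_k) \le d$ — not explicit in the theorem's outer sum ranges $N \le 2d$, $S \le N/2$ — is precisely what trims away the apparently overcounted summands with $N + 2S > 2d$, through the vanishing of the corresponding $U(d)$ irrep. Once these identifications are in hand, Paldus duality follows immediately from Theorem \ref{thm:ffspace_umn_decomp}.
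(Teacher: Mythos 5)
Your proposal is correct and follows essentially the same route as the paper: specialise Theorem \ref{thm:ffspace_umn_decomp} to the $(d,2)$ case, re-index the two-row $U(2)$ partitions by $N = \lambda_1 + \lambda_2$ and $S = \tfrac{1}{2}(\lambda_1 - \lambda_2)$ (with the conjugate two-column partition giving the $U(d)$ label), and apply the $U(2) \downarrow SU(2)$ branching rule. Your additional bookkeeping --- the parity condition $N \equiv 2S \pmod 2$ and the observation that summands with $N + 2S > 2d$ are eliminated because $W^{U(d)}_{(N,S)}$ vanishes there --- matches the paper's implicit conventions (cf.\ the dimension formula in Equation \eqref{eq:tds_formula}) and is a sound refinement rather than a departure.
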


\begin{proof}
Proceeding from Theorem~\ref{thm:ffspace_umn_decomp} for $m=d$ and $n=2$, we have
\[
\begin{split}
\bigwedge(\mathbb{C}^d \otimes \mathbb{C}^2) &: U(2d) \downarrow U(d) \times U(2) \cong \bigoplus_{k=0}^{2d} \bigoplus_{\substack{\lambda_k \vdash k \\ \ell(\tilde{\lambda}_k) \le d \\ \ell(\lambda_k) \le 2 }} W^{U(d)}_{\tilde{\lambda}_k} \otimes W^{U(2)}_{\lambda_k}\,.
\end{split}
\]
Here, $\lambda_k = (\lambda_1, \lambda_2)$ denotes a partition with at most two parts (rows) and $d$ columns, and its conjugate $\tilde{\lambda}_k$ corresponds to a partition with at most $d$ rows and two columns. By applying the branching rule from $U(2)$ to $SU(2)$ (see Appendix~\ref{app:u2su2}), the irreducible representations of $SU(2)$ are obtained via the total spin $S$, where the irreducible representation of $U(2)$ indexed by $\lambda = (\lambda_1, \lambda_2)$, corresponds to $S = \frac{1}{2}(\lambda_1 -  \lambda_2)$. Moreover, the total number of particles $N$ is given by the number of boxes in the partition $\tilde{\lambda}_k$.
\end{proof}

\begin{figure}[!htbp]
\centering
\includegraphics[width=7cm]{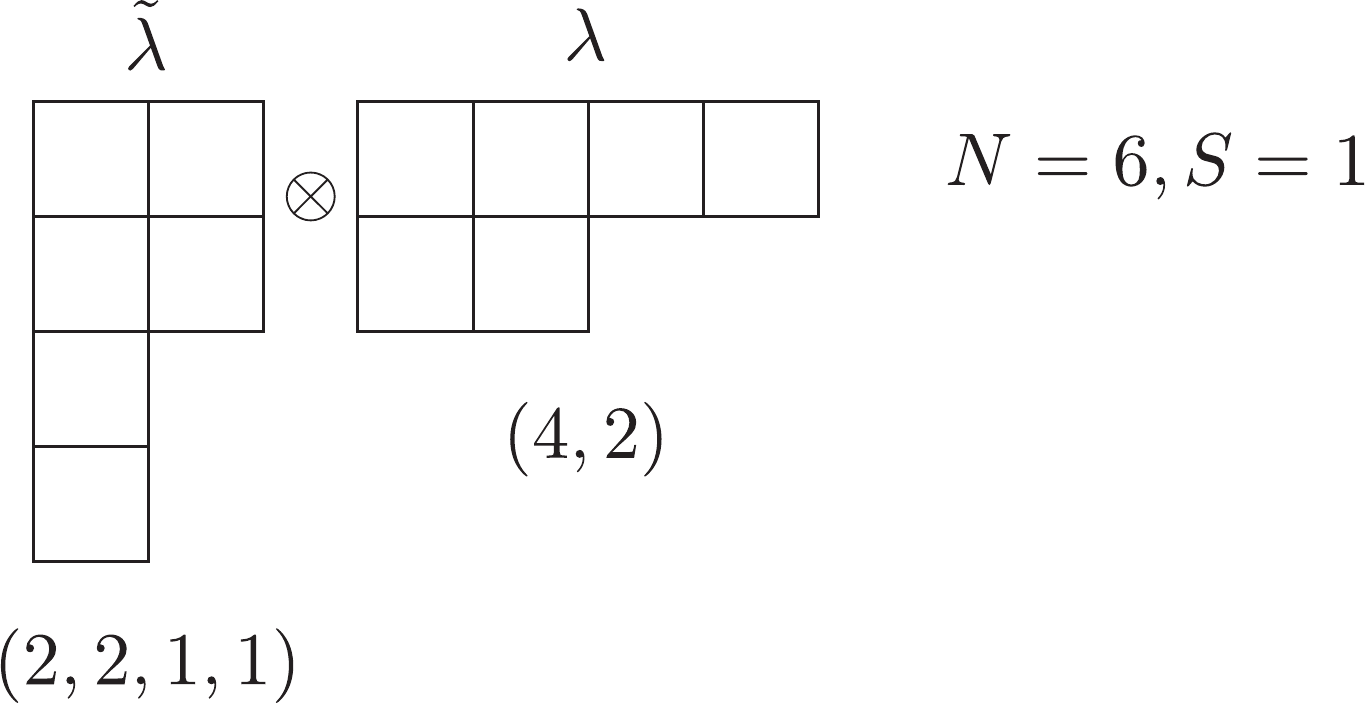}
\caption{$U(d) \times SU(2)$ irrep $(N=6,S=1)$, which is expressed by the partition $\lambda = (4,2)$ and its conjugate $\tilde{\lambda} = (2,2,1,1)$. $N$ is calculated from the number of boxes in both partitions and the $S$ is calculated from the difference between the number of boxes in the first and second row of $\lambda$, given by $S = \frac{1}{2}(\lambda_1 - \lambda_2) = \frac{1}{2}(4-2) = 1$.}
\label{fgr:paldus_dual_example}
\end{figure}

The decomposition of the fermionic Fock space provided by Theorem \ref{thm:paldus_duality} will re-appear in Section \ref{sec:paldus_transform}, where we will present an algorithm to explicitly instantiate the isomorphism. Notably, the Theorem implies the existence of a basis for $\bigwedge(\mathbb{C}^d \otimes \mathbb{C}^2)$ of the form $\ket{N, S, M; \mathbf{d}}$, in which states are indexed by the $(N,S)$ irrep space they belong to, their $SU(2)$ index $M$ and their $U(d)$ index $\mathbf{d}$. The action of $SU(2)$ then leaves all but the $M$ label unaffected, and the action of $U(d)$ only affects the $\mathbf{d}$ label. We refer to states of this form as \textit{Gelfand-Tsetlin (GT) states}. As a sanity check, we can verify that the number of GT states agrees with the dimension of $\bigwedge(\mathbb{C}^d \otimes \mathbb{C}^2)$. Each subspace $W^{SU(2)}_S$ has dimension $(2S + 1)$, whereas the dimensions of spaces $W^{U(d)}_{(N,S)}$ can be written as $T^d_{S, N}$ and evaluated with the hook-content formula:

\begin{equation}
  T^d_{S, N} = \frac{2S+1}{N+1} {d + 1 \choose \frac{N}{2} - S } {d + 1 \choose \frac{N}{2} + S + 1}. \label{eq:tds_formula}
\end{equation}

\noindent Carrying out sum of $\dim(W^{U(d)}_{(N,S)}) \times \dim (W^{SU(2)}_S)$ over all values of $N$ and $S$, the total dimension of the right-hand side of Equation \eqref{eq:paldus_duality} can be verified to equal $2^{2d}$, and a short calculation to do this is presented in Appendix \ref{app:dimension_formula}.

\begin{remark}
  Hamiltonians constructed from \( U(d) \times U(2) \) representations can be decomposed under \( U(d) \times SU(2) \), as the \( U(1) \) component of \( U(2) \cong SU(2) \times U(1) \) contributes a particle number-dependent, but often negligible overall phase. In this symmetry-adapted basis, the Hamiltonian block-diagonalises into subspaces corresponding to the $(N, S)$ irreducible representations of \( U(d) \times SU(2) \). Hence symmetric polynomial identities for the characters of \( U(d) \times U(2) \) are used to decompose the representations into irreps, and \( U(d) \times SU(2) \) Clebsch-Gordan coefficients construct the basis states spanning these irreps.
\end{remark}

\subsection{Subgroup Branching and Shavitt Graphs} \label{sec:shavitt}

In this section we define the basis upon which the Paldus transform is constructed. As observed previously, each Gelfand-Tsetlin state can be indexed by the quantum numbers $(N, S, M)$ and a $U(d)$ index $\mathbf{d}$, which has so far been left unidentified. To make this complete, we will make use of the \text{Shavitt graph} \cite{Shavitt1978,Shepard2006} -- a diagrammatic tool in the UGA for the representation of GT states as walks on a graph, which in turn allow a compact description by a $2d$-bit string, $\mathbf{d}$.

As is standard in representation theory literature, we may take advantage of the multiplicity-free branching property of the antisymmetric $U(d) \times U(2)$ representation and identify GT basis elements $\ket{N, S, M, \mathbf{d}}$ according to the irrep subspaces of the subgroups $U(d) \times U(2) \supset U(d-1) \times U(2) \supset \cdots \supset U(1) \times U(2)$ they belong to under a recursive restriction of $U(i)\times U(2)$ group action to $U(i-1)\times U(2)$. In the antisymmetric representation, multiplicity-free branching is guaranteed by Howe duality, and also follows from the double centralizer theorem (see Appendix \ref{app:dualities}). This implies that at each such restriction (i.e, $U(i) \times U(2) \downarrow U(i-1) \times U(2)$), each irreducible representation space of $U(i) \times U(2)$ decomposes into a direct sum of $U(i-1) \times U(2)$ irrep subspaces, with multiplicity of at most one. As a consequence, each GT basis element $\ket{N, S, M, \mathbf{d}}$ of $U(d) \times U(2)$ belongs to a unique sequence of $U(d-1) \times U(2), \hdots, U(1)\times U(2)$ irrep spaces under the restriction chain. The specification of this sequence is sufficient to unambiguously identify each GT basis state.

By a theorem known as Frobenius reciprocity there is an equivalent, second method of identifying GT basis states via branching which we will follow. Rather than recursively restricting $U(d) \times U(2)$ down its subgroup tower and listing the irrep spaces each state occupies, one can instead start with the group $U(1) \times U(2)$ and recursively \textit{lift} the antisymmetric representation via $(U(i-1) \times U(1)) \times U(2)$ to representations of $U(i) \times U(2)$. In our case, each $U(i) \times U(2)$ irrep at the $i^\text{th}$ step can be denoted by a Young diagram $\lambda_i$. Therefore, we can list track of all possible irrep sequences by beginning with the irrep indexed by the (null) Young diagram $\lambda_0 = \emptyset$, and repeatedly listing all irreps $\lambda_i$ which contain $\lambda_{i-1}$ after the $i^\text{th}$ lifting step. Each such sequence $\{ \lambda_d, \lambda_{d-1}, \hdots, \lambda_1, \emptyset \}$ is known as a \textit{Gelfand Tableau}. Formally, this is a sequence of \textit{interleaving} partitions $\{\lambda_{d}, \hdots, \lambda_1 \}$, where $\lambda_{i}$ interleaves $\lambda_{i-1}$ if the latter is obtained by removing a subset of boxes from $\lambda_i$. The exact rules for interleaving are derived, along with more details on Gelfand tableaux in Theorem~\ref{thm:gt_branching_betweenness}. Explicitly writing each element of each $\lambda_i$ in the sequence then forms a \textit{Gelfand-Tsetlin pattern} $\mathbf{m}$, which is the unique $U(d)$ index of a Gelfand-Tsetlin state:
\begin{equation}
    \mathbf{m} = 
  \begin{bmatrix}
  \lambda_d\\
  \lambda_{d-1}\\\
  \vdots\\
  \lambda_2\\
  \lambda_1
  \end{bmatrix} 
  = 
  \begin{bmatrix}
  \lambda_{1,d} &  & \lambda_{2,d} &  & \cdots & & \lambda_{d-1,d} &  & \lambda_{d,d}  \\
   & \lambda_{1,d-1} & &  \lambda_{2,d-1} & & \cdots &  &  \lambda_{d-1,d-1} &  \\
   &  & \ddots &  & \cdots &  & \iddots &  &  \\
   &  &  & \lambda_{1,2} &  & \lambda_{2,2} &  &  &  \\
   &  &  &  & \lambda_{1,1} &  &  &  &   
  \end{bmatrix} 
  \end{equation}

\noindent Finding all possible sequences $\mathbf{m}$ amounts to identifying the full Gelfand-Tsetlin basis for the $U(d)$ component of the antisymmetric $U(d) \times U(2)$ representation. In our particular case, the conjugacy between partitions $\lambda$ which index $U(d)$, and $U(2)$ irreps in Theorem \ref{thm:paldus_duality} greatly simplifies keeping track of the branching structure, forcing any irrep in the $U(d)$ chain to only have at most two columns (as shown in Figure \ref{fgr:paldus_dual_example}). Practically, this means that $\lambda$ can be represented by a simple tuple $(a, b, c)$, denoting the number of two-row, one-row and zero-row parts respectively. This is illustrated in Figure~\ref{fgr:WEYL}. As the three elements of the $(a, b, c)$ tuple sum to the number of spatial orbitals as $a + b + c = d$, if $d$ is known then $c$ can be additionally omitted. 

The $(a, b, c)$ representation gives rise to a compact way of representing the GT patterns $\mathbf{m}$. Because the Young diagram $\lambda_i$ in each row is constrained to have values $\lambda_{j, i} \in \{0,1,2\}$, the form of $\mathbf{m}$ simplifies to a sequence $\{ (a_d, b_d, c_d), \hdots, (a_1, b_1, c_1)\}$. In Appendix \ref{app:shavitt_branching}, we also prove Theorem \ref{thm:shavitt_branching}, which severely restricts the branching rules for lifting antisymmetric $U(i-1) \times U(2)$ representations. The main takeaway of our result is that under each lifting step, an irrep $\lambda_{i-1}$ can only branch into \textit{at most four} irreps $\lambda_i$, obtained by `adding' up to two boxes to $\lambda_{i-1}$ in four different ways. This in turn allows one to compress the form of $\textbf{m}$ even further. Instead of listing the $(a, b, c)$ tuples at each step, we can recover their sequence through a series of \textit{steps} $\mathbf{d}_i$, which modify the values of $(a, b, c)$ as follows:
\begin{equation}
(a_{i-1}, b_{i-1}, c_{i-1}) \overset{\mathbf{d}_i}{\longrightarrow} (a_i, b_i, c_i) = \begin{cases} (a_{i-1}, b_{i-1}, c_{i-1} + 1) & \text{if } \mathbf{d}_i = 0, \\ (a_{i-1}, b_{i-1} + 1, c_{i-1}) & \text{if } \mathbf{d}_i = 1, \\ (a_{i-1} + 1, b_{i-1} - 1, c_{i-1} + 1) & \text{if } \mathbf{d}_i = 2, \\ (a_{i-1} + 1, b_{i-1}, c_{i-1}) & \text{if } \mathbf{d}_i = 3. \end{cases}
\end{equation}

\noindent Concatenating all the steps $\mathbf{d}_i$ then forms a \textit{step vector}, $\mathbf{d}$. With the above rules and starting at $(a_0, b_0, c_0) = (0,0,0)$, applying each step from $\mathbf{d}$ can be used to find the tuple $(a_i, b_i, c_i)$ in each row of $\mathbf{m}$, and thus the overall GT pattern. The effect of different step vectors on values of $(a, b, c)$ is graphically shown in Figure~\ref{fgr:CSFALL}. 

As a sequence $\{ \lambda_d, \hdots, \lambda_1 \}$ in a Gelfand tableau can be represented by a semi-standard Young tableau (SSYT) on its final partition $\lambda_d$, the step vector $\mathbf{d}$ forms a set of `instructions' on how to build a SSYT, shown in Figure \ref{fig:branching_111}. We therefore have a one-to-one mapping between the Gelfand-Tsetlin patterns, SSYTs and step vectors. Out of the three, the step vector $\mathbf{d}$ is the most compact and thus suitable for use in a computational setting -- because each $\mathbf{d}_i \in \{ 0, 1, 2, 3 \}$, we can represent it \textit{in binary} using a $2d$-bit string. We will make use of this in constructing the quantum Paldus transform; our chosen encoding is presented in Table~\ref{fig:step_vectors_binary}.

\begin{table}[!htbp]
  \centering
  \begin{ruledtabular}
  \begin{tabular}{c c c c c c c}
   $\mathbf{d}_i$  &  $ \Delta a_i$& $ \Delta b_i = 2 \Delta S_i $ & $ \Delta c_i$ & $\Delta N_i$ & $\Delta \bar{c_i}$  & $\Delta \bar{c} a_i$\\
  \colrule
  0 & 0 & 0  & 1& 0 & 0 & \texttt{00} \\
  1  & 0 &$ + 1 $  & 0 & 1& 1 & \texttt{10} \\
  2  & 1 &$ -1 $ & 1 & 1& 0 & \texttt{01}\\
  3 & 1 &0  & 0& 2 & 1 & \texttt{11}\\
  \end{tabular}
  \end{ruledtabular}
      \caption{Effect of lifting via steps $\mathbf{d}_i \in \{0, 1, 2, 3\}$ on the values of tuples $(a_i, b_i, c_i)$. The tuples represents the number of \{0, 1, 2\}-parts in the partition $\lambda_i$ of a Gelfand-Tsetlin pattern. Their differences between each step can be related to changes in the quantum numbers $N$ and $S$ through Equation \eqref{eqn:GUGARULES}. Each digit $\mathbf{d}_i$ of the step vector $\mathbf{d}$ can be represented using two bits in binary, by concatenating the negation of $\Delta c_i$ with $\Delta a_i$. The resulting $2d$-bit string $\mathbf{d}$ completely describes the $U(d)$ GT pattern. Together with a value of the spin projection $M$, this uniquely defines the $U(d) \times U(2)$ GT basis states $\ket{N, S, M; \mathbf{d}}$.}
  \label{fig:step_vectors_binary}
\end{table}

\begin{figure}[htbp!]
  \centering

  \centering
  \hspace*{+0cm}
  \begin{minipage}{.5\textwidth}
    \includegraphics[width=4.5cm]{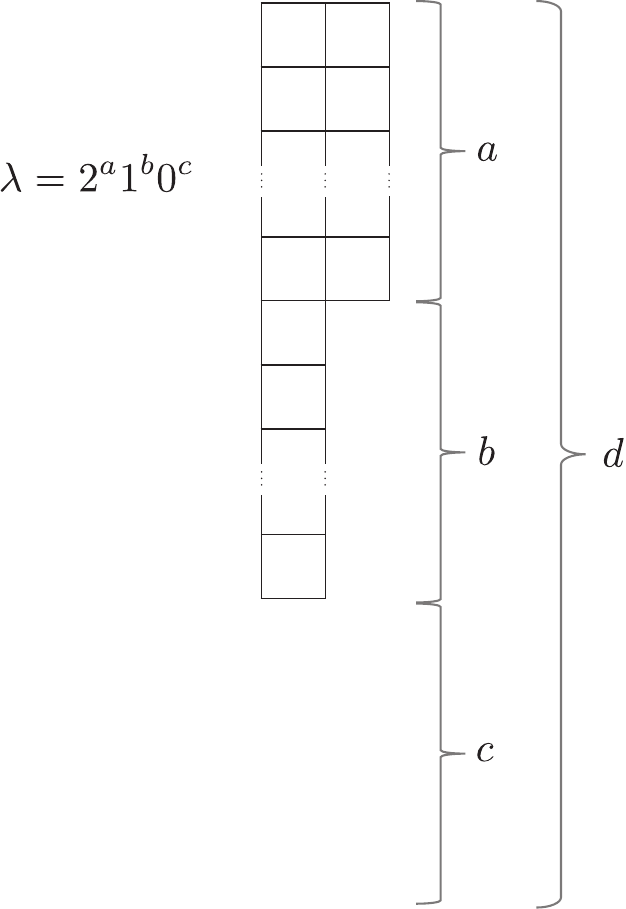}
  \end{minipage}%
  \begin{minipage}{.5\textwidth}
    \includegraphics[width=8cm]{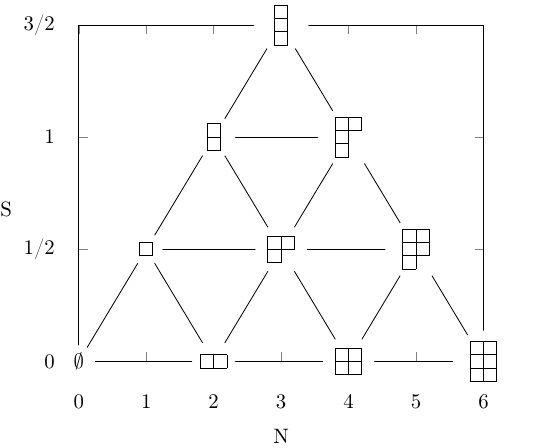}
  \end{minipage}
    \caption{\textit{Left:} Young Diagram representing the $d$-part partition $\lambda =  (2, \hdots ,2,1,\hdots,1,0,\hdots,0)$, which is abbreviated to $(2^a, 1^b, 0^c)$. The restriction to two columns imposed by Theorem \ref{thm:paldus_duality} allows for a compact representation of $\lambda$ through the tuple $(a, b, c)$. \textit{Right:} Irreducible representations of in the antisymmetric $U(3) \times U(2)$ representation. Each Young diagram denotes an $(N, S)$ irrep of $U(3)$. $U(2)$ irrep labels are obtained by conjugating the Young diagrams.}
    \label{fgr:WEYL}
  \end{figure}

The step vector also admits a further, graphical interpretation. Similarly to Bratelli diagrams in representation theory, $\mathbf{d}$ can be drawn as a walk on a \textit{Shavitt Graph} \cite{Shavitt1978,Shepard2006}, composed of nodes at levels $i = 0, 1, \hdots, d$. Each node at the $i^\text{th}$ level denotes an irrep appearing in the decomposition of the antisymmetric representation of $U(i) \times U(2)$, and vertices denote possible branchings under group subduction and lifting. Step vector $\mathbf{d}$ are displayed by upwards walks from the $i=0$ (tail) node to a $d=3$ (head) node. This elegantly captures the sequence of irrep spaces each GT state resides in under recursive subgroup branching of $U(d) \times U(2)$, and illustrates the equivalence between the step vector $\mathbf{d}$ and GT pattern $\mathbf{m}$. In a Shavitt graph, only the subgroup chain $U(d) \supset \cdots \supset U(1)$ is displayed, and the $U(2)$ irrep at each step is implicitly given through conjugation of the label $\lambda$. A full Shavitt graph for $d=3$ is shown in Figure \ref{fgr:CSFALL}, whereas a graph for the $\lambda=(1,1,1)$ irrep of $U(3) \times U(2)$ is shown in Figure \ref{fig:branching_111}.

\begin{figure}[!htbp]
  \centering
    \includegraphics[width=12cm]{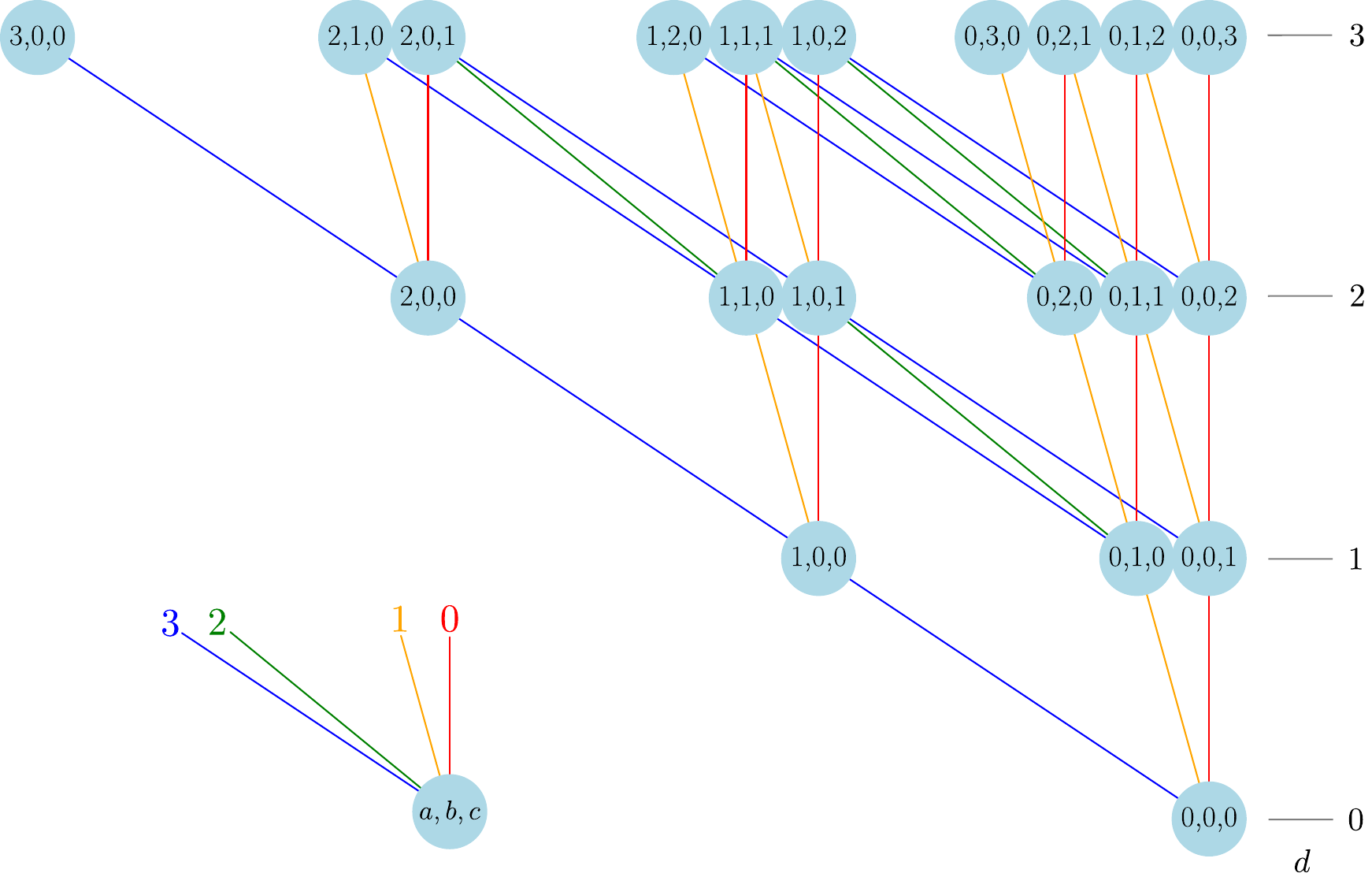}
    \caption{A superposition of all Shavitt graphs for $d=3$. The nodes denote $(a_i,b_i,c_i)$ triples defining the irreps at each link of the subgroup chain. The steps $\mathbf{d}_i$ are shown by the four coloured arcs, with a full step vector $\mathbf{d}$ consisting of a walk from a $d=0$ (tail) to the $d=3$ (head) nodes. The Gelfand-Tsetlin basis states spanning the irrep $\lambda $ are defined by unique paths from head to tail of the Shavitt graph, which is equivalent to the unique subduction chain $U(d) \times U(2) \supset \cdots \supset U(1) \times U(2)$.}
    \label{fgr:CSFALL}
  \end{figure}

\begin{figure}[!htbp]
  \centering
  \includegraphics[width=0.95\textwidth]{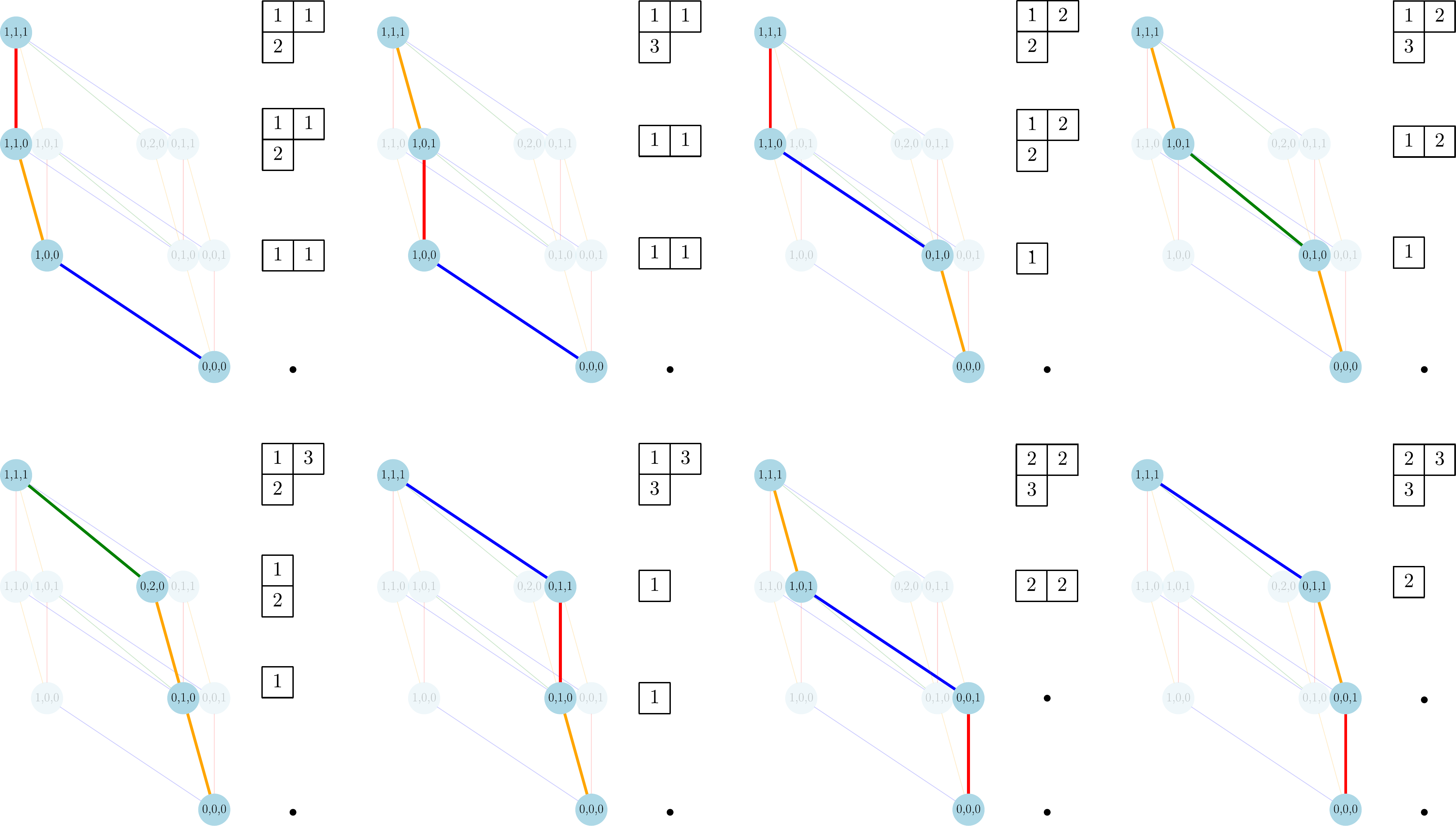}
   \caption{Shavitt graph for the $U(3) \times U(2)$ irrep $\lambda = (1,1,1)$ via the subgroup chain $U(3)\times U(2) \supset U(2) \times U(2) \supset U(1) \times U(2)$. Each walk denotes a Gelfand-Tsetlin state, with the associated semi-standard Young tableau also shown. The collection of all walks leads to a Gelfand-Tsetlin basis of the $U(3)$ irrep.}
  \label{fig:branching_111}
\end{figure}

\newpage
    
\subsection{\texorpdfstring{$SU(2)$}{SU(2)} Spin Coupling and the Gelfand-Tsetlin Basis} \label{sec:SU2}

So far the discussion has centered around the recursive coupling of $U(d) \times U(2)$. Bringing our discussion to the $U(d) \times SU(2)$ case is straightforward, and uses the $U(2) \downarrow SU(2)$ branching rule which we derive via the characters in Theorem~\ref{thm:u2su2} of Appendix \ref{app:unitarygroup}. This then implies that $U(2)$ conjugate tableaux can genuinely express the $SU(2)$ spin irrep $S$ from Theorem~\ref{thm:paldus_duality}, and that from $U(d) \times U(2)$ branching we can recover the well-known $SU(2)$ Clebsch-Gordan coupling rules \cite{Brink:1975}. In other words, each $U(d)$ Gelfand-Tsetlin state carries an $SU(2)$ spin eigenfunction of coupled spin-$\frac{1}{2}$ primitives. The $SU(2)$ irrep $S$ is given by the difference of the number of boxes in the two rows of the $U(2)$ tableaux via $S = \frac{1}{2}(\lambda_1 - \lambda_2)$. Because of this, we can interpret each step $\mathbf{d}_i$ as a simultaneous coupling of spatial orbitals and spin. In Section \ref{sec:shavitt} we introduced the $(a, b, c)$ tuples as a shorthand for partitions $\lambda$ in a lifting procedure: they can be related to the quantum numbers $N$ and $S$ through the following rules:

\begin{equation}
  \begin{split}
      &N = 2a + b \\
      &S = b/2 \\
      &d = a + b + c \\ 
  \end{split}
  \label{eqn:GUGARULES}
\end{equation}

From Equation \eqref{eqn:GUGARULES} it can be seen, for example, that only a change in $b$ values (steps $\mathbf{d}_i = 1, 2$) can affect the total spin $S$ of the irrep, and the number of boxes $N$ in a Young diagram corresponds to the number of particles in the system. The effects of the four couplings on $(N, S)$ are summarised in Table~\ref{fig:step_vectors_binary} in the previous section. In practice it is often more convenient to use the $(N, S)$ quantum numbers (with orbital number $d$ fixed) to express the GT states, both from a quantum chemistry standpoint and for the implementation of the Paldus transform on a quantum computer. Just like with the $(a, b, c)$ tuples, we can also draw Shavitt graphs in terms of $(N,S)$ values. The Shavitt graph may then be interpreted as an extension of \textit{geneological spin coupling} \cite{Pauncz1979}, with the main difference over the standard case being the additional ability to couple unoccupied, and doubly-occupied spatial orbitals. This link to geneological coupling is made clear in Figure~\ref{fig:quga_block_diagonals}, where the $(a,b,c)$ tuples are mapped to $(N,S,n_M)$ values, where $n_M = 2S + 1$ are the spin multiplicities. Just like with the graphs in Figures \ref{fgr:CSFALL} and \ref{fig:branching_111}, each basis state in an irrep space corresponds to a walk from the $(0, 0, 1)$ node at the tail of the graph to one of the head nodes.

\subsubsection{Explicit Construction of Gelfand-Tsetlin States} \label{sec:gt_states_construction}

In the preceding sections we have discussed the decomposition of $\bigwedge(\mathbb{C}^d \otimes \mathbb{C}^2)$ into irrep spaces of $U(d) \times SU(2)$, and in the process defined their spanning Gelfand-Tsetlin basis states $\ket{N, S, M; \mathbf{d}}$. These states can be interpreted as paths on Shavitt graphs, which themselves can be seen as a variant of a Bratelli diagram or a geneological coupling procedure. However, from this discussion alone it is not clear how to \textit{construct} these states explicitly in the occupation number (computational) basis, i.e. how identify the coefficients $c_{x_{1 \uparrow} x_{1 \downarrow} \cdots x_{d \uparrow} x_{d \downarrow}}$ in the expansion (with $\uparrow = +\frac{1}{2}, \downarrow = -\frac{1}{2}$ and $x_{i, \sigma}$ an occupation number):

\begin{equation}
\ket{N, S, M; \mathbf{d}} = \sum_{x_{1 \uparrow} x_{1 \downarrow} \cdots x_{d \uparrow} x_{d \downarrow}} c_{x_{1 \uparrow} x_{1 \downarrow} \cdots x_{d \uparrow} x_{d \downarrow}} \ket{x_{1 \uparrow} x_{1 \downarrow} \cdots x_{d \uparrow} x_{d \downarrow}}. \label{eq:gt_state_expansion}
\end{equation}

\noindent In the present case of $U(d) \times SU(2)$, the above states turn out to be remarkably simple to construct. Previous work by Paldus, Moshinsky and others \cite{Moshinsky1971, Paldus1980} has shown that states $\ket{N, S, M; \mathbf{d}}$ are equivalent (up to an overall phase) to the the \textit{Yamanouchi-Kotani} states \cite{Pauncz1979}, which arise from the sequential Clebsch-Gordan coupling of spin-$\frac{1}{2}$, or spin-$0$ orbitals. They are defined as:

\begin{equation}
  \ket{N, S, M; \mathbf{d}}  = \left( \sum_{m_1, m_2} \sum_{m_{[2]}, m_3} \cdots \sum_{m_{[n-1]}, m_n}  \left[ \prod_{i=1}^{d} C^{s_{[i]} m_{[i]}}_{s_{[i-1]} m_{[i-1]} ; s_i m_i} \mathcal{A}_i^\dagger(m_i, \mathbf{d}_i) \right] \right) | 0 \rangle^{\otimes 2d} \label{eq:yk_states_formula}
\end{equation}

\noindent Where the square bracket contains a product of $d$ many CG coefficients, each arising from the coupling of a system with spin (projection) $(s_{[i-1]}, m_{[i-1]})$ and an orbital with $(s_i, m_i$) into a larger system with $(s_{[i]}, m_{[i]})$. Clearly, for unoccupied or doubly-occupied orbitals $s_i = m_i = 0$, and for singly-occupied orbitals $s_i = \frac{1}{2}$ and $m_i = \pm \frac{1}{2}$. All intermediate spin projection values $m_{[i]}$ are summed over, barring the final $M$ which forms the $U(2)$ GT pattern. We take the `starting' spin to be $s_{[0]} = m_{[0]} = 0$, and $s_{[i]} = s_{[i-1]} + \Delta S_i$ to be the total spin at each step of the sequential coupling. To raise or lower the overall spin in the $i^\text{th}$ step we must couple a spin-$\frac{1}{2}$ orbital (steps $\mathbf{d}_i = 1, 2$), meaning that $\Delta S_i = \pm \frac{1}{2}$. To leave the overall spin unchanged at the $i^\text{th}$ step, we couple a spin-$0$ orbital (steps $\mathbf{d}_i = 0, 3$), in which case $\Delta S_i = 0$ and $s_{[i]} = s_{[i-1]}$. Every term in the sum contains a product of operators $\mathcal{A}_i^\dagger(m_i, \mathbf{d}_i)$, each of which is made out of fermionic creation operators as follows:

\begin{equation}
  \mathcal{A}_i^\dagger(m_i, \mathbf{d}_i) = \begin{cases}
    \mathbf{1} & \text{if } \Delta N_i = 0 \ (\mathbf{d}_i = 0) \\
    a^\dagger_{i, m_i} & \text{if } \Delta N_i = 1 \ (\mathbf{d}_i = 1, 2) \\
    a^\dagger_{i, +\frac{1}{2}} a^\dagger_{i, -\frac{1}{2}} & \text{if } \Delta N_i = 2 \ (\mathbf{d}_i = 3)
  \end{cases}
\end{equation}

\noindent Each $\mathcal{A}_i^\dagger(m_i, \mathbf{d}_i)$ is essentially a creation operator on the $i^\text{th}$ orbital, acting in accordance with how that orbital has been coupled (i.e, one of the four possible $\mathbf{d}_i$'s as shown in Table \ref{fig:step_vectors_binary}). For example, the fourth case ($\mathbf{d}_i = 3$) corresponds to the coupling of a spin-$0$ orbital containing two spin-$\frac{1}{2}$ electrons pre-coupled into a singlet:

\begin{align}
\mathcal{A}^\dagger_i(0, 3) &= \frac{1}{\sqrt{2}} \left( C^{0, 0}_{\frac{1}{2}, +\frac{1}{2}; \frac{1}{2}, -\frac{1}{2} } a^\dagger_{i, +\frac{1}{2}} a^\dagger_{i, -\frac{1}{2}} + C^{0, 0}_{\frac{1}{2}, -\frac{1}{2} ; \frac{1}{2}, +\frac{1}{2} } a^\dagger_{i, -\frac{1}{2}} a^\dagger_{i, +\frac{1}{2}} \right) \\
& = \frac{1}{2} a^\dagger_{i, +\frac{1}{2}} a^\dagger_{i, -\frac{1}{2}} - \frac{1}{2} a^\dagger_{i, -\frac{1}{2}} a^\dagger_{i, +\frac{1}{2}} \\
&= a^\dagger_{i, +\frac{1}{2}} a^\dagger_{i, -\frac{1}{2}}
\end{align}

The product in Equation \eqref{eq:yk_states_formula} is defined to multiply the operators $\mathcal{A}_i^\dagger(m_i, \mathbf{d}_i)$ `to the right', i.e. every term appearing in the sum is of the form:

\begin{equation}
C_1 C_2 \cdots C_d \times \mathcal{A}_1^\dagger(m_1, \mathbf{d}_1) \mathcal{A}_2^\dagger(m_2, \mathbf{d}_2) \cdots \mathcal{A}_d^\dagger(m_d, \mathbf{d}_d) \ket{0}^{\otimes 2d},
\end{equation}

\noindent where $C_i$ are Clebsch-Gordan coefficients. With the Jordan-Wigner fermion-to-qubit mapping (defined in Equation \eqref{eq:jw_creation_annihilation}), we can thus write each term as:

\begin{equation}
C_1 C_2 \cdots C_d \times (\frac{1}{2}(X -iY))^{a_1} Z^{b_1} \ket{0} \otimes (\frac{1}{2}(X -iY))^{a_2} Z^{b_2}  \ket{0} \otimes \cdots \otimes (\frac{1}{2}(X -iY))^{a_d} Z^{b_d}  \ket{0}, \ \ \ a_i, b_i \in \{0, 1\}
\end{equation}

\noindent Thus, for every term in the sum and every qubit in the $2d$-fold tensor product, any $Z$ Pauli matrix present is applied to a $|0\rangle$ state first, meaning that there are no intermediate minus signs to keep track of other than those introduced by the Clebsch-Gordan coefficients (i.e. the $Z$ operators can be dropped). As an example of the above, GT states constructed via Equation \eqref{eq:yk_states_formula} are given for $d=1$ and $d=2$ in Figures \ref{fig:d1_gt_states} and \ref{fig:d2_gt_states}. 

As a consequence of the absence of intermediate minus signs in Equation \eqref{eq:yk_states_formula}, the GT states of interest \textit{structurally resemble} another family of states, referred to as the \textit{Schur basis} in quantum information \cite{harrowphd} or \textit{spin eigenfunctions} in quantum chemistry \cite{Pauncz1979}. They are states of the form:

\begin{equation}
|S, M; \mathbf{s} \rangle = \sum_{m_1, m_2} \sum_{m_{[2]}, m_3} \cdot \cdot \sum_{m_{[n-1]}, m_n} C^{s_{[2]}, m_{[2]}}_{\frac{1}{2} m_1 ;\frac{1}{2}  m_2} C^{s_{[3]}, m_{[3]}}_{s_{[2]}, m_{[2]} ;\frac{1}{2}  m_3} \cdot \cdot \ C^{s_{[n-1]}, m_{[n-1]}}_{s_{[n-2]}, m_{[n-2]} ;\frac{1}{2}  m_{n-2}} C^{S M}_{s_{[n-1]}, m_{[n-1]} ;\frac{1}{2}  m_n} |m_1 \hdots m_n \rangle \label{eq:schur_states_formula}
\end{equation}

\noindent Where $\mathbf{s} = \{ s_{[1]}, s_{[2]}, ..., s_{[n-1]}, S\}$ is a sequence of spins obtained by a sequential coupling of $n$ spin-$\frac{1}{2}$ particles, $s_{[1]} = \frac{1}{2}$, and $s_{[i+1]} = s_{[i]} \pm \frac{1}{2}$. This definition coincides with Equation \eqref{eq:yk_states_formula}, with the only difference being that in addition to coupling spin-$\frac{1}{2}$ orbitals, in the $U(d) \times SU(2)$ GT basis one can also couple spin-$0$ orbitals in two different ways: one when both of its spin-orbitals are both empty, and another when they are both occupied and pre-coupled into a singlet.  When one does this, the intermediate spin values read $s_{[i]} = s_{[i-1]}, m_{[i]}=m_{[i-1]}, s_i = 0, m_i = 0$. The $i^\text{th}$ Clebsch-Gordan coefficient then reads $C^{s_{[i]}, m_{[i]}}_{s_{[i-1]}, m_{[i-1]}; 0, 0} = 1$, and the action of the operator $\mathcal{A}_i^\dagger(0, \mathbf{d}_i)$ `tensors in' a state $\ket{00}$ or $\ket{11}$ into the $(2i)^\text{th}$ position for all bitstrings in the sum. 

What the above tells us is that the coefficients $c_{x_{1 \uparrow} x_{1 \downarrow} \cdots x_{d \uparrow} x_{d \downarrow}}$ of $U(d) \times SU(2)$ GT states in Equation \eqref{eq:yk_states_formula} will \textit{always} correspond to the coefficients of some Schur state. We can show this by considering two cases: the first is when $\mathbf{d}$ contains no instances of $\mathbf{d}_i = 0, 3$ (i.e. none of the steps come from the first or last row of Table \ref{fig:step_vectors_binary}). Then, the GT state (in the occupation basis) is given by taking a Schur state on $d$ qubits, and re-writing each $\ket{m_i}$ on the right-hand side of Equation \eqref{eq:schur_states_formula} by $\beta = \ket{\uparrow} = \ket{10}$ and $\alpha = \ket{\downarrow} = \ket{01}$. On the other hand, if the step vector $\mathbf{d}$ contains $k$ instances of $d_i = 0, 3$, then the GT state will be obtained by taking a Schur state on $d - k$ qubits, re-writing each $\ket{m_i}$ with $\{ \ket{10}, \ket{01} \}$ as before, and finally tensoring in each $\ket{00}$ or $\ket{11}$ at the appropriate positions. This works because the Clebsch-Gordan coefficient associated with coupling a spin-$0$ system is always equal to  $1$. It is well-known that in the first quantisation spin eigenfunctions play a very useful role by virtue of their well-defined values of global spin. Their close relation to the $U(d) \times SU(2)$ GT states tells us that methods for preparing Schur states will prove useful in our case, and that our GT states will \textit{reduce} to spin eigenfunctions for systems where each orbital is singly-occupied. In essence, the GT basis of interest is a very straightforward generalisation of the Schur states to a second quantised setting.

\subsubsection{Spin Coupling as Rotations} \label{sec:spin_coupling_rotations}

With the formula for constructing GT states by sequential coupling in Equation \eqref{eq:yk_states_formula}, we are ready to provide some intuition behind the circuit for the quantum Paldus transform. In Section \ref{sec:paldus_transform} we present the transform as an \textit{isometry} between the states $\ket{N, S, M; \mathbf{d}}$ and a \textit{UGA basis}, for which each quantum number is stored locally in a separate subsystem as $\ket{N} \otimes \ket{S} \otimes \ket{M} \otimes \ket{\mathbf{d}}$. As a precursor, we end this section by showing that the construction of GT states can be seen as sequence of controlled rotations. This is because the Clebsch-Gordan coefficients which constitute the amplitudes in Equation \eqref{eq:gt_state_expansion} form the components of a rotation, given by:

\begin{align}
  C^{S - \frac{1}{2} , M }_{S, M+ \frac{1}{2}; \frac{1}{2}, -\frac{1}{2} } &= \sqrt{\frac{1}{2}} \sqrt{1 + \frac{2M}{2S + 1}} = \cos(\theta_{S, M}), & C^{S + \frac{1}{2} , M}_{S, M + \frac{1}{2} ; \frac{1}{2}, -\frac{1}{2} } &= \sqrt{\frac{1}{2}} \sqrt{1 - \frac{2M}{2S + 1}} = \sin(\theta_{S, M}), \label{eq:CGMatrix} \\
  C^{S - \frac{1}{2} , M }_{S, M- \frac{1}{2}; \frac{1}{2}, +\frac{1}{2} } &= - \sqrt{\frac{1}{2}} \sqrt{1 - \frac{2M}{2S + 1}} = -\sin(\theta_{S, M}), & C^{S + \frac{1}{2} , M }_{S, M - \frac{1}{2}; \frac{1}{2}, +\frac{1}{2} }  &= \sqrt{\frac{1}{2}} \sqrt{1 + \frac{2M}{2S + 1}} = \cos(\theta_{S, M}) \nonumber
\end{align}

\noindent Therefore, a unitary change of basis between the left and right-hand sides of Equation \eqref{eq:gt_state_expansion} can be obtained by repeatedly rotating states of the form $\ket{N, S, M'; \mathbf{d}} \otimes \ket{x_{i \uparrow} x_{i \downarrow}} $ into new states $\ket{N', S', M; \mathbf{d}'}$, where we can write the former as:

\begin{equation}
  \ket{N, S, M'; \mathbf{d}} \otimes \ket{x_{i \uparrow} x_{i \downarrow}} = \begin{cases} \ket{N, S, M; \mathbf{d}} \otimes \ket{n_i = 0, s_i=0, m_i=0} & \text{ if } x_{i \uparrow} x_{i \downarrow} = 00, \\
    \ket{N, S, M - 1/2; \mathbf{d}} \otimes \ket{n_i = 1, s_i=1/2, m_i=1/2} & \text{ if } x_{i \uparrow} x_{i \downarrow} = 10, \\
    \ket{N, S, M + 1/2; \mathbf{d}} \otimes \ket{n_i = 1, s_i=1/2, m_i=-1/2} & \text{ if } x_{i \uparrow} x_{i \downarrow} = 01, \\
    \ket{N, S, M; \mathbf{d}} \otimes \ket{n_i = 2, s_i=0, m_i=0} & \text{ if } x_{i \uparrow} x_{i \downarrow} = 11 \end{cases}
\end{equation} 

Because of the mentioned absence of minus signs in Equation \eqref{eq:yk_states_formula}, this is completely analogous to how the Schur states and spin eigenfunctions are constructed, with the main difference being that some of the rotations elements (those corresponding to coupling spin-$0$ orbitals) are trivial with the CG coefficients $C^{S, M}_{S, M ; 0, 0} = 1$. We demonstrate this in Figures \ref{fgr:mz_matrix_arcs} and \ref{fgr:mz_gt_coupling}. In our diagrams, a yellow arc signifies coupling to raise $(S, N)$ by $(+\frac{1}{2}, 1)$, a green arc signifies coupling to change $(S, N)$ by $(-\frac{1}{2}, 1)$, a red arc couples an empty orbital (leaving $S$ and $N$ unchanged), and a blue arc couples a doubly-occupied orbital (also leaving $S$ unchanged, but raising $N$ by $2$). Crucially, the corresponding Clebsch-Gordan coefficients for $S+\frac{1}{2}$ and $S-\frac{1}{2}$ coupling depend on the spin projection $M$, reflected in the expressions for $\theta_{S, M}$. Together with the trivial CG coefficients for when $S$ is unchanged, the full $R(\theta_{S, M})$ is a $4\times4$ Givens rotation matrix.

\begin{figure}[h!]
  \centering
    \includegraphics[width=\textwidth]{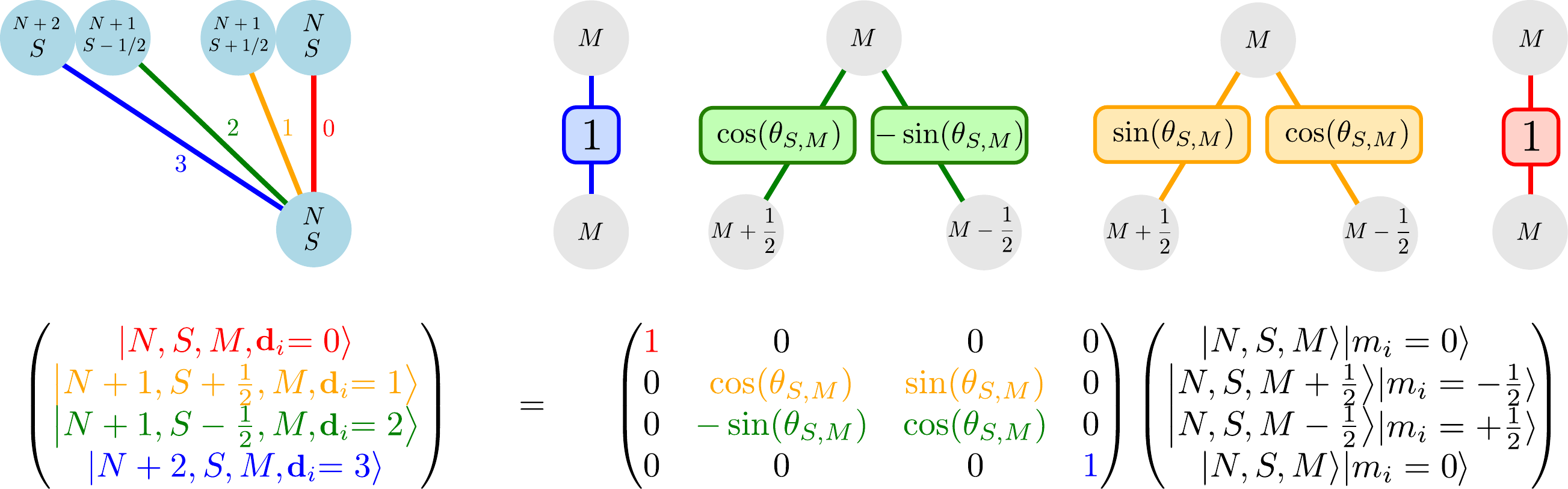}
    \caption{\textit{Top left:} Coupling by steps $\mathbf{d}_i \in \{0,1,2,3\}$ out of a node $N, S$ of a Shavitt graph. The $(\Delta S, \Delta N)$ for each $\mathbf{d}_i$ are represented inside the nodes.   \textit{Top right:} The corresponding coupling of $M$ values which appear in the spin coupling graphs of Figure \ref{fgr:mz_gt_coupling}. \textit{Bottom:} The overall effect of the coupling can be seen as a rotation from tensor product states $\ket{N, S, M'; \mathbf{d}} \otimes \ket{m_i}$ (bottom nodes) into Gelfand-Tsetlin states $\ket{N', S', M; \mathbf{d}'}$ (top nodes).}
    \label{fgr:mz_matrix_arcs}
\end{figure}

\newpage

In Figure \ref{fgr:mz_gt_coupling}, the correspondence between Shavitt graph walks (given by $\mathbf{d}$) and $SU(2)$ spin coupling is shown. Each Shavitt graph walk (left) gives rise to a spin coupling graph (right), with the top node indicating a final choice of $M$ which is undeteremined by $\mathbf{d}$. The resulting $U(d) \times SU(2)$ GT states $\ket{N, S, M; \mathbf{d}}$ are then constructed by summing over all possible upwards walks from the bottom node to an uppermost $M$ node. Each step in such a walk corresponds to `tensoring' of a state $\ket{x_{i \uparrow} x_{i \downarrow}}$ multiplied by the weighing (CG coefficient) of the edge. Steps to the left ($\nwarrow$) correspond to $\alpha = \ket{01}$ indicating $\Delta M = -\frac{1}{2}$, steps to the right ($\nearrow$) correspond to $\beta = \ket{10}$ indicating $\Delta M = +\frac{1}{2}$, and steps upwards ($\uparrow$) correspond to $\ket{00}$ (for red edges) or $\ket{11}$ (for blue edges) both indicating $\Delta M = 0$. This method performs exactly the same calculation as Equation \eqref{eq:yk_states_formula}.

To make Figure \ref{fgr:mz_gt_coupling} more intuitive, give two examples. The state $\ket{N=3, S=\frac{3}{2}, M=\frac{1}{2}; \mathbf{d = 10, 10, 10}}$ (with $\mathbf{d}$ in binary via Table \ref{fig:step_vectors_binary}) in the top right of Figure \ref{fgr:mz_gt_coupling} is calculated as follows:

\begin{align}
\ket{3, \frac{3}{2}, \frac{1}{2}; \mathbf{10, 10, 10}} &= \left(  \ket{\nwarrow}  \otimes  \frac{1}{\sqrt{2}} \ket{\nearrow} \otimes  \sqrt{\frac{2}{3}} \ket{\nearrow} \right) +  \left(  \ket{\nearrow}  \otimes \frac{1}{\sqrt{2}} \ket{\nwarrow} \otimes  \sqrt{\frac{2}{3}} \ket{\nearrow} \right) +  \left( \ket{\nearrow}  \otimes \ket{\nearrow} \otimes  \frac{1}{\sqrt{3}}  \ket{\nwarrow} \right) \\
&= \frac{1}{\sqrt{3}} \ket{011010} + \frac{1}{\sqrt{3}} \ket{100110} + \frac{1}{\sqrt{3}} \ket{101001},
\end{align}

\noindent whereas the GT basis state $\ket{2, \frac{1}{2}, 0; \mathbf{10, 00, 10}}$ is given by:

\begin{equation}
  \ket{2, \frac{1}{2}, 0; \mathbf{10, 00, 10}} = \left(  \ket{\nwarrow}  \otimes { \ket{\color{purple} \uparrow} } \otimes  \frac{1}{\sqrt{2}} \ket{\nearrow} \right) +  \left(  \ket{\nearrow}  \otimes { \ket{\color{purple} \uparrow} } \otimes  \frac{1}{\sqrt{2}} \ket{\nwarrow} \right)  = \frac{1}{\sqrt{2}} \ket{010010} +\frac{1}{\sqrt{2}} \ket{100001}.
\end{equation}

\noindent The form of the above states should reaffirm the relation of the GT basis states to Schur states, as remarked on in Section \ref{sec:gt_states_construction}. As we will see in the following sections, working in the GT basis offers numerous advantages, one of them being the block-diagonalisation of $U(d) \times U(2)$ antisymmetric representations, which in turn correspond to evolution by spin-free Hamiltonians. This is shown in Figure \ref{fig:quga_block_diagonals}. With the GT basis defined this way, we recover the full dimensionality of the $2^{2d}$ occupation number states -- a proof of this is given in Appendix \ref{app:dimension_formula}.

\begin{figure}[]
  \centering
    \includegraphics[width=\textwidth]{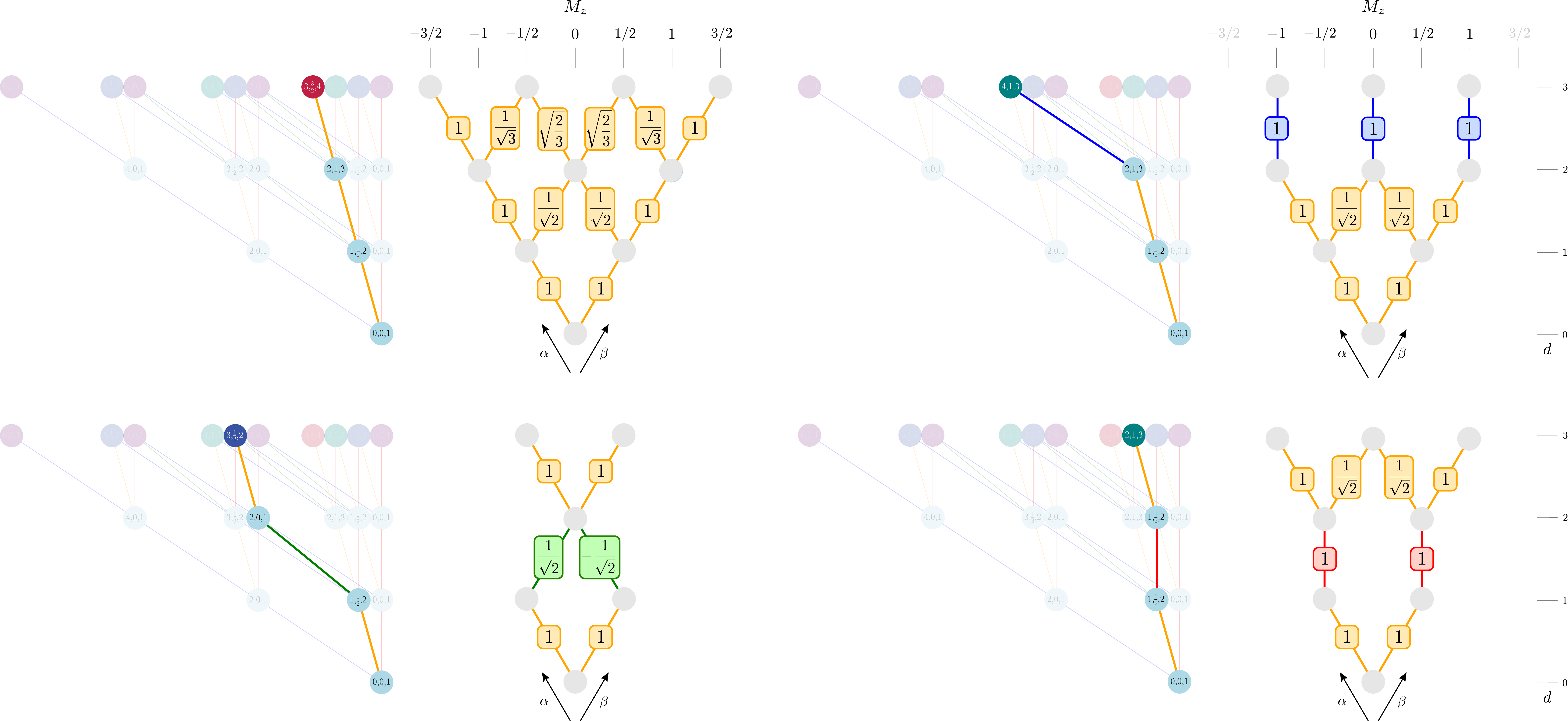}
    \caption{Four sub-plots illustrating examples of genealogical coupling of spin primitives $\ket{x_{i \uparrow} x_{i \downarrow}}$. Each spin coupling graph is obtained out of a walk on a Shavitt graph. The Clebsch-Gordan coupling coefficients are annotated on the arcs, with colour-coding indicating the type of step $\mathbf{d}_i \in \{0,1,2,3\}$ in the corresponding Shavitt walk. The spin coupling graphs can be used to explicitly write down GT basis states in terms of the occupation number basis: see the main text for details.}
    \label{fgr:mz_gt_coupling}
\end{figure}

\newpage

\begin{figure}[htb!]
  \includegraphics[width=17.3cm]{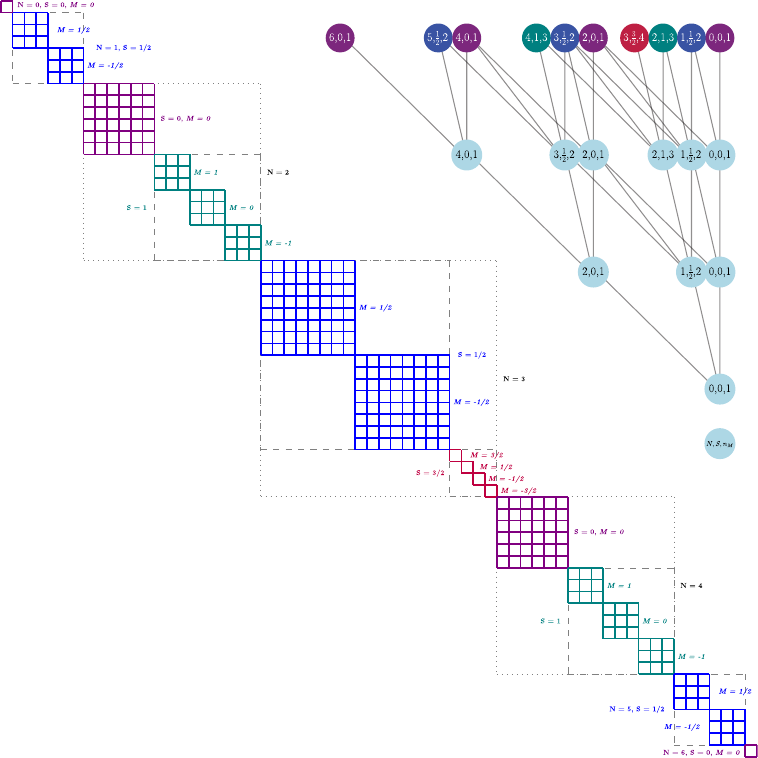}
\caption{\textit{Left: Block-diagonalisation of Antisymmetric $U(3) \times U(2)$ Representations in the Gelfand-Tsetlin Basis.} The diagram represents matrix elements of $U(d)$ group action for $d=3$ under the $U(3) \times U(2)$ GT basis: examples include evolution by spin-free Hamiltonians \eqref{eq:hud_hamiltonian} and the Hubbard model \eqref{eq:fermihubbardladder}. Solid boxes represent irreducible representations $(N, S)$, spanned by GT basis states $\ket{N, S, M, \mathbf{d}}$. In the block-diagonalisation with respect to $U(d)$, the spin projection $M$ gives rise to $2S+1$ multiplicities of each irrep. Dotted lines group blocks of common particle number $(N)$. Dashed lines group simultaneous irreps of $U(3) \times U(2)$. Colours represent the value of $S$ (Purple $= 0$, Blue $= 1/2$, Green $= 1$, Red $= 3/2$). Under the action of $U(2)$ one obtains a similar block-diagonalisation, with the multiplicities and dimensions of each block swapped. \textit{Right: Full Shavitt Graph for $U(3) \times U(2)$}. Each node represents an $(N, S)$ irrep of $U(3) \times U(2)$ and is labeled $(N, S, n_M)$, with the spin multiplicity written as $n_M = 2S + 1$. Each walk from the $(0, 0, 1)$ node to an uppermost node represents a unique $U(3)$ GT basis state, with the nodes traversed indicating the irreps under group subduction which the state transforms according to. The nodes are coloured to match up with those of the left-hand side blocks.}
\label{fig:quga_block_diagonals}
\end{figure}

\newpage

\

\newpage

\section{The Quantum Paldus Transform} \label{sec:paldus_transform}

Over the 50 years since its conception, the unitary group approach has enjoyed great success in CSF calculations, establishing itself as a mature and well-understood tool in computational chemistry. Although past research has utilised concepts from UGA \cite{gandon2024, Dobrautz2019}, up until now it was not known how to fully implement the formalism in a quantum computational setting. Continuing on from our derivations of the mathematical structure of $U(d) \times SU(2)$ branching and Gelfand-Tsetlin states, we now derive an efficient algorithm for the quantum Paldus transform -- a change of basis for re-expressing fermionic occupation number states explicitly in terms of their GT patterns. Our algorithm has wide applicability, from the asymptotically efficient simulation of quantum chemistry Hamiltonians to novel quantum information protocols for noise mitigation. 

The circuit for the quantum Paldus transform is given in Figure \ref{fig:paldus_transform}. It proceeds analogously to the Schur transform, with a cascade of Clebsch-Gordan transforms -- however, our CG transforms differ from those of \cite{BCH_2006} by their ability to couple spin-$0$ subsystems (both empty and paired spatial orbitals), and the resulting need to also keep track of the particle number $N$ of the GT states. Each such CG transform is composed of multi-qubit incrementers and a multiplexed sequence of controlled Givens rotations with Clebsch-Gordan coefficients as the rotation matrix elements. In effect, consecutive CG transforms simultaneously perform Clebsch-Gordan couplings of the input states. We give the algorithm as an abstract set of steps, which overall perform the desired change of basis. Then, we discuss the finer, technical details and give an explicit construction of the circuit on qubits. Finally, we discuss matters of complexity, giving the asymptotic scaling of elementary gates and ancilla qubits required as well as how to implement the controlled Givens rotations present in our algorithm in a fault-tolerant manner. More detailed resource estimates of the fault-tolerant implementation are also provided in Appendix \ref{sec:circuit_compilation}.

\

\begin{figure}[h!]
  \includegraphics[width=17.5cm]{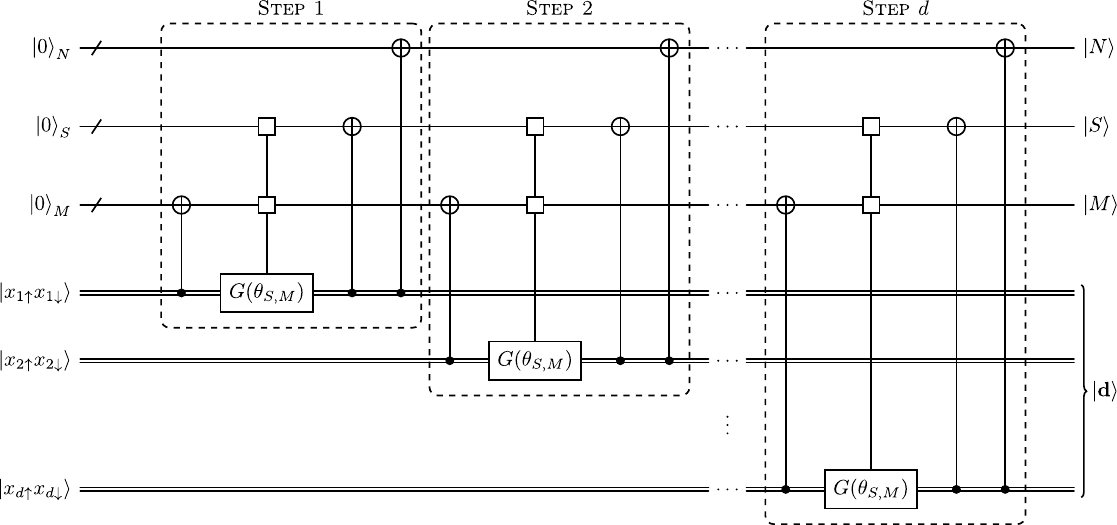}
  \caption{\textit{The Quantum Paldus Transform Circuit.} The Paldus transform is a quantum circuit that maps the occupation number basis to the UGA basis. Each step is a Clebsch-Gordan transform (Figure \ref{fig:cg_transform}) composed of $S, M, N$ incrementers (Figures \ref{fig:incrementer_s},  \ref{fig:incrementer_m}, \ref{fig:incrementer_n}) and a sequence of controlled Givens rotations (Figure \ref{fig:controlled_givens_rotations}).}
  \label{fig:paldus_transform}
\end{figure}

\

In Section \ref{sec:paldus_duality} we have derived and gained understanding of how the fermionic Fock space decomposes under the antisymmetric representation of the group $U(d) \times SU(2)$, giving rise to the isomorphism:

\begin{equation}
  (\mathbb{C}^2)^{\otimes 2d} \cong \bigwedge(\mathbb{C}^d \otimes \mathbb{C}^2) \cong \bigoplus_{\substack{N=0}}^{2d} \bigoplus_{S=0}^{N / 2} W^{U(d)}_{(N,S)} \otimes W^{SU(2)}_S
\end{equation}

\noindent Then, in Section \ref{sec:SU2} we have shown how to construct a basis \textit{adapted} to the irrep decomposition, formed of the Gelfand-Tsetlin states $\ket{N, S, M; \mathbf{d}}$. The GT states have the advantage of holding well-defined values of quantum numbers relevant in the study of molecular systems. They are always expressible as linear combinations of Fock states, with Clebsch-Gordan coefficients as the probability amplitudes. However, even though the quantum numbers of each GT state are well-defined, they are not \textit{locally accessible} through a simple measurement; rather, they are encoded \textit{globally} in the coefficients of the Fock states. From a practical point of view, this is not helpful -- unless we can represent each quantum number $(N, S, M, \mathbf{d})$ on a separate register, having a basis adapted to the GT states is of limited utility. To address this, we define the \textit{UGA basis}, composed of states of the form:

\begin{equation}
  \ket{N}_N \otimes \ket{S}_S \otimes \ket{M}_M \otimes \ket{\mathbf{d}}_{\mathbf{d}}
\end{equation} 

\noindent Where the values of $(N, S, M, \mathbf{d})$ have the same physical meaning as before, but are now stored in separate subsystems, with the $N, S, M$ values represented in binary using $n_N, n_S, n_M$ qubits. The step vectors $\mathbf{d}$ are also represented as $2d$-bit strings, via rules given in Table \ref{fig:step_vectors_binary}. For this and subsequent sections, we will denote $\mathbf{d}$ using bitstrings, with $\mathbf{d}_i$ representing their $i^\text{th}$ bit. The branching rules of Section \ref{sec:shavitt} imply that the relevant quantum numbers satisfy the constraints:

\begin{align}
    &\sum_{i=1}^{2d} \mathbf{d}_i = N \leq 2d, & 
    &\sum_{i=1}^{k } (\mathbf{d}_{2i-1} - \mathbf{d}_{2i}) \geq 0, \ \ k \in \llbracket d-1 \rrbracket, \\
    &\sum_{i=1}^d (\mathbf{d}_{2i-1} - \mathbf{d}_{2i})  = 2S \leq d, &
    &M \in \{-S, -S + 1, \hdots, S -1, S\} \label{eq:ugabasisconstraints}.
\end{align} 

The UGA basis forms the quantum computational analogue of the \textit{electronic Gelfand basis} used in the original unitary group approach. Being defined with respect to the same labels as the Gelfand-Tsetlin states, its basis vectors are a representation of the latter. However, we refrain from using the same name for two reasons: the first is that UGA basis states contain information about the spin projection $M$, whereas in the original unitary group approach $M$ is implicitly taken to be its highest value, i.e. $M=S$. For many applications involving spin-free Hamiltonians this is unproblematic, as such Hamiltonians do not affect the value of $M$ and so it can be set to any arbitrary value. However, when constructing a mapping from the Fock states to the UGA basis having freedom in $M$ ensures that the latter spans a $2^{2d}$-dimensional space. The second is that Gelfand-Tsetlin states are still expressible using the $2d$-qubit computational basis via Equation \eqref{eq:yk_states_formula}, so the difference between the two is how they are \textit{encoded}, rather than their properties. For the case of $d=1$, the Gelfand-Tsetlin states in the occupation number basis and UGA basis are shown in Figure \ref{fig:d1_gt_states}. For $d=2$, they are shown in Figure \ref{fig:d2_gt_states}.

In order to write the UGA basis states, typically additional qubits are needed: $c = n_N + n_S + n_M$ qubits for the $N, S, M$ registers, and $2d$ qubits for the $\mathbf{d}$ register. The upside of this is the UGA states exhibit the properties of the GT states in a far more natural and practically useful manner, owing to the tensor product structure. For example, in the GT basis the antisymmetric representation of group elements $Q(g)\in U(d)$, $R(h) \in U(2)$ reduces to separate action on the $U(d)$ and $U(2)$ irrep labels:

\begin{equation}
  Q(g) \otimes R(h) \left( \ket{N} \otimes \ket{S} \otimes \ket{M} \otimes \ket{\mathbf{d}} \right) = \ket{N} \otimes \ket{S} \otimes \left(  \sum_{M'=-S}^S [\tilde{R}_{N, S}(h)]_{M', M} \ket{M'} \right)\otimes \left(\sum_{\mathbf{d}'} [\tilde{Q}_{N, S}(g)]_{\mathbf{d', d}} \ket{\mathbf{d}'} \right)
\end{equation}

\noindent where $[\tilde{Q}_{N, S}(g)]$ is a $T^d_{N, S} \times T^d_{N, S}$ irreducible representation matrix which only acts on $\ket{\mathbf{d}}$ states belonging to a $(N, S)$ subspace, and $[\tilde{R}_{N, S}(h)]$ is a $(2S+1) \times (2S+1)$ irrep matrix which only acts on $\ket{M}$ states. The group action of $U(d) \times U(2)$ is thus manifestly separable, acting on each corresponding subsystem independently. 

\begin{definition}[Quantum Paldus Transform]
The quantum Paldus transform is an \textit{isometry} mapping the Gelfand-Tsetlin states of the antisymmetric representation of $U(d) \times U(2)$ from the occupation basis to the UGA basis. Written as an unitary $U_P$ in the enlarged Hilbert space of $n_N + n_S + n_M + 2d$ qubits, it acts as:

\begin{equation}
  U_P \left(  \ket{0}^{\otimes (n_N + n_S + n_M)} \otimes \ket{N, S, M; \mathbf{d}} \right) = |N \rangle_N \otimes |S \rangle_S \otimes |M \rangle_M \otimes |\mathbf{d} \rangle_\mathbf{d}.
\end{equation}

\noindent The output UGA basis states are mutually orthogonal, and span a $2^{2d}$-dimensional subspace of the $(c + 2d)$ qubits which $U_P$ acts on. 
\end{definition}

One can verify (Appendix \ref{app:dimension_formula}) that the space of \textit{valid} GT states is $2^{2d}$-dimensional, so this isometry is indeed possible. The algorithm for $U_P$ transfers information about $N, S, M$ onto the ancillary registers, while the original $2d$ qubits of the input state become a carrier for the step vectors $\mathbf{d}$. Thus, any quantum computational task on the original Fock space of $2d$ qubits can be accomplished in the UGA basis through our algorithm.

The circuit for $U_P$ is shown in Figure \ref{fig:paldus_transform}. It consists of a cascade of Clebsch-Gordan transforms, each of which takes in two qubits from the $\mathbf{d}$ register. The Clebsch-Gordan transforms are controlled by the $N, S, M$ registers, which are updated accordingly to simulated the sequential branching illustated in Figure~\ref{fgr:mz_matrix_arcs}. The overall circuit is then constructed with a complexity of $\mathcal{O}(d^3)$ in terms of Toffoli gates.

\begin{figure}[htb!]
  \begin{align}
    \ket{0}_N \otimes \ket{0}_S \otimes \ket{0}_M \otimes \ket{\mathbf{00}}_\mathbf{d} &\leftrightarrow \ket{00} &
    \ket{1}_N \otimes \ket{1/2}_S \otimes \ket{1/2}_M \otimes \ket{\mathbf{10}}_\mathbf{d} &\leftrightarrow \ket{10} \\
     \ket{2}_N \otimes \ket{0}_S \otimes \ket{0}_M \otimes \ket{\mathbf{11}}_\mathbf{d} &\leftrightarrow \ket{11}  &
     \ket{1}_N \otimes \ket{1/2}_S \otimes \ket{-1/2}_M \otimes \ket{\mathbf{10}}_\mathbf{d} &\leftrightarrow \ket{01} 
  \end{align}
  \caption{\textit{GT states for $d=1$}. The same GT states are written out in the UGA basis (left) and occupation basis (right).}
  \label{fig:d1_gt_states}
\end{figure}

\begin{figure}[h!]
  \begin{align*}
      \ket{0}_N \otimes \ket{0}_S \otimes \ket{0}_M \otimes \ket{\mathbf{00, 00}}_\mathbf{d} &\leftrightarrow \ket{0000} & \ket{4}_N \otimes \ket{0}_S \otimes \ket{0}_M \otimes \ket{\mathbf{11, 11}}_\mathbf{d} &\leftrightarrow \ket{1111}  \\
      & & & \\
      \ket{1}_N \otimes \ket{1/2}_S \otimes \ket{1/2}_M \otimes \ket{\mathbf{00, 10}}_\mathbf{d} &\leftrightarrow \ket{0010} & \ket{3}_N \otimes \ket{1/2}_S \otimes \ket{-1/2}_M \otimes \ket{\mathbf{11, 10}}_\mathbf{d} &\leftrightarrow \ket{1101} \\
      \ket{1}_N \otimes \ket{1/2}_S \otimes \ket{1/2}_M \otimes \ket{\mathbf{10, 00}}_\mathbf{d} &\leftrightarrow \ket{1000} & \ket{3}_N \otimes \ket{1/2}_S \otimes \ket{-1/2}_M \otimes \ket{\mathbf{10, 11}}_\mathbf{d} &\leftrightarrow \ket{0111} \\
      & & & \\
      \ket{1}_N \otimes \ket{1/2}_S \otimes \ket{-1/2}_M \otimes \ket{\mathbf{00, 10}}_\mathbf{d} &\leftrightarrow \ket{0001} & \ket{3}_N \otimes \ket{1/2}_S \otimes \ket{1/2}_M \otimes \ket{\mathbf{11, 10}}_\mathbf{d} &\leftrightarrow \ket{1110} \\
      \ket{1}_N \otimes \ket{1/2}_S \otimes \ket{-1/2}_M \otimes \ket{\mathbf{10, 00}}_\mathbf{d} &\leftrightarrow \ket{0100} & \ket{3}_N \otimes \ket{1/2}_S \otimes \ket{1/2}_M \otimes \ket{\mathbf{10, 11}}_\mathbf{d} &\leftrightarrow \ket{1011} \\
      & & & \\
      & & \ket{2}_N \otimes \ket{1}_S \otimes \ket{1}_M \otimes \ket{\mathbf{10, 10}}_\mathbf{d} &\leftrightarrow \ket{1010} \\
      \ket{2}_N \otimes \ket{0}_S \otimes \ket{0}_M \otimes \ket{\mathbf{00, 11}}_\mathbf{d} &\leftrightarrow \ket{0011}  & &  \\
       \ket{2}_N \otimes \ket{0}_S \otimes \ket{0}_M \otimes \ket{\mathbf{10, 01}}_\mathbf{d} &\leftrightarrow \frac{1}{\sqrt{2}}(\ket{1001} - \ket{0110}) & \ket{2}_N \otimes \ket{1}_S \otimes \ket{0}_M \otimes \ket{\mathbf{10, 10}}_\mathbf{d} &\leftrightarrow \frac{1}{\sqrt{2}}(\ket{1001} + \ket{0110}) \\
       \ket{2}_N \otimes \ket{0}_S \otimes \ket{0}_M \otimes \ket{\mathbf{11, 00}}_\mathbf{d} &\leftrightarrow \ket{1100} & & \\
      & & \ket{2}_N \otimes \ket{1}_S \otimes \ket{-1}_M \otimes \ket{\mathbf{10, 10}}_\mathbf{d} &\leftrightarrow \ket{0101}
    \end{align*}
  \caption{\textit{GT states for $d=2$}. GT states are written out in the UGA basis (left) and occupation basis (right). They have been grouped into irreps of $U(d)$, spanned by distinct step vectors $\mathbf{d}$ with common values of $S, N$. Each appears with a multiplicity of $2S + 1$.}
  \label{fig:d2_gt_states}
  \end{figure}

\subsection{Clebsch-Gordan Transforms for the Paldus Transform} \label{sec:cgtransform}

Our algorithm for the Paldus transform algorithm is expressible as a sequence of \textit{Clebsch-Gordan transforms}. This approach mirrors that of the Schur transform by Bacon, Chuang, and Harrow \cite{BCH_2005, BCH_2006}, and works due to the structural similarities between the Schur states (Equation \eqref{eq:schur_states_formula}) and the GT basis in the Jordan-Wigner representation, as observed in Section \ref{sec:gt_states_construction}. We can take advantage of these similarities and accordingly modify the Schur transform to obtain the Paldus transform. As noted in Section \ref{sec:spin_coupling_rotations}, the GT states contain products of CG coefficients as their probability amplitudes, which come about from a sequence of controlled rotations, each of which corresponds to a CG coupling (Figure \ref{fgr:mz_matrix_arcs}). We now show how to implement such couplings algorithmically. 

The idea behind the Clebsch-Gordan transform is to take in two states -- belonging to separate subsystems with quantum numbers $(N, S, M')$ and $(n_i, s_i, m_i)$ -- and decompose them into a superposition of states inhabiting a larger system with quantum numbers $(N', S', M)$, completely akin to how one would approach the task of coupling angular momenta in textbook quantum mechanics. If $S$ and $s_i$ are arbitrary, this is straightforward to write out on paper, but complicated to implement algorithmically on a quantum computer. However, if we couple orbitals one at a time so that $s_i \in \{0, \frac{1}{2}\}$, then the corresponding decomposition is much simpler. This mirrors our interpretation of the \textit{step vectors} (shown in Figure~\ref{fgr:mz_matrix_arcs}), which as we have seen correspond to a sequential coupling procedure. Thus, we seek a quantum algorithm which takes in states of the form $\ket{N} \ket{S} \ket{M'} \ket{\mathbf{d}} \otimes \ket{x_{i \uparrow} x_{i \downarrow}}$ and outputs a superposition of states $\ket{N'} \ket{S'} \ket{M} \ket{\mathbf{d}'}$, where $\mathbf{d}'$ is the original step vector appended with two bits from $\{\mathbf{00, 10, 01, 11}\}$ at the end, to indicate one of the four types of coupling which has taken place. We therefore define the CG transform to be a unitary acting on $n_N + n_S + n_M + 2$ qubits which performs the following mapping:

\begin{align}
  \big( \ket{N}_N \otimes \ket{S}_S \otimes \ket{M}_{M'} \otimes \ket{\mathbf{d}}_\mathbf{d} \big) \otimes \ket{00} &\longrightarrow \ket{N}_{N'} \otimes \ket{S}_{S'} \otimes \ket{M}_M \otimes \ket{\mathbf{d, 00}}_{\mathbf{d}'} \label{eq:cg_transform_rules}  \\ 
  \big( \ket{N}_N \otimes \ket{S}_S \otimes \ket{M + 1/2}_{M'} \otimes \ket{\mathbf{d}}_\mathbf{d} \big) \otimes \ket{01} &\longrightarrow  C^{S - \frac{1}{2}, M }_{S, M + \frac{1}{2}; \frac{1}{2}, -\frac{1}{2} }\ket{N+1}_{N'} \otimes \ket{S - 1/2 }_{S'} \otimes \ket{M }_M \otimes \ket{\mathbf{d, 01}}_{\mathbf{d}'}  \\ 
  &+  C^{S + \frac{1}{2} , M  }_{S,  M + \frac{1}{2}; \frac{1}{2}, -\frac{1}{2} }\ket{N+1}_{N'} \otimes \ket{S + 1/2 }_{S'} \otimes \ket{M }_M \otimes \ket{\mathbf{d, 10}}_{\mathbf{d}'} \\
  \big( \ket{N}_N \otimes \ket{S}_S \otimes \ket{M - 1/2}_{M'} \otimes \ket{\mathbf{d}}_\mathbf{d} \big) \otimes \ket{10} &\longrightarrow   C^{S - \frac{1}{2} , M }_{S, M -\frac{1}{2}; \frac{1}{2}, +\frac{1}{2} }\ket{N+1}_{N'} \otimes \ket{S - 1/2 }_{S'} \otimes \ket{M }_M \otimes \ket{\mathbf{d, 01}}_{\mathbf{d}'}\\ 
  &+  C^{S + \frac{1}{2} , M }_{S, M - \frac{1}{2}; \frac{1}{2}, +\frac{1}{2} }\ket{N+1}_{N'} \otimes \ket{S + 1/2 }_{S'} \otimes \ket{M  }_M \otimes \ket{\mathbf{d, 10}}_{\mathbf{d}'} \\
  \big( \ket{N}_N \otimes \ket{S}_S \otimes \ket{M}_{M'} \otimes \ket{\mathbf{d}}_\mathbf{d} \big) \otimes \ket{11} &\longrightarrow  \ket{N+2}_{N'} \otimes \ket{S}_{S'} \otimes \ket{M}_M \otimes \ket{\mathbf{d, 11}}_{\mathbf{d}'}
\end{align}

\noindent To correctly apply the Clebsch-Gordan transform, we require a quantum circuit which implements Equation \eqref{eq:cg_transform_rules} for all possible input values of $(N, S, M)$ and $(x_{i \uparrow}, x_{i \downarrow})$ on the left-hand side. To see how this can be broken down into simple steps, we begin with a few observations. The first is that the step vectors $\mathbf{d}$ of the larger input states bear no effect on the mapping, and aside from concatenation of bits at their end remain unchanged. This means that the circuit for the CG transform does not act on the $\mathbf{d}$ register. The second is that we have chosen to index the couplings by the \textit{incoming} values of $S$ (giving outgoing spin values $S' \in \{S, S \pm \frac{1}{2} \}$) and \textit{outgoing} values of $M$ (resulting from incoming spin projections $M' \in \{M, M \pm \frac{1}{2} \}$). With this choice, the Clebsch-Gordan coefficients appearing in the expressions form the elements of $2 \times 2$ rotation matrices $R(\theta_{S, M})$, given by Equation \eqref{eq:CGMatrix} of Section \ref{sec:spin_coupling_rotations}. 

From these observations, we can see that the Clebsch-Gordan transform consists of updates on the $(N, S, M)$ registers and rotations of the $\ket{x_{i \uparrow} x_{i \downarrow}}$ state. After updating $M'$ to its outgoing value $M$, there is a sequence of controlled rotations, followed by an update of the $N, S$ registers with an addition and subtraction controlled on the final two digits of $\mathbf{d}$. The circuit for the CG transform, implementing the mapping defined in Equation \eqref{eq:cg_transform_rules} is illustrated in Figure \ref{fig:cg_transform}. This performs the spin coupling shown in Figure \ref{fgr:mz_matrix_arcs}, and is composed of four primary steps:

\begin{enumerate}
  \item \textbf{Spin Projection ($M$) Incrementer}: The $M$ register of the incoming state is incremented by $\pm \frac{1}{2}$ or $0$, depending on the spin projection of the coupled $\ket{x_{i \uparrow} x_{i \downarrow}}$ state. Physically, this corresponds to conservation of angular momentum projections in Clebsch-Gordan coupling.
  \item \textbf{Controlled Givens Rotations}: A sequence of multiplexed Givens rotations $G(\theta_{S, M})$, controlled on the $S$ and $M$ labels, is applied to the $\ket{x_{i \uparrow} x_{i \downarrow}}$ state. This corresponds to the multiplication by CG coefficients in Clebsch-Gordan coupling. The output forms the $2$-qubit step vector element $\ket{\mathbf{d}_{2i-1} \mathbf{d}_{2i}}$ of the resulting GT state. 
  \item \textbf{Spin ($S$) Incrementer}: The spin register $S$ of the GT state is incremented by $\pm \frac{1}{2}$ or $0$, depending on the coupling type which took place in Step $3$. This `updates' the global spin quantum number of the GT state to its correct value. 
  \item \textbf{Particle Number ($N$) Incrementer}: The particle number register $N$ of the GT state is incremented by $\{0, 1, 2\} $, depending on the coupling type which took place in Step $3$. This ensures that the $N$ label of the output GT state reflects the correct number of particles in the system. 
\end{enumerate}

\noindent Each of these steps will be explained in detail in the following subsections.

\begin{figure*}[htbp!]
  \includegraphics[width=9cm]{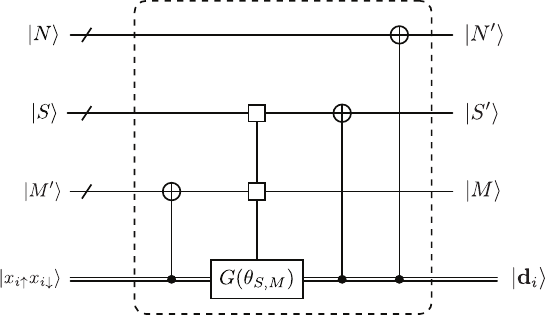}
  \caption{\textit{The Clebsch-Gordan Transform Circuit.} The circuit is composed of an $M$ incrementers (Figure \ref{fig:incrementer_m}), a sequence of controlled Givens rotations (Figure \ref{fig:controlled_givens_rotations}), and $N, S$ incrementers (Figures \ref{fig:incrementer_s}, \ref{fig:incrementer_n}).}
  \label{fig:cg_transform}
\end{figure*}

\subsubsection{Incrementing the Spin Projection Register (\texorpdfstring{$M$}{M})}

The rules for updating the $M$ register are simple: it is raised by $+\frac{1}{2}$ if $x_{i \uparrow} x_{i \downarrow} = 10$, lowered by $-\frac{1}{2}$ if $x_{i \uparrow} x_{i \downarrow} = 01$, and unchanged otherwise. In this step, the incoming values $M'$ of the spin projection register are updated to their outgoing values $M$ :

\begin{align}
  \left( \ket{N}_N \otimes \ket{S}_S \otimes \ket{M}_{M'} \otimes \ket{\mathbf{d}}_\mathbf{d} \right) \otimes \ket{00} &\longrightarrow \ket{N}_N \otimes \ket{S}_S \otimes \ket{M}_M \otimes \ket{\mathbf{d}}_\mathbf{d} \otimes \ket{00} \\
  \left( \ket{N}_N \otimes \ket{S}_S \otimes \ket{M +1/2}_{M'} \otimes \ket{\mathbf{d}}_\mathbf{d} \right) \otimes \ket{01} &\longrightarrow \ket{N}_N \otimes \ket{S}_S \otimes \ket{M}_M \otimes \ket{\mathbf{d}}_\mathbf{d} \otimes \ket{01} \\
  \left( \ket{N}_N \otimes \ket{S}_S \otimes \ket{M-1/2}_{M'} \otimes \ket{\mathbf{d}}_\mathbf{d} \right) \otimes \ket{10} &\longrightarrow \ket{N}_N \otimes \ket{S}_S \otimes \ket{M}_M \otimes \ket{\mathbf{d}}_\mathbf{d} \otimes \ket{10} \\
  \left( \ket{N}_N \otimes \ket{S}_S \otimes \ket{M}_{M'} \otimes \ket{\mathbf{d}}_\mathbf{d} \right) \otimes \ket{11} &\longrightarrow \ket{N}_N \otimes \ket{S}_S \otimes \ket{M}_M \otimes \ket{\mathbf{d}}_\mathbf{d} \otimes \ket{11}
\end{align}

The $M$ register stores the $SU(2)$ GT pattern (i.e. spin projection), or equivalently indexes the multiplicity of each $U(d)$ irrep. The value of $2M$ is a signed integer ranging from $-2S \leq 2M \leq 2S$. As $2S$ can take all possible integer values between $0$ and $d$, there are $(2d+1)$ possible values of $2M$ and so this register consists of $n_M = \lceil \log_2(2d + 1) \rceil$ qubits. As the values of $2M$ can be negative, we use a two's complement scheme to store $2M$ in binary (i.e, the leftmost bit controls the sign of the integer, and the remaining bits its magnitude). In step $1$ of each Clebsch-Gordan transform, $2M$ can increase or decrease by $\pm 1$ or remain unchanged. To achieve this, we use controlled increment and decrement gates, where a decrement gate is the inverse of an increment gate. This is shown in Figure \ref{fig:incrementer_m}.

\begin{figure*}[htbp!]
  \begin{equation}
  INC^{M}_i (\ket{2M}_{M'} \otimes \ket{x_{i, \uparrow} x_{i, \downarrow}}) = \ket{2M + x_{i, \uparrow} - x_{i, \downarrow}}_M \otimes \ket{x_{i, \uparrow} x_{i, \downarrow}}, \ \ \ 1 \leq i \leq d, \ \ \ x_{i, \mu} \in \{0, 1\}
\end{equation}
  \includegraphics[width=16cm]{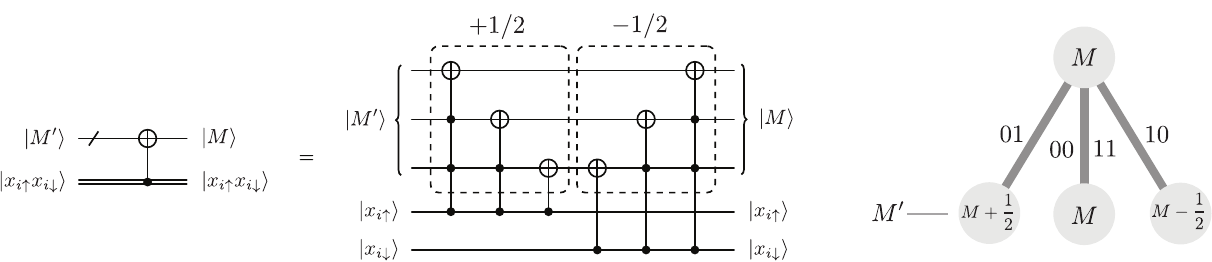}
  \caption{\textit{Left:} Incrementer circuit for updating the $M$ register in the Paldus transform. $M$ is stored as a signed integer $2M$ through the two's complement scheme. \textit{Right:} Incrementation represented as branching on the $SU(2)$ spin projection graph, where the arcs are the cases of $|x_{i\uparrow} x_{i\downarrow} \rangle$} 
  \label{fig:incrementer_m}
\end{figure*}

\subsubsection{Controlled Givens Rotations}

The second step of the Clebsch-Gordan transform is a sequence of rotations in the $\{ \ket{10}, \ket{01} \}$ subspace of the last two qubits, with the rotation angles $\theta_{S, M}$ only dependent on the incoming values $S$ and outgoing values $M$. To impose this, we can control each rotation on the registers $\ket{S}_{S} \ket{M}_M$, with the $\ket{x_{i \uparrow} x_{i \downarrow}}$ register as the target of $G(\theta_{S, M})$. This is written as:

\begin{align}
  \ket{N}_N \otimes \ket{S}_S \otimes \ket{M}_M \otimes \ket{\mathbf{d}}_\mathbf{d}\otimes \ket{00} &\longrightarrow \ket{N}_N \otimes \ket{S}_{S} \otimes \ket{M}_M \otimes \ket{\mathbf{d, 00}}_{\mathbf{d}'} \label{eq:cg_step_2} \\
  \ket{N}_N \otimes \ket{S}_S \otimes \ket{M}_M \otimes \ket{\mathbf{d}}_\mathbf{d} \otimes \ket{01} &\longrightarrow \cos(\theta_{S, M})  \ket{N}_N \otimes \ket{S}_S \otimes \ket{M}_M \otimes \ket{\mathbf{d, 01}}_{\mathbf{d}'}   \\
  &+ \sin(\theta_{S, M}) \ket{N}_N \otimes \ket{S}_S \otimes \ket{M }_M \otimes \ket{\mathbf{d, 10}}_{\mathbf{d}'} \\
  \ket{N}_N \otimes \ket{S}_S \otimes \ket{M}_M \otimes \ket{\mathbf{d}}_\mathbf{d} \otimes \ket{10} &\longrightarrow  -\sin(\theta_{S, M}) \ket{N}_N \otimes \ket{S}_S \otimes \ket{M}_M \otimes \ket{\mathbf{d, 01}}_{\mathbf{d}'} \\
  & + \cos(\theta_{S, M}) \ket{N}_N \otimes \ket{S}_S \otimes \ket{M}_M \otimes \ket{\mathbf{d, 10}}_{\mathbf{d}'} \\
  \ket{N}_N \otimes \ket{S}_S \otimes \ket{M}_M \otimes \ket{\mathbf{d}}_\mathbf{d} \otimes \ket{11} &\longrightarrow \ket{N}_N \otimes \ket{S}_S \otimes \ket{M}_M \otimes \ket{\mathbf{d, 11}}_{\mathbf{d}'}
\end{align}

\noindent This can be achieved by performing a sequence of \textit{controlled Givens rotations}, illustated in Figure \ref{fig:controlled_givens_rotations}. Each such rotation corresponds to up to two couplings (rows of Equation \eqref{eq:cg_transform_rules}), and is controlled by the values of $S$ and $M$. That is, the controlled rotations of step $2$ take the form:

\begin{equation}
  G(\theta_{S, M})  = I_N \otimes \ket{S} \bra{S} \otimes \ket{M} \bra{M} \otimes I_\mathbf{d} \otimes \begin{pmatrix} 1 & 0 & 0 & 0 \\ 0 & \sqrt{\frac{S+M + 1/2}{2S + 1}} & \sqrt{\frac{S - M + 1/2}{2S + 1}} & 0 \\ 0 & -\sqrt{\frac{S - M + 1/2}{2S + 1}} & \sqrt{\frac{S+M + 1/2}{2S + 1}} & 0 \\ 0 & 0 & 0 & 1 \end{pmatrix} 
  \end{equation}

\noindent The corresponding sequence of rotations (each controlled by a possible combination of $S, M$ up to in the $i^\text{th}$ Clebsch-Gordan transform) multiplies the input states by the correct CG coefficients present in Equation \eqref{eq:cg_transform_rules}, and outputs the \textit{step vector elements} $\mathbf{d}_{2i-1} \mathbf{d}_{2i}$ in place of the input occupation number register. These step vector elements are appended at the end of the incoming step vectors $\mathbf{d}$, and denote the four different types of coupling which have taken place. They can then be used to update the $S$ and $N$ registers to their appropriate values (c.f. Table \ref{fig:step_vectors_binary}) in steps $3$ and $4$. The gate $G(\theta_{S, M})$ is illustrated in Figure \ref{fig:controlled_givens_rotations}.

  \begin{figure*}[htb!]
  \includegraphics[width=12cm]{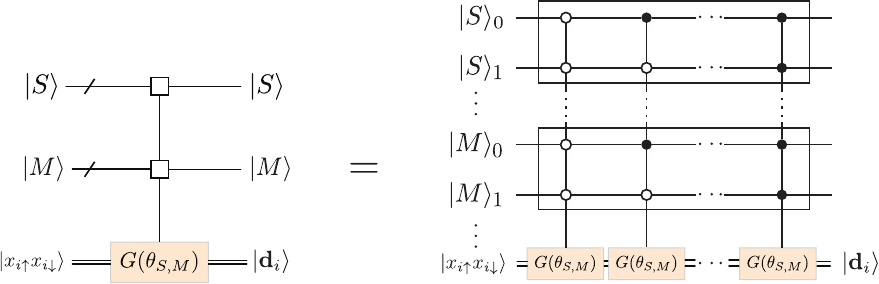}
  \caption{\textit{Controlled Givens Rotations Circuit.} The controlled Givens rotation $G(\theta_{S, M})$ is controlled on the $S, M$ registers and acts on the $\ket{x_{i \uparrow} x_{i \downarrow}}$ state. The rotation angle $\theta_{S, M}$ is determined by the incoming value of $S$ and outgoing value of $M$. The square box represents a control across all binary indices on the control register.}
  \label{fig:controlled_givens_rotations}
\end{figure*}

If the values of $S, M$ are represented in binary, each $2 \times 2$ rotation is controlled on a computational basis state on at most $2 + n_S + n_M = \mathcal{O}(\log (d))$ qubits. A simple argument (given in \cite{Nielsen2010}, page 189) tells us that each individual rotation in the sequence is efficiently implementable and can be decomposed into $\mathcal{O}(\text{poly}(\log(d)))$ $2 \times 2$ unitary gates. 

To give an exact number of controlled Givens rotations needed for the full Paldus transform on $d$ orbitals, we number each Clebsch-Gordan transform as $i = 1, 2, ..., d$, and count the number of controlled rotations required at the $i^\text{th}$ step. Fixing an incoming value of $S$, the possible incoming values of $M'$ are $M' \in \{ S, S - 1, \hdots, -S \}$. After the initial incrementation step, the possible outgoing values of $M$ are $M \in \{S + \frac{1}{2}, S - \frac{1}{2}, S - \frac{3}{2}, \hdots, - S - \frac{1}{2} \}$. This means there are $2S + 2$ distinct values of $(S, M)$ to control on. However, whenever $M = S + \frac{1}{2}$ the rotation $G(\theta_{S, S + \frac{1}{2}}) = I$ is trivial and does not contribute to runtime. Therefore, each possible incoming value of $S$ requires $2S+1$ controlled rotations. At the $i^\text{th}$ step, the possible incoming values of $S$ are $\{ 0, \frac{1}{2}, 1, ..., (i-1)/2\}$, so the number of controlled Givens rotations one needs to implement is given by $\sum_{j=1}^i (2 \times (j - 1)/2 + 1) = i(i+1)/2$. To count their overall number for $U_P$, we sum over these for $1 \leq i \leq d$, giving $d(d+1)(d+2)/6 = \mathcal{O}(d^3)$ controlled Givens rotations. Thus we see that their count scales as $\mathcal{O}(d^3)$, which is polynomial in the number of qubits. This contributes the most to the runtime, as the incrementers used in the other steps of the Clebsch-Gordan transforms always act on $\mathcal{O}(\log d)$ qubits, and can be decomposed into $\mathcal{O}(\log d)$ elementary gates (See Figure \ref{fig:incrementer_circuit}). Thus, our implementation of the Paldus transform is efficient, requiring $\mathcal{O}(d^3 \cdot \text{poly}(\log(d)))$ two-qubit gates.

Finally, we note that although the implementation presented in this section is $\mathcal{O}(d^3 \cdot \text{poly}(\log(d)))$, the Paldus transform also admits an \textit{approximate} implementation up to error $\varepsilon$ in $\mathcal{O}(d \cdot \text{poly}(\log(d), \log(1/\varepsilon)) )$ elementary gates, which is asymptotically more efficient. In fact, this implementation can be achieved with the exact same Clebsch-Gordan circuit $U_{CG}$ used in the Schur transform, so long as $U_{CG}$ can be implemented as a controlled unitary. We discuss this in Section \ref{sec:paldus_cg}.

\subsubsection{Incrementing the Spin Register (\texorpdfstring{$S$}{S})}

Following the controlled rotations of step $2$, the next step is to update the $S$ register to its appropriate outgoing value. This means incrementing $S$ by $\pm \frac{1}{2}$ whenever the final two bits of $\mathbf{d}$ read $\mathbf{10}$ or $\mathbf{01}$. The $S$ register stores twice the value of the total spin of an irrep in binary, i.e. the positive integer $2S$ which changes by $\pm 1$ over the course of each Clebsch-Gordan transform. Because $2S$ is an integer ranging from $0 \leq 2S \leq d$, we choose to store it in binary, using $n_S = \lceil \log_2(d + 1) \rceil$ qubits. For example $|0\cdots00\rangle_S = |0\rangle_S$, $|0\cdots01\rangle_S = |\frac{1}{2}\rangle_S$, and $|0\cdots10\rangle_S = |1\rangle_S$.

To correctly update $2S$ according to the final two bits of $\mathbf{d}$, we use controlled increment and decrement gates acting on the penultimate and final qubit in each CG transform, shown in Figure \ref{fig:incrementer_s}. The $S$ incrementer acts as:

\begin{equation}
  INC^{S}_i (\ket{2S}_S \otimes \ket{\mathbf{d}_{2i - 1} \mathbf{d}_{2i}}) = \ket{2S + \mathbf{d}_{2i-1} - \mathbf{d}_{2i}}_{S'} \otimes \ket{\mathbf{d}_{2i - 1} \mathbf{d}_{2i}}, \ \ \ 1 \leq i \leq d, \ \ \  \mathbf{d}_{2i - 1}, \mathbf{d}_{2i} \in \{\mathbf{0, 1}\}
\end{equation}

\begin{figure*}[htb!]
  \includegraphics[width=16cm]{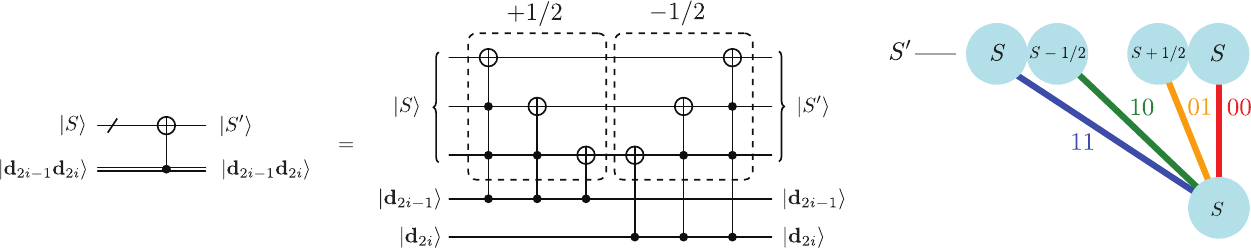}
  \caption{\textit{Left:} Incrementer circuit for updating the $S$ register in the Paldus transform. $S$ is stored as a binary integer $2S$. \textit{Right:} The incrementation represented as branching on the Shavitt graph changing the $S$ values, where the arcs are the cases of $\mathbf{d}_{2i - 1} \mathbf{d}_{2i}$.}
  \label{fig:incrementer_s}
\end{figure*}

If the final two qubits (after the controlled rotation of step $2$) read $\ket{ \mathbf{10}}$, for example, an increment gate is triggered by the first of the two and takes $2S \rightarrow 2S + 1$. On the other hand, $\ket{ \mathbf{11}}$ will activate both the incrementer and decrementer, performing the desired action of leaving $2S$ unchanged. The controlled increment step is necessary for our implementation of the Paldus transform and cannot be delegated elsewhere in the circuit, however the $S$ register can optionally be `decoupled' at the end of the algorithm, by performing every $INC^{S}_i$ gate in reverse. By the end of the decoupling procedure, $\ket{S}_S \rightarrow \ket{0}_S$. This is possible because information about $S, N$ is implicit in $\mathbf{d}$, and allows for an alternative `compressed' representation of the UGA basis with fewer qubits (as states of the form $\ket{M} \otimes \ket{\mathbf{d}}$), at the cost of losing the ability to measure or control on $\ket{S}$.

\subsubsection{Incrementing the Particle Number Register (\texorpdfstring{$N$}{N})} \label{sec:encoding_n}

The final, fourth step of the Clebsch-Gordan transform consists of updating the particle number register ($N$) to its correct value, where the number of particles `added on' to the system is implicit in the final two digits of the step vector $\mathbf{d}$. For the GT states on $2d$ spin-orbitals, the particle number can range as $0 \leq N \leq 2d$, and in the occupation basis any GT basis vector $\ket{N, S, M; \mathbf{d}} = \sum_\mathbf{x} c_\mathbf{x} \ket{x_{1 \uparrow} x_{1 \downarrow} \cdots  x_{d \uparrow} x_{d \downarrow} }$ satisfies $N = |\mathbf{x}|$, where $|\mathbf{x}|$ is the Hamming weight of the bitstring $\mathbf{x}$. Similarly, in the UGA basis the corresponding vector $\ket{N} \ket{S} \ket{M} \ket{\mathbf{d}}$ satisfies $N = |\mathbf{d}|$. In fact, the action of updating the $N$ register commutes with every step of the Paldus transform, since none of the steps explicitly depend on the particle number. In some applications, having access to the $N$ register is also not required -- therefore, one can treat the $N$ update as an optional step in the algorithm.

In order to store $N$, we again choose binary representation on $n_N = \lceil \log_2(2d + 1) \rceil$ qubits (with the least significant bit on the right, i.e. $\ket{0}_N = \ket{00...0}$, $\ket{1}_N = \ket{00...1}$, and so on). It is updated by a simple sequence of $2d$ controlled increment gates, which are a cascade of multi-controlled $CNOT$ gates shown in Figure \ref{fig:incrementer_n}. Each controlled incrementer adds the $i^\text{th}$ bit of $\mathbf{d}$ to the binary digit in $N$.

\begin{equation}
INC^{N}_i (|N\rangle_N \otimes |\mathbf{d}_{2i-1} \mathbf{d}_{2i} \rangle) = |N + \mathbf{d}_{2i-1} +  \mathbf{d}_{2i} \rangle_{N'} \otimes |\mathbf{d}_{2i-1} \mathbf{d}_{2i} \rangle, \ \ \ 1 \leq i \leq d, \ \ \  \mathbf{d}_{2i - 1}, \mathbf{d}_{2i} \in \{\mathbf{0, 1}\}
\end{equation}

\begin{figure*}[htb!]
  \includegraphics[width=16cm]{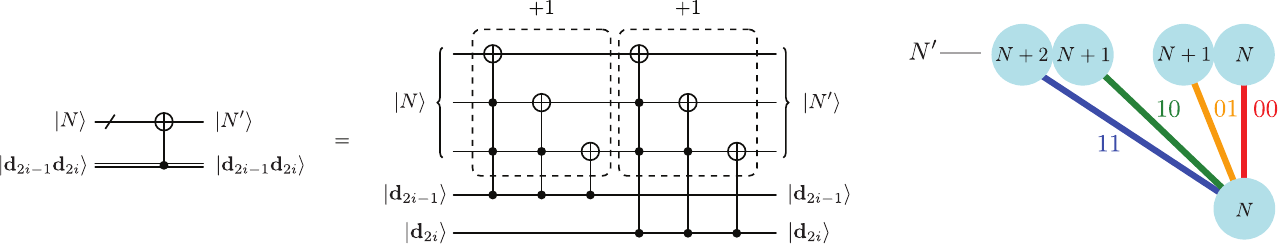}
  \caption{\textit{Left:} Incrementers for updating the $N$ register in the Paldus transform. \textit{Right:} The incrementation represented as branching on the Shavitt graph changing the $N$ values, where the arcs are the cases of $\mathbf{d}_{2i - 1} \mathbf{d}_{2i}$.}
  \label{fig:incrementer_n}
\end{figure*}

\subsection{Fault-Tolerant Controlled Givens Rotations}

In a fault-tolerant setting the implementation of small continuous-angle rotations directly is costly, as their discretisation incurs significant overheads in terms of T gates due to the need for magic state distillation~\cite{Litinski2019magicstate} or repeat-until-success circuits~\cite{Kliuchnikov2023shorterquantum}. Instead, the Givens rotations (shown in Figure~\ref{fig:givens_rotation_circuit}) are implemented using fault-tolerant circuit primitives, with various choices of data lookup oracles (introduced in Appendix~\ref{sec:data_lookup}), adders (Appendix~\ref{sec:adder}), and phase gradients (as shown in Figure~\ref{fig:phase_gradient}) being leveraged. An example of a simpler controlled $R_y$ rotation using these primitives is also presented in Appendix~\ref{sec:rotations}.

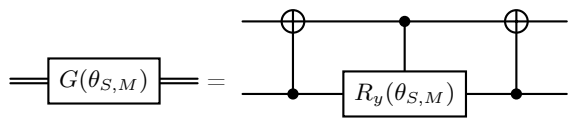
\begin{figure}
  \centering
  \begin{quantikz}[wire types={c}]
     & \gate{G(\theta_{S,M})} & \\
  \end{quantikz}
  =
  \begin{quantikz}
    & \targ{}   & \ctrl{1}                 & \targ{} &  \\
    & \ctrl{-1} & \gate{R_y(\theta_{S,M})} & \ctrl{-1} & \\
  \end{quantikz}
  \caption{\textit{Givens Rotation Circuit.} The Givens rotation $G(\theta_{S,M})$ is implemented using a controlled $R_y(\theta_{S,M})$ rotation and two $CX$ gates.}
    \label{fig:givens_rotation_circuit}
\end{figure}

In this approach, data lookup is used to implement increments of the controlled $G(\theta_{S,M})$ rotation in parallel, where the increments are stored as fixed-point binary fractions in the lookup data indexed on the $S$ and $M$ registers. The controlled $R_y$ rotation in the Givens rotation then activates depending on whether the data increment bit is $0$ or $1$ via a controlled operation. The error in the rotation discretisation decreases with the number of data qubits $q$ by $q = \mathcal{O}(\log(1/\epsilon))$. The data lookup oracle approach is shown in Figure~\ref{fig:data_lookup_rotations}. Importantly, this method allows for the implementation of the controlled Givens rotations in parallel for each $S,M$ index, which is a significant advantage over doing so sequentially.

\begin{figure*}[htbp!]
  \includegraphics[width=\textwidth]{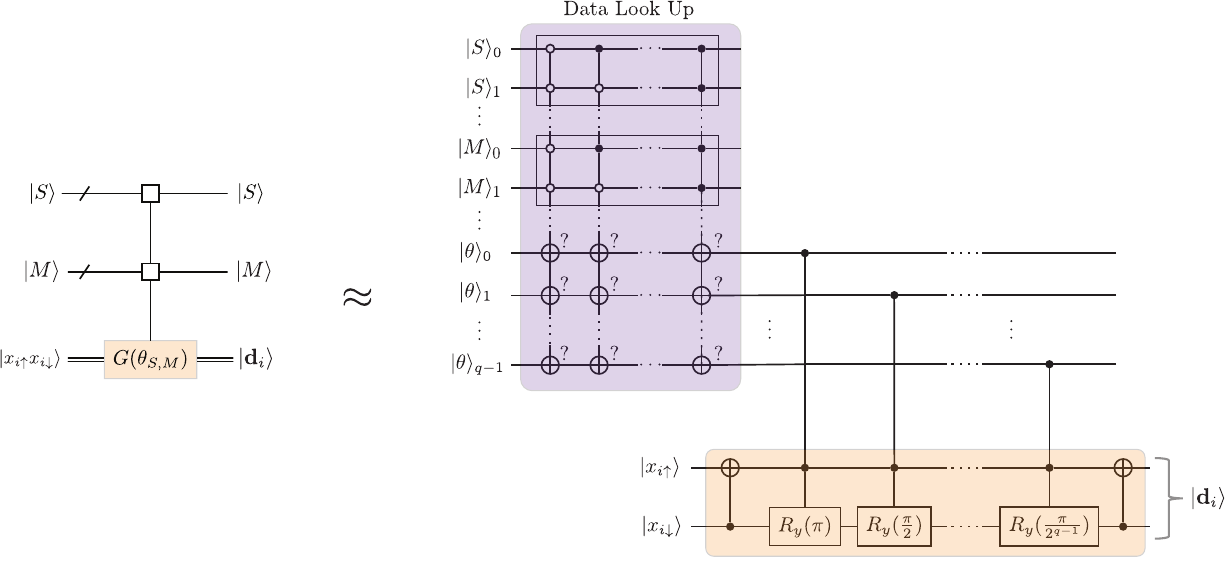}
  \caption{\textit{Data Lookup Givens Rotations.} The rotations are indexed on the control registers $\ket{S}_S$ and $\ket{M}_M$, and a data lookup oracle is used to implement the controlled Givens rotations by storing the fixed-point binary fraction in a lookup table. Each bit of the binary fraction corresponds to an increment of the overall rotation $\theta_{S,M}$ by $\pi/2^n$ for $n$ bits. This rotation increment is controlled on the value of the bit in the lookup table. The controlled $R_y$ gate is then used to perform the Givens rotation.}
  \label{fig:data_lookup_rotations}
\end{figure*}

In practice to save having to implement costly $R_y$ rotations repeatedly for every step of the Paldus transform, we can use a phase gradients (shown in Figure~\ref{fig:phase_gradient}) at a one-off cost as a resource to implement the rotations for each iteration of the Paldus transform. Following the construction introduced in Appendix~\ref{sec:rotations} the rotations then can be implemented using a phase gradient and a doubly controlled adder (see Figure~\ref{fig:gidney_adder_2controlled}). The phase gradient $\mathcal{F}$ initialisation is used once, and requires $\mathcal{O}(q\log(1/\epsilon))$ $T$ gates~\cite{Gidney2018halvingcostof}, where $q$ is the number of phase gradient qubits which is equal to the number of data qubits. The doubly-controlled adder scales with a Toffoli complexity as $3q$ and $T$ complexity as $14q$~\cite{Gidney2018halvingcostof}. This is illustrated in Figure~\ref{fig:Givens_rotation_phase gradient}. 

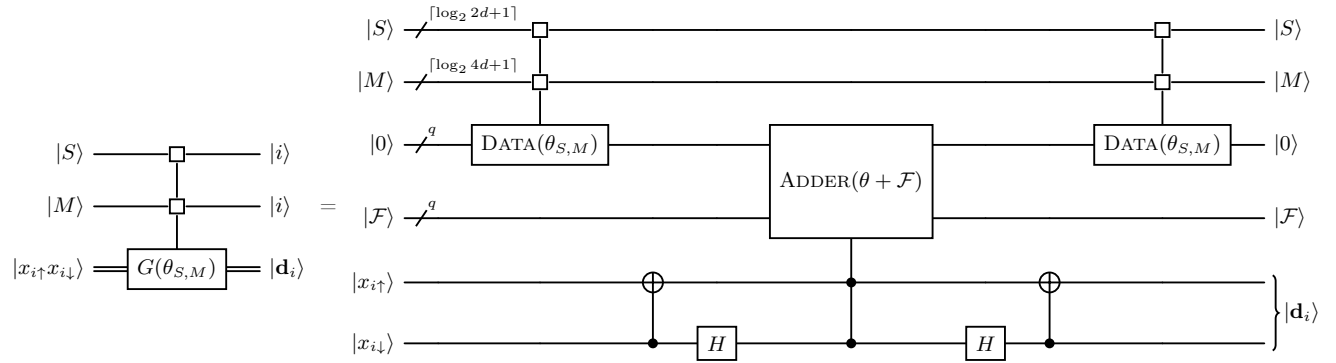
\begin{figure}[!htbp]
  \centering
\begin{adjustbox}{width=\textwidth}
\begin{quantikz}[wire types={q,q,c}]
  \lstick{$|S\rangle$} & \mctrl{1} & \rstick{$|i\rangle$} \\
  \lstick{$|M\rangle$} & \mctrl{1} & \rstick{$|i\rangle$} \\
  \lstick{$|x_{i\uparrow}x_{i\downarrow}\rangle$} & \gate{G(\theta_{S,M})} & \rstick{$|\mathbf{d}_i\rangle$} \\
\end{quantikz}
=
\begin{quantikz}
  \lstick{$|S\rangle$}                &\qwbundle{\lceil \log_2 2d + 1 \rceil}   & \mctrl{1}                   &            &           &                                       &          &           & \mctrl{1}                   & \rstick{$|S\rangle$}   \\
  \lstick{$|M\rangle$}                &\qwbundle{\lceil \log_2 4d + 1 \rceil}   & \mctrl{1}                   &            &           &                                       &          &           & \mctrl{1}                   & \rstick{$|M\rangle$}   \\
  \lstick{$|0\rangle$}                &\qwbundle{q}                             & \gate{\data(\theta_{S,M})}  &            &           &\gate[2]{\adder(\theta + \mathcal{F})} &          &           & \gate{\data(\theta_{S,M})}  & \rstick{$|0\rangle$}   \\
  \lstick{$|\mathcal{F}\rangle$}      &\qwbundle{q}                             &                             &            &           &                                       &          &           &                             & \rstick{$|\mathcal{F}\rangle$}  \\
  \lstick{$|x_{i\uparrow}\rangle$}    &                                         &                             &\targ{}     &           &\ctrl{-1}                              &          &\targ{}    &                             & \rstick[2]{$|\mathbf{d}_i\rangle$}\\
  \lstick{$|x_{i\downarrow} \rangle$} &                                         &                             &\ctrl{-1}   & \gate{H}  &\ctrl{-1}                              &\gate{H}  &\ctrl{-1}  &                             &                \\
\end{quantikz}
\end{adjustbox}
  \caption{\textit{Phase Gradient Parallel Givens Rotation.} The phase gradient $\mathcal{F}$ is initialised with a one off cost of $\mathcal{O}(q\log(1/\epsilon))$ $T$ gates. The controlled Givens rotation is then implemented using the phase gradient and a doubly-controlled controlled adder. The rotation of the Given rotation $\theta_{S,M}$ is implemented using the data lookup oracle $\data(\theta)$ which stored the fixed point binary fraction of the $R_y$ rotation angle. The data lookup is then uncomputed. }
  \label{fig:Givens_rotation_phase gradient}
\end{figure}

The number of possible $S$ values is $2d + 1$ (ranging from $0$ in half-integer increments), and the number of possible $M$ values is $4d + 1$. This results in a total of $(S, M)$ possible elements equal to $I = (2d+1)(4d+1) = 8d^2 + 6d + 1$. However, this is inefficient because $-S \leq M \leq S$, with $M$ increasing in integer steps. Therefore, for all $S$ values, of which there are $d + 1$, the total number of allowed $(S, M)$ elements at each step of the Paldus transform is $L = (d+1)(d+2)/2 = (d^2 + 3d + 2)/2$. The unary iteration and $\selswap$ methods which we will leverage must iterate over all $8d^2 + 6d + 1$ elements. In contrast, the multi-index data lookup method (see Appendix~\ref{sec:multi_index_data_select}) only needs to iterate over $(d^2 + 3d + 2)/2$ elements by flattening the indices and avoiding undefined entries. This represents a reduction by a factor of approximately $16$ compared to the other methods. The resource estimates are presented in Table~\ref{tab:resouces_cost_gives}.

\begin{table}[!htbp]
  \centering
  \begin{ruledtabular}
  \begin{tabular}{l r r r}
  Compilation & Toffoli Count & Clean Qubits & Dirty Qubits \\
  \colrule
      Unary Iteration~\cite{Babbush2018LinearT}                                   & $2I + 3q$                                       & $2\lceil \log_2(I) \rceil + 2q$           & 0 \\
      Clean $\selswap$~\cite{Low2024tradingtgatesdirty,Berry2019qubitizationof}   & $2\lceil I/k\rceil + q(k - 1)+ k + 3q$          & $\lceil \log_2(\lceil I\rceil ) \rceil  + \lceil \log_2(\lceil I/k\rceil ) \rceil + k (q+ 2) + 1 $   & 0\\        
      Dirty $\selswap$~\cite{Berry2019qubitizationof}                             & $2\lceil I/k\rceil+ 4q(k - 1) + 4k + 3q$        & $\lceil \log_2(\lceil I\rceil ) \rceil  + \lceil \log_2(\lceil I/k\rceil ) + 3q +1$          & $(k -1)q$\\
      $^{**}$Multi Index $\data$ & \makecell{$2(2\lceil \log_2 L \rceil + 2\lceil L/k \rceil$ \\ $+ 4q(k-1) + 4(d+2))+ 3q$}         & $2\log_2(2d+1) + 3\log_2(L) + 3q + 1$ & $(k -1)q$\\
  \end{tabular}
  \end{ruledtabular}
       \caption{\textit{Costs for the Fault-Tolerant Givens Rotation.} $I = 8d^2 + 6d + 1$ is looped over all $(S,M)$ combinations, whereas $L = (d^2 + 3d + 2)/2$ are the valid $(S, M)$ combinations. $^{**}$ Multi Index $\data$ lookup is used from Table~\ref{tab:double_index_cost} which is derived in Appendix~\ref{sec:multi_index_data_select}, where the smaller index runs over the $S$ elements which scale as $2d +1$ and has been substituted into the $I$ of Table~\ref{tab:double_index_cost}. $T$ counts can be approximated as $4\times$ Toffoli gates. These resource estimates include measurement-based uncomputation, removing the cost of the uncomputed Toffoli. $q$ is the number of data qubits required to implement the rotation angle at precision $\log_2(1/\epsilon)$ and $k$ is the number of additional $q$ qubit registers needed (either dirty or clean). Where each $q$ bit adder has a $3q$ qubit cost due to measurement-based uncomputation and the factor of $\lceil \log_2(\lceil I/k\rceil ) \rceil - 1$ comes from the smaller unary iteration in the $\selswap$ method.}
  \label{tab:resouces_cost_gives}
  \end{table}

Because of the cascading nature of the algorithm, it is expected that there will always be a large number of dirty qubits available at runtime for each controlled Givens rotation, leading us to expect the present implementation to be the practical method of choice for implementing the controlled rotation.

\subsection{Toffoli Gate Counts}

An end-to-end Toffoli gate cost for a Quantum Paldus Transform over $d$ spatial orbitals can be achieved by combining the $N,S,M$ incrementers with the chosen FT Givens rotation implementation (presented in Table~\ref{tab:resouces_cost_gives} and then cumulatively summing over all $d'\le d$. A detailed breakdown of this approach is outlined in Appendix~\ref{sec:total_gate_complexity}, with the results presented in Table~\ref{tab:full_gate_counts} and shown in Figure~\ref{fig:full_gate_counts}.

\begin{figure}[!htbp]
  \centering
  \includegraphics[width=\textwidth]{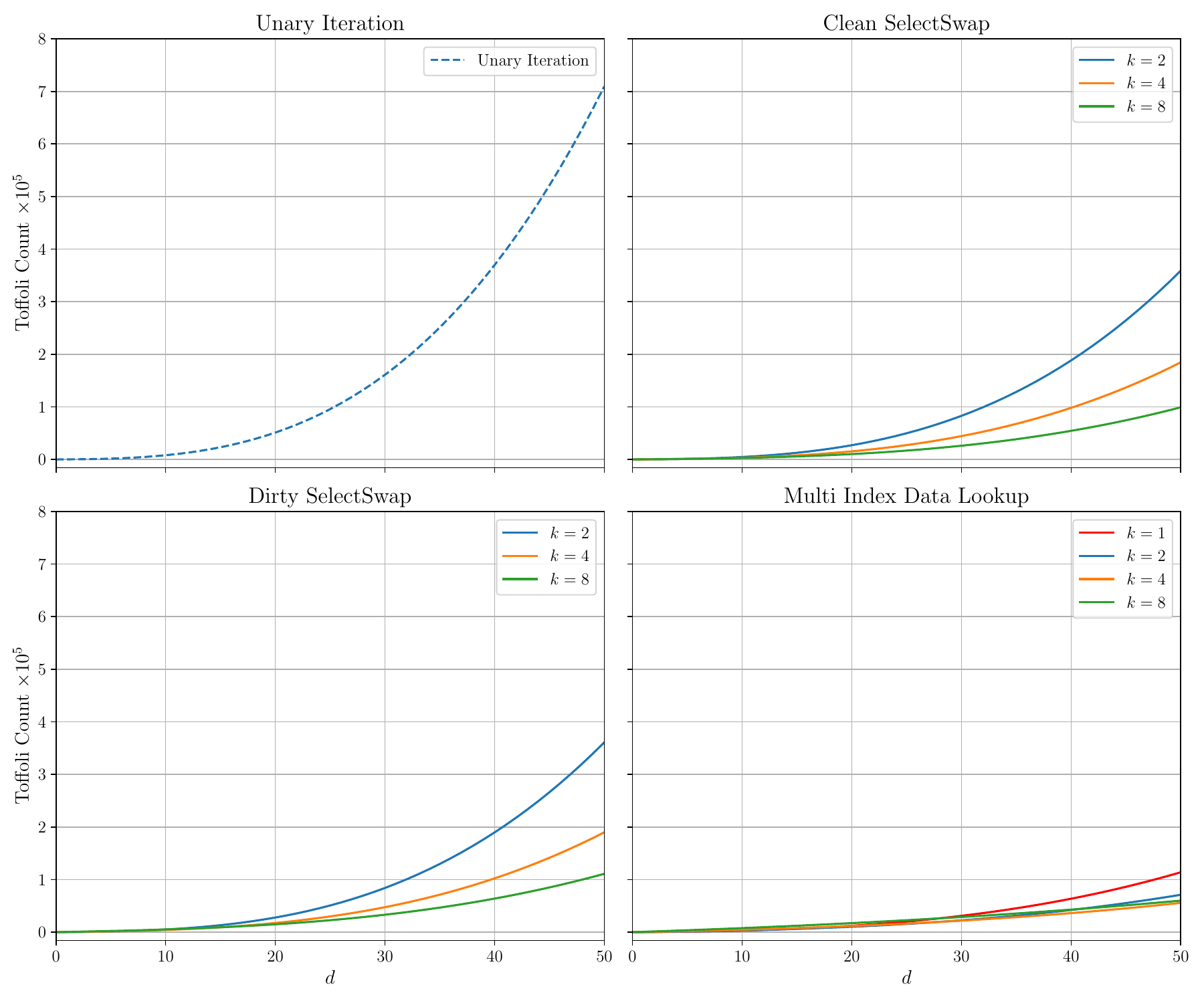}
  \caption{\textit{Toffoli cost of the Paldus transform as a function of the number of spatial orbitals $d$.} The cost is computed using the Givens rotation variants from Table~\ref{tab:resouces_cost_gives} combined with the $N,S,M$ incrementers, with a cumulative sum over all $d'\le d$ according to the formulas in Table~\ref{tab:full_gate_counts}. Where there are $10$ data qubits encoding the rotation precision leading to an error $\epsilon = \frac{2\pi}{2^{10}}$ and $k$ is the number of duplicated data registers.}
  \label{fig:full_gate_counts}
\end{figure}

Rather strikingly, although from Table~\ref{tab:full_gate_counts}, the asymptotic Toffoli gate complexity of the algorithm is $\mathcal{O}(d^3)$. It can be seen that with the Multi-Index strategy for the data lookup step of the FT Givens rotation, due to the sparsity of the allowed $S,M$ pairs, for up to $50$ spatial orbitals, the algorithm scales close to linear. For $50$ spatial orbitals with $k = 4$ duplicated rotation data registers, only $~5500$ Toffoli gates are required, which would be in the regime of quantum advantage for system size for chemistry applications.

A promising direction for future improvement is the observation that many of the controlled $R_y(\theta_{S,M})$ rotations are repeated across multiple iteration steps. Instead of redundantly storing these rotations in a lookup table at every iteration, one could pre-store all the binary increments of the rotations at the outset and then query them as needed throughout the algorithm. This approach would markedly reduce the Toffoli gate cost, further enhancing the efficiency of the algorithm. We leave the investigation of this optimisation for future work.

\newpage

\section{Applications of the Quantum Paldus Transform} \label{sec:applications}

One of the biggest advantages of the UGA basis is its resolution of quantum numbers $(N, S, M)$ into separate registers. Accordingly, the most obvious use of the Paldus transform is the measurement of values $(N, S, M)$ on a generic state $\ket{\psi}$ made possible by its transformation into the UGA basis. With the straightforward representation of their respective values in binary, in the UGA basis a simple measurement of the bitstrings $\mathbf{N}, \mathbf{S}, \mathbf{M}$ will yield the quantum numbers of the state.  More precisely, it reveals the eigenvalues of the operators $\hat{N}, \hat{S}^2, \hat{M}$ (see Section \ref{sec:uga}). However, for measurement of $N$ and $M$ itself, the transform is not necessary. In fact, this information is implicit in the occupation basis already, since $N = |\mathbf{x}|$, and $M = \frac{1}{2} \sum_{i=1}^d (x_{i \uparrow} - x_{i \downarrow})$, so a simple computational basis measurement will yield $N, M$. On the other hand, projective measurements onto $S$ are typically \textit{nontrivial} in the occupation basis: even if information about global spin is well-defined, it is stored globally in the amplitudes of the state. Applying the Paldus transform to such states makes recovering this information straightforward, allowing one to \textit{project} and \textit{post-select} wavefunctions onto a subspace with a particular total spin. The post-measurement wavefunction can then be brought back into the occupation basis by applying the inverse Paldus transform. Overall, our algorithm enables an efficient construction of a \textit{global spin measurement POVM}, given by:

\begin{equation}
U_P^\dagger \left( I_N \otimes |S \rangle \langle S|_S \otimes I_M \otimes I_\mathbf{d} \right) U_P = I_{n_N + n_S + n_M} \otimes \Pi_S, \ \ \ S\in \{ 0, \frac{1}{2}, 1, ..., \frac{d}{2} \}
\end{equation}

\noindent Where $\Pi_S$ is a projection operator onto a subspace of the Fock space, composed of wavefunctions with a \textit{well-defined} value of $S$. Prior to implementing $U_P$ we append ancillas, but after implementing $U_P^\dagger$ they all read $\ket{0}$ (which is reflected in the tensor product with $I_{n_N + n_S + n_M}$ on the right-hand side). We obtain similar POVMs for the $N, M$ and $\mathbf{d}$ measurements, which can be applied by measurements of the spin projection and particle number register. We anticipate this to find use in state preparation, however also expect that similarly to the case of \textit{Weak} and \textit{Strong Schur Sampling} \cite{harrowphd,childs2007, cervero2023, cervero2024} in the first quantised picture, the measurement of $N, S, M$ or $\mathbf{d}$ will turn out to also be possible with a circuit of lower complexity than $U_P$ through a procedure known as \textit{Generalised Phase Estimation}~\cite{Fitzpatrick_2025}. We therefore turn our focus to applications which go beyond projections, and explicitly make use of the GT basis and its symmetries. 

To demonstrate the broad utility of the quantum Paldus transform, we now present a selection of its applications.  We will begin by studying the form of antisymmetric $U(d) \times U(2)$ representations from an operator point of view, showing that the GT basis is invariant under the action of Hamiltonians built out of $\mathfrak{u}(d) \oplus \mathfrak{u}(2)$ \textit{ladder operators}, which form Lie algebra representations constructed from the creation and annihilation operators on the underlying Fock space. This will enable us to isolate classes of Hamiltonian and circuits which block-diagonalise under the Paldus transform in Sections \ref{sec:uga} and \ref{sec:matchgate_circuits}. The ability to block-diagonalise unlocks speed-ups for Hamiltonian simulation with the Paldus transform as a subroutine, which are described in Sections \ref{sec:fastforwarding} and \ref{sec:chemistry}. We then turn to applications for state preparation in Section \ref{sec:CSFprep}. Taking advantage of the relationship between the Paldus and Schur transforms presented in Section \ref{sec:paldusschur}, we end with an application of the former in the context of noise mitigation -- encoding and decoding into \textit{Decoherence-Free Subsystems} -- detailed in Section \ref{sec:dfs}.

\subsection{The Unitary Group Approach} \label{sec:uga}

Having worked out the decomposition of the fermionic Fock space and resulting Gelfand-Tsetlin basis in Section \ref{sec:background}, we now turn to identifying the representations of $U(d) \times U(2)$ explicitly. Our work concerns the \textit{electronic structure problem} of quantum chemistry, in which one wishes to calculate the properties of a system with electrons confided to $d$ many spatial orbitals (which could correspond to an atom, or sites on a lattice in certain condensed matter models). As previously discussed, this corresponds to a vector space of $2d$ fermionic modes. In a quantum computational setting, such modes are typically represented with qubits, where each qubit state corresponds to a Fock state. The exact correspondence is then determined by a \textit{fermion-to-qubit mapping}. For example, in our circuit for the Paldus transform we have implicitly used the \textit{Jordan-Wigner mapping}, in which the $n=2d$ qubit computational basis states are indexed by $(i, \mu)$, where $i\in \{1, ..., d\}$ labels the orbital and $\mu = \pm \frac{1}{2}$ the spin projection (we also use the shorthand $\mu  \in \{ \uparrow, \downarrow \}$). The occupation number of the $(i, \mu)^\text{th}$ spin-orbital is then given by the state of the $(2i - (\mu + \frac{1}{2}))^\text{th}$ qubit. The creation and annihilation operators are given by:
  
\begin{align}
    a^\dagger_{j \mu} &= \underbrace{Z \otimes \cdots \otimes Z}_{2j - (\mu + 3/2)} \otimes \frac{1}{2} \left(X - i Y \right) \otimes \underbrace{I \otimes \cdots \otimes I}_{2d - 2j + (\mu+1/2)}, &
    a_{j \mu} &= \underbrace{Z \otimes \cdots \otimes Z}_{2j - (\mu + 3/2)} \otimes \frac{1}{2} \left(X + i Y \right) \otimes \underbrace{I \otimes \cdots \otimes I}_{2d - 2j + (\mu+1/2)} \label{eq:jw_creation_annihilation}
\end{align}
  
\noindent For example, with $2d=4$ spin-orbitals we have:
  
  \begin{align*}
    a^\dagger_{1 \uparrow} &= \frac{1}{2} \left(X - i Y \right) \otimes I \otimes I \otimes I, &
    a^\dagger_{1 \downarrow} &= Z \otimes \frac{1}{2} \left(X - i Y \right) \otimes I \otimes I, \\
     a^\dagger_{2 \uparrow} &= Z \otimes Z \otimes \frac{1}{2} \left(X - i Y \right) \otimes I,  &
    a^\dagger_{2 \downarrow} &= Z \otimes Z \otimes Z \otimes \frac{1}{2} \left(X - i Y \right) \\
  \end{align*} 

\subsubsection{\texorpdfstring{$\mathfrak{u}(d) \oplus \mathfrak{u}(2)$}{u(d) + u(2)} Ladder Operators} \label{sec:ladder_operators}

The \textit{Unitary Group Approach} is a formalism which exploits the Lie group-theoretic symmetries present in second-quantised \textit{spin-free} Hamiltonians. Such Hamiltonians appear in a wide variety of quantum chemistry problems, and take the form \cite{Babbush_2018}:

\begin{align}
    H &= T + V  \nonumber \\
    &= \sum_{ij, \mu} h_{ij} a_{i \mu}^\dagger a_{j \mu} + \frac{1}{2} \sum_{ijkl, \mu \nu} v_{ij, kl} a_{i \mu}^\dagger a_{j \nu}^\dagger a_{l \nu} a_{k \mu}, \label{eq:spinfreehamiltonian}
\end{align}

Where $T$ contains the kinetic and single-body potential terms, $V$ contains two-body potential terms, and $h_{ij}$, $v_{ij, kl}$ are the one- and two-electron integrals. The creation and annihilation operators act on the $(i, \mu)^\text{th}$ fermionic modes. This Hamiltonian is said to be \textit{spin-free}, in the sense that there is no dependence on the $\mu$ in its free parameters. As a consequence, one can show that $H$ commutes with $\hat{S}^2$, and $\hat{M}$ (total spin, and spin projection) operators, acting independently of the global spin degrees of freedom. The power of UGA comes from recognising that such Hamiltonians are made up of elements of a representation of the $\mathfrak{u}(d)$ \textit{Lie algebra}. To make this clear, we may define a set of \textit{ladder operators}, which are obtained from sums of products of creation and annihilation operators:

\begin{align}
    E_{ij} &= \sum_\mu a^\dagger_{i\mu} a_{j\mu}, & \mathcal{E}_{\mu \nu} &= \sum_i a^\dagger_{i \mu} a_{i \nu} \label{eq:ladderdefinition}
\end{align}

\noindent Where $i \in \{1, \hdots, d\}$ and $\mu =  \pm\frac{1}{2}$. Clearly, $E_{ij}^\dagger = E_{ji}$ and $\mathcal{E}_{\mu \nu}^\dagger = \mathcal{E}_{\nu \mu}$. Using the anticommutation relations for fermionic operators, one can readily verify that the ladder operators commute as follows:

\begin{align}
    [E_{ij}, E_{kl}] & = \delta_{jk} E_{il} - \delta_{il} E_{kj}, & 
    [\mathcal{E}_{\mu \nu}, \mathcal{E}_{\sigma \tau}] & = \delta_{\nu \sigma} \mathcal{E}_{\mu \tau} - \delta_{\mu \tau} \mathcal{E}_{\sigma \nu}, & 
    [E_{ij}, \mathcal{E}_{\mu \nu}] & = 0 \label{eq:genscommutators}
\end{align}

\noindent The commutation relations of Equation \eqref{eq:genscommutators} replicate those of $d \times d$ and $2 \times 2$ \textit{matrix units}, which are matrices of a single unit entry in the $(i, j)^\text{th}$ position, i.e. $[e_{ij}]_{kl} = \delta_{ik} \delta_{jl}$. It is well-known that the matrix units $e_{ij}$ ($i, j \in \{ 1, \hdots, n \}$) form a basis for the $n \times n$ complex matrices, and along with the commutator they form standard representation of the Lie algebra $\mathfrak{gl}(n, \mathbb{C})$. By virtue of replicating the commutation relations, the ladder operators $\{E_{ij}\}$ and $\{ \mathcal{E}_{\mu \nu}\}$ then form a \textit{faithful} representation of the Lie algebras $\mathfrak{gl}(d, \mathbb{C})$ and $\mathfrak{gl}(2, \mathbb{C})$ respectively\footnote{We note in passing that although $E_{ij}$'s commute like $e_{ij}$'s, they do not multiply the same way -- in general, $E_{ij} E_{kl} \neq \delta_{jk} E_{il}$. This does not prohibit them from forming a representation.}. Because $E_{ij}$ and $\mathcal{E}_{\mu \nu}$ commute, we may thus treat the space of their complex linear combinations as a faithful representation of $\mathfrak{gl}(d, \mathbb{C}) \oplus \mathfrak{gl}(2, \mathbb{C})$ on the space of $2^{2d} \times 2^{2d}$ matrices. By the Lie group--Lie algebra correspondence (Theorem \ref{thm:homomorphisms} of Appendix \ref{app:liegroups}), there exists an associated (projective) representation of the Lie group $GL(d, \mathbb{C}) \times GL(2, \mathbb{C})$, with representation matrices of the form $M = e^{\sum_{ij} \alpha_{ij} E_{ij} + \sum_{\mu \nu} \beta_{\mu \nu} \mathcal{E}_{\mu \nu} }$. Under restriction to antihermitian exponents, this reduces to \textit{precisely} the $U(d) \times U(2)$ group action discussed in Theorem \ref{thm:paldus_duality}. In Appendix $\ref{app:liegroups}$ we provide a detailed review of the Lie-theoretic elements of Paldus duality, including relevant theorems and a definition of projective representations.

With the ladder operators defined in Equation \eqref{eq:ladderdefinition}, one can re-express the Hamiltonian of Equation \eqref{eq:spinfreehamiltonian} in terms of the $E_{ij}$'s, which will take the form:

\begin{equation}
    H = \sum_{ij} h_{ij} E_{ij} + \frac{1}{2} \sum_{ijkl} v_{ij, kl} ( E_{ik} E_{jl} - \delta_{ik} E_{il} ). \label{eq:ugaspinfreehamiltonian}
\end{equation}

\noindent We first consider the single-body term $T$, which is linear in the $E_{ij}$'s. Requiring that $h_{ij} = h_{ji}^*$ forces $T$ to be Hermitian, and thus belong to a representation of the Lie algebra $\mathfrak{u}(d)$. Any unitary evolution by $T$ of the form $U = e^{iT}$ will then form a representation of the Lie group $U(d)$ on the Fock space\footnote{Our discussion will focus on the representations of $U(d) \times U(2)$. However, many of the results also hold for representations of $GL(d, \mathbb{C}) \times GL(2, \mathbb{C})$.}. On the other hand, the two-body term $V$ is quadratic in the ladder operators, which in this representation are generally \textit{not} expressible through linear combinations of the $E_{ij}$'s. This puts the overall Hamiltonian $H$ in the \textit{universal enveloping algebra} $\mathfrak{U}(\mathfrak{u}(d))$ -- whilst this fact implies that $e^{iH}$ is not generally a representation of $U(d)$, it will not pose a problem for the development of the formalism, as we discuss in Appendix \ref{app:universalenveloping}.

Historically, UGA approaches for computational quantum chemistry have exploited the form of Equation \eqref{eq:ugaspinfreehamiltonian} by constructing a Gelfand-Tsetlin (GT) basis for the group $U(d) \times U(2)$, in which $H$ takes on a block-diagonal form. Calculations are then performed in this basis, and efficient algorithms have been developed for calculating matrix elements of $E_{ij}$ (in terms of the Shavitt graphs of Section \ref{sec:shavitt}), which were then used to calculate properties of the Hamiltonian. The vast majority of this work has focused on the $U(d)$ part of the group. In our development of UGA for quantum computation we expand on these ideas, providing an explicit quantum circuit to block-diagonalise any unitary $U = e^{iH}$ with $H \in \mathfrak{U}(\mathfrak{u}(d) \oplus \mathfrak{u}(2))$ (i.e. evolution by Hamiltonians polynomial in $\{ E_{ij}, \mathcal{E}_{\mu \nu} \}$). In fact, working with the universal enveloping algebra is necessary for UGA to be useful: in Section \ref{sec:matchgate_circuits} we show that unitary evolution by Hamiltonians belonging to either $\mathfrak{u}(d)$ or $\mathfrak{u}(2)$ forms matchgate circuits, which are known to be classically efficiently simulable.

In Section \ref{sec:gt_states_construction}, we have explicitly constructed the GT basis for $U(d) \times U(2)$ consisting of states of the form $\ket{N, S, M; \mathbf{d}}$. We can now identify the quantum numbers which index each basis state as the eigenvalues of the operators $\hat{N}, \hat{M}, \hat{S}^2$ corresponding to the physical quantities encoded by the state. The first two of these operators are given by:

\begin{align}
\hat{N} &= \sum_{i=1}^d E_{ii} = \sum_{\mu = \pm \frac{1}{2}} \mathcal{E}_{\mu \mu} = \sum_\mu \sum_i a_{i \mu}^\dagger a_{i \mu}, &
\hat{M} &= \frac{1}{2}(\mathcal{E}_{\uparrow \uparrow} - \mathcal{E}_{\downarrow \downarrow}) = \frac{1}{2} \sum_i (a_{i \uparrow}^\dagger a_{i \uparrow} - a_{i \downarrow}^\dagger a_{i \downarrow}) 
\end{align}

\noindent Whereas to express the total spin operator $\hat{S}^2$ we first write down the spin projection operators $\hat{S}_x^{(i)}, \hat{S}_y^{(i)}, \hat{S}_z^{(i)}$ for each orbital:

\begin{align}
\hat{S}_x^{(i)} &= \frac{1}{2} (a_{i \uparrow}^\dagger a_{i \downarrow} + a_{i \downarrow}^\dagger a_{i \uparrow}), &
  \hat{S}_z^{(i)} &= \frac{1}{2} (a_{i \uparrow}^\dagger a_{i \uparrow} - a_{i \downarrow}^\dagger a_{i \downarrow}), &
  \hat{S}_y^{(i)} &= \frac{i}{2} (a_{i \downarrow}^\dagger a_{i \uparrow} - a_{i \uparrow}^\dagger a_{i \downarrow}) \label{eq:spinoperators}
\end{align}

\noindent Then, the total spin projection operators are given by:

\begin{align}
  \hat{S}_x &= \sum_i \hat{S}_x^{(i)}, &
   \hat{S}_y &= \sum_i \hat{S}_y^{(i)}, &
  \hat{S}_z &= \sum_i \hat{S}_z^{(i)} = \hat{M} \label{eq:totalspinoperators}
\end{align}

\noindent The total spin operator is given by $\hat{S}^2 = \hat{S}_x^2 + \hat{S}_y^2 + \hat{S}_z^2$, which one may also write as $\hat{S}^2 = \hat{M}(\hat{M} + I) + \mathcal{E}_{\downarrow \uparrow} \mathcal{E}_{\uparrow \downarrow}$. 

For the original, in-depth discussion of the ladder operators we recommend the 1976 review by Paldus \cite{PALDUS1976131}, as well as the original work of Paldus or Matsen \cite{Paldus1974, Matsen}. 

\subsubsection{The Fermi-Hubbard Hamiltonian}

As an example of a computationally interesting spin-free Hamiltonian belonging to the universal enveloping algebra $\mathfrak{U}(\mathfrak{u}(d))$, we can consider the Fermi-Hubbard Hamiltonian. Its simulation is a nontrivial task, yet it is block-diagonalisable under the GT basis (and consequently, can be efficiently block diagonalised by the Paldus transform). The Fermi-Hubbard Hamiltonian is defined as:

\begin{equation}
H = -t \sum_{i , \sigma} ( a_{i, \sigma}^\dagger a_{i+1, \sigma} + a_{i+1, \sigma}^\dagger a_{i, \sigma}) + U \sum_i n_{i \uparrow} n_{i \downarrow}, \quad n_{i \sigma} = a_{i \sigma}^\dagger a_{i \sigma} \label{eq:fermihubbard}
\end{equation}

\noindent where $t$ is a hopping parameter, and $U$ is an on-site interaction strength describing the effect of electron repulsion. The terms $n_{i \sigma}$ are the number operators for the $(i, \sigma)^\text{th}$ spin-orbital. We can re-write this Hamiltonian in terms of the ladder operators $E_{ij}$. The hopping term is already in the form of Equation \eqref{eq:ladderdefinition}. On the other hand, the summands of the interaction term can be written as $n_{i \uparrow} n_{i \downarrow} = \frac{1}{2} E_{ii}^2 - \frac{1}{2} E_{ii}$. To show this, we square the $E_{ii}$ operator and use the anticommutation relations of the creation and annihilation operators:

\begin{align}
E_{ii}^2 &= (a_{i \uparrow}^\dagger a_{i \uparrow} + a_{i \downarrow}^\dagger a_{i \downarrow})^2 \\
&= a_{i \uparrow}^\dagger a_{i \uparrow} a_{i \uparrow}^\dagger a_{i \uparrow} + a_{i \uparrow}^\dagger a_{i \uparrow} a_{i \downarrow}^\dagger a_{i \downarrow} + a_{i \downarrow}^\dagger a_{i \downarrow} a_{i \uparrow}^\dagger a_{i \uparrow} + a_{i \downarrow}^\dagger a_{i \downarrow} a_{i \downarrow}^\dagger a_{i \downarrow} \\
&= a_{i \uparrow}^\dagger (I -  a_{i \uparrow}^\dagger a_{i \uparrow} ) a_{i \uparrow} + 2 a_{i \uparrow}^\dagger a_{i \uparrow} a_{i \downarrow}^\dagger a_{i \downarrow} + a_{i \downarrow}^\dagger (I -  a_{i \downarrow}^\dagger a_{i \downarrow} ) a_{i \downarrow} \\
&= 2n_{i \uparrow} n_{i \downarrow} + E_{ii}
\end{align}

\noindent Substituting this into Equation \eqref{eq:fermihubbard}, we get:

\begin{equation}
H = -t \sum_i (E_{i, i+1} + E_{i+1, i}) + \frac{U}{2} \sum_i ( E_{ii}^2 - E_{ii} ) \label{eq:fermihubbardladder}
\end{equation}

\noindent Clearly, $H$ lies outside the Lie algebra representation $\mathfrak{u}(d)$ by the presence of the $E_{ii}^2$ term, but only contains products of $E_{ij}$ operators, putting it in the universal enveloping algebra representation $\mathfrak{U}(\mathfrak{u}(d))$. As discussed in Appendix \ref{app:universalenveloping} the invariant subspaces of the two algebras coincide, hence evolution by the Fermi-Hubbard Hamiltonian can be block-diagonalised by the Paldus transform.

By including additional terms $E_{ij}$, we can also describe the model in two dimensions. In our view, the utility of UGA extends beyond just quantum chemistry and has the potential to aid in the simulation of a wide range of condensed matter systems. Further, explicit examples of Hamiltonians which are block-diagonalised by the Paldus transform (as well as other applications) are provided next.

\subsection{\texorpdfstring{$U(d) \times U(2)$}{U(d) x U(2)} Group Action and Matchgate Circuits} \label{sec:matchgate_circuits}

Here we consider Hamiltonians which belong to the Lie algebra representations $\mathfrak{u}(d)$ or $\mathfrak{u}(2)$, and explicitly identify the form of quantum circuits which describes their unitary evolution. This in turn gives an explicit form of the antisymmetric $U(d) \times U(2)$ representations introduced in Section \ref{sec:background}, instances of circuits block-diagonalised by the Paldus transform, as well as a worked example of ideas presented in Section \ref{sec:uga}. 

We seek to write down unitaries generated by $\{E_{ij}, \mathcal{E}_{\mu \nu}\}$-linear Hamiltonians which induce projective representations of $U(d) \times U(2)$, and will find that in the Jordan-Wigner mapping their unitary evolution is expressible in terms of \textit{matchgate circuits} \cite{Jozsa_2008, valiant_2002, knill2001, burkat2024}. The connection between free fermionic systems and matchgate circuits is well-known \cite{Terhal_2002, majsak2024, Wan_2023}, however examining them in the context of UGA can provide us with additional insights. For example, as we show in Section \ref{sec:dfs} our discussion leads to some interesting consequences for quantum communication, namely the ability to construct decoherence-free subsystems with respect to certain types of two-qubit noise characterisable by the action of a matchgate. To make the connection between Hamiltonians of our interest and matchgate circuits, we will make use of Majorana operators $c_i$. These are a set of $4d$ mutually anticommuting operators satisfying $\{ c_i, c_j \} = \delta_{ij}$. In the Jordan-Wigner mapping, they are Pauli strings given by:
  
  \begin{align}
    c_1 &= X \otimes I \otimes \hdots \otimes I, & c_2 &= Y \otimes I  \otimes \hdots \otimes I,  & c_3 &= Z \otimes X \otimes \hdots \otimes I, &  c_4 &= Z \otimes Y  \otimes \hdots \otimes I
   \end{align}
  
  \noindent and so on for $4 < i \leq 4d$.  We can write the creation and annihilation operators in terms of Majorana operators as:

  \begin{align}
     a_{1 \uparrow} &= \frac{1}{2}(c_1 + i c_2 ), &  a_{1 \downarrow} &= \frac{1}{2}(c_3 + i c_4 ), & a_{2 \uparrow} &= \frac{1}{2}(c_5 + i c_6 ), & a_{2 \downarrow} &= \frac{1}{2}(c_7 + i c_8 ) \label{eq:majorana_creation_annihilation}
   \end{align}

\noindent and so on for $2 < i \leq d$. 

Previous work \cite{Jozsa_2008} has shown that evolution $e^{iH}$ by Hamiltonians \textit{quadratic} in Majorana operators of the form $H = \sum_{i \neq j} h_{ij} c_i c_j$ can always be expressed as a \textit{matchgate circuit}, i.e. a circuit composed of polynomially many two-qubit gates $G(A, B)$ of the form:

\begin{equation}
  G(A, B) = \begin{pmatrix} a_{11} & 0 & 0 & a_{12} \\ 0 & b_{11} & b_{12} & 0 \\ 0 & b_{21} & b_{22} & 0 \\ a_{21} & 0 & 0 & a_{22} \end{pmatrix}, \ \text{ where } \ \det(A) = \det(B) \label{eq:matchgatedef}
\end{equation}

\noindent for which each $G(A, B)$ gate acts on \textit{neighbouring} qubit pairs. Two examples of matchgate circuits are shown in Figure \ref{fig:fswap_sequence_matchgate_circuit}. In turn, circuits of this form are known to be classically efficiently simulable, so long as every $G(A, B)$ satisfies Equation \eqref{eq:matchgatedef} and acts on neighbouring qubits. In the rest of this section, we will use the relation between quadratic Hamiltonians and representations of $\mathfrak{u}(d)$ and $\mathfrak{u}(2)$ in the Jordan-Wigner mapping to express the latter in terms of matchgate circuits. This will give us a concrete form of the induced $U(d) \times U(2)$ representations.

  \subsubsection{Single-body \texorpdfstring{$\mathfrak{u}(2)$}{u(2)} Hamiltonians} \label{sec:u2_hamiltonians}
  
Here, we consider the case of evolution by Hamiltonians belonging to the Lie algebra $\mathfrak{u}(2)$ and characterise the action of unitaries $e^{iH_{\mathfrak{u}(2)}}$, where $H_{\mathfrak{u}(2)}$ contains purely terms linear in $\mathcal{E}_{\mu \nu}$. Physically, a Hamiltonian of this form can be used describe $d$ identical, noninteracting orbitals in a magnetic field, each with kinetic energy $\alpha_N(a_{i \uparrow}^\dagger a_{i \uparrow} + a_{i \downarrow}^\dagger a_{i \downarrow})$ and magnetic field interaction $\alpha_x \hat{S}_x^{(i)} + \alpha_y \hat{S}_y^{(i)} + \alpha_z \hat{S}_z^{(i)}$ (for $\alpha_i \in \mathbb{R}$). The overall Hamiltonian $H_{\mathfrak{u}(2)}$ is then a sum over all orbitals (c.f. Equations \eqref{eq:spinoperators} and \eqref{eq:totalspinoperators}), given by:
  
  \begin{align*}
  H_{\mathfrak{u}(2)} &= \alpha_{x} \hat{S}_x + \alpha_{y} \hat{S}_y + \alpha_{z} \hat{S}_z + \alpha_{N} \hat{N}  \\
  &= \frac{\alpha_{x}}{2} ( \mathcal{E}_{\uparrow \downarrow} + \mathcal{E}_{\downarrow \uparrow}) + \frac{i \alpha_{y}}{2}  ( \mathcal{E}_{\downarrow \uparrow} - \mathcal{E}_{\uparrow \downarrow}) +  \frac{\alpha_{z}}{2} ( \mathcal{E}_{\uparrow \uparrow} -  \mathcal{E}_{\downarrow \downarrow}) + \alpha_{N}( \mathcal{E}_{\uparrow \uparrow} +  \mathcal{E}_{\downarrow \downarrow})
  \end{align*}
  
\noindent We will now show in the Jordan-Wigner representation, evolution by $H_{\mathfrak{u}(2)}$ is always expressible as a $d$-fold tensor product of two-qubit gates $(WV)^{\otimes d}$, where $V$ is an $SU(2)$ operation embedded in the $\{ \ket{01}, \ket{10}\}$ subspace and $W$ is an element of $U(1)$ (in the form of a controlled phase). As discussed previously, $e^{iH_{\mathfrak{u}(2)}}$ generates a projective representation of $U(2)$, and in this case we will have a clear decomposition of a $U(2)$ representation into $SU(2)$ and $U(1)$ components, reflecting the well-known double covering of the group $U(2)$ by $SU(2) \times U(1)$. 

We begin by re-writing each ladder operator $\mathcal{E}_{\mu \nu}$ explicitly in terms of the creation and annihilation operators via Equation \eqref{eq:ladderdefinition}. After substituting in the expressions for $a_{i \mu}$ and $a_{i \mu}^\dagger$ in terms of Majorana operators $c_i$ (given in Equation \eqref{eq:majorana_creation_annihilation}) and using the anticommutation relation $\{ c_i, c_j \} = \delta_{ij}$, we can write the $\mathcal{E}_{\mu \nu}$'s in terms of $c_i$'s as follows:

  \begin{align}
  \mathcal{E}_{\uparrow \uparrow} &= \frac{1}{2}(I  + i c_1 c_2) + \frac{1}{2}(I  + i c_5 c_6) + \hdots + \frac{1}{2}(I  + i c_{4d-3} c_{4d-2}) \\
  \mathcal{E}_{\downarrow \downarrow} &= \frac{1}{2}(I + i c_3 c_4) + \frac{1}{2}(I + i c_7 c_8) + \hdots + \frac{1}{2}(I + i c_{4d-1} c_{4d}) \\
  \mathcal{E}_{\uparrow \downarrow} &= \frac{1}{4}( c_1 c_3 + c_2 c_4 +i c_1 c_4 - ic_2 c_3) + \frac{1}{4}( c_5 c_7 + c_6 c_8 +i c_5 c_8 - ic_6 c_7) + \hdots \\
  \mathcal{E}_{\downarrow \uparrow} &= \frac{1}{4}( - c_1 c_3 - c_2 c_4 +i c_1 c_4 - ic_2 c_3) + \frac{1}{4}( - c_5 c_7 - c_6 c_8 +i c_5 c_8 - ic_6 c_7) + \hdots
  \end{align}
  
  \noindent where $\mathcal{E}_{\mu \nu}$ has been written out as a sum of $d$ bracketed terms, each acting nontrivially on up to two neighbouring qubits. Now, we collect all terms from $H_{\mathfrak{u}(2)}$ acting on the $1^\text{st}$ and $2^\text{nd}$ qubit lines (i.e. bracketed terms including $c_1, c_2, c_3, c_4$). After substituting in the expressions for each $c_i$ in terms of Pauli operators, we can write out these terms explicitly as:
  
  \begin{align}
  \alpha_x \hat{S}_x^{(1)} + \alpha_y \hat{S}_y^{(1)} + \alpha_z \hat{S}_z^{(1)} + \alpha_N \hat{N}^{(1)} &= \frac{\alpha_x}{4} (ic_1 c_4 - ic_2 c_3) + \frac{\alpha_y}{4} (ic_1 c_3 + ic_2 c_4) + \frac{\alpha_z}{4} ( ic_1 c_2 -  ic_3 c_4)  \\
  & \ + \frac{\alpha_N}{2}(2 \times I + i c_1 c_2 + i c_3 c_4) \\
  &= \frac{1}{2} \left( \begin{bmatrix} 0 & 0 & 0 & 0 \\ 0 & - \alpha_z & \alpha_x - i \alpha_y & 0 \\ 0 & \alpha_x + i \alpha_y & \alpha_z &0  \\ 0 &0  & 0 & 0 \end{bmatrix}  + \begin{bmatrix} 0 & 0 & 0 & 0 \\ 0 & \alpha_N & 0 & 0 \\ 0 & 0 & \alpha_N & 0 \\ 0 & 0 & 0 & 2\alpha_N \end{bmatrix} \right) \otimes I \otimes \hdots \otimes I
  \end{align}
  
  \noindent Similarly, the terms acting on the $3^\text{rd}$ and $4^\text{th}$ qubit lines (which include $c_5, c_6, c_7, c_8$) follow an identical expression, but are preceded by a tensor product with $I \otimes I$. There are $d$ such (mutually commuting) terms in the Hamiltonian. Furthermore, the first and second square matrix within each term also commute -- the first is clearly an element of $\mathfrak{su}(2)$, and the second is a diagonal matrix which exponentiates to a controlled phase gate. Therefore, we can write the evolution by $H_{\mathfrak{u}(2)}$ as $(WV)^{\otimes d}$:

  \begin{equation}
  e^{iH_{\mathfrak{u}(2)}} = \bigotimes_{i=1}^d e^{i(\alpha_x \hat{S}_x^{(1)} + \alpha_y \hat{S}_y^{(1)} + \alpha_z \hat{S}_z^{(1)})} e^{i \alpha_N \hat{N}^{(1)}} = (WV) \otimes \hdots \otimes (WV) = (WV)^{\otimes d},
  \label{eq:hu2_hamiltonian}
  \end{equation}
  
  \noindent where the exact expression for $WV$ is given by:
  
  \begin{equation}
  WV = \begin{pmatrix}1 & 0 & 0 & 0 \\ 0 & u_{11} & u_{12} & 0 \\ 0 & u_{21} & u_{22} & 0 \\ 0 & 0 & 0 & 1 \end{pmatrix} \begin{pmatrix} 1 & 0 & 0 & 0 \\ 0 & e^{\frac{i \alpha_N}{2}} & 0 & 0 \\ 0 & 0 & e^{\frac{i \alpha_N}{2}} & 0 \\ 0 & 0 & 0 & e^{i \alpha_N} \end{pmatrix}, \ \text{where } \ U \in SU(2) \label{eq:wv_circuit}
  \end{equation}
  
  \noindent Physically, the matrix $W$ can be seen to describe the precession present due to interaction with a magnetic field, whereas the matrix $V$ describes the kinetic energy of the overall system, dependent on the total number of particles. The presence of $\hat{N}$ in $H_{\mathfrak{u}(2)}$ also determines whether its action on the Gelfand-Tsetlin basis $\ket{N, S, M; \mathbf{d}}$ is dependent on the particle number $N$. In its absence, the action on subspaces $V_{N, S}$ and $V_{2d -N, S}$ is identical. On the other hand, if $\alpha_N \neq 0$ then unitary evolution by $H_{\mathfrak{u}(2)}$ will introduce an $N$-dependent phase $e^{i N \alpha_N / 2}$ on each $V_{N, S}$, which corresponds to the total, particle number-proportional kinetic energy of the system.
  
  Mathematically, exponentiating $H_{\mathfrak{u}(2)}$ gives rise to a polynomial representation of $U(2)$ of the form $R_{N, S}(U) \cong \det U^N R_{S}(U)$. Because $H_{\mathfrak{u}(2)}$ is in the Lie algebra representation of $\mathfrak{u}(2)$, the induced unitary will only affect the $M$-component of the GT basis, which we may summarise as:
  
  \begin{equation}
    e^{iH_{\mathfrak{u}(2)}} \ket{N, S, M; \mathbf{d}} = \sum_{S=0}^{d} \sum_{M'=-S}^S e^{i N \phi} [R_S(U)]_{M', M} \ket{N, S, M'; \mathbf{d}} \label{eq:u2_action}
  \end{equation}

  \noindent Where $\phi = \alpha_N / 2$, and $[R_S(U)]_{M', M}$ form the components of $(2S+1) \times (2S+1)$ irrep matrices. Finally, we note that the unitary $(WV)^{\otimes d}$ satisfies Equation \eqref{eq:matchgatedef}, and is therefore a matchgate.
  
  Whilst $H_{\mathfrak{u}(2)}$ is generally uninteresting from a quantum simulation perspective, we will find that the symmetry of the GT basis under its action gives rise to an interesting consequence: an error-free protocol for encoding quantum information into a \textit{decoherence-free subsystem} \cite{Kempe_2001}, achieved by encoding quantum information into the $\mathbf{d}$ register (which for an appropriate choice of $(N, S)$ is exponentially large) to protect against noise of the form $e^{iH_{\mathfrak{u}(2)}}$ This encoding is made possible by the quantum Paldus transform, and we return to it in Section \ref{sec:dfs}.

   \begin{figure*}[htb!]
    \includegraphics[width=8cm]{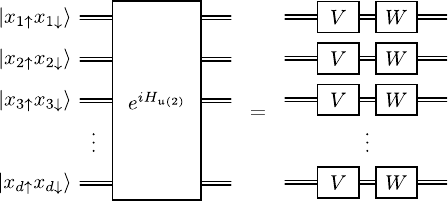}
    \caption{\textit{Evolution by Single-body $\mathfrak{u}(2)$ Hamiltonians.} The gate $e^{iH_{\mathfrak{u}(2)}}$ is the unitary evolution by a Hamiltonian of the form in Equation \eqref{eq:hu2_hamiltonian}. Each $WV$ is a gate described by Equation \eqref{eq:wv_circuit} and Figure \ref{fig:wv_gate}.} \label{fig:u2_circuit}
  \end{figure*}
  
  \begin{figure*}[htb!]
    \includegraphics[width=14cm]{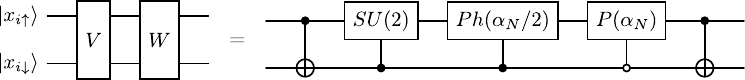}
    \caption{\textit{Decomposition of WV Gates (Equation (\ref{eq:wv_circuit})) Into a Quantum Circuit}. Gates controlled on the $\ket{1}$ ($\ket{0}$) state correspond to action in the $\{ \ket{01}, \ket{10} \}$ ($\{ \ket{00}, \ket{11} \}$) subspace of $\ket{x_{i \uparrow} x_{i \downarrow}}$. $Ph(\alpha_N / 2) = \text{diag}(e^{i \alpha_N/2}, e^{i \alpha_N/2}) \otimes |1 \rangle \langle 1 |$ is a controlled global phase gate. $P(\alpha_N) = \text{diag}(1, e^{i \alpha_N}) \otimes |0 \rangle \langle 0 |$ is the standard controlled phase gate. This forms an example of a 2-qubit matchgate.}
    \label{fig:wv_gate}
  \end{figure*}

  \subsubsection{Single-body \texorpdfstring{$\mathfrak{u}(d)$}{u(d)} Hamiltonians}

  We now proceed in analogy to the previous section, looking closely at the case of evolution by spin-free one-body Hamiltonians, i.e. members of the Lie algebra representation $\mathfrak{u}(d)$, composed of terms linear in $E_{ij}$. Hamiltonians in this class can be written as:
  
  \begin{equation}
  H_{\mathfrak{u}(d)} = \sum_{i < j}( \beta_{ij} F_{ij} + \beta^{ij} F^{ij}) + \sum_{i=1}^d \beta_{ii} F_{ii}, \quad \beta_{ij}, \beta^{ij} \in \mathbb{R}, \label{eq:hud_hamiltonian}
  \end{equation}
  
  \noindent where we have rewritten the basis of generators $E_{ij}$ in terms of their Hermitian linear combinations:

  \begin{align}
    F_{ii} &= E_{ii}, &
    F_{ij} &= \frac{1}{2} (E_{ij} + E_{ji}),  &
    F^{ij} &= \frac{i}{2} (E_{ij} - E_{ji})  \label{eq:f_generators}
  \end{align}
  
  \noindent We will find that evolution by $\mathfrak{u}(d)$ Hamiltonians is again expressible as a matchgate circuit. In order to make this clearer, we first \textit{re-order} the Jordan-Wigner operators, i.e. apply a circuit $\sigma$ which permutes the occupation basis:
  
  \begin{equation}
  \sigma \ket{x_{1 \uparrow} x_{1 \downarrow} x_{2 \uparrow} x_{2 \downarrow} \cdots x_{d \uparrow} x_{d \downarrow}} = \ket{x_{1 \uparrow} x_{2 \uparrow} x_{3 \uparrow} \hdots x_{d \uparrow} x_{1 \downarrow} x_{2 \downarrow} x_{3 \downarrow} \cdots x_{d \downarrow} }
  \end{equation}
  
  \noindent This can be achieved by a series of \textit{fermionic} $fSWAP$ gates, which re-order the the fermionic modes whilst preserving the form of the creation and annihilation operators \cite{Kivlichan2018QuantumForm}. A full re-ordering requires $d(d-1)/2$ $fSWAP$ gates, as shown in Figure \ref{fig:fswap_sequence_matchgate_circuit}. After re-ordering, the creation and annihilation operators can be written as:
  
  \begin{align*}
    \tilde{a}^\dagger_{j \mu} &= \underbrace{Z \otimes \cdots \otimes Z}_{j + d(1/2 - \mu) - 1} \otimes \frac{1}{2} \left(X - i Y \right) \otimes \underbrace{I \otimes \cdots \otimes I}_{2d - j - d(1/2 - \mu)} = \frac{1}{2}(\tilde{c}_{2j + d(1 - 2\mu) - 1} - i \tilde{c}_{2j + d(1 - 2\mu)} ), \\
    \tilde{a}_{j \mu} &= \underbrace{Z \otimes \cdots \otimes Z}_{j + d(1/2 - \mu) - 1} \otimes \frac{1}{2} \left(X + i Y \right) \otimes \underbrace{I \otimes \cdots \otimes I}_{2d - j - d(1/2 - \mu)} = \frac{1}{2}(\tilde{c}_{2j + d(1 - 2\mu) - 1} + i \tilde{c}_{2j + d(1 - 2\mu)} )\nonumber
  \end{align*}
  
  \noindent Again, we have written the ladder operators in terms of Majoranas $\tilde{c}_i$, with tildes to denote the re-ordering. Following the definition of $E_{ij} = \sum_\mu a_{i \mu}^\dagger a_{j \mu}$ and $F_{ij}, F^{ij}$ from Equation \eqref{eq:f_generators}, we again re-write the re-ordered Hermitian $\mathfrak{u}(d)$ generators in terms of Majorana operators as:
  
  \begin{align*}
    \tilde{F}_{ii} &= \frac{1}{2} (I + i \tilde{c}_{2i - 1} \tilde{c}_{2i}) + \frac{1}{2} (I + i \tilde{c}_{2d + 2i - 1 } \tilde{c}_{2d + 2i}) = \tilde{F}^{\uparrow}_{ii} + \tilde{F}^{\downarrow}_{ii}  \\
    \tilde{F}_{ij} &= \frac{1}{2}(i\tilde{c}_{2i - 1} \tilde{c}_{2j} - i \tilde{c}_{2i} \tilde{c}_{2j - 1}) + \frac{1}{2} (i \tilde{c}_{2d +2i - 1} \tilde{c}_{2d +2j} -i \tilde{c}_{2d +2i} \tilde{c}_{2d +2j - 1}) = \tilde{F}^{\uparrow}_{ij} + \tilde{F}^{\downarrow}_{ij}\\
    \tilde{F}^{ij} &= \frac{1}{2}(i\tilde{c}_{2i - 1} \tilde{c}_{2j-1} + i \tilde{c}_{2i} \tilde{c}_{2j} ) + \frac{1}{2} (i \tilde{c}_{2d +2i - 1} \tilde{c}_{2d +2j - 1} +i \tilde{c}_{2d +2i} \tilde{c}_{2d +2j}) = \tilde{F}^{\uparrow, ij} + \tilde{F}^{\downarrow, ij}
  \end{align*}
  
  \noindent Where we have split the first two generators into a sum of `partial' generators $\tilde{F}^{\uparrow}_{ij} + \tilde{F}^{\downarrow}_{ij}$, and the third into a sum $\tilde{F}^{\uparrow, ij} + \tilde{F}^{\downarrow, ij}$. Importantly, all partial generators indexed by $\uparrow$ commute with those indexed by $\downarrow$. Similarly to the $\mathfrak{u}(2)$ case, this lets us group up the $\uparrow$ and $\downarrow$ terms into two Hamiltonians, which act identically on the first and last $d$ qubits. We then write the evolution by $H_{\mathfrak{u}(d)}$ (in the re-ordered basis) as:
  
  \begin{align}
    \sigma e^{iH_{\mathfrak{u}(d)}} \sigma^\dagger &= e^{i \sum_{ij} (\beta_{ij} \tilde{F}^\uparrow_{ij} + \beta^{ij} \tilde{F}^{\uparrow ij} + \beta_{ii} \tilde{F}^{\uparrow}_{ii}  ) + i \sum_{ij} (\beta_{ij} \tilde{F}^\downarrow_{ij} + \beta^{ij} \tilde{F}^{\downarrow ij} + \beta_{ii} \tilde{F}^{\downarrow}_{ii}  ) } \\
    &= e^{i \sum_{ij} (\beta_{ij} \tilde{F}^\uparrow_{ij} + \beta^{ij} \tilde{F}^{\uparrow ij} + \beta_{ii} \tilde{F}^{\uparrow}_{ii}  ) } e^{i \sum_{ij} (\beta_{ij} \tilde{F}^\downarrow_{ij} + \beta^{ij} \tilde{F}^{\downarrow ij} + \beta_{ii} \tilde{F}^{\downarrow}_{ii}  ) } \\
  &= e^{i \sum_{ij} (\beta_{ij} \tilde{F}^\uparrow_{ij} + \beta^{ij} \tilde{F}^{\uparrow ij} + \beta_{ii} \tilde{F}^{\uparrow}_{ii}  ) } \otimes e^{i \sum_{ij} (\beta_{ij} \tilde{F}^\uparrow_{ij} + \beta^{ij} \tilde{F}^{\uparrow ij} + \beta_{ii} \tilde{F}^{\uparrow}_{ii}  )} \\ 
  &=  G \otimes G \label{eq:ud_g_circuit}
  \end{align}
  
  \noindent In the first line, we have split the re-ordered Hamiltonian $\tilde{H}_{\mathfrak{u}(d)} = \tilde{H}^\uparrow + \tilde{H}^\downarrow$ into two terms indexed by $\uparrow$ and $\downarrow$, which act identically on the first $d$ and last $d$ qubit lines. Writing these out in full, every non-identity term in the exponent is a product of two Majorana operators $\tilde{c}_i \tilde{c}_j$ (where $i, j \leq 2d$). For example, terms in the $\uparrow$ part are always a Pauli string on the first $1 \leq i \leq d$ qubits and an identity on the last $d \leq i \leq 2d$ qubits, and vice versa for the $\downarrow$ part. Because of this, $\tilde{H}^\uparrow$ and $\tilde{H}^\downarrow$ commute, so we can write the first line as a product of two matrix exponentials. Finally, because $\tilde{H}^\uparrow$ and $\tilde{H}^\downarrow$ act identically on their respective qubit lines, they can be re-expressed as a tensor product of the form $G \otimes G$. 
  
  Because each term in the exponent of $G$ only contains terms with $I$ and $\tilde{c}_i \tilde{c}_j$, it satisfies the definition of quadratic Hamiltonians. Therefore, each $G$ can be re-expressed as a product of $\mathcal{O}(\text{poly}(d))$ nearest-neighbour matchgates, forming a matchgate circuit. Each $fSWAP$ gate is a also matchgate, satisfying Equation \eqref{eq:matchgatedef}, so the circuit $\sigma$ which re-orders the fermionic modes is another matchgate circuit. 
  
  We have thus shown that unitary evolution by $H_{\mathfrak{u}(d)}$ is expressible as $\sigma^\dagger ( G \otimes G ) \sigma$, a \textit{classically simulable matchgate circuit}. This is unsurprising, given that single-body Hamiltonians are also known to be efficiently simulable. However, this decomposition gives a more concrete form of $U(d)$ representations obtained via $H_{\mathfrak{u}(d)}$, and sheds light on the importance of the universal enveloping algebra $\mathfrak{U}(\mathfrak{u}(d))$ in Hamiltonian simulation. In general, when $H$ belongs to this class, the situation is not so simple and the Hamiltonian evolution does not admit a simple decomposition into elementary gates. Indeed, most spin-free Hamiltonians of interest are not expressible as $H_{\mathfrak{u}(d)}$, but rather $H_{\mathfrak{U}(\mathfrak{u}(d))}$, where $\mathfrak{U}(\mathfrak{u}(d))$ is the universal enveloping of $\mathfrak{u}(d)$. This is the set of all possible Hermitian operators expressible as sums of $E_{ij}$'s and their products, as discussed at the beginning of Section \ref{sec:uga} and Appendix \ref{app:universalenveloping}.

\begin{figure*}[htb!]
    \includegraphics[width=9cm]{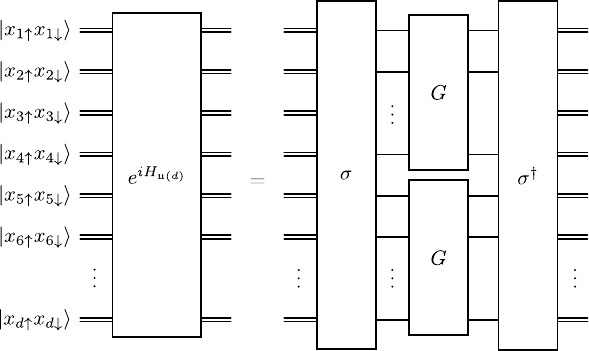}
    \caption{\textit{Evolution by Single-body $\mathfrak{u}(d)$ Hamiltonians.} The gate $e^{iH_{\mathfrak{u}(d)}}$ is the unitary evolution by a Hamiltonian in the form of Equation \eqref{eq:hud_hamiltonian}. With a re-ordering of qubit lines through a sequence of $fSWAP$ gates $\sigma$, this can be described as a tensor product of matchgate circuits (Figure \ref{fig:fswap_sequence_matchgate_circuit}). } \label{fig:ud_circuit}
  \end{figure*}
  
  \begin{figure*}[htb!]
    \centering
    \hspace*{+0cm}
    \begin{minipage}{.5\textwidth}
        \includegraphics[width=7.5cm]{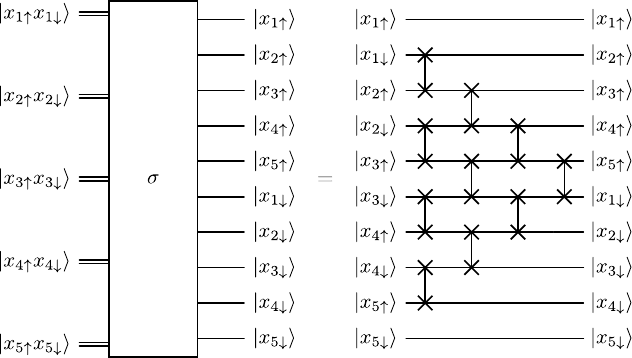}
    \end{minipage}%
    \begin{minipage}{.5\textwidth}
        \includegraphics[width=7.5cm]{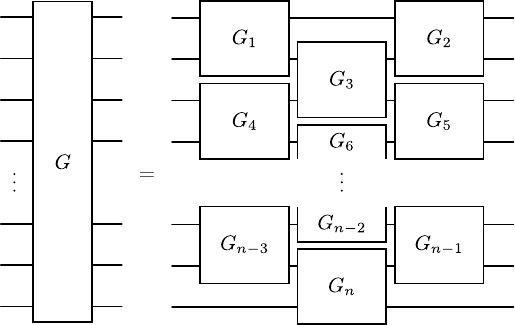}
    \end{minipage}
    \caption{\textit{Left:} The circuit $\sigma$ used for re-ordering occupation number labels. Each element of this circuit is an $fSWAP$ gate. \textit{Right:} A matchgate circuit. Each $G_i$ gate satisfies Equation \eqref{eq:matchgatedef}. Figure \ref{fig:ud_circuit} describes the unitary evolution by spin-free Hamiltonians as gate compositions of the form $e^{i H_{\mathfrak{u}(d)}} = \sigma^\dagger ( G \otimes G ) \sigma$.}
    \label{fig:fswap_sequence_matchgate_circuit}
  \end{figure*}

\subsection{Fast-forwarding Quantum Evolution in the Second Quantisation} \label{sec:fastforwarding}

The quantum Paldus transform is readily applicable to the \textit{Hamiltonian fast-forwarding} algorithm proposed by \cite{Gu_2021}. In their work, the authors use properties of the Schur basis $\ket{S} \otimes \ket{M} \otimes \ket{Y}$ to simulate the action of permutation-invariant $SU(2)$ Hamiltonians with an exponential speedup relative to working in the computational basis. The idea behind this approach is that because each $SU(2)$ irrep space is of only a polynomially large dimension ($2S + 1$), after transforming some initial state $\ket{\psi}$ into the Schur basis one can implement unitaries controlled on the $\ket{S}$ register in $\mathcal{O}(\text{poly}(n))$ time to quickly simulate its evolution. Although each $SU(2)$ irrep has exponentially large multiplicity (i.e. spin degeneracy, given by $d_S$ in Equation \eqref{eq:spin_degeneracy}), in the Schur basis each multiplicity space is acted on simulatenously by the controlled operation, leading to the speedup. This is then applied to the Lipkin-Meshkov-Glick model of nuclear physics \cite{meshkov1965validity}. Our basis satisfies the block-diagonalisation requirements of the fast-forwarding theorem in \cite{Gu_2021}, and in particular this enables fast-forwarded simulation of $\mathfrak{U}(\mathfrak{u}(2))$ Hamiltonians polynomial in the $\{ \mathcal{E}_{\mu \nu}\}$ ladder operators. In particular, whilst using the Schur transform restricts us to first-quantised models, the Paldus transform extends this to scenarios where a second quantisation is more appropriate.

\subsection{Quantum Chemistry in the UGA basis} \label{sec:chemistry}

\begin{figure}[htbp!]
  \includegraphics[width=12cm]{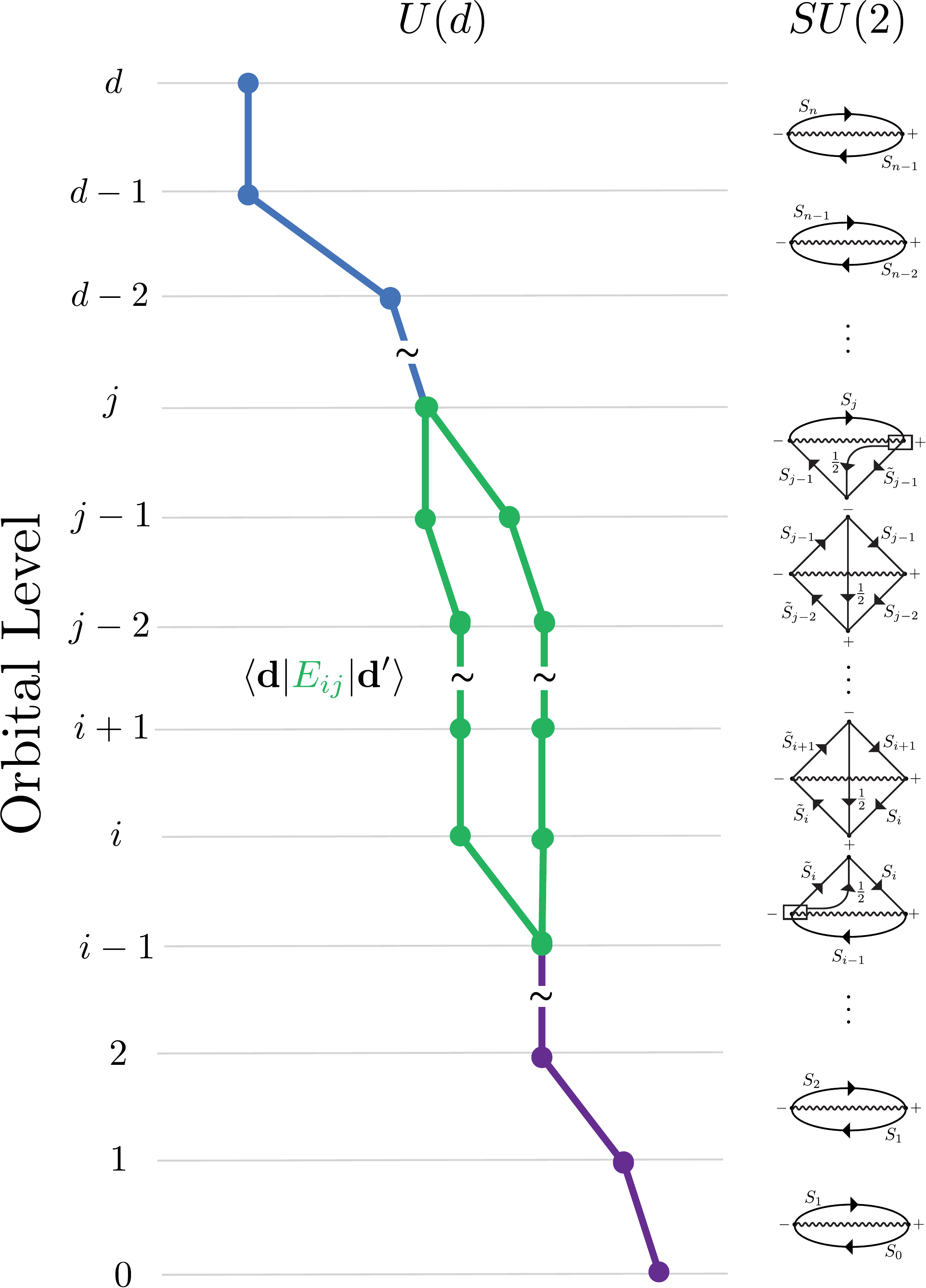}
  \caption{\textit{Shavitt Loops.} The Shavitt graph is a representation of the UGA basis states as a graph, where each edge is a step vector $\mathbf{d}$. The Shavitt loop is a closed path in the graph which represents the overlap of the $U(d)$ raising operator $\langle \mathbf{d} | E_{ij} | \mathbf{d}' \rangle$ (shown in green). Each segment of the loop carries an $SU(2)$ angular momentum coupling tensor which is shown diagrammatically using Jucys' segmented network construction \cite{Brink:1975}, calculated from the $S$ values of the pair of top and bottom nodes of the loop segment. The product of these coefficients gives the matrix element, as was shown by Paldus and Boyle \cite{Paldus1980}.}
  \label{fig:shavitt_loops}
\end{figure}

At the beginning of Section \ref{sec:applications}, we have mentioned applications of the Paldus transform for state preparation in the occupation basis via projections. Conversely, one can use the efficient Paldus transform to carry out quantum chemistry calculations in the UGA basis \textit{itself}, making use of the greater sparsity obtained in the block-diagonal form of $U(d) \times U(2)$ representations. Here we briefly summarise some of these potential techniques. 

In general, the dimension $T^d_{S, N}$ (Equation \eqref{eq:weyl_dimension}) of any $U(d)$ irrep in the GT basis is at least \textit{polynomially} smaller than the full dimension of the Fock space, and the $U(2)$ irreps of dimension $(2S+1)$ are exponentially smaller. Carrying out simulation techniques such as trotterization, or block encoding thus has the potential to be more efficient in the UGA basis. To aid in this, there exist plenty of computational techniques for calculating the matrix elements of ladder operators $\bra{\mathbf{d}} E_{i, i+1} \ket{\mathbf{d}'}$~\cite{Robb1984}, and together with the recurrence relations $[E_{i,j}, E_{j, j+1}] = E_{i, j+1}$ they can be used to calculate matrix elements for any ladder operator~\cite{Hegarty01121979}. These could be used to implement the ladder operators $E_{ij}$ in the UGA basis directly on a quantum computer. If we order the step vectors in a particular way \cite{PALDUS1976131}, one can show that similarly to the Jordan-Wigner case the ladder operators $E_{ij}$ are $|i-j|$-local, that is, $\bra{\mathbf{d}} E_{ij} \ket{\mathbf{d}'}$ is zero if step vectors $\mathbf{d}, \mathbf{d}'$ differ outside the range $[i, j]$. 

Furthermore, each $E_{i, i+1}$ can contain at most two nonzero elements. The exact values of the matrix elements can be calculated from the \textit{overlaps} of two Shavitt graph paths induced by the step vectors $\mathbf{d}, \mathbf{d}'$, in a technique known as the \textit{Graphical Unitary Group Approach} (GUGA). An example of a GUGA calculation of a matrix element $\bra{\mathbf{d}} E_{ij} \ket{\mathbf{d}'}$ via Shavitt path overlaps is shown in Figure \ref{fig:shavitt_loops}, Where the $SU(2)$ angular momentum coupling \cite{Brink:1975,Paldus1980}, carried by each step vector $\bra{\mathbf{d}}$ leads directly to the matrix element value. Calculation of density matrix elements in the UGA basis using these methods on a quantum computer will be the topic of future work.

\subsection{Preparing Uniform Superpositions of CSFs} \label{sec:CSFprep}

The ability to project and post-select via the Paldus transform lets us prepare \textit{Configuration State Functions} (CSFs) with well-defined quantum numbers for use as initial states for quantum algorithms (for example, approximating ground state energies of a chemical system). This is useful, for example, in the context of variational algorithms where one starts out by preparing an \textit{ansatz} \cite{Puig_2025, Truger_2024}, or in finding ground state energies via phase estimation with the ground state known to lie in some $(N, S)$ symmetry sector. One way to prepare an ansatz with well-defined quantum numbers is to prepare a superposition of all bitstrings through a Hadamard transform, apply the Paldus transform and post-select on the desired quantum numbers. Then, one can apply $U_P^\dagger$ to obtain the desired CSF mixture in the occupation basis. 

Here we consider a slightly different approach, which is especially useful in state preparation for algorithms involving spin-free Hamiltonians $H_{\mathfrak{U}(\mathfrak{u}(d))}$. As noted previously, such Hamiltonians \textit{do not affect} the spin projection $M$, so the ground state (and its energy) will be independent of this value. Similarly to the classical unitary group approach, in this scenario the $M$ label of each GT state can be set arbitrarily -- we choose this to be the highest value, i.e. $M = S$. With this in mind we seek to prepare a uniform superposition of \textit{all step vectors}, given by a state in the UGA basis of the form:

\begin{equation}
  \ket{\Psi} = \frac{1}{\sqrt{|\{ \mathbf{d} \}|}}  \sum_{\mathbf{w} \in \{ \mathbf{d} \} } \ket{N}_N \otimes \ket{S}_S \otimes \ket{S}_M \otimes \ket{\mathbf{w}}_\mathbf{d} \label{eq:all_csf_superposition}
\end{equation}

\noindent Where the sum is evaluated over all valid step vectors $\{ \mathbf{d} \}$, the quantum numbers $N, S$ are those corresponding to each $\mathbf{w}$ (c.f. Table \ref{fig:step_vectors_binary}), and for each constituent UGA basis state the spin projection is set to $M=S$. This is essentially a linear combination of all GT states of $U(d)$ ignoring the multiplicities $(2S+1)$ of their subspaces.

The total number of valid step vectors is smaller than the number of $2d$-bit strings, as not every string forms a step vector (for example, any step vector starting with $\mathbf{01}$ is not valid). A valid step vector for a subspace $(N, S)$ satisfies the aforementioned constraints in Equation \eqref{eq:ugabasisconstraints}. The total number of $\mathbf{d}$'s which satisfy this condition is given by:

\begin{equation}
  T^d = | \{ \mathbf{d} \} | = {2d + 1 \choose d }
\end{equation}

\noindent which we derive in Appendix \ref{app:dimension_formula}. Whilst it is possible to prepare $\ket{\Psi}$ in a variety of ways (for example, using amplitude amplification) we present a more practical method based on post-selection. The idea is to prepare a superposition of all bitstrings and eliminate those which form invalid step-vectors. To do this, we exploit the two's complement scheme used to store $M$ values in the UGA basis. Recall that in the two's complement scheme, the first bit is used to store the sign of the number, being $0$ for a nonnegative integer and $1$ for a negative integer. This means that measuring the \textit{first bit} of the $M$ register will collapse it onto one of two subspaces: $\{ \ket{M < 0} \}$ or $\{\ket{M \geq 0} \}$. For example, if we prepare a uniform superposition of all $2$-bit strings, append an ancilla register $\ket{0}_M$, and apply the incrementer circuit of Figure \ref{fig:incrementer_m} we will get the state $\frac{1}{4}( \ket{0}_M \ket{00} + \ket{1/2}_M \ket{10} + \ket{-1/2}_M \ket{01} + \ket{0}_M \ket{11})$. Measuring $0$ on the first qubit of $M$ will collapse this state to $\frac{1}{\sqrt{3}}(\ket{0}_M \ket{\mathbf{00}}_\mathbf{d} + \ket{1/2}_M \ket{\mathbf{10}}_\mathbf{d} + \ket{0}_M \ket{\mathbf{11}}_\mathbf{d} )$, which is a uniform superposition of all valid step vectors (with $S = M$) for $d=1$. Repeating this process by appending a superposition of all bitstrings onto the end, applying the incrementer, and post-selecting on $M \geq 0$ we obtain a uniform superposition of all states $\frac{1}{\sqrt{|\{\mathbf{d}\}|}} \sum_{\mathbf{d}} \ket{S}_M \ket{\mathbf{d}}_\mathbf{d}$. Appending ancilla registers $\ket{0}_N \ket{0}_S$ and setting them to their respective values with another series of incrementers then returns $\ket{\Psi}$. In summary, the algorithm to prepare the uniform CSF superposition is as follows:

\begin{enumerate}
  \item Prepare the blank state $\ket{0}_M \otimes \ket{0}_\mathbf{d}$. \\  \\
  \textit{For $i = 1, ..., d$:}
  \item Apply a Hadamard gate to qubits $2i-1$ and $2i$ of the $\mathbf{d}$ register.
  \item Apply a controlled incrementer gate (of the form in Figure \ref{fig:incrementer_m}), with qubits $2i-1$ and $2i$ as the control, and $M$ as the target.
  \item Measure the \textit{first} qubit of $M$. If the measurement reads $1$, discard the state and start again. If it reads $0$, repeat steps $2$ -- $4$ for $i \rightarrow i + 1$.  
  \item Append blank ancilla registers for $S, N$ and apply controlled incrementers (Figures \ref{fig:incrementer_s}, \ref{fig:incrementer_n}) to obtain $\ket{\Psi}$.

\end{enumerate}

\noindent To show that the above requires a polynomial number of repetitions for a constant probability of success, we note that at each step the probability of success (i.e, measuring $0$ as the first qubit of $M$) is given by:

\begin{equation}
  \Pr(S_i \geq 0 | S_{i-1}, \hdots, S_1 \geq 0) = \frac{|\{ \mathbf{d}\}|_i}{4 \times |\{ \mathbf{d}\}|_{i-1}} = \frac{{2i + 1 \choose i}}{4 \times {2i - 1 \choose i}} 
\end{equation}

\noindent Hence, the probability of success after $d$ steps is given by:

\begin{equation}
\Pr(S_i \geq 0 \ \forall i = 1, ..., d) = \prod_{i=1}^d \Pr(S_i \geq 0 | S_{i-1}, \hdots, S_1 \geq 0) = \frac{1}{4^d} {2d + 1 \choose d}
\end{equation}

\noindent which is just the ratio between $2d$-bit strings which form valid step vectors and all $2d$-bit strings. Asymptotically, the combinatorial factor behaves like the central binomial coefficient, with a scaling of $\mathcal{O}(4^d / \sqrt{\pi d})$, so the success probability scales as $\mathcal{O}(1 / \sqrt{d})$. Hence to prepare $\ket{\Psi}$ with a success probability of $1 - \delta$, we will need to repeat the above process $\mathcal{O}(\sqrt{d} \log(1 / \delta))$ times.

Once the state $\ket{\Psi}$ is prepared, one can perform further projections to form equal superpositions of $U(d)$ GT states within an $(N, S)$ irrep, with a success probability given by $T^d_{N, S} / T^d$ (where $T^d_{N, S}$ is the dimension of the irrep given in Appendix \ref{app:dimension_formula}). At this point, amplitude amplification can also be used to boost the relative amplitude of states within the desired projection sector, increasing the probability of success. Values of $M$ for each term making up $\ket{\Psi}$ in Equation \eqref{eq:all_csf_superposition} can also be adjusted with incrementers controlled on $\ket{S}_S$ with $\ket{S}_M$ as the target. Similarly, applying a sequence of controlled unitaries acting as $\ket{S}_S \otimes \ket{S}_M \rightarrow \frac{1}{\sqrt{2S+1}}\ket{S}_S \otimes \sum_{M=-S}^S \ket{M}$ onto the $S, M$ registers can also be used to prepare a superposition of \textit{all} UGA basis states, i.e. a state of the form:

\begin{equation}
\frac{1}{2^d} \sum_{S=0}^{d/2} \sum_{N=2S}^{2d - 2S} \sum_{M=-S}^S \sum_{\mathbf{d}} \ket{N}_N \otimes \ket{S}_S \otimes \ket{M}_M \otimes \ket{\mathbf{d}}_\mathbf{d}
\end{equation}

\noindent After the desired superposition is obtained, the inverse Paldus transform can be applied to represent the state in the occupation basis for use in a quantum chemistry simulation.

\begin{figure*}[htb!]
  \includegraphics[width=17cm]{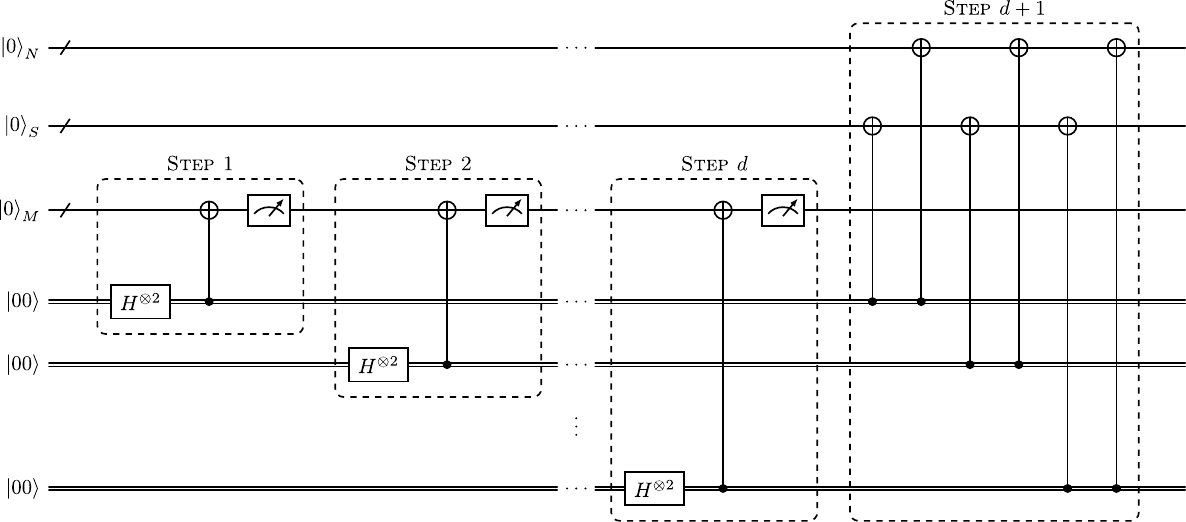}
  \caption{\textit{Preparing a Uniform Superposition of CSFs.} Each step is an application of $H \otimes H$ to a qubit pair, followed by a measurement of the \textit{first} qubit of the $M$ register. The final state is the uniform superposition of valid step vectors $\ket{\Psi}$ given in Equation \eqref{eq:all_csf_superposition}. The circuit succeeds with a probability of $\mathcal{O}(1 / \sqrt{d})$, where $d$ is the number of orbitals. The resulting state may be brought into the occupation number basis by applying the inverse Paldus transform (Figure \ref{fig:paldus_transform}).}
  \label{fig:all_csf_preparation}
\end{figure*}

\subsection{Paldus Transform and the Schur Transform} \label{sec:paldusschur}

The quantum Schur and Paldus transforms are closely related. Here we discuss their similarities, and identify applications of the Schur transform which carry over to the Paldus transform. With an appropriate mapping of $d$ qubit states into $2d$-qubit states, we will show that any application of the former can be emulated by the latter. We have already used some of these similarities to derive a circuit for the Paldus transform through a cascade of Clebsch-Gordan transforms, and to make this complete we also provide alternative $\mathcal{O}(d \cdot \text{poly}(\log(d), \log(1/\varepsilon)) )$ circuit for the Paldus transform, which uses the same building blocks as the $SU(2)$ Schur transform but with the Clebsch-Gordan transforms as controlled unitaries.

\subsubsection{Emulating the Qubit Schur Transform with the Paldus Transform} \label{sec:schur_emulation}

As observed in Section \ref{sec:gt_states_construction}, each GT state in the $2d + 2k$ qubit occupation basis may be obtained from a (computational basis) Schur state on $d$ qubits, with the mapping $\ket{\uparrow} \rightarrow \ket{10}$ and $\ket{\downarrow} \rightarrow \ket{01}$ followed by a tensor product of $k$ many $ \{ \ket{00}, \ket{11} \}$ states. For example, the singlet state $\frac{1}{\sqrt{2}}(\ket{\uparrow \downarrow} - \ket{\downarrow \uparrow})$ is mappable to $\frac{1}{\sqrt{2}}(\ket{1001} - \ket{0110})$, or $\frac{1}{\sqrt{2}}(\ket{100001} - \ket{010010})$, or $\frac{1}{\sqrt{2}}(\ket{00101101} - \ket{00011110})$, and so on. In the UGA setting, all of these states are $S=0$ singlets with differing particle numbers $N$. Despite the presence of spin-$0$ orbitals, their relative amplitudes are the same, as the Clebsch-Gordan coefficients corresponding to spin-$0$ coupling are always unity.

With this in mind, let us consider an occupation basis state acted on by a Paldus transform which consists exclusively of half-occupied orbitals (i.e. $\ket{x_{i \uparrow} x_{i \downarrow}} \in \{ \ket{10}, \ket{01} \}$ for all $i$). All UGA basis states in the resulting superposition will contain $N = d$ particles. Following the definition of the Clebsch-Gordan transform in Section \ref{sec:cgtransform}, we also see that each term in the output step vectors will be an odd-parity bitstring (i.e. $\ket{\mathbf{d}}_i \in \{ \ket{\mathbf{10}}, \ket{\mathbf{01}} \}$), as only the second and third row of the four possible inputs of Equation \eqref{eq:cg_transform_rules} are realised on each step. Such step vectors are in a one-to-one correspondence with $d$-bit \textit{Yamanouchi symbols}, which index the $S_d$ irrep labels of $SU(2)$ Schur states. The mapping is given by $\mathbf{10} \leftrightarrow 0$ and $\mathbf{01} \leftrightarrow 1$. Therefore, on inputs with half-occupied orbitals only, the action of the Paldus transform is identical to the Schur transform, with the output states forming a superposition of states $\ket{d}_N \otimes \ket{S}_S \otimes \ket{M}_M \otimes \ket{\mathbf{Y}}_\mathbf{d}$, where the step vectors $\mathbf{Y}$ correspond to Yamanouchi symbols encoded in $2d$ qubits. Consequently, the $SU(2)$ Schur transform on $d$ qubits can be emulated by the Paldus transform with $d$ ancillary qubits. This is achieved by `encoding' each qubit of the input state into the $\{ \ket{10}, \ket{01} \}$ subspace of two qubits using the circuit in Figure \ref{fig:paldus_schur_gadget}, applying the Paldus transform, and then `decoding' each qubit pair of the resulting step vector by applying the encoder in reverse and discarding the ancillae. Putting aside differences in notation, the resulting states will be identical to the superposition of the $\ket{S} \otimes \ket{m} \otimes \ket{p}$ states output by the algorithm of \cite{BCH_2005}. Therefore, any application of the $SU(2)$ Schur transform can be emulated by the Paldus transform with a polynomial overhead in qubits. The converse is not true, which we demonstrate in Section \ref{sec:dfs} with a protocol reliant on the Paldus transform that cannot be performed with the qubit Schur transform.

\begin{figure*}[htb!]
  \includegraphics[width=8cm]{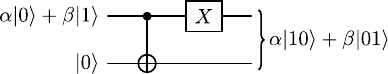}
  \caption{\textit{Encoding Qubit States into Singly-occupied Fock Basis States.} Following the discussion above, appending each register of an $n$-qubit state with such circuit allows for the emulation of the Schur transform with the Paldus transform.} \label{fig:paldus_schur_gadget}
\end{figure*}

\subsubsection{Linear-depth Circuits for the Paldus Transform} \label{sec:paldus_cg}

In the preceding discussion, we have found similarities between Schur basis states and the $U(d) \times U(2)$ GT states, and shown that the Paldus transform can emulate the applications of the Schur transform. We now relate the \textit{structure} of the two algorithms more closely. We have previously described the Paldus transform as a generalisation of the Schur transform for the second quantisation -- here we offer an alternative view of the Paldus transform as a sequence of \textit{controlled} Clebsch-Gordan transforms. This will allow us to construct a linear-depth circuit for the Paldus transform, using the same building blocks as the $SU(2)$ Schur transform.

Our observation relies on the close analogy between GT, and Schur states discussed in-depth in Sections \ref{sec:gt_states_construction} as well as Section \ref{sec:schur_emulation}. When deriving the Clebsch-Gordan transform for the Paldus transform, we have noted that rotations in Clebsch-Gordan coupling only occur on inputs $\ket{x_{i \uparrow} x_{i \downarrow} } \in \{ \ket{10}, \ket{01 }\}$ (c.f. Equation \eqref{eq:cg_transform_rules}). On such inputs, the coupling is identical to that of the qubit Clebsch-Gordan transform in \cite{BCH_2006} (which we will refer to as $U_{CG}$), but succeeded by the $N$ register update (Section \ref{sec:encoding_n}). This raises the question: \textit{given a black box for the $SU(2)$ Clebsch-Gordan transform, $U_{CG}$, can we use it to construct the Paldus transform?} The answer is yes, as long as $U_{CG}$ can be implemented as a controlled unitary. To show this, we consider our input states $\ket{x_{i \uparrow} x_{i \downarrow} }$ to which we apply a simple $CNOT$ gate (controlled on the first bit), mapping them as:

\begin{align}
  \ket{00} & \rightarrow \ket{00}, &
  \ket{01} & \rightarrow \ket{01}, &
  \ket{10} & \rightarrow \ket{11}, &
   \ket{11} & \rightarrow \ket{10} \label{eq:cnot}
\end{align}

\noindent After this mapping, we obtain two-qubit states $\ket{\tilde{x}_{i,1} \tilde{x}_{i,2}}$ with the \textit{second} qubit denoting whether the orbital is singly-occupied ($\tilde{x}_{i,2} = 1$), or doubly-occupied ($\tilde{x}_{i,2} = 0$). Then, for singly-occupied orbitals the first qubit $\tilde{x}_{i,1}$ is set to $1$ if the orbital is spin-up, and $0$ if it is spin-down. We only need to update the $(S, M)$ registers and apply controlled rotations if this is the case, which can be achieved by the $U_{CG}$ gate \textit{controlled} on the second qubit. When $\tilde{x}_{i,2} = 0$ no rotations are implemented, in agreement with the first and last line of Equation \eqref{eq:cg_transform_rules}. Following the action of $U_{CG}$, the first qubit $\tilde{x}_{i,1}$ turns into a Yamanouchi symbol $y_i$, indicating whether $S$ was raised ($y_i = 1$) or lowered ($y_i=0$). However, the second qubit $\tilde{x}_{i,2} = 1$ is unchanged, so we can apply another $CNOT$ to turn the resulting two-qubit state into a step vector element $\ket{\mathbf{d}_{2i-1} \mathbf{d}_{2i}}$. Finally, we can update the $N$ register just as we do in the circuit of Section \ref{sec:paldus_transform}, which is accomplished by an incrementer. This is summarised in Figure \ref{fig:paldus_transform_cg}.

This alternative implementation of the Paldus transform works because the controlled $U_{CG}$ + $N$ incrementer steps implements the same mapping as the Clebsch-Gordan transform described in Section \ref{sec:cgtransform}. However, there are subtle differences we wish to highlight. The first is that $U_{CG}$ and `our' CG transform perform different types of decompositions: the former according to Equation \eqref{eq:swduality}, whilst the latter according to Paldus duality in Section \ref{sec:paldus_duality}. In the former, there is no $N$ register to keep track of and the step vector elements $\ket{\mathbf{d}}$ are in general not mappable to Yamanouchi symbols. It is only when $U_{CG}$ is controlled and succeeded by the $N$ incrementer that the two transforms coincide. The second difference is that the $U_{CG}$ circuit in \cite{BCH_2005, BCH_2006} does not necessarily store the $(S, M)$ values in the same scheme as the circuit of Section \ref{sec:paldus_transform} -- for example, they could be stored in unary encoding, as done in \cite{Gu_2021}. This does not significantly affect the applications of the algorithm, but may lead to a different ancilla qubit count. In fact, the circuit in the implementation of \cite{BCH_2005} does not store the $S$ value at all, leaving it implicit in the outputs $\ket{\lambda} \ket{M}$ (where $\lambda$ is an $S_n$ irrep label). In the qubit case $S$ can be reconstructed from $\lambda$, so this again is not a problem. The only constraint we impose on $U_{CG}$ is that it stores the $S_n$ irrep labels in the \textit{Yamanouchi encoding}, so that its output $\ket{y_i}$ is a Yamanouchi symbol mappable to a step vector element via Equation \eqref{eq:cnot}. Importantly, sub-linear complexity Clebsch-Gordan transforms for the Yamanouchi encoding exist, with a thorough implementation given in \cite{Grinko_2025}.

In summary, with this construction we are free to use the more efficient (but \textit{approximate}) Clebsch-Gordan transform of \cite{BCH_2005, Grinko_2025} which is correct up to precision $\varepsilon$ but only runs in $\mathcal{O}(\text{poly}(\log(d), \log(1/\varepsilon)) )$ time, with an extra control on $\ket{\tilde{x}_{i,2}}$ giving a negligible overhead. This transform is then called $d$ times, giving a linear-depth, $\mathcal{O}(d \cdot \text{poly}(\log(d), \log(1/\varepsilon)) )$ algorithm for the Paldus transform. As this method relies on quantum arithmetic circuits, it is not always the preferrable choice. Nonetheless, its existence brings down the theoretical complexity of many of our proposed applications.

\begin{figure*}[htb!]
  \includegraphics[width=15.25cm]{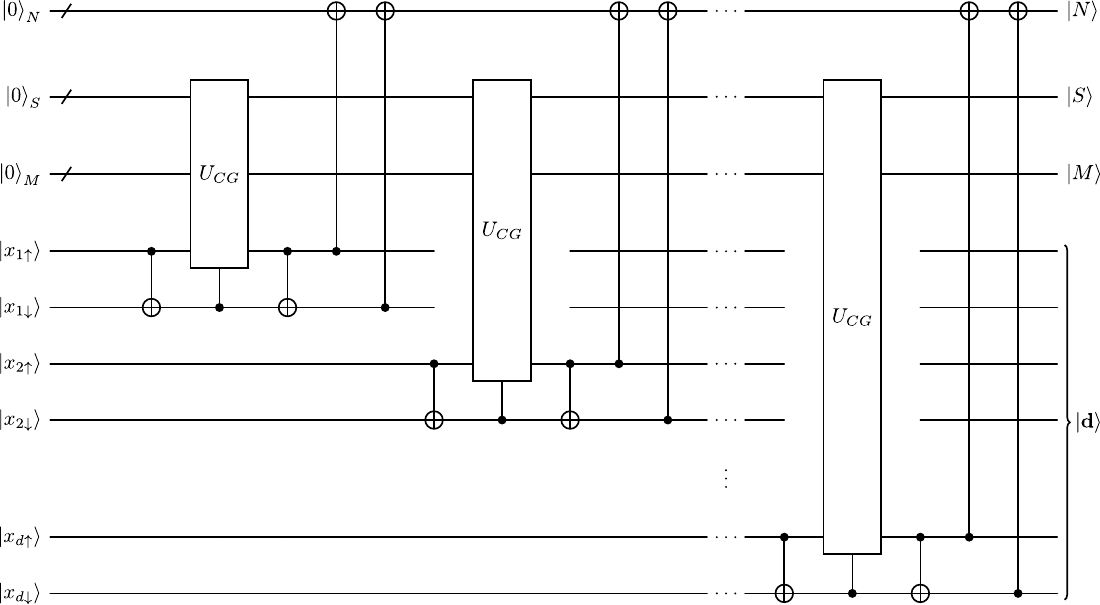}
  \caption{\textit{Alternative Circuit for the Paldus Transform.} The Paldus transform can be implemented using $d$ many controlled Clebsch-Gordan transforms $U_{CG}$, which form the building blocks of the $SU(2)$ Schur transform. Using the algorithm of \cite{BCH_2005} for $U_{CG}$, this gives an $\mathcal{O}(d)$ circuit for the Paldus transform.}
  \label{fig:paldus_transform_cg}
\end{figure*}

\subsection{Communication Through \texorpdfstring{$\mathfrak{u}(2)$}{u(2)} Decoherence-Free Subsystems} \label{sec:dfs}

As an application unique to the Paldus transform, we consider the problem of \textit{communication through $\mathfrak{u}(2)$ Decoherence-Free Subsystems (DFSs)}, an adaptation of the DFS protocol of \cite{Kempe_2001} and one of the early applications of the quantum Schur transform. The Paldus transform turns out to enable error-free, asymptotically optimal communication of a quantum state through an error channel described by a unitary of the form $e^{iH_{\mathfrak{U}(\mathfrak{u}(2))}}$ discussed in Section \ref{sec:ladder_operators}. We first review the problem for the $SU(2)$ case. Then, we present our version of protocol adapted to $H_{\mathfrak{u}(2)}$ noise, which easily generalises to $H_{\mathfrak{U}(\mathfrak{u}(2))}$. Finally, we argue that in this adapted setting, the qubit Schur transform is unable to protect information against the noise, giving an explicit example of an application where the Paldus transform exceeds the capability of the Schur transform. 

To review the DFS protocol \cite{Kempe_2001, Lidar_2003, lidar_2014}, consider the following problem: Alice wishes to communicate with Bob, by sending him an $n$-qubit state $\ket{\psi}$ one qubit at a time. Eve, an adversary, attempts to scramble their communication by intercepting each qubit and applying a gate $U$ (known only to her) before sending it over to Bob. Bob receives a state $U^{\otimes n} \ket{\psi}$, and must recover the original state $\ket{\psi}$ from the scrambled state. This is an example of collective decoherence, where each subsystem couples identically to a bath (in fact, $U$ does not need to be a unitary, and can be any operation from $GL(2, \mathbb{C})$). In order to protect her state, Alice \textit{encodes} it in the Schur basis, preparing a state of the form 

\begin{equation}
  |\tilde{\psi} \rangle = \ket{S} \otimes \ket{M} \otimes \ket{\psi} = \ket{S} \otimes \ket{M} \otimes \sum_Y c_Y \ket{Y}
\end{equation}

\noindent where $S$ is the angular momentum of the $SU(2) \times S_n$ irreducible representation with the largest dimension $d_S$:

\begin{equation}
  d_S = { n \choose n/2 - S} - { n \choose n/2 - S - 1} \label{eq:spin_degeneracy}
\end{equation}

\noindent $M$ are the spin projection values ($SU(2)$ irrep labels), and $Y$ are the Yamanouchi symbols ($S_n$ irrep labels). As $S, M$ are held fixed, the $n$-qubit register $\ket{Y}$ serves the purpose of Alice's `logical basis' carrying information about the state she wishes to send over. She then applies the inverse Schur transform, discards the ancillary qubits and sends the state to Bob. Upon receiving the state, Bob appends ancillae, and applies the Schur transform to recover the state sent over by Alice encoded in the $\ket{Y}$ register. This succeeds because in the GT basis, the operation $U^{\otimes n}$ will only act on the $\ket{M}$ register, leaving $\ket{Y}$ unaffected (recall that $Y$ indexes the \textit{multiplicities} of $SU(2)$ irreps, which by definition are unchanged under group action):

\begin{align}
  \left( U_{Sch} U^{\otimes n} U_{Sch}^\dagger \right) |\tilde{\psi} \rangle &= \left(  \bigoplus_{S} I_S \otimes W_S \otimes I_{Y} \right) \ket{S} \otimes \ket{M} \otimes \ket{\psi} \\
  &= \ket{S} \otimes W_S \ket{M} \otimes \sum_Y c_Y \ket{Y} \\
  & = \ket{S} \otimes \left( \sum_{M' = -S}^S [W_S]_{M', M} \ket{M'} \right) \otimes \ket{\psi}
\end{align}

\noindent The $W_S$ are $(2S+1)$-dimensional matrices acting uniformly on the $SU(2)$ irrep spaces, leaving $\ket{Y}$ unchanged. In a more general setting with non-unitary decoherence, the resulting state will be:

\begin{equation}
\ket{S} \bra{S} \otimes \rho_M \otimes \ket{\psi} \bra{\psi},
\end{equation}

\noindent where the decoherence only affects the $SU(2)$ irrep register, again leaving the desired state unaffected by noise. Assuming no errors in the Schur transform, this protocol is asymptotically efficient, as for increasing $n$ the number of logical qubits Alice is able to send over, $k = \log_2 (\max d_S)$, satisfies $\lim_{n \rightarrow \infty} \frac{k}{n} = 1 + \mathcal{O}(\frac{\log_2 n}{n})$.

\

Let us now turn to our protocol, where the statement of the problem differs slightly: Alice wishes to communicate a $2d$ qubit state $\ket{\psi}$ to Bob. Eve is again tasked with scrambling their communication, but at her disposal is any operation $U = e^{iK}$, where $K$ is a member of the universal enveloping algebra of $\mathfrak{U}(\mathfrak{u}(2))$, i.e. a linear combination of products of operators $\{ \hat{S}_x, \hat{S}_y, \hat{S}_z, \hat{N} \}$ from Section \ref{sec:ladder_operators}. She may act with this gate on all the qubits at once, though if $K = H_{\mathfrak{u}(2)} = \alpha_{x} \hat{S}_x + \alpha_{y} \hat{S}_y + \alpha_{z} \hat{S}_z + \alpha_{N} \hat{N}$ is a single-body $\mathfrak{u}(2)$ Hamiltonian then she applies a \textit{matchgate} $WV$ (Equation \eqref{eq:wv_circuit}) to each pair of qubits which pass by her. This scenario resembles collective decoherence by $XY$, $ZZ$ and controlled phase gates on a quantum computer. Similarly to before, Alice can encode her state in the UGA basis, confined to the largest $(N, S)$ irrep of dimension $T^d_{S, N}$ (Equation \eqref{eq:weyl_dimension}). The state she prepares is:

\begin{equation}
| \tilde{\psi} \rangle = \ket{N} \otimes \ket{S} \otimes \ket{M} \otimes \ket{\psi} = \ket{N} \otimes \ket{S} \otimes \ket{M} \otimes \sum_{\mathbf{d}} c_{\mathbf{d}} \ket{\mathbf{d}}.
\end{equation}
In fact, preparing the $\ket{N}$ register is not necessary, although the $\ket{\mathbf{d}}$ states in the superposition must be valid step vectors for the chosen subspace (i.e. $| \mathbf{d} | = N$). Following Alice's application of $U_P^\dagger$, Eve's interference with the state and Bob's application of $U_P$, the resulting state will be given by: 

\begin{align}
  \left( U_P (WV)^{\otimes n} U_P^\dagger \right) |\tilde{\psi} \rangle &= \left(  \bigoplus_{N, S} I_N \otimes I_S \otimes W_{N, S} \otimes I_{\mathbf{d}} \right) \ket{N} \otimes \ket{S} \otimes \ket{M} \otimes \ket{\psi} \nonumber \\
  & = \ket{N} \otimes \ket{S} \otimes \ket{M'} \otimes \ket{\psi} \label{eq:dfs_action}
\end{align}

Where $W_{N,S}$ are again $(2S + 1) \times (2S+1)$ matrices acting on the $\ket{M}$ states of $(N, S)$ irreps. In the presence of particle number operators $\hat{N}$ in the exponent $K$, the action of $U$ in the UGA basis is dependent on the particle number $N$ of the $(N, S)$ subspace. We have `absorbed' this dependence into the operators $W_{N, S}$, however in general, irreps with distinct $(N, S)$ are acted on \textit{differently}. This is clearly visible our simple example of $U = (WV)^{\otimes n}$, where the phase component of Equation \eqref{eq:wv_circuit} multiplies each state by a particle number-dependent phase, and so $W_{N, S} = e^{i N \alpha_N / 2} W_{S}$. If $\alpha_N = 0$ then Eve's operation becomes a $SU(2)$ action which does not distinguish between irreps of equal $S$ but different $N$. In general, the largest subspace $(N, S)$ available to Alice will vary with $d$, but the choice of $(N=d, S=0)$ for $d$ even and $(N= d \pm 1, S=0)$ for $d$ odd is already optimal, satisfying $\lim_{n \rightarrow \infty} \frac{\log_2(T_{d, S, N})}{2d} = 1 + \mathcal{O}(\frac{\log_2 d}{d})$. This gives us the desired decoherence-free subsystem for $\mathfrak{u}(2)$ noise. 

In addition to proving that states $\ket{S} \otimes \ket{M} \otimes \ket{Y}$ form a decoherence-free subsystem for collective decoherence, the authors of \cite{Kempe_2001} go further, developing a stabilizer formalism for the DFS and proving that by acting on the logical qubits with appropriate exchange interaction Hamiltonians (evolution by $e^{i \mathbb{C}(S_n)}$, where $\mathbb{C}(S_n)$ is the $S_n$ group algebra) one can perform universal quantum computation within each DFS. Given the similarities between $SU(2)$ Schur states and the UGA basis states, we expect their proof will carry over to the DFS's formed by UGA states where universal gate sets spanning $SU(T^d_{S, N})$ will consist of interactions by $\mathfrak{u}(d)$ Hamiltonians. We leave this to future work, but note that this may be of interest to researchers wishing to perform DFS-based quantum error mitigation against $XY$ crosstalk on quantum computers without constructing a full error-correcting code.

\

We end this section with a simple argument to highlight that whilst our DFS protocol is achievable with the Paldus transform, noise mitigation against gates of the form $(WV)^{\otimes n}$ is \textit{not possible through the $SU(2)$ Schur transform}. This follows from the observation that $WV$ is an element of $U(2) \cong (SU(2) \times U(1)) / \mathbb{Z}_2$, with \textit{three} free parameters (two for the $SU(2)$ component, and $\alpha_N$ for $U(1)$), \textit{embedded} in a 2-qubit gate. In particular, the $U(1)$ component $\alpha_N$ is embedded in this gate in a way such that it is not `flushed out' by the global phase invariance of the quantum state. Because of the global phase invariance, in the original DFS protocol each single-qubit gate $U$ in Alice's operation $U^{\otimes n}$ only has \textit{two} nontrivial free parameters, so it will not be possible to factorise each two-qubit gate $WV$ into a product of the form $U \otimes U$, even if Alice and Bob can supplement $U_{Sch}$ with an additional circuit. Of course, if Alice and Bob instead perform the $SU(4)$ Schur transform (treating each pair of their qubits as a quatrit) they can secure their state against Eve, but given its higher resource overhead our protocol is more efficient and practical. Aside from its applications to quantum information, our described protocol \textit{elevates} the Paldus transform to an algorithm strictly more powerful than the standard qubit Schur transform -- any application of the latter may be emulated (with a qubit overhead) by the former, but the converse is not true.

\begin{figure*}[htb!]
  \includegraphics[width=14cm]{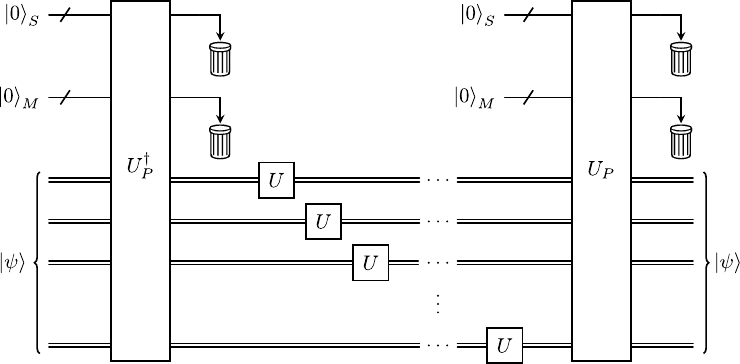}
  \caption{\textit{Encoding into a Decoherence-Free Subsystem.} The inverse Paldus transform enables Alice (left) to protect her quantum state against unknown noise of the form $U^{\otimes d}$, where each $U = WV$ is a gate from Figures \ref{fig:u2_circuit}, \ref{fig:wv_gate}. The noise affects the $M$ component of the GT basis only (Equation \eqref{eq:dfs_action}). Bob (right) recovers the state by applying the Paldus transform. The protocol gives a number of logical qubits which is asymptotically optimal.}
  \label{fig:dfs_circuit}
\end{figure*}

\section*{Acknowledgements} \label{sec:acknowledgements}
Throughout this project we have benefited from many helpful discussions, each of which helped improve the work and shape it into its final form. We would like to express our gratitude to Isaac Chuang, Werner Dobrautz, Richard East, Gabriel Greene-Diniz, Dmitry Grinko, Yuan Liu, Michael Robb, Frederic Sauvage and Sergii Strelchuk for their valuable inputs. We also thank Cristina Cirstoiu and Julien Sorci for reviewing the paper.

\appendix
\onecolumngrid

\section{Brief History of the Unitary Group Approach} \label{app:uga_history}
The discovery of electron spin by Stern and Gerlach in 1922 revolutionised our understanding of atomic structure \cite{sterng1922}. The revelation that electrons are antisymmetric under exchange was perhaps the most significant implication of their experiment. This observation fundamentally altered our understanding of atomic structure and quantum mechanics. This was first applied to a two-electron Helium wavefunction in 1926 by Heisenberg \cite{Heisenberg1926_0,Heisenberg1926_1}. One of the first applications of symmetry in quantum mechanics was Heisenberg's theory of ferromagnetism, which used the characters of representations of the symmetric group to ensure the resulting wavefunction was antisymmetric \cite{Heisenberg1928}. Following the incorporation of electron antisymmetry, the first ground state quantum chemical calculations were performed by Heitler and London in 1927 \cite{Heitler1927}, Pauling \cite{Pauling1931}, and separately Slater \cite{Slater1931a, Slater1931}, with valence bond theory. These methods were then extended to excited states using the Rayleigh-Ritz Variational method \cite{PhysRev.43.830} by MacDonald.

Group theory quickly became a useful tool in quantum mechanics, with applications in some of the earliest papers by Born, Heisenberg, and Jordan \cite{Heisenberg1926} on the presence of definite spin multiplicities due to global $SU(2)$ symmetry arising from spin angular momentum. It was further developed by Weyl and condensed into his famous \textit{Gruppentheorie und Quantenmechanik} \cite{weyl}. Then, Yamanouchi and Kotani related the properties of $SU(2)$ to the symmetric group with the so-called Yamanouchi-Kotani spin eigenfunctions \cite{kotanispin}. Despite Schrödinger's dislike of the new `gruppenpest', by 1967 the use of group theory in fields such as nuclear physics and quantum chemistry was widespread, and is summarised in a review by Löwdin \cite{RevModPhys.39.259}.

Much work was done by Coleman (a PhD student of Alfred Young) \cite{Coleman1968} towards the symmetric group, and many new methods exploiting this symmetry were pioneered by Löwdin \cite{RevModPhys.39.259, Klein1970}. Group theoretical methods were applied to second quantisation with nonorthogonal orbitals by Moshinsky \cite{Moshinsky1971}. Harter and Patterson applied group theory to a Slater determinant basis, simplifying calculations with their tableau formula \cite{Harter1973, Patterson1977}, later re-derived using angular momentum coupling rules by Drake and Schlesinger \cite{PhysRevA.15.1990}. Alongside these developments, Matsen presented the spin-free form of quantum chemistry in 1969 \cite{Matsen}, allowing explicit spin operators to be dropped from the Hamiltonian and incorporated into the basis functions. Salmon built on these ideas to derive the symmetric group formalism of quantum chemistry \cite{Salmon1974}, which was then developed further by Sarma \cite{Sarma1977} and finally formalised as the symmetric group approach by Duch and Karwowski \cite{Duch1982,Duch1985a,Duch1985,Karwowski1998}.

The unitary group approach traces back to Paldus, who built upon the ideas of Moshinsky from nuclear physics \cite{osti_4877459}, combined them with the Gelfand-Tsetlin basis \cite{Gelfand:1950ihs}, and applied them to the spin-free many-electron quantum chemistry wave function \cite{Paldus1974}. He then unified the UGA with that of Harter and Patterson's, employing Weyl tableaux in his method \cite{Paldus1976}. Matsen then applied the UGA to many-body theory and Green's functions \cite{MATSEN1978223}. Shavitt \cite{ISI:A1977DX38800015,Shavitt1978} then derived the graphical unitary group approach (GUGA), which expresses the basis functions as walks on graphs, with matrix elements calculated from overlapping walks. These methods were originally developed heuristically, and then derived algebraically using the theory of angular momentum coupling by Paldus and Boyle \cite{Paldus1980, Wormer2006}. The algebraic relationship between permutations and generators of the unitary group in the Configuration Interaction (CI) method was then established by Paldus and Wormer through the use of line-up permutations \cite{Paldus1979}. A vectorised computational implementation of the UGA was then given in \cite{Zarrabian1989}. The matrix elements were finally derived through UGA tensor operator algebras by Li and Paldus in 1990 \cite{Li1990, Li1993}, which give the true formulation of matrix elements in $U(n)$, rather than the previous $SU(2)$ angular momentum coupling work of Paldus and Boyle. We direct interested readers to the excellent reviews in \cite{PALDUS1976131, Roche1981, Robb1984}.

More recently, Ron Shepard has applied the GUGA approach of Shavitt towards efficient wavefunction truncations with his graphically-contracted function method, deriving an efficient nonlinear parameterisation of the UGA CI wavefunction \cite{Shepard2006, Shepard2006a, Shepard2007, Shepard2007,Shepard2008, Shepard2010, Shepard2014, Shepard2014, Shepard2015, Shepard2019}. Computationally, GUGA is also used extensively for matrix element calculations in the COLUMBUS program of Shavitt, Shepard, and Lischka \cite{columbus}. In the age of quantum computation, UGA has recently been applied to FCI quantum Monte Carlo \cite{Dobrautz2019}, and the symmetric group approach played a role in quantum algorithm design \cite{gandon2024}. Just before he passed away in 2023, Paldus collated much of his work on the UGA into a modern review: we instruct the interested reader to go there for more information \cite{Paldus2020, Paldus2020a, Paldus2020b}.

\section{Lie Theoretic Results}\label{app:liegroups}

In Section \ref{sec:uga}, we have made use of several results from the representation theory of Lie groups and Lie algebras. Here, we summarise some of the key concepts referred to in the main text. We assume basic knowledge of group theory and only state what is necessary. For an in-depth introduction to group and representation theory, we recommend the textbook by Fulton and Harris \cite{fultonandharris}. For an accessible introduction to matrix Lie groups and Lie algebras, we recommend the textbook by Hall \cite{hall2003lie}. For those looking for a very concise yet thorough set of lecture notes on both, we also recommend the course notes by Andre Lukas \cite{lukas2020}. As all groups we consider are matrix Lie groups, we will avoid mentions of differential geometry and closely follow the definitions in \cite{hall2003lie}.

\subsection{Definitions of Lie Groups, Algebras and Representations}

\begin{definition}[The General Linear Group] The general linear group $GL(n, \mathbb{C})$ ($GL(n, \mathbb{R})$) is the group of all invertible $n \times n$ complex (real) matrices with matrix multiplication.
\end{definition}

\begin{definition}[Matrix Lie Groups \cite{hall2003lie,lukas2020}] A matrix Lie group $G$ is a subgroup of $GL(n, \mathbb{C})$ which is also a \textit{closed} subset of $GL(n, \mathbb{C})$. Alternatively, a matrix group $G$ is a Lie group if $G$ is a differentiable manifold, with the group multiplication and inversion forming differentiable maps.
\end{definition}

\begin{definition}[Matrix Lie Algebras \cite{hall2003lie}] Let $G$ be a matrix Lie group. The Lie algebra of G, denoted $\mathfrak{g}$, is the set of all matrices $X$ such that $e^{itX} \in G$ for all $t \in \mathbb{R}$. $\mathfrak{g}$ is closed under addition, scalar multiplication from $\mathbb{R}$, and the Lie bracket $-i[X,Y] = -i(XY - YX)$.
\end{definition}

We follow the `physics' convention in which the exponential map $\mathfrak{g} \rightarrow G$ is $e^{itX}$, thus the factor of $-i$ in the Lie bracket definition is required. For example, the Pauli matrices $\{\sigma_x, \sigma_y, \sigma_z\}$ generate $SU(2)$ under this exponential map, but satisfy $[\sigma_i, \sigma_j] = i\epsilon_{ijk} \sigma_j$ and so including the factor of $-i$ in the Lie bracket ensures closure. Next we define the concept of a representation:

\begin{definition}[Lie Group and Lie Algebra Representations]
Suppose $G$ is a Lie group with Lie algebra $\mathfrak{g}$. A representation of $G$ is a group homomorphism $\Pi: G \rightarrow GL(V, \mathbb{F})$, where $V$ is a vector space over the field $\mathbb{F}$ and $\Pi$ satisfies $\Pi(g_1) \Pi(g_2) = \Pi(g_1 \circ g_2) \ \forall g_1, g_2 \in G$. A representation of $\mathfrak{g}$ is a Lie algebra homomorphism $\pi: \mathfrak{g} \rightarrow \mathfrak{gl}(V, \mathbb{F})$, where $\pi$ satisfies $\pi([X,Y]) = [\pi(X), \pi(Y)] \ \forall X, Y \in \mathfrak{g}$. If the homomorphism is one-to-one, then the representation is called \textit{faithful}.
\end{definition}

Often a Lie algebra is described in terms of a linearly independent basis $\{X_i\} \in \mathfrak{g}$ and \text{structure} constants $[X_i, X_j] = \sum_kf_{ij}^k X_k$ which describe how the basis elements commute. Showing that $\pi$ is a representation then amounts to showing that the structure constants are the same, i.e. the basis $\{ \pi(X_i) \}$ obeys the same commutation relations. This is clearly the case for the representation $\pi(e_{ij} \in \mathfrak{gl}(d, \mathbb{C})) = E_{ij}$ and $\pi(\tilde{e}_{\mu \nu} \in \mathfrak{gl}(2, \mathbb{C})) = \mathcal{E}_{\mu \nu}$ used in the main text. There is a one to one mapping between the $d^2$ (and $2^2$) matrix units, and the $d^2$ (and $2^2$) ladder operators, so the representation is faithful. As we make reference to the Lie algebras $\mathfrak{gl}(n, \mathbb{C})$, $\mathfrak{u}(n)$ and $\mathfrak{su}(n)$, we will define them next:

\begin{definition}[The General Linear Group Lie Algebra] The Lie algebra of $GL(n, \mathbb{C})$ ($GL(n, \mathbb{R})$), written as $\mathfrak{gl}(n, \mathbb{C})$ ($\mathfrak{gl}(n, \mathbb{R})$) is the vector space of all $n \times n$ complex (real) matrices. It is spanned by linear combinations over $\mathbb{C}$ ($\mathbb{R}$) of $d^2$ the matrix units $e_{ij}$ with matrix elements $[e_{ij}]_{kl} = \delta_{ik}\delta_{jl}$.
\end{definition}

\begin{definition}[The Unitary Group Lie Algebra] The Lie algebra of $U(n)$, written as $\mathfrak{u}(n)$ is the vector space of all complex, hermitian $n \times n$ matrices. It is spanned by linear combinations over $\mathbb{R}$ of the $d^2$ matrices $f_{ij} = e_{ij} - e_{ji}$, $f^{ij} = i(e_{ij} - e_{ji})$ (where $i < j$) and $f_{ii} = e_{ii}$.
  \label{def:unitaryliealgebra}
\end{definition}
We note that $\mathfrak{u}(n)$ is a \textit{real} Lie algebra, even if some of its basis elements are complex. If the field is $\mathbb{C}$, one obtains the \textit{complexified} Lie algebra $\mathfrak{u}(n)_\mathbb{C}$ which is isomorphic to $\mathfrak{gl}(n, \mathbb{C})$. The single-body Hamiltonians $H_{\mathfrak{u}(d)}$ and $H_{\mathfrak{u}(2)}$ belong to the former. 

\begin{definition}[The Special Unitary Group Lie Algebra] The Lie algebra of $SU(n)$, written as $\mathfrak{su}(n)$ is the vector space of all \textit{traceless} complex, hermitian $n \times n$ matrices. It is spanned by linear combinations over $\mathbb{R}$ of the $d^2 - 1$ matrices $f_{ij} = e_{ij} - e_{ji}$, $f^{ij} = i(e_{ij} - e_{ji})$ (where $i < j$), and $f^{ii} = e_{ii} - e_{i+1, i+1}$ (where $i < n$).
\end{definition}
In reference to our described representation, any Hamiltonian linear in $\mathcal{E}_{\mu \nu}$ which does not contain the term $\hat{N} = \mathcal{E}_{\uparrow \uparrow} + \mathcal{E}_{\downarrow \downarrow}$ belongs to a representation of $\mathfrak{su}(2)$. In this particular case, we can say for certain that the unitary evolution by such Hamiltonians forms a representation of $SU(2)$. This is because $SU(n)$ is \textit{simply-connected}, so by the Lie group-Lie algebra correspondence the Lie algebra representation $\pi$ carries a Lie group representation $\Pi$. The same is true for Hamiltonians linear in $E_{ij}$ for which the sum of coefficients of all $E_{ii}$ terms is zero. 

\subsection{Projective Representations}

With key definitions given, we now examine the exponential map from our Lie algebra representations $E_{ij}, \mathcal{E}_{\mu \nu}$, obtained by the unitary evolution $U = e^{iHt}$. First we recall a useful result relating Lie algebra representations to Lie group representations:

\begin{definition}[Universal Covering Groups \cite{hall2003lie}] Let $G$ be a connected Lie group. Then, a universal covering group of G is a simply-connected Lie group $\tilde{G}$ together with a Lie group homomorphism $\Phi : \tilde{G} \rightarrow G$ such that the associated Lie algebra homomorphism $\phi : \tilde{\mathfrak{g}} \rightarrow \mathfrak{g}$ is a Lie algebra isomorphism.
\end{definition}

For any connected Lie group, a universal covering group exists. In particular this means $GL(n, \mathbb{C})$ and $U(n)$ have universal covering groups. The group $SU(n)$ on the other hand is already simply connected, so it is its own universal cover. The property of (simple) connectedness is important when it comes to obtaining Lie group representations out of Lie algebra representations, as stated in the following theorem:

\begin{theorem}[The Homomorphisms Theorem \cite{hall2003lie}]
Let $G$ and $H$ be matrix Lie groups with Lie algebras $\mathfrak{g}$ and $\mathfrak{h}$. Let $\phi : \mathfrak{g} \rightarrow \mathfrak{h}$ be a Lie algebra homomorphism. If $G$ is simply connected, then there exists a unique Lie group homomorphism $\Phi: G \rightarrow H$ such that $\Phi(e^{iX}) = e^{i\phi(X)}$ for all $X \in \mathfrak{g}$. \label{thm:homomorphisms}
\end{theorem}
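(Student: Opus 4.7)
The plan is to construct $\Phi$ in two stages: first define it locally near the identity using the exponential map, then extend to all of $G$ by path-lifting, using simple-connectedness to guarantee the extension is well-defined.

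For the local construction, I would first invoke the fact that $\exp: \mathfrak{g} \to G$ is a local diffeomorphism near $0$ (the differential at $0$ is the identity after the standard identification). Thus one can choose a star-shaped open neighborhood $U$ of $0 \in \mathfrak{g}$ such that $\exp$ restricts to a diffeomorphism $U \xrightarrow{\sim} V$, where $V$ is an open neighborhood of $e \in G$, and such that on $U \times U$ (and on $\phi(U) \times \phi(U)$) the Baker--Campbell--Hausdorff series $Z(X,Y) = X + Y + \tfrac{1}{2}[X,Y] + \cdots$ converges absolutely. Define $\Phi_0: V \to H$ by $\Phi_0(e^{iX}) := e^{i\phi(X)}$. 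Because $\phi$ is a Lie algebra homomorphism it preserves every term of the BCH series (each term being an iterated commutator), so $\phi(Z(X,Y)) = Z(\phi(X), \phi(Y))$. Whenever $e^{iX}$, $e^{iY}$, and $e^{iX} e^{iY}$ all lie in $V$, this identity gives $\Phi_0(e^{iX} e^{iY}) = \Phi_0(e^{iX}) \Phi_0(e^{iY})$; that is, $\Phi_0$ is a local homomorphism.

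To globalize, I would use a path-lifting argument. Since $G$ is simply-connected, it is in particular connected, so any $g \in G$ admits a continuous path $\gamma : [0,1] \to G$ with $\gamma(0) = e$, $\gamma(1) = g$. Choose a partition $0 = t_0 < t_1 < \cdots < t_N = 1$ fine enough that $\gamma(t_{k-1})^{-1} \gamma(t_k) \in V$ for each $k$, and set
\begin{equation}
\Phi_\gamma(g) \;:=\; \prod_{k=1}^{N} \Phi_0\bigl(\gamma(t_{k-1})^{-1}\gamma(t_k)\bigr).
\end{equation}
Independence from the choice of partition follows from the local homomorphism property applied to any common refinement. The genuinely nontrivial step is independence from the choice of path: two paths $\gamma_0, \gamma_1$ from $e$ to $g$ are homotopic (by simple-connectedness), and one shows that $\Phi_{\gamma_s}(g)$ is locally constant in $s$ along the homotopy. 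This is done by subdividing the square $[0,1]^2$ into rectangles small enough that $\gamma_s$ restricted to each edge lies in a single translate of $V$, and then cancelling contributions around each small closed loop via the local homomorphism relation.

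The main obstacle is precisely this monodromy argument: verifying that the candidate $\Phi(g) := \Phi_\gamma(g)$ is well-defined on closed loops. The key is that every sufficiently small closed loop lifts to the identity in $H$ by the local homomorphism property, and simple-connectedness lets one decompose an arbitrary loop into such small loops via a homotopy to a constant loop. Once well-definedness is secured, continuity and the homomorphism property $\Phi(g_1 g_2) = \Phi(g_1) \Phi(g_2)$ follow by concatenating a path from $e$ to $g_1$ with a translate of a path from $e$ to $g_2$. Smoothness is automatic for continuous homomorphisms of Lie groups. By construction $\Phi(e^{iX}) = e^{i\phi(X)}$ for $X \in U$, and this extends to all $X \in \mathfrak{g}$ via $e^{iX} = (e^{iX/n})^n$ for $n$ large, together with the homomorphism property. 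Finally, uniqueness is immediate: any homomorphism satisfying the stated identity agrees with $\Phi$ on the neighborhood $V$, and a continuous homomorphism on the connected group $G$ is determined by its restriction to any neighborhood of $e$.
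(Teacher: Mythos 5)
The paper does not actually prove this theorem: it is quoted as a standard result and attributed to Hall \cite{hall2003lie}, so there is no in-paper proof to compare against. Your proposal is, in outline, exactly the standard argument that Hall's textbook gives — define a local homomorphism on a BCH-compatible neighborhood using $\Phi_0(e^{iX}) = e^{i\phi(X)}$ and the fact that a Lie algebra homomorphism commutes with every iterated bracket in the Baker--Campbell--Hausdorff series, then extend along paths and use simple-connectedness in a monodromy argument to show path-independence — and the remaining steps (uniqueness from connectedness, smoothness of continuous homomorphisms, extension of the exponential identity to all of $\mathfrak{g}$ via $e^{iX} = (e^{iX/n})^n$) are handled correctly. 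The only point you gloss over is a routine one: you need to shrink the neighborhood so that $Z(X,Y)$ itself lands back in $U$ (not merely that $e^{iZ(X,Y)} \in V$) before the identity $\Phi_0(e^{iX}e^{iY}) = \Phi_0(e^{iX})\Phi_0(e^{iY})$ can be read off; with that standard adjustment the sketch is sound and matches the cited reference's approach.
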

The above theorem is also referred to as the \textit{Lie Group--Lie Algebra Correspondence}, as it links representations of Lie algebras to representations of Lie groups. In our case, it allows us to state some concrete facts about the group representations obtained by exponentiating single-body Hamiltonians. For example, if $H$ belongs to a $\mathfrak{su}(n)$ representation we outright get a representation of $SU(n)$ from $e^{iH}$. In the other cases, Theorem \ref{thm:homomorphisms} implies that we get a representation of the universal covering groups of $GL(n)$ or $U(n)$. However, a theorem by Bargmann tells us that there representations are \textit{projective representations}, which preserve the group structure up to a constant:

\begin{definition}[Projective Representations] A projective representation of a group $G$ on a vector space $V$ over a field $\mathbb{F}$ is a group homomorphism $\Pi: G \rightarrow GL(V, \mathbb{F}) / \mathbb{F}^*$, where $GL(V, \mathbb{F}) / \mathbb{F}^*$ is the projective linear group. It satisfies the group structure up to a constant: 
\[
\Pi(g_1) \Pi(g_2) = c(g_1, g_2) \Pi(g_1 \circ g_2), \ \Pi(g) \in GL(V, \mathbb{F}), \ c(g_1, g_2) \in \mathbb{F}.
\]
\end{definition}

\begin{theorem}[Bargmann's Theorem]
Let $G$ be a connected Lie group with Lie algebra $\mathfrak{g}$. If $\pi: \mathfrak{g} \rightarrow \mathfrak{gl}(V, \mathbb{C})$ is an irreducible representation of $\mathfrak{g}$, then the map $\Phi : G \rightarrow GL(V, \mathbb{C}) / \mathbb{C}^*$ defined by $\Phi(e^{iX}) = e^{i\phi(X)}$ forms a projective representation of $G$.
\end{theorem}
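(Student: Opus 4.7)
The plan is to exploit the universal covering group $\tilde{G}$ of $G$ to first lift the Lie algebra representation $\pi$ to a genuine (non-projective) representation of $\tilde{G}$, and then descend it to $G$ at the cost of picking up a scalar ambiguity. Concretely, I would first invoke the existence of a simply connected universal covering group $\tilde{G}$ together with a smooth covering homomorphism $p : \tilde{G} \to G$ whose associated Lie algebra map $\phi_p : \tilde{\mathfrak{g}} \to \mathfrak{g}$ is an isomorphism. Composing with $\phi_p^{-1}$ turns $\pi$ into a Lie algebra representation $\tilde{\pi} : \tilde{\mathfrak{g}} \to \mathfrak{gl}(V, \mathbb{C})$. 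Because $\tilde{G}$ is simply connected, Theorem \ref{thm:homomorphisms} applies and yields a unique Lie group homomorphism $\tilde{\Phi} : \tilde{G} \to GL(V, \mathbb{C})$ with $\tilde{\Phi}(e^{iX}) = e^{i \tilde{\pi}(X)}$. Irreducibility of $\pi$ transfers to $\tilde{\Phi}$ since they share the same image on the generating set $\{e^{iX}\}$.

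The next step is to show that $\tilde{\Phi}$ descends to a well-defined projective representation on $G \cong \tilde{G}/\ker p$. The key observation is that $\ker p$ is a \emph{discrete normal} subgroup of the connected group $\tilde{G}$; a standard continuity argument (the map $\tilde{g} \mapsto \tilde{g} z \tilde{g}^{-1}$ is continuous, takes values in the discrete set $\ker p$, and equals $z$ at the identity) forces $\ker p \subseteq Z(\tilde{G})$. Consequently, for every $z \in \ker p$ the operator $\tilde{\Phi}(z)$ commutes with $\tilde{\Phi}(\tilde{g})$ for all $\tilde{g} \in \tilde{G}$. Since $\tilde{\Phi}$ is irreducible, Schur's lemma forces $\tilde{\Phi}(z) = \chi(z)\, I$ for some scalar character $\chi : \ker p \to \mathbb{C}^*$.

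With centrality and Schur in hand, I would define $\Phi(g) := [\tilde{\Phi}(\tilde{g})] \in GL(V, \mathbb{C})/\mathbb{C}^*$ for any lift $\tilde{g} \in p^{-1}(g)$. The quotient by $\mathbb{C}^*$ absorbs precisely the ambiguity $\chi(z)$, so $\Phi$ is well-defined. Multiplicativity modulo scalars follows at once: for any lifts $\tilde{g}_1, \tilde{g}_2$ of $g_1, g_2$, the element $\tilde{g}_1 \tilde{g}_2$ is \emph{a} lift of $g_1 g_2$, so $\Phi(g_1 g_2) = [\tilde{\Phi}(\tilde{g}_1)\tilde{\Phi}(\tilde{g}_2)] = \Phi(g_1)\Phi(g_2)$, and any other choice of lift only alters the representative by a scalar. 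The compatibility $\Phi(e^{iX}) = [e^{i\phi(X)}]$ then follows because $\tilde{\Phi}$ integrates $\tilde{\pi}$ and $\phi_p$ identifies $\tilde{\mathfrak{g}}$ with $\mathfrak{g}$.

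The step I expect to require the most care — and the true pivot of the argument — is the centrality of $\ker p$ together with the use of Schur's lemma to extract the scalar character $\chi$. Without centrality, $\tilde{\Phi}(z)$ need not commute with the full image and the quotient procedure breaks down; without irreducibility, $\tilde{\Phi}(z)$ need not be scalar and one would at best obtain a homomorphism into $GL(V)/Z$ for some non-trivial subgroup $Z$, not a projective representation in the sense above. Both hypotheses are essential, and neither is merely cosmetic.
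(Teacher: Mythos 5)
Your proof is correct, but note that the paper itself does not prove this statement: it is quoted as a known result, with the proof deferred to \cite{Bargmann1954}, so there is no internal argument to compare against. Your covering-group construction is the standard textbook route and fills the gap self-containedly: lift $\pi$ to $\tilde{\mathfrak{g}}$ through the covering isomorphism, integrate on the simply connected cover $\tilde{G}$ via the Homomorphisms Theorem, observe that the discrete normal kernel of the covering map is central in the connected group $\tilde{G}$, invoke Schur's lemma to get $\tilde{\Phi}(z) = \chi(z) I$ for $z \in \ker p$, and descend to $G \cong \tilde{G}/\ker p$ modulo scalars. A genuine merit of your construction is that defining $\Phi(g)$ through an arbitrary lift makes the map well defined on all of $G$ and simultaneously shows that the formula $\Phi(e^{iX}) = e^{i\pi(X)}$ is consistent up to phase even when $e^{iX} = e^{iY}$ for distinct $X, Y$ — a point the bare formula in the theorem statement glosses over, since the exponential map need not be injective or surjective.

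Two minor corrections. First, to obtain $\tilde{\pi} : \tilde{\mathfrak{g}} \to \mathfrak{gl}(V, \mathbb{C})$ you should precompose $\pi$ with $\phi_p$, not $\phi_p^{-1}$; this is a harmless slip since $\phi_p$ is an isomorphism, but as written the composition is ill-typed. Second, your use of Schur's lemma (and the transfer of irreducibility from the algebra to the integrated group representation via invariant subspaces of one-parameter subgroups) implicitly assumes $V$ is finite dimensional so that the commuting operator has an eigenvalue; that is fine in the paper's setting, where $V$ is the $2^{2d}$-dimensional Fock space, but is worth stating. You might also remark that what the literature usually calls Bargmann's theorem is the stronger cohomological statement about lifting projective representations to genuine ones; what you have proved is precisely the statement as formulated in the paper, namely the descent-through-the-cover argument, and for that your hypotheses (connectedness, irreducibility) are used exactly where you say they are.
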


A projective representation is in essence a representation where the group action on a state $\ket{\psi}$ is correct only `up to' some global rescaling of the state. However, in our case this does not matter, as we are working in a projective space, where states are normalised and equivalent up to a phase factor. Thus, throughout the main text we omit this technicality and refer to the evolution by $E_{ij}, \mathcal{E}_{\mu \nu}$-linear Hamiltonians as representations of the general linear and unitary groups. For a proof of Bargmann's theorem, see \cite{Bargmann1954}. For a more practical discussion, we recommend Chapter 5.4 of \cite{bouchard2020} which includes an explicit example of projective and non-projective representations of $SO(3)$.

\subsection{Hamiltonians in the Universal Enveloping Algebra} \label{app:universalenveloping}

We have so far established that exponentiation of $E_{ij}, \mathcal{E}_{\mu \nu}$-linear Hamiltonians gives rise to representations of certain matrix groups (depending on the choice of coefficients). In turn, such representations are reducible, breaking down into direct sums of irreducible representations (irreps) labelled by quantum numbers $(S, N)$. In addition, for $U(d)$ representations, there are labels $M$ which index the $2S + 1$ multiplicity of each irrep, and for $U(2)$ the multiplicities are labelled by step vectors $\mathbf{d}$. Each irreducible representation acts on an \textit{invariant subspace}, which is preserved under the group action. As we have seen in Sections \ref{sec:shavitt} and \ref{sec:SU2}, one can construct a Gelfand-Tsetlin basis for each invariant subspace, which is formed by vectors of the form $\ket{S, N, M, \mathbf{d}}$. Then, $U(d)$ acts on the $\mathbf{d}$ labels, and $U(2)$ acts on the $M$ labels.

The situation becomes less elegant when we consider exponentiation of Hamiltonians formed of \textit{products} of ladder operators. In this case, rather than belonging to the Lie algebra representation, the Hamiltonian belongs to a representation of the \textit{universal enveloping algebra}:

\begin{definition}[Universal Enveloping Algebra] Let $\mathfrak{g}$ be a Lie algebra with a basis $\{X_i\}$ and $\pi: \mathfrak{g} \rightarrow \mathfrak{gl}(V, \mathbb{C})$ a representation. The universal enveloping algebra of the representation $\pi(\mathfrak{g})$, denoted as $\mathfrak{U}(\mathfrak{g})$, is the algebra generated by the basis elements and their products:
\[
  \mathfrak{U}(\mathfrak{g}) = \text{span}_\mathbb{C} \{ \mathbf{1}, \pi(X_1), \ldots, \pi(X_n), \pi(X_1) \pi(X_2), \ldots, \pi(X_n)^2, \ldots, \pi(X_1) \pi(X_2) \pi(X_3), \ldots \},
\]
with the usual matrix multiplication and commutation relations of $\mathfrak{g}$: $\pi(X_i) \pi(X_j) - \pi(X_j) \pi(X_i) = i \sum_k f_{ij}^k \pi(X_k)$.
\end{definition}

This definition is much less rigorous than what appears in the mathematical literature, but it will suffice for our purposes. The main point we wish to make is that in general, exponentiating an element of $\mathfrak{U}(\mathfrak{g})$ \textit{does not} induce a representation of a group. As the broader class of Hamiltonians applicable to our work belongs to $\mathfrak{U}(\mathfrak{gl}(n, \mathbb{C}))$, in general we cannot say that the corresponding evolution forms a representation of any of the groups we have mentioned. However, from the definition of the matrix exponential, we can write any $e^{iX}$ as follows:

\begin{equation}
  e^{iX} = \mathbf{1} + iX  - \frac{X^2}{2!} - i \frac{X^3}{3!} + \ldots
\end{equation}

from which it is clear that if $W$ is an invariant subspace of $\mathfrak{g}$, then the same invariant subspaces are generated by $e^{i\mathfrak{g}}$ and $e^{i\mathfrak{U}(\mathfrak{g})}$. In other words, working in the GT basis we get the exact same block-diagonalisation of evolution by $H \in \mathfrak{g}$ and $H \in \mathfrak{U}(\mathfrak{g})$. Even though the latter cannot be expressed as a group action, we are still able to split it up into distinct invariant subspaces via the Paldus transform, and thus apply many of our techniques in the same way for applications such as Hamiltonian simulation or encoding information into decoherence-free subsystems.

\section{Symmetric Polynomials}
\label{app:symmetricpolynomials}

Let $R[x_1, x_2, \ldots, x_n]$ be a commutative ring of polynomials generated by the set $\{x_1, x_2, \ldots, x_n\}$, where the ring structure allows for the addition and multiplication of polynomial elements. The ring that defines the coefficients is typically $\mathbb{Z}$, $\mathbb{Q}$, $\mathbb{R}$, or $\mathbb{C}$. The symmetric polynomial ring has multiple generating sets: monomial, elementary, complete, and Schur polynomials, as well as power sums. The relationships between these generating sets lead to much of the structure that is exploited in the Paldus transform. The goal of this section is to provide a self-contained derivation of the Cauchy identities as they will be essential to the decomposition of the fermionic Fock space upon which the Paldus transform is based. We then discuss Schur-Weyl duality, fermionic Fock spaces and $S_N$ irreducible representations within the context of symmetric polynomials. The notation used is based on the conventions of the canonical reference by Macdonald~\cite{MacdonaldSymPoly}.

\subsection{Background}

\begin{definition}[Symmetric Polynomial] A polynomial $f(x_1, x_2, \ldots, x_n)$ in $n$ variables is symmetric if it remains unchanged under any permutation of the variables. Formally, for any permutation $\sigma \in S_n$ of the variables $x_1, x_2, \ldots, x_n$, the polynomial satisfies:

  \begin{equation}
      f(x_{\sigma(1)}, x_{\sigma(2)}, \ldots, x_{\sigma(n)}) = f(x_1, x_2, \ldots, x_n)
  \end{equation}
  \end{definition}
  
  \begin{remark}
      It is typical to use an infinite number of variables when working with symmetric polynomials, such as in the canonical reference by Macdonald~\cite{MacdonaldSymPoly}. However, in this work, a finite number of variables will be used throughout.
  \end{remark}

\begin{definition} [Symmetric Polynomial Ring] The ring of symmetric polynomials $\Lambda$ is the subring of $R[x_1, x_2, \ldots, x_n]$ consisting of all symmetric polynomials:

\[
\Lambda_n = R[x_1, x_2, \ldots, x_n]^{S_n} 
\]

\noindent Where it is obvious to see that the ring is graded by the homogeneous symmetric polynomials of degree $d$ by $\Lambda_n = \bigoplus_{d \geq 0} \Lambda_n^d$. 
\end{definition}

Symmetric polynomials will be essential for the derivation of the branching rules used to derive the quantum Paldus transform. In particular, they appear in the characters of the symmetric group $S_N$ and unitary group $U(n)$.

\begin{definition}[Partition]
A partition $\lambda$ of a positive integer $n$ is a nonincreasing sequence of positive integers $\lambda_1 \geq \lambda_2 \geq \ldots \geq \lambda_k > 0$ such that $\sum_{i=1}^k \lambda_i = n$. The length of the partition is the number of nonzero parts, denoted by $l(\lambda) = k$. The Young diagram of a partition is a left-justified array of boxes with $\lambda_i$ boxes in the $i$-th row. The conjugate partition $\tilde{\lambda}$ is obtained by reflecting the Young diagram in the main diagonal.
\end{definition}

\begin{example}
The partition $\lambda = (3,2,1)$ corresponds to the Young diagram:
\begin{equation}
  \begin{ytableau}
  \, & \, & \, \\
  \, & \, \\
  \,
  \end{ytableau}
\end{equation}
\end{example}

A useful combinatorial tool in the study of symmetric polynomials is the semi-standard Young tableau (SSYT).

\begin{definition}[Semi-standard Young Tableaux (SSYT)] A SSYT is a filling $T$ of the Young diagram of $\lambda$ with $n$ positive integers that satisfies the following conditions:
  \label{def:ssyt}
      \begin{enumerate}
          \item Rows are weakly increasing: The entries in each row are nondecreasing from left to right. Formally, for each row $i$ and columns $j$ and $j+1$,
        \begin{equation}
          T(i, j) \leq T(i, j+1).
        \end{equation}
          \item Columns are strictly increasing: The entries in each column are strictly increasing from top to bottom. Formally, for each column $j$ and rows $i$ and $i+1$,
        \begin{equation}
          T(i, j) < T(i+1, j).
        \end{equation}
      \end{enumerate}
\end{definition}
      
\begin{example}
      Consider the partition $\lambda = (3,2)$, which corresponds to the Young diagram:
    \begin{equation}
      \begin{ytableau}
      \, & \, & \, \\
      \, & \,
      \end{ytableau}
    \end{equation}
      
      \noindent An example of a semi-standard Young tableau of this shape with entries from $\{1, 2, 3\}$ is:
    \begin{equation}
      \begin{ytableau}
      1 & 1 & 2 \\
      2 & 3
      \end{ytableau}
    \end{equation}
\end{example}

The first of the symmetric polynomial bases we will consider are the monomial symmetric polynomials which have an intuitive interpretation in terms of partitions of the integer $n$ and serve as the entry point into the topic.

\begin{definition}[Monomial Symmetric Polynomial]
Let $\lambda = (\lambda_1, \lambda_2, \ldots, \lambda_k)$ be a partition of a nonnegative integer $n$, where $\lambda_1 \geq \lambda_2 \geq \cdots \geq \lambda_k > 0$. The monomial symmetric polynomial $m_\lambda(x_1, x_2, \ldots, x_n)$ is defined as:

\[
m_\lambda(x_1, x_2, \ldots, x_n) = \sum_{\alpha} x_1^{\alpha_1} x_2^{\alpha_2} \cdots x_n^{\alpha_n}
\]

\noindent where the sum is taken over the distinct permutations $\alpha = (\alpha_1, \alpha_2, \ldots, \alpha_n)$ of the multiset $(\lambda_1, \lambda_2, \ldots, \lambda_k, 0, \ldots, 0)$, with $n - k$ zeros appended to match the number of variables. 
\end{definition}

\begin{example}
$m_{(2,1)}(x_1, x_2, x_3)= x_1^2x_2 + x_1^2x_3 + x_2^2x_1 + x_2^2x_3 + x_3^2x_1 + x_3^2x_2 $
\end{example}

Another basis of the symmetric polynomials and perhaps the most fundamental to our derivation of the Paldus transform is the elementary symmetric polynomial.

\begin{definition}[Elementary Symmetric Polynomial] The elementary symmetric polynomial $e_k(x_1, x_2, \ldots, x_n)$ in $n$ variables is defined as the sum of all products of $k$ distinct variables. Formally, it is given by:

  \begin{equation}
      e_k(x_1, x_2, \ldots, x_n) = \sum_{1 \leq i_1 < i_2 < \ldots < i_k \leq n} x_{i_1} x_{i_2} \hdots x_{i_k}.
  \end{equation}
  \label{def:elementary_symmetric_polynomial}
\end{definition}

\noindent The elementary symmetric polynomials are represented by single-column SSYTs generated from $\lambda = (1^k)$. As mentioned in Section \ref{sec:ON}, they are equivalent to the $\chi^{U(d)}_{(1^k)}$ character of the $U(n)$ antisymmetric representation $\lambda = \{ 1^k \}$.
  
  \begin{example}
      The elementary symmetric polynomial $e_3(x_1, x_2, x_3, x_4)$ is given by:
  
      \begin{equation}
        e_3(x_1, x_2, x_3, x_4) = x_1x_2x_3 + x_1x_2x_4 + x_1x_3x_4 + x_2x_3x_4 = 
        \begin{ytableau}
          1 \\
          2 \\
          3 \\
      \end{ytableau}
      +
      \begin{ytableau}
          1 \\
          2 \\
          4 \\
      \end{ytableau}
      +
      \begin{ytableau}
          1 \\
          3 \\
          4 \\
      \end{ytableau}
      +
      \begin{ytableau}
          2 \\
          3 \\
          4 \\
      \end{ytableau}
      =
      \begin{ytableau}
        \, \\
        \, \\
        \, \\
      \end{ytableau}
    \end{equation}
  \end{example}

The generating function of the elementary symmetric polynomials is given by the product of the linear factors of the variables, which can be factorised as follows:

\begin{equation}
  E_n(t) = 1 + e_1t + e_2t^2 + \ldots + e_nt^n = \sum_{k=0}^n e_kt^k = \prod_{i=1}^n (1 + x_it)
\end{equation}

\begin{remark}
This the character of the totally antisymmetric full fermionic Fock space of $n$ particles
\end{remark}

The dual of the elementary symmetric polynomial is the complete symmetric polynomial, which is the sum of all products of $k$ variables, allowing for repetition.

\begin{definition}[Complete Symmetric Polynomial] The complete symmetric polynomial $h_k(x_1, x_2, \ldots, x_n)$ in $n$ variables is defined as the sum of all products of $k$ variables, allowing for repetition. Formally, it is given by:

\begin{equation}
    h_k(x_1, x_2, \ldots, x_n) = \sum_{1 \leq i_1 \leq i_2 \leq \ldots \leq i_k \leq n} x_{i_1} x_{i_2} \cdots x_{i_k}
\end{equation}
\end{definition}

\noindent The complete symmetric polynomials can be represented by Young diagrams and their SSYTs of a single row of length $k$.

\begin{example}
    The complete symmetric polynomial $h_3(x_1, x_2, x_3)$ is given by:

    \begin{equation}
      \begin{split}
      h_3(x_1, x_2, x_3) &= x_1^3 + x_1^2x_2 + x_1^2x_3 + x_1x_2^2 + x_1x_2x_3 + x_1x_3^2 + x_2^3 + x_2^2x_3 + x_2x_3^2 + x_3^3 \\ &=
      \begin{ytableau}
        1 & 1 & 1 \\
    \end{ytableau}+
    \begin{ytableau}
        1 & 1 & 2 \\
    \end{ytableau}+
    \begin{ytableau}
        1 & 1 & 3 \\
    \end{ytableau}+
    \begin{ytableau}
        1 & 2 & 2 \\
    \end{ytableau} +
    \begin{ytableau}
        1 & 2 & 3 \\
    \end{ytableau} \\ &+
    \begin{ytableau}
        1 & 3 & 3 \\
    \end{ytableau}+
    \begin{ytableau}
        2 & 2 & 2 \\
    \end{ytableau}+
    \begin{ytableau}
        2 & 2 & 3 \\
    \end{ytableau}+
    \begin{ytableau}
        2 & 3 & 3 \\
    \end{ytableau}+
    \begin{ytableau}
        3 & 3 & 3 \\
    \end{ytableau} \\
    & = \begin{ytableau}
      \, & \, & \, \\
    \end{ytableau}
  \end{split}
  \end{equation}
\end{example}

The generating function of the complete symmetric polynomials is given by the product of the linear factors of the reciprocals of the variables:

\begin{equation}
\begin{split}
  H(t) = \sum_{k=0}^\infty h_kt^k &= \prod^n_{i=1} ( 1 + x_i t + x_i^2 t^2 + \ldots) \\ &= \prod^n_{i=1} \frac{1}{(1 - x_it)}
\end{split}
\end{equation}

\begin{theorem}[Fundamental Theorem of Symmetric Polynomials] 
The elementary symmetric polynomials $e_k$ and the complete symmetric polynomials $h_k$ are algebraically independent and form a generating set for the ring of symmetric polynomials $\Lambda$.

\begin{equation}
  \Lambda_n = \mathbb{Z}[e_1, e_2, \ldots, e_n] = \mathbb{Z}[h_1, h_2, \ldots, h_n]
\end{equation}
\end{theorem}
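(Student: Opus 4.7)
The plan is to prove the theorem in three stages: first establish that the $e_k$ generate $\Lambda_n$ over $\mathbb{Z}$, then deduce algebraic independence from a graded dimension count, and finally transfer both statements to the $h_k$ by exploiting the involution that swaps them via their generating functions.

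First, I would show every $f\in\Lambda_n$ lies in $\mathbb{Z}[e_1,\ldots,e_n]$ by induction on the lexicographic order of the leading monomial. If the leading term is $c\cdot x_1^{a_1}x_2^{a_2}\cdots x_n^{a_n}$, symmetry of $f$ forces $a_1\geq a_2\geq\cdots\geq a_n\geq 0$, since otherwise a permutation would produce a lex-larger monomial. The product $e_1^{a_1-a_2}e_2^{a_2-a_3}\cdots e_{n-1}^{a_{n-1}-a_n}e_n^{a_n}$ has leading monomial exactly $x_1^{a_1}\cdots x_n^{a_n}$ because the leading term of $e_k$ is $x_1\cdots x_k$ and multiplication of leading terms is compatible with lex order. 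Subtracting $c$ times this product strictly decreases the lex-leading monomial, and since the relevant monomials of bounded degree are finite, the induction terminates.

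Second, for algebraic independence I would perform a dimension count in each graded piece. The monomial symmetric polynomials $m_\lambda$ with $\lambda\vdash d$ and $\ell(\lambda)\leq n$ form a $\mathbb{Z}$-basis of $\Lambda_n^d$. On the $e$-side, monomials $e_1^{c_1}\cdots e_n^{c_n}$ with $\sum_k kc_k=d$ are indexed by the same partition set via the conjugation bijection $c_k=\lambda_k-\lambda_{k+1}$. Since Step 1 gives surjectivity, matching cardinalities in every degree forces $\mathbb{Z}$-linear independence, hence algebraic independence, and yields $\Lambda_n\cong\mathbb{Z}[e_1,\ldots,e_n]$. For the $h_k$, I would then use the identity $E(-t)H(t)=1$, which unpacks into the triangular Newton-style relations
\begin{equation}
\sum_{k=0}^{m}(-1)^k e_k h_{m-k}=0\qquad(m\geq 1),
\end{equation}
allowing $h_m$ to be solved for as a polynomial in $e_1,\ldots,e_m$ and conversely. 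This gives $\mathbb{Z}[h_1,\ldots,h_n]=\mathbb{Z}[e_1,\ldots,e_n]=\Lambda_n$, and an identical dimension count (using the same partition indexing for $h$-monomials) yields algebraic independence of the $h_k$.

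The main obstacle I expect is not any single step but making the leading-monomial argument rigorous under a consistent convention: one must fix lex order carefully, verify that leading terms multiply as claimed, and confirm the induction is well-founded in the graded sense so that the subtraction process terminates within each $\Lambda_n^d$. The remaining content reduces to standard partition bookkeeping and formal power series manipulations, neither of which presents conceptual difficulty once the leading-term reduction is in place.
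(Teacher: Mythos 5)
Your proposal is correct, and it is a complete, self-contained argument where the paper essentially defers to Macdonald: the paper's proof sketches the invertible (unitriangular) transition between the products $e_\lambda = e_{\lambda_1}\cdots e_{\lambda_k}$ and the monomial basis $m_\mu$ to get free generation by the $e_k$, and then invokes the $\omega$-involution to transfer the statement to the $h_k$. You instead use the classical lex-leading-term reduction to get surjectivity, a graded rank count (partitions of $d$ with at most $n$ parts, matched to exponent vectors via $c_k=\lambda_k-\lambda_{k+1}$) to get $\mathbb{Z}$-linear independence of the $e$-monomials and hence algebraic independence, and the recursion $\sum_{k=0}^m(-1)^k e_k h_{m-k}=0$ coming from $E(-t)H(t)=1$ to pass between $\mathbb{Z}[e_1,\ldots,e_n]$ and $\mathbb{Z}[h_1,\ldots,h_n]$. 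The underlying combinatorics is the same (your leading-monomial computation is exactly the triangularity fact Macdonald uses, and the Newton-type recursion is the same generating-function identity that underlies the paper's Proposition on $\omega$), but your packaging buys two things: it separates generation from independence cleanly, and it avoids a logical subtlety in the paper's route, namely that $\omega$ is only well-defined as a ring homomorphism sending $e_k\mapsto h_k$ \emph{after} one knows the $e_k$ are algebraically independent generators, so using $\omega$ to deduce the $h$-half of this very theorem requires care about the order of the argument; your direct use of the recursion sidesteps that. The one place to be precise when writing it up is the counting step: a spanning set of a free $\mathbb{Z}$-module whose cardinality equals the rank is a basis (tensor with $\mathbb{Q}$ and compare dimensions), and any polynomial relation among the $e_k$ splits into graded pieces with $\deg e_k = k$, each of which is a linear relation among distinct $e$-monomials of fixed degree — both routine, but worth stating so the jump from "matching cardinalities" to "algebraic independence" is airtight.
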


\noindent The proof can be found in Macdonald (\cite{MacdonaldSymPoly}, page 20) and proceeds by showing that there exists a mapping between the products of elementary symmetric polynomials $e_\lambda = e_{\lambda_1}e_{\lambda_2} \cdots e_{\lambda_k}$, where $\lambda_i$ is the number of boxes in the $i^\text{th}$ row of $k$-partitions $\lambda$), and the monomial symmetric polynomials. This mapping is invertible and therefore the elementary symmetric polynomials form a basis for the ring of symmetric polynomials. By the properties of the $\omega$-involution introduced in Proposition \ref{prop:omega_involution} there is a one-to-one correspondence between the elementary symmetric polynomials and the complete symmetric polynomials. This means that the complete symmetric polynomials also form a basis for the ring of symmetric polynomials.

\begin{example} $e_2(x_1,x_2,x_3) \, e_1 (x_1,x_2,x_3)$ can be expressed in the monomial symmetric basis as:
  \[
\begin{aligned}
e_2(x_1,x_2,x_3) \, e_1 (x_1,x_2,x_3)
\;=\;&
\bigl(x_1x_2 + x_1x_3 + x_2x_3\bigr)\,\bigl(x_1 + x_2 + x_3\bigr)\\[6pt]
=\;&
x_1^2x_2 
\;+\;
x_1x_2^2 
\;+\;
x_1^2x_3 
\;+\;
x_1x_3^2
\;+\;
x_2^2x_3
\;+\;
x_2x_3^2
\;+\;
3\,x_1x_2x_3. \\
&= m_{(2,1)}(x_1,x_2,x_3)  + 3m_{(1,1,1)}(x_1,x_2,x_3) 
\end{aligned}
\]
\end{example}

\begin{proposition}[$\omega$-involution]
There exists an $\omega$-involution, which is a map that sends each elementary symmetric polynomial $e_k$ to the corresponding complete symmetric polynomial $h_k$ and vice versa. Concretely, we have:
\[
\omega(e_k) = h_k
\quad\text{and}\quad
\omega(h_k) = e_k.
\]
Graphically, this corresponds to transposing the Young diagrams associated with $e_k$ and $h_k$.
\label{prop:omega_involution}
\end{proposition}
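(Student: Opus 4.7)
The plan is to use the fundamental theorem of symmetric polynomials (already stated above) to produce $\omega$ as a ring endomorphism and then verify the involutive property via the classical generating-function identity relating $E(t)$ and $H(t)$. First I would define $\omega$ on the algebraically independent generators $e_1, e_2, \ldots, e_n$ by setting $\omega(e_k) = h_k$, and extend uniquely to an algebra endomorphism of $\Lambda_n$. This is well-defined precisely because the $e_k$'s are algebraically independent and generate $\Lambda_n$, so no consistency relations need to be checked at this stage.

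Next I would establish the key identity
\begin{equation}
    E(t)\,H(-t) \;=\; \prod_{i=1}^n (1+x_i t)\prod_{i=1}^n \frac{1}{1+x_i t} \;=\; 1,
\end{equation}
which, after expanding both sides and comparing coefficients of $t^n$, yields the recursion
\begin{equation}
    \sum_{k=0}^{m} (-1)^k\, h_k\, e_{m-k} \;=\; 0, \qquad m \geq 1.
\end{equation}
This recursion lets one express each $h_m$ as a polynomial in $e_1, \ldots, e_m$ (and vice versa). Applying the ring homomorphism $\omega$ to the identity $E(t)H(-t) = 1$ gives $H(t)\cdot\omega(H(-t)) = 1$ since $\omega(E(t)) = H(t)$; comparing with the same identity $H(t)E(-t)=1$ and invoking the invertibility of $H(t)$ as a formal power series forces $\omega(H(-t)) = E(-t)$, i.e.\ $\omega(h_k) = e_k$. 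Consequently $\omega^2$ fixes both generating sets and is therefore the identity on $\Lambda_n$, establishing that $\omega$ is an involution.

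For the graphical claim, note that $e_k = s_{(1^k)}$ and $h_k = s_{(k)}$ are Schur polynomials indexed by a single column and a single row, respectively; these partitions are conjugates of one another, so the base case $\omega(e_k)=h_k$ matches the statement $\omega(s_\lambda) = s_{\tilde\lambda}$ restricted to hook-trivial shapes. The general Young-diagram transposition interpretation then follows from the Jacobi--Trudi formulae
\begin{equation}
    s_\lambda \;=\; \det\!\bigl(h_{\lambda_i - i + j}\bigr)_{1\le i,j\le \ell(\lambda)}, \qquad s_{\tilde\lambda} \;=\; \det\!\bigl(e_{\tilde\lambda_i - i + j}\bigr)_{1\le i,j\le \ell(\tilde\lambda)},
\end{equation}
by applying $\omega$ term-by-term inside the determinant.

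The main obstacle is essentially bookkeeping: the algebraic step is short once the generating-function identity is in hand, but one must be careful that $\omega$, defined a priori only on generators, really does send the infinite family $\{h_k\}$ to $\{e_k\}$ without circularity --- this is exactly what the $E(t)H(-t)=1$ argument circumvents. Extending to the Schur-function statement requires the Jacobi--Trudi identities, which are standard but technically the deepest input needed here; without them the proposition is only established on the generators $e_k,h_k$, which is all the main text actually uses later.
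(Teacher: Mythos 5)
Your proof is correct and takes essentially the same route as the paper: the generating-function identity $E(t)\,H(-t)=1$, comparison of coefficients to get $\sum_{i+j=m}(-1)^j e_i h_j=0$, and applying $\omega$ to this relation to force $\omega(h_k)=e_k$ and hence $\omega^2=\mathrm{id}$. You are somewhat more explicit than the paper's proof — defining $\omega$ on the algebraically independent generators $e_k$ before using it, and deferring the Young-diagram/Schur-function interpretation to the Jacobi--Trudi identities, exactly as the paper does in a later corollary — but the core argument is identical.
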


\begin{proof}
Applying $\omega$ twice returns the original polynomial, establishing that it is indeed an involution. To see this, consider the product of the generating functions for $e_k$ and $h_k$:
\[
E(t) \;=\;\sum_{i \ge 0} e_i\, t^i,
\quad
H(-t) \;=\;\sum_{j \ge 0} h_j\, (-t)^j.
\]
Then observe that
\[
E(t)\,H(-t) 
\;=\;
\left(\sum_{i \ge 0} e_i\, t^i\right)
\left(\sum_{j \ge 0} h_j\, (-t)^j\right)
\;=\; \frac{\prod_{i=1} (1 + x_it)}{\prod_{i=1} (1 + x_it)} \;=\;
1.
\]
Equating coefficients shows that
\[
\sum_{i+j=n} (-1)^j\, e_i\, h_j
\;=\;
0
\quad \text{for all } n \ge 1,
\]
and the constant term ($n=0$) is $e_0 h_0 = 1$. Applying $\omega$ to both sides leaves the identity invariant, so $\omega^2$ recovers the original polynomials, and thus $\omega$ is an involution, thus revealing the dual relationship between the elementary and complete symmetric polynomials.
\end{proof} 

\begin{example} Using the $\omega$-involution, we can see that
  $\omega(h_{\lambda_1} h_{\lambda_2}\cdots h_{\lambda_k}) = \omega(h_{\lambda_1})\omega(h_{\lambda_2})\cdots \omega(h_{\lambda_k}) = e_{\lambda_1} e_{\lambda_2}\cdots e_{\lambda_k}$, where the monomials in $h_{\lambda_i}$ are weakly increasing rows in a Young diagram and the monomials in $e_{\lambda_i}$ are strictly increasing columns. For $\omega(h_3 h_2)$:

  \begin{equation}
    \omega(h_3 h_2) = \omega\Bigg(\begin{ytableau}
      \, & \, & \, \\
      \, & \, \\
    \end{ytableau}\Bigg) = \begin{ytableau}
      \, & \, \\
      \, & \, \\
      \, \\
      \end{ytableau} = e_3 e_2
  \end{equation}
\end{example}

Another basis of symmetric polynomials of interest is known as the Schur polynomials, which we have introduced in Definition \ref{def:schur_polynomial} in the main text. These are perhaps the most important because they are used to calculate the characters of the unitary group. They were first discovered by Cauchy in the ratio of the alternant form~\cite{Cauchy_2009}.

\begin{definition}[Alternant polynomial]
The alternant polynomial $a_{\lambda+\delta}(x_1, x_2, \ldots, x_n)$ is a polynomial in $n$ variables associated with the partition $\lambda + \delta$. It is defined as the determinant of the matrix whose $(i,j)^\text{th}$ entry is $x_i^{\lambda_j + n - j}$. This polynomial is antisymmetric under any permutation of the variables and is given by:

\[
a_{\lambda+\delta} = \det\bigl(x_i^{\lambda_j + n - j}\bigr)_{1 \leq i,j \leq n}
= 
\det\begin{pmatrix}
x_1^{\lambda_1+n-1} & x_2^{\lambda_1+n-1} & \cdots & x_n^{\lambda_1+n-1} \\
x_1^{\lambda_2+n-2} & x_2^{\lambda_2+n-2} & \cdots & x_n^{\lambda_2+n-2} \\
\vdots & \vdots & \ddots & \vdots \\
x_1^{\lambda_n} & x_2^{\lambda_n} & \cdots & x_n^{\lambda_n}
\end{pmatrix}.
\]

\noindent Here, $\delta = (n-1, n-2, \ldots, 1, 0)$ is the staircase partition. Thus, $\lambda + \delta = (\lambda_1 + n - 1, \lambda_2 + n - 2, \ldots, \lambda_n)$ has strictly decreasing parts, ensuring that the determinant is nonzero. The determinant may be equivalently expressed using the Leibniz formula:
\[
a_{\lambda+\delta} 
= \sum_{\sigma \in S_n}
\text{sgn}(\sigma)
\prod_{i=1}^n
x_i^{\lambda_i + n - \sigma(i)},
\]
where $S_n$ is the symmetric group on $n$ elements and $\mathrm{sgn}(\sigma)$ is the sign of the permutation $\sigma$.
\end{definition}
  
  \begin{remark}
    One of the key properties of determinants is that if two rows or two columns are the same, the determinant is zero. Hence, for a strictly decreasing sequence 
    \(\alpha = (\alpha_1, \alpha_2, \ldots, \alpha_n)\) (where \(\alpha_i > \alpha_{i+1}\)), the matrix $x_i^{\alpha_j}$ in
    \[
    \det\bigl(x_i^{\alpha_j}\bigr)_{1 \leq i,j \leq n} 
    = 
    \det\begin{pmatrix}
    x_1^{\alpha_1} & x_2^{\alpha_1} & \cdots & x_n^{\alpha_1} \\
    x_1^{\alpha_2} & x_2^{\alpha_2} & \cdots & x_n^{\alpha_2} \\
    \vdots & \vdots & \ddots & \vdots \\
    x_1^{\alpha_n} & x_2^{\alpha_n} & \cdots & x_n^{\alpha_n}
    \end{pmatrix}
    =
    \sum_{\sigma \in S_n} \mathrm{sgn}(\sigma)\,\prod_{i=1}^n x_{i}^{\alpha_{\sigma(i)}}
    \]
    cannot have any identical rows. Therefore, the determinant is nonzero precisely when
    \(\alpha\) has strictly decreasing parts. This justifies defining the alternant polynomial
    in terms of \(\lambda + \delta\), where \(\delta = (n-1, n-2, \ldots, 1, 0)\) ensures that
    the exponents are strictly decreasing, preventing the determinant from vanishing.
  \end{remark}
  \begin{definition}[Vandermonde Determinant]
  The Vandermonde determinant is a special case of the alternant polynomial for the partition \(\delta = (n-1, n-2, \ldots, 1, 0)\) (i.e., \(\lambda = (0,\ldots,0)\)). It is defined as 
  \[
  \Delta(x_1, x_2, \ldots, x_n) \;=\; a_\delta 
  \;=\; \det\bigl(x_i^{\,n-j}\bigr)_{1 \leq i,j \leq n}
  \;=\; \sum_{\sigma \in S_n} \mathrm{sgn}(\sigma)\,\prod_{i=1}^n x_i^{\,n - \sigma(i)}
  \;=\; \prod_{1 \leq i < j \leq n} (x_j - x_i).
  \]
  \label{def:vandermonde}
  \end{definition}

  \noindent The Vandermonde determinant is antisymmetric under any transposition of the variables:
  \[
  \Delta\bigl(x_{\sigma(1)}, \ldots, x_{\sigma(n)}\bigr)
  \;=\; \mathrm{sgn}(\sigma)\,\Delta(x_1, x_2, \ldots, x_n).
  \]
  It is a fundamental object in algebraic geometry and has broad applications in many areas of mathematics. In conjunction with the alternant polynomial, it can be used to define Schur polynomial via the determinantal identity:

  \begin{theorem}[Schur Polynomial -- Bialternant Identity]
  Let $\lambda$ be a partition with at most $n$ parts. The Schur polynomial $s_\lambda(x_1,\dots,x_n)$ can be written as:
  \[
  s_\lambda(x) \;=\; \frac{a_{\lambda+\delta}(x_1,\dots,x_n)}{a_{\delta}(x_1,\dots,x_n)}
  \;=\;
  \frac{\det\bigl(x_i^{\lambda_j + n - j}\bigr)_{1 \leq i,j \leq n}}{\Delta(x_1, x_2, \ldots, x_n)},
  \]
  where $\delta = (n-1, n-2, \ldots, 1, 0)$ and $\Delta(x_1, x_2, \ldots, x_n)$ is the Vandermonde determinant.
  \label{thm:bialternant}
  \end{theorem}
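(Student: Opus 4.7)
The plan is to prove the equivalent identity $a_\delta(x)\cdot s_\lambda(x) = a_{\lambda+\delta}(x)$, where $s_\lambda$ is the SSYT generating function from Definition~\ref{def:schur_polynomial}. I would first dispose of well-definedness: since $a_{\lambda+\delta}$ is antisymmetric it vanishes on each hyperplane $x_i = x_j$, so every linear factor $(x_j - x_i)$ divides it in the polynomial ring, and hence so does their product, which up to sign is $a_\delta$. The quotient $\tilde{s}_\lambda := a_{\lambda+\delta}/a_\delta$ is therefore a genuine polynomial, symmetric as a ratio of two antisymmetric polynomials, and homogeneous of degree $|\lambda|$ by a straightforward degree count.

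To identify $\tilde{s}_\lambda$ with $\sum_T x^T$, I would invoke the Lindström--Gessel--Viennot (LGV) lemma. Place source points $A_i = (-\,i+1,\,0)$ and sink points $B_j = (\lambda_j - j + 1,\,n)$ in $\mathbb{Z}^2$, and consider monotone north/east lattice paths weighted so that an east step taken at height $k$ contributes $x_k$ (north steps have weight $1$). A direct enumeration shows the weighted number of paths from $A_i$ to $B_j$ is $h_{\lambda_j + i - j}(x_1,\dots,x_n)$ (with the convention $h_m = 0$ for $m<0$). LGV then collapses the signed sum $\sum_\sigma \mathrm{sgn}(\sigma)\prod_i h_{\lambda_{\sigma(i)} + i - \sigma(i)}$ to the unsigned total weight of \emph{non-intersecting} $n$-tuples from $\{A_i\}$ to $\{B_i\}$, since the chosen endpoints force $\sigma = \mathrm{id}$ on any non-intersecting family.

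The combinatorial payoff is the classical bijection between such non-intersecting path tuples and SSYTs of shape $\lambda$ with entries in $\{1,\dots,n\}$: reading off, in increasing order, the heights at which the $i$-th path takes its east steps yields the entries of row $i$, and the non-intersection condition translates precisely into the weak-row and strict-column monotonicity of Definition~\ref{def:ssyt}. The path weight becomes $x^T$. To close the loop with the bialternant, I would verify the determinantal identity $a_\delta \cdot \det\bigl(h_{\lambda_i - i + j}\bigr) = a_{\lambda+\delta}$ by a short row reduction using the generating-function identity $H(t)\prod_i(1-x_i t) = 1$, which turns the $h$-determinant into the $x$-determinant after multiplying column $j$ by $\prod_i(1 - x_i t)$ and reading off the appropriate coefficient.

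The main obstacle is not any single step but the careful bookkeeping of conventions---source/sink placement, the $\delta$-shift, and the choice of rows versus columns in the SSYT bijection---which have to align exactly with Definition~\ref{def:ssyt} for the weights to match. A viable self-contained alternative, if one wishes to avoid lattice paths altogether, is a sign-reversing involution on pairs $(\sigma,T)$ in the expansion of $a_\delta \cdot s_\lambda$ that swaps entries violating strict column-increase and cancels all but the superstandard fixed points, which recover the Leibniz expansion of $a_{\lambda+\delta}$; the combinatorics of the swap rule is the analogue of the LGV path-crossing exchange and is similarly the delicate part of the argument.
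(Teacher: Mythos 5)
Your proposal is correct, but it takes a genuinely different route from the paper. The paper's own proof of Theorem~\ref{thm:bialternant} is minimal: it only checks that the ratio $a_{\lambda+\delta}/a_\delta$ is symmetric (the two sign changes under a transposition cancel) and remarks that $\lambda\mapsto\lambda+\delta$ is a bijection onto strictly decreasing exponent sequences; the actual identification with the SSYT generating function of Definition~\ref{def:schur_polynomial} is deferred to the $U(n)\downarrow U(n-1)$ branching machinery (Theorems~\ref{thm:gt_branching_betweenness}, \ref{thm:un_branching} and \ref{thm:un_branching_poly}), where row operations on the alternant with $x_n=1$ produce the interleaving condition and an induction recovers $\sum_T x^T$. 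You instead prove the equality directly: divisibility of $a_{\lambda+\delta}$ by the Vandermonde for well-definedness, the Lindstr\"om--Gessel--Viennot lemma to show $\det\bigl(h_{\lambda_i-i+j}\bigr)=\sum_T x^T$ via non-intersecting lattice paths, and then the algebraic identity $a_\delta\cdot\det\bigl(h_{\lambda_i-i+j}\bigr)=a_{\lambda+\delta}$, which is essentially Equation~\eqref{eq:jt_derivation_identity} that the paper derives (from the generating-function relation behind $H(t)E^{(k)}(-t)=(1-x_kt)^{-1}$) inside its proof of the Jacobi--Trudy identity. What each approach buys: the paper's branching route yields the interleaving/Gelfand--Tsetlin structure it needs later anyway, so the SSYT expansion comes as a by-product; your route is self-contained at the level of this theorem, delivers the Jacobi--Trudy identity for free, and makes the combinatorics explicit, at the cost of importing LGV and the convention bookkeeping (source/sink placement, height weighting $x_1,\dots,x_n$ versus heights $0,\dots,n$, and the transpose between $\det(h_{\lambda_j-j+i})$ and $\det(h_{\lambda_i-i+j})$) that you correctly flag as the delicate part. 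Your sketch of the final determinant identity is slightly compressed, but it is the standard manipulation and matches the paper's own derivation, so there is no gap in substance.
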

  
  \begin{proof}
  To see that $s_\lambda(x)$ is symmetric in $x_1, \ldots, x_n$, note that interchanging two variables in $\det\bigl(x_i^{\lambda_j + n - j}\bigr)$ contributes a factor of $-1$, while the same interchange in the denominator $\Delta(x_1, x_2, \ldots, x_n)$ also contributes $-1$. These signs cancel in the ratio, ensuring symmetry:
  \[
  s_\lambda(\ldots, x_i, x_j, \ldots)
  =
  s_\lambda(\ldots, x_j, x_i, \ldots).
  \]
  The mapping $(\lambda+\delta)_i = \lambda_i + n - i$ establishes a bijection for the exponents to partitions, completing the argument.
  \end{proof}

  An equivalent definition of the Schur polynomials uses the elementary and complete symmetric polynomial bases, known as the Jacobi-Trudy Identity. It is crucial for deriving the branching rules in the quantum Paldus transform.

  \begin{theorem}[Jacobi-Trudy Identity] The Schur polynomial $s_\lambda(x_1, x_2, \ldots, x_n)$ can be expressed as the determinant of a matrix with elements $(i,j)$ of the complete symmetric functions:

    \[
    s_\lambda(x) = \text{det}(h_{\lambda_i - i + j})_{1 \leq i, j \leq n} = \text{det}(e_{\tilde{\lambda}_i -i + j})_{1 \leq i, j \leq n}
    \]
    
    \noindent where the matrix is of size $n \times n$.
    \end{theorem}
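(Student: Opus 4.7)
The plan is to prove the first equality $s_\lambda = \det(h_{\lambda_i - i + j})$ by factoring the alternant matrix from Theorem \ref{thm:bialternant} as a product of two matrices whose determinants can be computed separately, and then to derive the dual equality via the $\omega$-involution of Proposition \ref{prop:omega_involution}.

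First I would extract the key algebraic identity from the generating function $H(t)$. Multiplying $H(t)$ by $\prod_{k \neq i}(1 - x_k t)$ telescopes the product to $(1 - x_i t)^{-1} = \sum_{m \geq 0} x_i^m t^m$. Comparing coefficients of $t^m$ yields
\begin{equation}
x_i^m \;=\; \sum_{r=0}^{n-1} (-1)^r \, e_r(\hat{x}_i) \, h_{m-r}(x_1,\ldots,x_n),
\end{equation}
where $\hat{x}_i$ denotes the variable list with $x_i$ omitted, and the sum terminates at $r = n-1$ because $e_r(\hat{x}_i) = 0$ for $r \geq n$. Applying this identity to each entry $x_i^{\lambda_j + n - j}$ of the alternant matrix $B = (x_i^{\lambda_j+n-j})$ produces a factorisation $B = C D$ with $C_{i,r} = (-1)^r e_r(\hat{x}_i)$ and $D_{r,j} = h_{\lambda_j + n - j - r}$, for $1 \leq i,j \leq n$ and $0 \leq r \leq n-1$. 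Taking determinants gives $a_{\lambda + \delta} = \det C \cdot \det D$.

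Next I would evaluate $\det C$ independently of $\lambda$. The polynomial identity $\sum_{r=0}^{n-1}(-1)^r e_r(\hat{x}_i)\, x_j^{n-1-r} = \prod_{k \neq i}(x_j - x_k)$ vanishes for $j \neq i$, so multiplying $C$ by the Vandermonde matrix $V_{r,j} = x_j^{n-1-r}$ yields $CV = \mathrm{diag}\bigl(\prod_{k \neq i}(x_i - x_k)\bigr)_i$. Using $\det V = a_\delta$ from Definition \ref{def:vandermonde} and the pairing identity $\prod_{i \neq k}(x_i - x_k) = (-1)^{\binom{n}{2}} a_\delta^{\,2}$, one obtains $\det C = (-1)^{\binom{n}{2}} a_\delta$. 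Dividing $a_{\lambda+\delta} = \det C \cdot \det D$ by $a_\delta$ then gives $s_\lambda = (-1)^{\binom{n}{2}} \det D$. Finally, reversing the $n$ rows of $D$ contributes an additional sign $(-1)^{\binom{n}{2}}$ and transposing converts the result into the matrix with $(i,j)$ entry $h_{\lambda_i - i + j}$; the two $(-1)^{\binom{n}{2}}$ contributions cancel, delivering the first Jacobi-Trudi identity.

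The dual equality follows by applying the ring homomorphism $\omega$ from Proposition \ref{prop:omega_involution} to both sides of the first identity. Since $\omega$ commutes with determinants and sends $h_k \mapsto e_k$, one obtains $\omega(s_\lambda) = \det(e_{\lambda_i - i + j})$; combining this with the fact $\omega(s_\lambda) = s_{\tilde\lambda}$ and relabelling $\lambda \leftrightarrow \tilde\lambda$ yields $s_\lambda = \det(e_{\tilde\lambda_i - i + j})$. The main obstacle will be careful sign bookkeeping: the $(-1)^r$ factors in the coefficient identity, the orientation of the Vandermonde, and the row reversal of $D$ must all combine to produce the stated sign-free determinant. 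The other delicate point is establishing $\omega(s_\lambda) = s_{\tilde\lambda}$ without circularity, which can be handled by a separate argument via the Cauchy identity or a direct bijective comparison of the SSYT generating functions.
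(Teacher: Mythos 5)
Your proof is correct, and it follows the same classical matrix-factorisation strategy as the paper: both start from the generating-function identity $H(t)\prod_{k\neq i}(1-x_k t)=(1-x_i t)^{-1}$, extract the coefficient identity $x_i^m=\sum_r(-1)^r e_r(\hat x_i)\,h_{m-r}$, and factor the bialternant matrix into an $h$-matrix times a matrix built from the $e_r(\hat x_i)$. The genuine difference is how the second factor is dispatched. You evaluate $\det C$ head-on by multiplying by a Vandermonde matrix, using $\sum_r(-1)^r e_r(\hat x_i)\,x_j^{n-1-r}=\prod_{k\neq i}(x_j-x_k)$ to diagonalise $CV$, and then tracking the signs $(-1)^{\binom{n}{2}}$ from the pairing identity and from reversing the rows of $D$; these indeed cancel, so your bookkeeping goes through. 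The paper instead never computes that determinant explicitly: it specialises the factorisation to $\lambda=\emptyset$, observes that $\mathbf{H}_\delta$ is unitriangular (since $h_{i-j}=0$ for $i<j$ and $h_0=1$), and reads off $\det(\mathbf{M})=a_\delta$ by comparison, which sidesteps all sign analysis. Your route is more self-contained and makes the structure of the cofactor matrix transparent; the paper's trick is shorter and sign-free. On the dual identity, both arguments apply $\omega$ to the $h$-form, but you are right to flag that $\omega(s_\lambda)=s_{\tilde\lambda}$ must be established independently (e.g.\ via the Cauchy/dual Cauchy identities) to avoid circularity — the paper's proof and its subsequent corollary lean on each other at exactly this point, so your explicit caveat is an improvement rather than a gap.
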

    \begin{proof}
      Working with $n$ variables $x_1, x_2, \ldots, x_n$, for $1 \leq k \leq n$ let $e^{(k)}_r$ be the elementary symmetric polynomial of degree $r$ in all variables except $x_k$. Define
      \[
        E^{(k)}(t) = \sum_{q=0}^{n-1} e^{(k)}_q\,t^q \;=\; \prod_{i \neq k} (1 - x_i\,t).
      \]
      Reversing the indexing in the Jacobi-Trudy identity with $i = (n-1) - q$ and then canceling common factors in numerator and denominator gives
      \[
        H(t)\,E^{(k)}(-t) \;=\; (1 - x_k\,t)^{-1}.
      \]
      Expanding both sides:
      \[
        \sum_{p=0}^\infty h_p\,t^p \;\sum_{q=0}^{n-1} e^{(k)}_{q}\,(-t)^q 
        \;=\;
        \sum_{r=0}^\infty x_k^r\,t^r.
      \]
      Substitute $p + q = r$ and sum over $r$ on the left:
      \[
        \sum_{r=0}^\infty \Bigl(\,\sum_{q=0}^r h_{r-q}\,e^{(k)}_{q}\,(-1)^q\Bigr)\;t^r
        \;=\;
        \sum_{r=0}^\infty x_k^r\,t^r.
      \]
      Grouping the terms in $t^r$ and setting $r = \alpha_i$ yields 
      \[
        \sum_{q=0}^r h_{r-q}\,e^{(k)}_{q}\,(-1)^q \;=\; x_k^{\alpha_i}.
      \]
      Next, re-index with $q = (n-1) - i$, so $i = (n-1) - q$ runs from $0$ to $n-1$. Let $j = i + 1 = n - q$, which runs from $1$ to $n$. Thus, 
      \[
        \sum_{j=1}^n h_{r-n+j}\,e^{(k)}_{n-j}\,(-1)^{\,n-j} \;=\; x_k^{\alpha_i}.
      \]
      This sum can be viewed as a matrix product $\mathbf{H}_r\,\mathbf{M} = \mathbf{A}_r$, where $H_{ij} = h_{\alpha_i - n + j}$, $M_{jk} = e^{(k)}_{n-j}\,(-1)^{\,n-j}$, and $A_{ik} = x_k^{\alpha_i}$.  Taking determinants and using alternants gives 
      \[
        a_\alpha \;=\; \det(\mathbf{A}_\alpha) \;=\; \det(\mathbf{H}_\alpha)\,\det(\mathbf{M}).
      \]
       Then, $a_\delta = \det(\mathbf{H}_\delta)\,\det(\mathbf{M})$ with $\det(\mathbf{H}_\delta) = 1$ as it is lower triangular, (because $h_{i-j} = 0 \ \forall i < j$ and $h_0 = 1$). Therefore, $a_\alpha = \det(\mathbf{M})$. Therefore,
      \[
        a_\alpha \;=\; a_\delta\;\det(\mathbf{H}_\alpha).
      \]
      Expressing $a_\alpha$ as an alternant polynomial:
      \begin{equation}
        a_\alpha \;=\; a_\delta 
        \sum_{\sigma \in S_n} 
          \text{sgn}(\sigma)\,\prod_{i=1}^n h_{\alpha_i - i + \sigma(i)}.
      \label{eq:jt_derivation_identity}
      \end{equation}
      Rearranging terms yields the Jacobi-Trudy identity:
      \[
        s_\lambda 
        \;=\; 
        \frac{a_{\lambda + \delta}}{a_\delta}
        \;=\; 
        \sum_{\sigma \in S_n} \text{sgn}(\sigma)\,\prod_{i=1}^n h_{\lambda_i - i + \sigma(i)} = \det(h_{\lambda_i - i + j})_{1 \leq i, j \leq n}. 
      \]

      \noindent It is straightforward to show that the dual Jacobi-Trudy identity holds as well:

      \[
        s_\lambda = \omega(s_{\tilde{\lambda}}) = \det(\omega(h_{\tilde{\lambda}_i - i + j}))_{1 \leq i, j \leq n} = \det(e_{\lambda_i - i + j})_{1 \leq i, j \leq n}.
      \]

    \end{proof}

    \begin{example} For the partition \(\lambda = (2, 1)\) and its (self) conjugate partition \(\tilde{\lambda} = (2, 1)\), we will calculate the Schur polynomial \(s_{(2,1)}(x_1, x_2)\) using the Jacobi-Trudi identity and its dual. We can take determinants of the matrices and using the theory of alternants:
    
    \[
    a_\alpha = \det(\mathbf{A}_\alpha) = \det(\mathbf{H}_\alpha) \det(\mathbf{M})
    \]
    \[
    s_{(2,1)} = \det\begin{pmatrix}
    h_{\lambda_1 - 1 + 1} & h_{\lambda_1 - 1 + 2} \\
    h_{\lambda_2 - 2 + 1} & h_{\lambda_2 - 2 + 2}
    \end{pmatrix} = \det\begin{pmatrix}
    h_2 & h_3 \\
    h_0 & h_1
    \end{pmatrix}
    \]
    
    \noindent Expanding the determinant:
    
    \begin{equation}
      \begin{split}
        s_{(2,1)} &= h_2 h_1 - h_3 h_0 = h_2 h_1 - h_3 \\
        &= (x_1^2 + x_1 x_2 + x_2^2)(x_1 + x_2) - (x_1^3 + x_1^2 x_2 + x_1 x_2^2 + x_2^3)\cdot 1 \\
        &= x_1^3 + 2 x_1^2 x_2 + 2 x_1 x_2^2 + x_2^3 - x_1^3 - x_1^2 x_2 - x_1 x_2^2 - x_2^3 \\
        &= x_1^2 x_2 + x_1 x_2^2 
      \end{split}
    \end{equation}

    \noindent For the dual Jacobi-Trudi identity:
    
    \[
    s_{(2,1)} = \det\begin{pmatrix}
    e_{\lambda'_1 - 1 + 1} & e_{\lambda'_1 - 1 + 2} \\
    e_{\lambda'_2 - 2 + 1} & e_{\lambda'_2 - 2 + 2}
    \end{pmatrix} = \det\begin{pmatrix}
    e_2 & e_3 \\
    e_0 & e_1
    \end{pmatrix}
    \]
    
    \noindent Expanding the determinant where \(e_3 = 0\) (as there are only two variables):
    
    \[
    s_{(2,1)} = e_2 e_1 - e_3 e_0 = (x_1 x_2)(x_1 + x_2) - 0\cdot1 = x_1^2 x_2 + x_1 x_2^2
    \]
    
    \noindent Both identities yield the same Schur polynomial, confirming their equivalence.
    \end{example}
    
    \begin{corollary}
      The action of the $\omega-$involution on $s_\lambda(x)$ is therefore straightforward. $\omega(\text{det}(h_{\lambda_i - i + j})_{1 \leq i, j \leq n}) = \text{det}(e_{\lambda_i - i + j})_{1 \leq i, j \leq n} = \text{det}(h_{\tilde{\lambda}_i - i + j})_{1 \leq i, j \leq n} $. Therefore, the $\omega$-involution acts on the Schur polynomials by taking the transpose of the Young diagram $\omega(s_\lambda(x)) = s_{\tilde{\lambda}}(x)$. This will be essential for deriving Pieri's rule.
    \end{corollary}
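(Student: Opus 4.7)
The plan is to leverage the Jacobi--Trudi identity just established together with its $e$-variant, using that $\omega$ is a ring automorphism. The argument breaks into three small pieces.

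First, I would formalize that $\omega$ extends uniquely to an involutive ring automorphism of $\Lambda_n$. The Fundamental Theorem of Symmetric Polynomials identifies $\Lambda_n = \mathbb{Z}[h_1,\dots,h_n]$ as a polynomial algebra, so the prescription $h_k \mapsto e_k$ determines a ring homomorphism; the reciprocal generating-function relation $E(t)H(-t)=1$ used in the proof of Proposition~\ref{prop:omega_involution} shows this homomorphism is its own inverse. Because a determinant is a polynomial in its entries, $\omega$ commutes with $\det$: applied to the Jacobi--Trudi formula $s_\lambda = \det(h_{\lambda_i - i + j})$, this gives
\[
\omega(s_\lambda) \;=\; \det\!\bigl(\omega(h_{\lambda_i - i + j})\bigr) \;=\; \det\!\bigl(e_{\lambda_i - i + j}\bigr).
\]

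Second, to identify this right-hand side with $s_{\tilde\lambda}$, I would invoke the dual Jacobi--Trudi identity $s_\mu = \det(e_{\tilde\mu_i - i + j})$ applied to $\mu = \tilde\lambda$. Since $\tilde{\tilde\lambda} = \lambda$, this gives $s_{\tilde\lambda} = \det(e_{\lambda_i - i + j})$, which combined with the previous display yields $\omega(s_\lambda) = s_{\tilde\lambda}$ in one line.

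The main obstacle is supplying an independent proof of the dual Jacobi--Trudi identity, since deducing it from $\omega(s_{\tilde\lambda}) = s_\lambda$ (as done in the closing paragraph of Theorem~2) would be circular with the corollary being proved. I would bypass this by a Lindström--Gessel--Viennot argument on nonintersecting lattice paths: interpret each $e_k$ as the generating polynomial of vertical $k$-strips, expand $\det(e_{\tilde\lambda_i - i + j})$ as a signed sum over $\ell(\tilde\lambda)$-tuples of lattice paths, and apply the LGV lemma to reduce to nonintersecting tuples. Reading the surviving configurations column-wise places them in bijection with SSYT of shape $\lambda$, identifying the determinant with the SSYT sum of Definition~\ref{def:schur_polynomial}, i.e.\ $s_\lambda$ itself. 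A more algebraic alternative would be to rerun the alternant derivation of Theorem~2 with $e$'s in place of $h$'s, using the finite-degree identity $E(t)H^{(k)}(-t) = 1 + x_k t$ (with $H^{(k)}$ the complete symmetric generating function in all variables but $x_k$) to constrain the relevant exponents inside the transposed Young diagram; the vanishing of higher $t$-coefficients naturally produces a matrix indexed by $\tilde\lambda$ rather than $\lambda$. Either route gives the dual identity without reference to the corollary, and the corollary then follows immediately from the two-line computation above.
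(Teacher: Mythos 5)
Your core computation is the same one the paper intends for this corollary: apply $\omega$ entrywise to the Jacobi--Trudi determinant to get $\det(e_{\lambda_i-i+j})$, then identify that with $s_{\tilde\lambda}$ via the dual Jacobi--Trudi identity (which also yields the intermediate equality $\det(e_{\lambda_i-i+j})=\det(h_{\tilde\lambda_i-i+j})$). Where you genuinely depart from the paper is in how the dual identity is justified. The paper closes its Jacobi--Trudi proof by writing $s_\lambda=\omega(s_{\tilde\lambda})=\det(e_{\lambda_i-i+j})$, i.e.\ it obtains the dual identity \emph{from} the fact $\omega(s_\lambda)=s_{\tilde\lambda}$ that this corollary then states, so within the paper the corollary is essentially an unwinding of an earlier assertion rather than an independent deduction. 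You correctly spot that circularity and repair it by proving the dual Jacobi--Trudi identity independently via a Lindstr\"om--Gessel--Viennot nonintersecting-path argument; that is a standard and valid proof, and with it your two-line deduction $\omega(s_\lambda)=\det(e_{\lambda_i-i+j})=s_{\tilde\lambda}$ is logically clean. The paper's route buys brevity (one alternant manipulation plus the involution); yours buys a noncircular derivation of both the dual identity and the corollary, at the cost of importing the LGV lemma.

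Two caveats. First, your ``more algebraic alternative'' does not go through as sketched: the identity $E(t)\,H^{(k)}(-t)=1+x_k t$ produces only the powers $x_k^0$ and $x_k^1$, so, unlike $H(t)E^{(k)}(-t)=(1-x_k t)^{-1}$, it cannot be used to expand the alternant entries $x_k^{\alpha_i}$ as a finite sum of $e$'s; the standard algebraic substitute is to note that the triangular Toeplitz matrices $(h_{i-j})$ and $((-1)^{i-j}e_{i-j})$ are mutually inverse (this is the coefficient identity in Proposition~\ref{prop:omega_involution}) and then use the complementary-minor (Aitken/Jacobi) identity, which converts the $h$-determinant for $\lambda$ directly into the $e$-determinant for $\tilde\lambda$. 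Second, in $n$ variables $\omega(h_r)=e_r$ is only guaranteed for $r\le n$ (for $n=1$, $\omega(h_2)=h_1^2\neq 0=e_2$), while the Jacobi--Trudi entries carry subscripts up to $\lambda_1+n-1$; both you and the paper elide this, and the clean fix is to define $\omega$ on the ring of symmetric functions in infinitely many variables (or assume enough variables relative to $\lambda$) and specialize at the end.
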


Finally with these tools we are in a position to prove Pieri's rule, which is an essential tool in deriving the branching structure of the Paldus transform

\begin{theorem}[Dual Pieri Rule]
  Given a partition $\lambda$ and a positive integer $k$, the product of the Schur polynomial $s_\lambda$ with the elementary symmetric polynomial $e_k$ is given by:
  
  \begin{equation}
  s_\lambda \cdot e_k = \sum_{\mu} s_\mu,
  \end{equation}
  
  \noindent where $\mu$ is a partition obtained by adding at most $k$ boxes to each (possibly empty) row of $\lambda$. 
  \label{thm:dualpieri}
  \end{theorem}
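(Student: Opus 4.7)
The plan is to reduce the dual Pieri rule to the classical Pieri rule
\begin{equation}
s_\mu \cdot h_k \;=\; \sum_{\nu/\mu \text{ horizontal strip of size } k} s_\nu,
\end{equation}
and then invoke the $\omega$-involution of Proposition \ref{prop:omega_involution}. First I would establish the classical version combinatorially: SSYTs of shape $\nu$ over $\{1,\ldots,n\}$ are in weight-preserving bijection with pairs $(T, R)$, where $T$ is an SSYT of shape $\mu \subseteq \nu$ and $R$ is a weakly increasing filling of the skew shape $\nu/\mu$, precisely when $\nu/\mu$ is a horizontal strip of size $k$. The forward map peels off the topmost boxes of the SSYT of shape $\nu$; the strict-column rule of Definition \ref{def:ssyt} guarantees that no two peeled boxes share a column, so $\nu/\mu$ is automatically a horizontal strip. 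Since $h_k = \sum_{i_1 \leq \cdots \leq i_k} x_{i_1}\cdots x_{i_k}$ is exactly the generating function for weakly increasing fillings of a row of length $k$, summing monomial weights on both sides of the bijection yields the classical Pieri identity.

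Next I would apply $\omega$ to the classical identity written for the conjugate partition $\tilde\lambda$. Using $\omega(s_\nu) = s_{\tilde\nu}$ and $\omega(h_k) = e_k$, both established alongside Proposition \ref{prop:omega_involution}, one obtains
\begin{equation}
s_\lambda \cdot e_k \;=\; \sum_{\nu/\tilde\lambda \text{ horizontal strip of size } k} s_{\tilde\nu}.
\end{equation}
Re-indexing $\mu = \tilde\nu$ and noting that conjugation of skew shapes satisfies $\widetilde{\nu/\tilde\lambda} = \tilde\nu/\lambda$ and interchanges horizontal and vertical strips (``at most one box per column'' becomes ``at most one box per row''), the right-hand sum becomes $\sum_\mu s_\mu$ where $\mu/\lambda$ is a vertical strip of size $k$. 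This is exactly the dual Pieri rule: $\mu$ is obtained from $\lambda$ by adding $k$ boxes in total, with at most one box added to each row.

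The main obstacle is the combinatorial step itself: one must carefully verify the SSYT bijection is well-defined and invertible, which hinges on the interplay between the weakly increasing row condition and the strictly increasing column condition of Definition \ref{def:ssyt} — in particular, that the ``peeled'' boxes on top of $T$ form a legitimate horizontal strip rather than stacking in the same column. A determinantal alternative using the Jacobi-Trudi identity $s_\lambda = \det(h_{\lambda_i - i + j})$, multiplying the determinant by $h_k$ and performing column operations together with the $E(t) H(-t) = 1$ identity from the proof of Proposition \ref{prop:omega_involution}, is also available but entails heavier bookkeeping. Either route suffices, after which the $\omega$-involution delivers the dual statement for free.
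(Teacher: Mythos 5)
Your overall strategy (prove the classical Pieri rule, then conjugate with $\omega$) is legitimate in principle — in fact the paper runs the same $\omega$ argument in the opposite direction: it proves the dual rule \emph{directly} by multiplying the alternant $a_{\lambda+\delta}$ by $e_k$, observing that the strictly increasing indices of $e_k$ shift the exponents $\lambda_i+n-i$ by $0$ or $1$ in distinct rows so that the surviving alternants $a_{\mu+\delta}$ are exactly those with $\mu/\lambda$ a vertical strip, and only afterwards deduces the classical ($h_k$) Pieri rule by applying $\omega$. The genuine gap in your proposal is the combinatorial step you rest everything on: the ``peeling'' map is not a bijection, so summing monomial weights over it does not prove $s_\mu\cdot h_k=\sum_\nu s_\nu$. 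Concretely, take $\mu=(1)$, $k=1$, two variables. The product $s_{(1)}h_1$ ranges over the four pairs $(T,w)$ with $T\in\{1,2\}$ and $w\in\{1,2\}$, but the pair $(T=2,\,w=1)$ has no preimage under peeling (placing the letter $1$ in the added box, whether to the right of or below the $2$, violates the SSYT conditions), while the pair $(T=1,\,w=2)$ arises from two different tableaux (of shapes $(2)$ and $(1,1)$) once the strip shape is forgotten. The true weight-preserving bijection between pairs $(T,w)$ and $\bigsqcup_\nu \mathrm{SSYT}(\nu)$ is RSK row insertion of the weakly increasing word $w$ into $T$: the inserted letters bump existing entries, so the boxes of $\nu/\mu$ do not simply carry the letters of $w$, and verifying that insertion of a weakly increasing word adds a horizontal strip is a substantive lemma your sketch does not supply.

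To close the gap you have two clean options. Either carry out the RSK insertion argument in full (well-definedness, invertibility via reverse insertion from right to left along the strip, and the horizontal-strip property), or fall back on your determinantal alternative — which is essentially the paper's proof and is much lighter than you suggest: one multiplies $a_{\lambda+\delta}$ (equivalently the Jacobi--Trudi determinant) by $e_k$ and reads off the sum of alternants directly, no column operations or $E(t)H(-t)=1$ bookkeeping needed. If you keep the $\omega$-based plan, also note two cautions: you must prove the classical Pieri rule independently (citing the paper's version would be circular, since the paper derives it \emph{from} the dual rule), and the identity $\omega(s_\nu)=s_{\tilde\nu}$ in finitely many variables needs the number of variables to be large enough relative to the partitions involved, a caveat the paper flags in its remark on Pieri's rules.
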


\begin{proof}

\noindent Following Macdonald~\cite{MacdonaldSymPoly} (p.\ 73), there is a bijection between the Schur polynomial and the alternant polynomial. This implies that multiplying the Schur function $s_\lambda$ by the elementary symmetric function $e_k$ yields a sum of Schur functions $s_\mu$ for which $\mu/\lambda$ forms a vertical strip of length $k$ with no two boxes in the same row. We use the determinantal identity multiplied by $e_k$:

\begin{equation}
\begin{split}
a_{\lambda+\delta}\,e_r \;=\;& 
\sum_{\sigma \in S_n} \mathrm{sgn}(\sigma)\,\prod_{i=1}^n x_i^{\lambda_i + n - \sigma(i)}
\;\sum_{1 \leq i_1 < i_2 < \cdots < i_r \leq n} x_{i_1} x_{i_2} \cdots x_{i_r} \\
=\;& \sum_{\alpha} \sum_{\sigma \in S_n} \mathrm{sgn}(\sigma)\,\prod_{i=1}^n x_i^{\lambda_i + \alpha_i + n - \sigma(i)} 
\;=\; \sum_{\mu} a_{\mu+\delta}.
\end{split}
\end{equation}

\noindent Applying the determinantal identity then gives

\begin{equation}
s_\lambda \cdot e_k 
= \frac{a_{\lambda+\delta}}{a_{\delta}} \, e_k
= \sum_{\mu} \frac{a_{\mu+\delta}}{a_{\delta}}
= \sum_{\mu} s_\mu.
\end{equation}

\noindent Since $e_k$ requires strictly increasing indices, the added boxes cannot occupy the same row, thus forming a vertical strip in the Young diagram. This completes the proof of the dual Pieri's rule.
\end{proof}

  \begin{example}
      Consider the complete elementary product of schur functions $s_{(2,1)} \cdot e_{(2)}$. The product is given graphically by the following diagram:
      
    \begin{equation}
      \begin{ytableau}
      \, & \, \\
      \,
      \end{ytableau}
      \otimes
      \begin{ytableau}
      \, \\
      \,
      \end{ytableau} \quad = \quad
      \begin{ytableau}
        \, & \, \\
        \, \\
        \star \\
        \star
      \end{ytableau}
      \oplus 
      \begin{ytableau}
        \, & \, \\
        \, & \star \\
        \star
      \end{ytableau}
      \oplus 
      \begin{ytableau}
      \, & \, & \star \\
      \, & \star
      \end{ytableau}
      \oplus 
      \begin{ytableau}
      \, & \, & \star \\
      \, \\
      \star
      \end{ytableau}
    \end{equation}
  
    \noindent Where the boxes denoted $\star$ are the vertical strip of length 2.
    \label{ex:dual_pieri}
  \end{example}

  \begin{theorem}[Pieri's Rule]
  Let $\lambda$ be a partition and $k$ a positive integer. The product of the Schur polynomial $s_\lambda$ with the complete homogeneous symmetric polynomial $h_k$ is given by
  \[
  s_\lambda \cdot h_k = \sum_{\mu} s_\mu,
  \]
  where $\mu/\lambda$ is a horizontal strip of length $k$ in which no two boxes lie in the same column.
  \end{theorem}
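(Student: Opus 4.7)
The plan is to deduce Pieri's rule from the dual Pieri rule (Theorem \ref{thm:dualpieri}) by applying the $\omega$-involution. The key observation is that $\omega$ is a ring homomorphism on $\Lambda_n$ (it is induced by the involution swapping the algebraically independent generators $e_k \leftrightarrow h_k$), so it converts products into products, and it acts on the Schur basis by transposition of Young diagrams: $\omega(s_\lambda) = s_{\tilde\lambda}$. Geometrically, transposition swaps rows and columns, so a vertical strip in the dual setting becomes a horizontal strip after applying $\omega$.

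Concretely, I would proceed as follows. First, apply the dual Pieri rule to the conjugate partition $\tilde\lambda$:
\[
s_{\tilde\lambda}\cdot e_k \;=\; \sum_{\nu} s_\nu,
\]
where the sum runs over all partitions $\nu$ such that $\nu/\tilde\lambda$ is a vertical strip of length $k$ (i.e., $k$ boxes added to $\tilde\lambda$, no two in the same row). Second, apply $\omega$ to both sides. On the left, $\omega$ is multiplicative, so
\[
\omega\bigl(s_{\tilde\lambda}\cdot e_k\bigr) \;=\; \omega(s_{\tilde\lambda})\cdot\omega(e_k) \;=\; s_\lambda\cdot h_k.
\]
On the right, $\omega(s_\nu) = s_{\tilde\nu}$, so the equation becomes
\[
s_\lambda\cdot h_k \;=\; \sum_{\nu} s_{\tilde\nu} \;=\; \sum_{\mu} s_\mu,
\]
where in the last sum I have re-indexed by $\mu = \tilde\nu$.

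The final step is to verify that the reindexing gives exactly the claimed horizontal-strip condition. Since conjugation interchanges rows and columns of a Young diagram, the condition that $\nu/\tilde\lambda$ is a vertical strip of length $k$ (no two added boxes in the same row of $\nu$) is equivalent, after taking transposes, to the condition that $\mu/\lambda = \tilde\nu/\tilde{\tilde\lambda}$ is a skew shape with $k$ boxes, no two of which lie in the same column: a horizontal strip of length $k$. Putting these pieces together yields Pieri's rule.

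The only subtle point, and hence the main obstacle, is justifying cleanly that $\omega$ is a well-defined ring involution on $\Lambda_n$ that acts on Schur functions by transposition; but this follows at once from the fundamental theorem of symmetric polynomials (so that $\omega$ is determined by $e_k\leftrightarrow h_k$) together with the corollary of the Jacobi--Trudy identity already noted in the excerpt ($\omega(s_\lambda)=s_{\tilde\lambda}$). With those facts in hand, the proof is essentially a one-line transport of the dual Pieri rule through $\omega$.
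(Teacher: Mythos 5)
Your proposal is correct and follows essentially the same route as the paper: apply the $\omega$-involution to the dual Pieri rule, use multiplicativity of $\omega$ and the fact $\omega(s_\lambda)=s_{\tilde\lambda}$ from the Jacobi--Trudy identity, and note that conjugation turns vertical strips into horizontal strips. Your write-up is in fact slightly more careful than the paper's (starting from $\tilde\lambda$ so the conclusion lands on $\lambda$ directly), but the argument is the same.
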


  \begin{proof}
  The proof of Pieri's rule parallels that of the dual Pieri's rule. It can be achieved using the $\omega$- involution acting on the dual Pieri's rule $s_\lambda \cdot e_k = \sum_{\mu} s_\mu$:

  \begin{equation}
    \begin{split}
      \omega(s_\lambda \cdot e_k) &= \omega(\sum_{\mu} s_\mu) = \sum_{\mu} \omega(s_\mu) = \sum_{\mu} s_{\tilde{\mu}}\\
      &= s_{\tilde{\lambda}} \cdot h_k = \sum_{\tilde{\mu}} s_{\tilde{\mu}}
    \end{split}
  \end{equation}
  
  \noindent Consequently, the partitions $\lambda + \alpha = \mu$ have the condition that no two boxes are added in the same column, as represented by a Young diagram. This condition characterises the horizontal strip of length $r$, thereby completing the proof of Pieri's rule.
  \end{proof}
    
      \begin{example}
          Consider the complete homogeneous product of schur functions $s_{(2,1)} \cdot h_{(2)}$. The product is given graphically by the following diagram:
          
        \begin{equation}
          \begin{ytableau}
          \, & \, \\
          \,
          \end{ytableau}
          \otimes
          \begin{ytableau}
          \, & \,
          \end{ytableau} \quad = \quad 
          \begin{ytableau}
          \, & \, & \star  & \star    \\
          \, 
          \end{ytableau}
          \oplus
          \begin{ytableau}
          \, & \, & \star  \\
          \, & \star 
          \end{ytableau}
          \oplus
          \begin{ytableau}
          \, & \, \\
          \, & \star  \\
          \star  
          \end{ytableau}
          \oplus
          \begin{ytableau}
          \, & \, & \star \\
          \, \\
          \star 
          \end{ytableau}
        \end{equation}
      
        \noindent Where the boxes denoted $\star$ are the added boxes.
      \end{example}
    
    Although very simple, Pieri's rule will be fundamental in deriving the branching rules underpin the Paldus transform. More complicated products of two arbitrary Schur polynomials can be found by recursively applying Pieri's rule row by row, this can result in multiplicities of greater than one. This is equivalent to the Littlewood-Richardson rule for the multiplication of Schur functions.
    \begin{remark}
    Pieri's rules are valid for any finite number of variables or infinitely many, provided that the number of variables exceeds the length of the partition $\ell(\lambda)$. 
    \end{remark}

   Another important basis of the symmetric polynomial are the power sum polynomials, they will be essential in deriving Schur-Weyl duality.

\begin{definition}[Power Sum Symmetric Polynomial] The power sum symmetric polynomial $p_k(x_1, x_2, \ldots, x_n)$ in $n$ variables is defined as the sum of the $k^\text{th}$ powers of the variables. Formally, it is given by:

\begin{equation}
  p_k(x_1, x_2, \ldots, x_n) = \sum_{i=1}^n x_i^k.
\end{equation}
\end{definition}

\noindent They form a generating set for the ring of symmetric polynomials. The Schur functions can be expressed in terms of the power sum symmetric polynomials via the following relation involving the characters of the symmetric group:

\begin{equation}
  s_\lambda(x) = \sum_\mu \frac{\chi^\lambda_\mu}{z_\mu} p_\mu(x),
\label{eqn:schur_power_sum}
\end{equation}

\noindent where $z_\mu = \prod_{i=1}^n i^{m_i} m_i!$ is the size of the conjugacy class of the symmetric group $S_n$ corresponding to the partition $\mu$, and $m_i$ is the number of parts of size $i$ in the partition $\mu$. This can be rearranged to express the power sum symmetric polynomials in terms of the Schur functions:

\begin{equation}
  p_\mu(x) = \sum_\lambda \chi^\lambda_\mu s_\lambda(x).
\label{eqn:power_sum_schur}
\end{equation}

The next section will proceed to derive the Cauchy identities for the product of two conjugate Schur functions $\lambda$ and $\tilde{\lambda}$. This is the essential identity which leads to the structure of the Paldus transform.
x
\begin{lemma} 
  \[
  \prod_{i,j} \frac{1}{1 - x_i y_j} = \sum_\lambda h_\lambda(x) m_\lambda(y) = \sum_\lambda m_\lambda(x) h_\lambda(y)
  \]
\label{lem:cauchy_help}
\end{lemma}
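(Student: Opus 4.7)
The plan is to start from the one-variable generating-function identity
\[
H(t) \;=\; \prod_{i} \frac{1}{1 - x_i t} \;=\; \sum_{k \geq 0} h_k(x)\, t^k,
\]
which was established immediately after the definition of the complete symmetric polynomials. The key move is to specialise $t = y_j$ and then take the product over $j$, turning the doubly-indexed product on the left of the lemma into a product of generating series, one per $y_j$.

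After that substitution I would expand
\[
\prod_{i,j} \frac{1}{1 - x_i y_j} \;=\; \prod_{j} \Bigl( \sum_{k_j \geq 0} h_{k_j}(x)\, y_j^{k_j} \Bigr) \;=\; \sum_{\alpha} h_{\alpha}(x)\, y^{\alpha},
\]
where the sum runs over all finitely-supported sequences $\alpha = (\alpha_1, \alpha_2, \ldots)$ of nonnegative integers, $h_{\alpha}(x) = \prod_{j} h_{\alpha_j}(x)$, and $y^{\alpha} = \prod_{j} y_j^{\alpha_j}$. The next step is to group terms by the partition $\lambda$ obtained from sorting $\alpha$ into weakly decreasing order. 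Because $h_{\alpha}(x)$ depends only on the multiset of exponents, it factors out of each orbit, giving
\[
\sum_{\alpha} h_{\alpha}(x)\, y^{\alpha} \;=\; \sum_{\lambda} h_{\lambda}(x) \sum_{\alpha \,\sim\, \lambda} y^{\alpha} \;=\; \sum_{\lambda} h_{\lambda}(x)\, m_{\lambda}(y),
\]
by the very definition of $m_\lambda(y)$. This establishes the first equality.

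The second equality is immediate from the manifest $x \leftrightarrow y$ symmetry of the left-hand side: performing the same expansion with the roles of $x$ and $y$ swapped yields $\sum_{\lambda} m_{\lambda}(x)\, h_{\lambda}(y)$, and both expansions equal the same generating function. I do not expect a genuine obstacle here; the only mildly delicate point is the bookkeeping between unordered sequences $\alpha$ and their underlying partitions $\lambda$, which is handled precisely by the orbit decomposition that defines $m_\lambda$. No convergence issues arise since we are working in the ring of formal power series in the $x_i$ and $y_j$.
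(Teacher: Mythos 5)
Your proposal is correct and follows essentially the same route as the paper: expand each factor via $H(y_j)=\sum_r h_r(x)y_j^r$, collect the resulting tuples of exponents, and identify the grouped monomials with $m_\lambda(y)$, with the second equality by the $x\leftrightarrow y$ symmetry of the product. Your explicit grouping of exponent tuples $\alpha$ by their underlying partition $\lambda$ is in fact slightly more careful than the paper's shorthand $\sum_\alpha h_\alpha(x)m_\alpha(y)$, but the argument is the same.
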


\begin{proof}
  By fixing $y_i$, we can form a product of generating functions:
  \begin{equation}
    \begin{split}
      \prod_{i,j} \frac{1}{1 - x_i y_j}  &= \prod_{j} H(y_j)\\
      &= \prod_{j} \sum_{r=0}^\infty h_r(x) y_j^r\\
    \end{split}
  \end{equation}
  
  \noindent Then, combining the products into $\alpha$ tuples by $h_{\alpha_1} h_{\alpha_2} \ldots h_{\alpha_n} = h_\alpha$ and $y_1^{\alpha_1} y_2^{\alpha_2} \ldots y_n^{\alpha_n} = y^\alpha$:
  
  \begin{equation}
    \begin{split}
      \prod_{j} \sum_{r=0}^\infty h_r(x) y_j^r &= \sum_{\alpha} h_\alpha(x) \prod_{j} y_j^{\alpha_j}\\
      &= \sum_{\alpha} h_\alpha(x) m_\alpha(y)
    \end{split}
  \end{equation}

  \noindent It can clearly be seen that the variables can be swapped to give the other form of the product.

\end{proof}

\subsection{Cauchy Identities}

Finally we arrive at the Cauchy identities, these will be essential in the recursive decomposition of the fermionic Fock space under $U(m) \otimes U(n)$, which the Paldus transform is built upon.

\begin{theorem}[Cauchy Identity] For two sets of variables $x = (x_1, x_2, \ldots, x_n)$ and $y = (y_1, y_2, \ldots, y_n)$, the Cauchy identity states that the generating functions of the power sum symmetric polynomials are related to the Schur functions by:

\[
  \prod_{i,j} \frac{1}{1 - x_i y_j} = \sum_\lambda s_\lambda(x) s_\lambda(y)
\]

\end{theorem}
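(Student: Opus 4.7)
The plan is to write the Cauchy kernel in the power sum basis and collapse the resulting double sum using the character orthogonality of $S_n$, relying on Equation~(\ref{eqn:power_sum_schur}) already established above.

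First I would establish the power-sum expansion
\[
\prod_{i,j} \frac{1}{1 - x_i y_j} \;=\; \sum_\mu \frac{p_\mu(x)\,p_\mu(y)}{z_\mu}.
\]
This follows by taking logarithms and using the Mercator series: $-\log(1-x_i y_j) = \sum_{n\ge 1}(x_i y_j)^n/n$, so that $\log\prod_{i,j}(1-x_iy_j)^{-1} = \sum_{n\ge 1} p_n(x)p_n(y)/n$. Exponentiating and grouping monomials $\prod_n p_n(x)^{m_n} p_n(y)^{m_n}$ by the partition $\mu$ whose multiplicities are $(m_n)$, the product of factorial denominators from the Taylor expansion assembles into exactly $z_\mu = \prod_i i^{m_i} m_i!$.

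Next I would substitute Equation~(\ref{eqn:power_sum_schur}) in both sets of variables and interchange sums:
\[
\sum_\mu \frac{p_\mu(x)\,p_\mu(y)}{z_\mu}
\;=\; \sum_\mu \frac{1}{z_\mu}\Big(\sum_\lambda \chi^\lambda_\mu\, s_\lambda(x)\Big)\Big(\sum_\nu \chi^\nu_\mu\, s_\nu(y)\Big)
\;=\; \sum_{\lambda,\nu} s_\lambda(x)\, s_\nu(y) \sum_\mu \frac{\chi^\lambda_\mu\, \chi^\nu_\mu}{z_\mu}.
\]
The first orthogonality relation for the symmetric group, $\sum_\mu \chi^\lambda_\mu \chi^\nu_\mu / z_\mu = \delta_{\lambda\nu}$ (recalling $z_\mu = |S_n|/|C_\mu|$ is the centralizer order), then collapses the double sum to $\sum_\lambda s_\lambda(x)\, s_\lambda(y)$, which is the claimed identity.

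The main obstacle is appealing to $S_n$ character orthogonality, which has not been proved in the present appendix. If one prefers a more self-contained route that uses only material already derived here, an alternative plan is to start from Lemma~\ref{lem:cauchy_help}, which gives $\prod_{i,j}(1-x_iy_j)^{-1} = \sum_\lambda h_\lambda(x)\, m_\lambda(y)$, and substitute the Kostka-number transition matrix $s_\lambda = \sum_\mu K_{\lambda\mu}\, m_\mu$ (immediate from Definition~\ref{def:schur_polynomial}, where $K_{\lambda\mu}$ counts SSYTs of shape $\lambda$ and content $\mu$) together with its dual $h_\mu = \sum_\lambda K_{\lambda\mu}\, s_\lambda$ (which would be proved either via the Jacobi--Trudi identity or by an RSK-type bijective argument). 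The Kostka matrix being upper-unitriangular with respect to dominance order, its product with its transpose rearranges $\sum_\lambda h_\lambda(x) m_\lambda(y)$ into $\sum_\lambda s_\lambda(x) s_\lambda(y)$. Either way, the crux is exhibiting a pair of dual bases of $\Lambda$ with respect to the Hall pairing $\langle h_\lambda, m_\mu\rangle = \delta_{\lambda\mu}$ under which the Schur functions are self-dual.
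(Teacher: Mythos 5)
Your proof is correct, but it takes a genuinely different route from the paper. The paper stays entirely inside the determinantal machinery of the appendix: it multiplies the kernel by the alternants $a_\delta(x)a_\delta(y)$, expands via Lemma~\ref{lem:cauchy_help}, reindexes so that the inner signed sum over $S_n$ becomes the Jacobi--Trudi expression $a_\beta(x)=a_\delta(x)\sum_\sigma\mathrm{sgn}(\sigma)\prod_i h_{\beta_i-n+\sigma(i)}$ already derived in \eqref{eq:jt_derivation_identity}, and then antisymmetrizes in $y$ to land on $\sum_\lambda a_{\lambda+\delta}(x)\,a_{\lambda+\delta}(y)$, after which dividing by the alternants gives the identity. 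Your primary route instead expands the kernel in power sums, $\prod_{i,j}(1-x_iy_j)^{-1}=\sum_\mu p_\mu(x)p_\mu(y)/z_\mu$ (your logarithmic derivation of this is fine), substitutes the Frobenius expansion \eqref{eqn:power_sum_schur} in both variable sets, and collapses the double sum with the first orthogonality relation for $S_n$ characters. That is a standard and conceptually attractive proof --- it exhibits the kernel as the reproducing kernel of the Hall pairing and makes the self-duality of the Schur basis transparent --- but, as you yourself note, it imports two ingredients the appendix only asserts or omits: Equation~\eqref{eqn:power_sum_schur} is stated without proof, and $S_n$ character orthogonality is not developed at all, whereas the paper's alternant proof needs neither. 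Your fallback via Lemma~\ref{lem:cauchy_help} and Kostka numbers is closer in spirit to the self-contained goal; note, though, that once you have both $s_\lambda=\sum_\mu K_{\lambda\mu}m_\mu$ and $h_\mu=\sum_\lambda K_{\lambda\mu}s_\lambda$, direct substitution into $\sum_\mu h_\mu(x)m_\mu(y)$ already yields $\sum_\lambda s_\lambda(x)s_\lambda(y)$ --- no appeal to unitriangularity of the Kostka matrix is needed --- and the real work is the dual expansion of $h_\mu$, which itself requires an RSK- or Pieri-type argument comparable in effort to the paper's alternant computation.
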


\begin{proof} Using the theory of alternants, we define

\begin{equation}
\begin{split}
a_\delta(x)a_\delta(y)\prod_{i,j}\frac{1}{1 - x_i y_j} &= a_\delta(x)a_\delta(y)\sum_{\alpha} h_\alpha(x) \prod_{i} y_i^{\alpha_i} \\
&= a_{\delta}(x) \sum_{\sigma \in S_n} \text{sgn}(\sigma) \prod_{i=1}^n y_i^{n - \sigma(i)} \sum_{\alpha} h_\alpha(x) \prod_{i} y_i^{\alpha_i}\\
&= a_\delta(x) \sum_{\alpha} \sum_{\sigma \in S_n}   \text{sgn}(\sigma) \prod_{i=1}^n  h_{\alpha_i}(x) y_i^{\alpha_i + n - \sigma(i)} 
\end{split}
\end{equation}

\noindent Now, $\delta = (n-1, n-2, \ldots, 1, 0)$ is the staircase partition of $n$ parts, which in the Leibniz determinant form is analogous to $\prod_{i} (n - \sigma(i))$. Changing the variables to $\alpha_i + n - \sigma(i) = \beta_i$:

\begin{equation}
  \begin{split}
    a_\delta(x) \sum_{\alpha} \sum_{\sigma \in S_n}   \text{sgn}(\sigma) \prod_{i=1}^n  h_{\beta_i - n + \sigma(i)}(x) y_i^{\beta_i} 
    &= \sum_\beta a_\beta (x) y^\beta
  \end{split}
\end{equation}

\noindent Where we have used $a_\beta = a_\delta \sum_{\sigma \in S_n} \text{sgn}(\sigma) \prod_{i=1}^n h_{\beta_i - n + \sigma(i)}(x)$. Importantly, $y_i^{\beta_i}$ is being permuted for each $\beta$ due to the permutation with the index $\alpha_i + n - \sigma(i) = \beta_i$. Additionally, the action $\sum_{\sigma \in S_n} \text{sgn}(\sigma) \prod_{i=1}^n  y_i^{\beta_i} = a_\beta (y)$, where $\delta + \lambda = \beta$ and $\beta$ is strictly decreasing. This gives the Schur function in terms of the power sum symmetric polynomials:

\[
\sum_\beta a_\beta (x) y^\beta = \sum_\lambda a_{\delta + \lambda}(x) a_{\delta + \lambda}(y)
\]

\[
 a_\delta(x) a_\delta(y) \prod_{i,j}\frac{1}{1 - x_i y_j} = \sum_\beta a_\beta (x) y^\beta
\]

\noindent and therefore gives the Cauchy identity:

\[
  \prod_{i,j}\frac{1}{1 - x_i y_j} = \sum_\lambda s_\lambda(x) s_\lambda(y)
\]

\end{proof}

\begin{theorem}[Dual Cauchy Identity] For two sets of variables $x = (x_1, x_2, \ldots, x_n)$ and $y = (y_1, y_2, \ldots, y_n)$, the dual Cauchy identity states that the generating functions of the power sum symmetric polynomials are related to the Schur functions by:

\[
  \prod_{i,j}(1 + x_i y_j) = \sum_\lambda s_\lambda(x) s_{\tilde{\lambda}}(y)
\]
\label{thm:dualcauchy}
\end{theorem}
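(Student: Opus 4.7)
My plan is to derive the dual Cauchy identity directly from the Cauchy identity (just proved in the excerpt) by applying the $\omega$-involution from Proposition~\ref{prop:omega_involution} to only one set of variables, say $x$. The key observation is that the Cauchy kernel factorises as $\prod_{i,j}\frac{1}{1-x_iy_j} = \prod_j H(y_j;x)$, where $H(t;x) = \sum_{k\geq 0} h_k(x)\,t^k$ is the complete-symmetric generating function in the $x$-variables. Since $\omega(h_k) = e_k$, we have $\omega_x H(t;x) = E(t;x) = \prod_i(1+x_it)$, and therefore
\[
\omega_x\Bigl(\prod_{i,j}\tfrac{1}{1-x_iy_j}\Bigr) \;=\; \prod_j E(y_j;x) \;=\; \prod_{i,j}(1+x_iy_j).
\]

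Next, I would apply $\omega_x$ to the right-hand side of the Cauchy identity. By the corollary following the Jacobi-Trudy identity, $\omega(s_\lambda) = s_{\tilde\lambda}$, so
\[
\omega_x\Bigl(\sum_\lambda s_\lambda(x)\,s_\lambda(y)\Bigr) \;=\; \sum_\lambda s_{\tilde\lambda}(x)\,s_\lambda(y) \;=\; \sum_\mu s_\mu(x)\,s_{\tilde\mu}(y),
\]
where in the last step I reindex by $\mu = \tilde\lambda$, using that partition conjugation is an involution on the set of partitions. Equating the two images under $\omega_x$ gives exactly the stated identity.

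The main technical subtlety will be justifying that $\omega_x$ may be applied termwise to the formal power series $\prod_{i,j}\frac{1}{1-x_iy_j}$. The clean way is to group terms by total $x$-degree: each homogeneous component is a finite sum of symmetric polynomials in the $x_i$ multiplied by polynomials in the $y_j$, and on such finite pieces $\omega_x$ is a well-defined ring involution of $\Lambda_n[y_1,\ldots,y_n]$, so the equality propagates to the full generating series. Should this formal manipulation feel unsatisfying, a direct alternative would mirror the alternant-based derivation of the Cauchy identity already presented: starting from the analogue of Lemma~\ref{lem:cauchy_help}, namely $\prod_{i,j}(1+x_iy_j) = \sum_\lambda e_\lambda(x)\,m_\lambda(y)$, one substitutes the dual Jacobi-Trudy expression $s_\lambda = \det(e_{\tilde\lambda_i - i + j})$ and repeats the alternant computation line-by-line, arriving at $\sum_\lambda s_\lambda(x)\,s_{\tilde\lambda}(y)$.
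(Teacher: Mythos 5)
Your primary route---applying $\omega$ in the $x$-variables to the already-proved Cauchy identity---is the standard involution argument, and it is genuinely different in organisation from the paper's proof, which never invokes the Cauchy identity as a black box: the paper expands $\prod_{i,j}(1+x_iy_j)=\prod_j E(y_j)$ directly, multiplies by the alternants $a_\delta(x)a_\delta(y)$, and uses the dual Jacobi-Trudi determinant $\det(e_{\lambda_i-i+j})=\det(h_{\tilde{\lambda}_i-i+j})$ to make the conjugate partition appear. Your ``direct alternative'' in the final sentence is essentially that proof, so the fallback is fine.

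The genuine gap is in your justification of the termwise application of $\omega_x$ with finitely many variables. The map $\omega_n:\Lambda_n\to\Lambda_n$ determined by $e_i\mapsto h_i$ for $1\le i\le n$ is indeed a ring involution, but it satisfies $\omega_n(h_k)=e_k$ only for $k\le n$: for $k>n$ one has $e_k=0$ in $\Lambda_n$ while $\omega_n(h_k)\neq 0$. Consequently $\omega_x$ does not send $H(t;x)$ to $E(t;x)$ in $\Lambda_n[[t]]$, and $\omega_n(s_\lambda)\neq s_{\tilde{\lambda}}$ whenever $\lambda_1>n$ (there $s_{\tilde{\lambda}}(x_1,\dots,x_n)=0$, yet $\omega_n(s_\lambda)\neq 0$ because $\omega_n$ is an automorphism); such $\lambda$ occur in the Cauchy sum, since only $\ell(\lambda)\le n$ is constrained. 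A one-variable check makes this concrete: for $n=1$, $\omega_1(x_1^k)=\omega_1(e_1^k)=h_1^k=x_1^k$, so applying $\omega_x$ termwise to $\frac{1}{1-x_1y_1}=\sum_k x_1^ky_1^k$ returns the same series, not $1+x_1y_1$. So grading by $x$-degree inside $\Lambda_n[y_1,\dots,y_n]$ does not rescue the step---the involution of $\Lambda_n$ is simply not the one you need beyond degree $n$. The standard repair is to run the involution argument in the ring of symmetric functions in infinitely many $x$-variables (or with at least as many $x$-variables as the degree of the coefficient of each fixed monomial $y^\alpha$ being compared), where $\omega(h_k)=e_k$ and $\omega(s_\lambda)=s_{\tilde{\lambda}}$ hold unrestrictedly, and only afterwards specialise $x_{n+1}=x_{n+2}=\cdots=0$; since both sides of the resulting identity are polynomials in finitely many variables, the specialisation is harmless. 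With that amendment your argument goes through; alternatively, your fallback via $\prod_{i,j}(1+x_iy_j)=\sum_\lambda e_\lambda(x)m_\lambda(y)$ and the dual Jacobi-Trudi identity avoids the issue entirely and coincides with the paper's derivation.
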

\begin{proof} By fixing $y_i$ and applying an $\omega$ involution we can form a product of generating functions, then using an involution

  \begin{equation}
    \begin{split}
      \prod_{i,j} (1 + x_i y_j)  &= \prod_{j} E(y_j)\\
      &= \prod_{j} \sum_{r=0}^N \omega(h_r(x) y_j^r)\\
    \end{split}
  \end{equation}

  \noindent where $\omega(h_r(x) y_j^r) = \omega(h_r(x)) y_j^r$.

  \begin{equation}
    \begin{split}
    a_\delta(x)a_\delta(y)\prod_{i,j} (1 + x_i y_j) &= a_\delta(x)a_\delta(y)\sum_{\alpha} \omega(h_\alpha(x) \prod_{i} y_i^{\alpha_i}) \\
    &= a_{\delta}(x) \sum_{\sigma \in S_n} \text{sgn}(\sigma) \prod_{i=1}^n y_i^{n - \sigma(i)} \sum_{\alpha} \omega(h_\alpha(x) \prod_{i} y_i^{\alpha_i})\\
    &= a_\delta(x) \sum_{\alpha} \sum_{\sigma \in S_n}   \text{sgn}(\sigma) \prod_{i=1}^n  \omega(h_{\alpha_i}(x) y_i^{\alpha_i + n - \sigma(i)}) 
    \end{split}
    \end{equation}

    \noindent which leads to

  \begin{equation}
    \begin{split}
      a_\delta(x) \sum_{\alpha} \sum_{\sigma \in S_n}   \text{sgn}(\sigma) \prod_{i=1}^n  \omega(h_{\beta_i - n + \sigma(i)}(x) y_i^{\beta_i})
      &= a_\delta(x) \sum_{\alpha} \sum_{\sigma \in S_n}   \text{sgn}(\sigma) \prod_{i=1}^n  e_{\beta_i - n + \sigma(i)}(x) y_i^{\beta_i} \\
      &= a_\delta(x) \sum_{\alpha} \sum_{\sigma \in S_n}   \text{sgn}(\sigma) \prod_{i=1}^n  h_{\tilde{\beta}_i - n + \sigma(i)}(x) y_i^{\beta_i} 
    \end{split}
  \end{equation}

  \noindent Where $\beta'$ is the conjugate partition of $\beta$. This then leads to the dual Cauchy identity:

  \[
    \prod_{i,j}(1 + x_i y_j) = \sum_\lambda s_{\tilde{\lambda}}(x) s_{\lambda}(y)
  \]

\end{proof}

The recursive structure of the dual Cauchy identity leads to the branching structure of the Paldus transform and is therefore the central tool of this work.

\section{Dualities}
\label{app:dualities}

Schur-Weyl duality and Howe duality (of which Paldus duality is an instance) are two important dualities in representation theory. They are both examples of the double centralizer theorem, which states that the centralizer of a subalgebra is itself a subalgebra. In this section, we will present Schur-Weyl duality and Howe duality, showing how they are examples of the double centralizer theorem. We will also present a proof of Schur-Weyl duality via the theory of characters.

\subsection{Double Centralizer Theorem}
\label{sec:double_centralizer}
The Double Centralizer Theorem is a fundamental result in representation theory that describes the relationship between two subalgebras of the endomorphism ring of a vector space. It is particularly useful in the context of Schur-Weyl duality and Howe duality, where it provides a framework for understanding the interplay between different symmetry groups acting on tensor products of vector spaces.

\begin{theorem}[Double Centralizer Theorem]
The set of all linear transformations from $V$ to itself forms the endomorphism ring $\mathrm{End}(V)$. If $A$ is a subalgebra of $\mathrm{End}(V)$, then its centralizer $C(A) = B$ is the set of all elements of $\mathrm{End}(V)$ that commute with every element of $A$:
\[
C(A) = B = \{X \in \mathrm{End}(V) \mid [X, A] = 0 \;\forall A \in A\}.
\]
The Double Centralizer Theorem states that for a subalgebra $A \subseteq \mathrm{End}(V)$, the centralizer of $B$ is $A$ itself:
\[
C(C(A)) = C(B) = A.
\]
\end{theorem}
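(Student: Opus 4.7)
The plan is to establish the theorem under the standard hypothesis that $A$ is a semisimple subalgebra of $\mathrm{End}(V)$, which is the setting relevant to the Schur--Weyl and Howe dualities in the sequel (there $A$ is a group algebra $\mathbb{C}[G]$, and semisimplicity follows from Maschke's theorem). One inclusion, $A \subseteq C(C(A))$, is trivial: every $a \in A$ commutes with every $b \in B = C(A)$ by the definition of $B$, hence $a \in C(B) = C(C(A))$. All the work is in the reverse inclusion $C(B) \subseteq A$.

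First I would decompose $V$ into its $A$-isotypic components,
\[
V \;\cong\; \bigoplus_\lambda V_\lambda \otimes M_\lambda,
\]
where $\{V_\lambda\}$ runs over the distinct simple $A$-modules appearing in $V$ and $M_\lambda = \mathrm{Hom}_A(V_\lambda, V)$ is the corresponding multiplicity space. Invoking Burnside's theorem on matrix algebras (equivalently, the Jacobson density theorem in finite dimensions) together with semisimplicity, the image of $A$ in $\mathrm{End}(V)$ is exactly
\[
\bigoplus_\lambda \mathrm{End}(V_\lambda) \otimes I_{M_\lambda}.
\]

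Next I would compute the two centralizers in succession by a double application of Schur's lemma. Any $X \in \mathrm{End}(V)$ commuting with $\mathrm{End}(V_\lambda) \otimes I_{M_\lambda}$ on each summand must preserve the isotypic decomposition and must act on the $V_\lambda$-factor as a scalar intertwiner; hence $X|_{V_\lambda \otimes M_\lambda} = I_{V_\lambda} \otimes \psi_\lambda$ for some $\psi_\lambda \in \mathrm{End}(M_\lambda)$. This yields
\[
B \;=\; C(A) \;=\; \bigoplus_\lambda I_{V_\lambda} \otimes \mathrm{End}(M_\lambda).
\]
Running the identical argument with the roles of $V_\lambda$ and $M_\lambda$ swapped then gives
\[
C(B) \;=\; \bigoplus_\lambda \mathrm{End}(V_\lambda) \otimes I_{M_\lambda} \;=\; A,
\]
which is the desired reverse inclusion.

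The main obstacle is the step identifying the image of $A$ with the full $\bigoplus_\lambda \mathrm{End}(V_\lambda) \otimes I_{M_\lambda}$; this is precisely where semisimplicity is indispensable. Without it the conclusion genuinely fails: for instance, taking $A$ to be the algebra of upper-triangular $2\times 2$ matrices in $M_2(\mathbb{C})$, one computes $C(A) = \mathbb{C}\cdot I_2$ and hence $C(C(A)) = M_2(\mathbb{C}) \supsetneq A$. In every application appearing later in this paper, however, $A$ is the group algebra of a finite or compact group acting on a finite-dimensional space, so semisimplicity is automatic and the theorem applies as stated, underwriting both Schur--Weyl and the Paldus duality of Section~\ref{sec:paldus_duality}.
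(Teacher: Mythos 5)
Your proof is correct, and it is worth pointing out that the paper itself offers no proof of this statement at all: the theorem is quoted and followed only by a remark gesturing at the isotypic decomposition and Schur's lemma. Your argument carries out precisely that standard route in full — the trivial inclusion $A \subseteq C(C(A))$, the isotypic decomposition $V \cong \bigoplus_\lambda V_\lambda \otimes M_\lambda$, Wedderburn/density to identify $A$ with $\bigoplus_\lambda \mathrm{End}(V_\lambda) \otimes I_{M_\lambda}$, and two applications of Schur's lemma giving $C(A) = \bigoplus_\lambda I_{V_\lambda} \otimes \mathrm{End}(M_\lambda)$ and then $C(B) = A$. Two small steps deserve to be made explicit: the identification of $A$ with the full sum of matrix blocks uses that $V$ is automatically a faithful $A$-module (since $A$ sits inside $\mathrm{End}(V)$, no Wedderburn block can act by zero), and the claim that any $X \in C(A)$ preserves the isotypic components follows because the central idempotents projecting onto them already lie in $A$. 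Your most valuable addition is the semisimplicity hypothesis together with the counterexample: as literally stated for an arbitrary subalgebra, the theorem is false, and your example $A = \{\text{upper-triangular } 2\times 2 \text{ matrices}\} \subset M_2(\mathbb{C})$ with $C(A) = \mathbb{C}\,I_2$ and $C(C(A)) = M_2(\mathbb{C}) \supsetneq A$ shows this cleanly; since every application in the paper takes $A$ to be (the image of) a group algebra of a finite or compact group over $\mathbb{C}$, semisimplicity is automatic by Maschke/Weyl, so your strengthened hypothesis costs nothing for Schur--Weyl, Howe, or Paldus duality while making the statement actually true.
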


\begin{remark}
Suppose a vector space $V$ decomposes under a subalgebra $A$ of $\mathrm{End}(V)$ into a direct sum of irreducible subspaces:
\[
V = \bigoplus_{i} m^a_i\,V_i,
\]
where each $V_i$ is irreducible under $A$ and $m^a_i$ is its multiplicity. If there is another subalgebra $B$ of $\mathrm{End}(V)$,
\[
B = \{ X \in \mathrm{End}(V) \mid [X, A] = 0 \;\forall A \in A \},
\]
then $V$ similarly decomposes under $B$ as
\[
V = \bigoplus_{j} m^b_j\,V_j,
\]
where each $V_j$ is irreducible under $B$ and $m^b_j$ is its multiplicity. By the Double Centralizer Theorem,
\[
C(C(A)) = C(B) = A.
\]
Schur's Lemma then implies that any operator commuting with all elements of $A$ must preserve the isotypic decomposition of $V$ under $A$ and cannot mix its irreps outside the multiplicity.
\end{remark}

\subsection{Schur-Weyl Duality}
\label{app:schur_weyl_duality}

Schur-Weyl duality is a fundamental framework illustrating how symmetries govern the behavior of tensor products of single-particle Hilbert spaces. It can be viewed as an instance of the double centralizer theorem, and we will first outline its general principles and then derive it using characters. A helpful perspective emerges from interpreting Schur-Weyl duality as a tensor network: when a tensor product wavefunction is acted upon by the same unitary on each index and then subjected to a permutation of indices, these operations commute.

\begin{figure}[!htbp]
  \centering
  \includegraphics[width=0.9\textwidth]{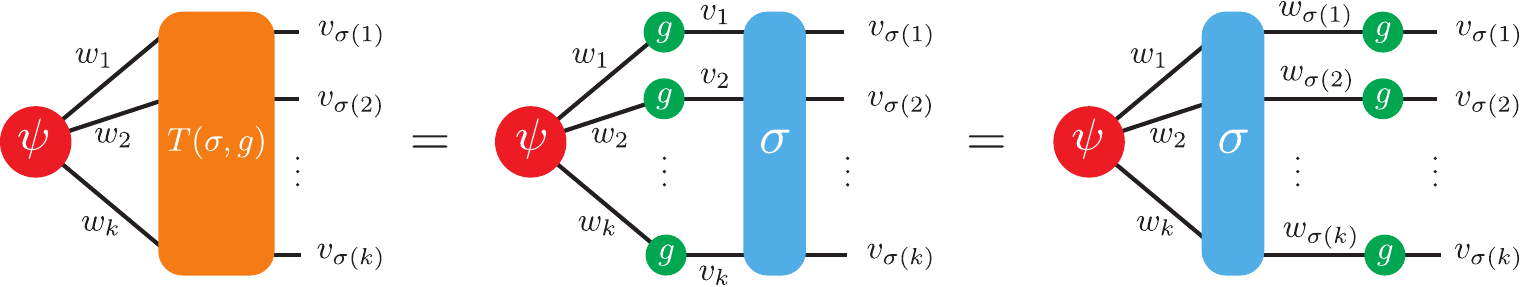}
  \caption{Schur-Weyl duality shown as a tensor network.}
  \label{fig:schur_weyl_tensornetwork}
\end{figure}

\noindent This is a case of the double centralizer theorem where $g\sigma = \sigma g$ for all $g \in U(n)$ and $\sigma \in S_n$. The action of the symmetric group $S_n$ permutes the indices of the tensor product and the action of the unitary group $U(n)$ acts on each index of the tensor product. The action of the symmetric group $S_n$ and the unitary group $U(n)$ commute and the tensor product is reducible under the joint action of $U(n)$ and $S_n$.

\begin{corollary}
Schur-Weyl duality is a special case of the Double Centralizer Theorem, where $A$ is the symmetric group $S_n$ and $B$ is the unitary group $U(n)$ acting on $V^{\otimes n}$.
\end{corollary}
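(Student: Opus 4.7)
The plan is to verify explicitly that the two ingredients required for the Double Centralizer Theorem hold when one takes $V = \mathbb{C}^d$, sets $A = \mathbb{C}[S_n]$ acting on $V^{\otimes n}$ by permutation of tensor factors, and sets $B$ to be the image of the group algebra $\mathbb{C}[GL(d,\mathbb{C})]$ (equivalently, the algebra generated by the diagonal representation $U \mapsto U^{\otimes n}$ for $U \in U(d)$). Once $A$ and $B$ are identified, the corollary amounts to checking (i) that $A$ and $B$ mutually commute inside $\mathrm{End}(V^{\otimes n})$, and (ii) that each is the full centralizer of the other.

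First I would establish the commutation. For any $\sigma \in S_n$ acting as $\pi(\sigma)(v_1 \otimes \cdots \otimes v_n) = v_{\sigma^{-1}(1)} \otimes \cdots \otimes v_{\sigma^{-1}(n)}$ and any $U \in U(d)$, a direct computation shows $\pi(\sigma) \circ U^{\otimes n} = U^{\otimes n} \circ \pi(\sigma)$, since $U$ acts identically on every tensor factor and permuting the factors before or after this action yields the same vector. Hence $B \subseteq C(A)$ and $A \subseteq C(B)$.

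Next I would prove the harder direction, that $C(A) = B$ (and consequently, by one more appeal to the theorem, $C(B) = A$). The standard route is to identify $\mathrm{End}(V^{\otimes n}) \cong \mathrm{End}(V)^{\otimes n}$ and observe that $C(A)$ is exactly the $S_n$-invariant subspace $\mathrm{Sym}^n \mathrm{End}(V)$ of this tensor product algebra, because conjugation by $\pi(\sigma)$ permutes the tensor factors of an endomorphism. It then suffices to show that the linear span of $\{U^{\otimes n} : U \in GL(d,\mathbb{C})\}$ equals $\mathrm{Sym}^n \mathrm{End}(V)$. This is a classical polarization argument: the symmetric tensor $X_1 \odot \cdots \odot X_n$ for arbitrary $X_i \in \mathrm{End}(V)$ can be recovered from values of the polynomial $t_1 \cdots t_n \mapsto (t_1 X_1 + \cdots + t_n X_n)^{\otimes n}$ by extracting the multilinear coefficient, and each such evaluation lies in the span of $U^{\otimes n}$ for $U$ ranging over a Zariski-dense subset of $\mathrm{End}(V)$ (which by density includes enough of $GL(d,\mathbb{C})$).

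The main obstacle is this last span equality — the inclusion $B \subseteq C(A)$ is essentially formal, while showing equality requires the polarization identity together with a density argument to pass from the Lie group $U(d)$ to the full matrix algebra $\mathrm{End}(V)$. Once the span equality is established, the Double Centralizer Theorem from the preceding subsection applies directly, yielding the multiplicity-free decomposition of Equation~\eqref{eq:swduality} as an immediate consequence and thereby realising Schur-Weyl duality as the advertised instance of the theorem.
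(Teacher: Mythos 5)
Your proposal is correct, but it argues along a different line than the paper does. You give the classical double-commutant proof: after checking that the permutation action and the diagonal action $U^{\otimes n}$ commute, you identify $C(A)$ inside $\mathrm{End}(V)^{\otimes n}$ with the symmetric tensors and show that these are spanned by $\{U^{\otimes n}\}$ via polarization together with Zariski density of $U(d)$ in $GL(d,\mathbb{C})$ and of $GL(d,\mathbb{C})$ in $\mathrm{End}(V)$; the reverse equality $C(B)=A$ then follows from the double centralizer theorem (which, strictly, needs the semisimplicity of the image of $\mathbb{C}[S_n]$ — Maschke's theorem — a hypothesis you use implicitly and that the paper's loose statement of the theorem also glosses over). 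The paper, by contrast, only records the easy half for this corollary — it identifies $A=\mathbb{C}[S_k]$ and $B=\mathrm{Im}(GL(n,\mathbb{C}))$ and exhibits the commutation pictorially as a tensor network — and then obtains the multiplicity-free decomposition not by a centralizer computation but by a character argument: the trace of $g\sigma$ on $V^{\otimes n}$ is the power-sum polynomial $p_{\rho(\sigma)}(z)$, which is expanded into Schur polynomials times $S_n$ characters. Your route actually proves the harder content behind the phrase "special case of the Double Centralizer Theorem" (that each algebra is the full centralizer of the other), making the corollary self-contained; the paper's route buys the explicit irreducible decomposition and ties directly into the symmetric-polynomial machinery used throughout the rest of the appendix. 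Both are valid, and your argument would stand as a legitimate, more algebraic substitute for the paper's treatment.
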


In the Schur-Weyl setting, $A = \mathbb{C}[S_k]$ acts by permuting tensor factors, while $B = \mathrm{Im}(GL(n, \mathbb{C}))$ acts diagonally, which can be seen elegantly in Figure~\ref{fig:schur_weyl_tensornetwork}.

\begin{theorem}[Schur-Weyl Duality]
Schur-Weyl duality states that the tensor product of the irreducible representations of the symmetric group $S_k$ and the general linear group $GL(n, \mathbb{C})$ decomposes into a direct sum of irreducible representations of $GL(n, \mathbb{C})$ indexed by partitions of $k$:
\[
V^{\otimes n} \cong \bigoplus_{\lambda \vdash k,\, \ell(\lambda) \leq n} V_\lambda^{GL(n, \mathbb{C})} \otimes S^\lambda,
\]
where $V^{\otimes n}$ is the tensor product representation of $S_k$ and $GL(n, \mathbb{C})$, $V_\lambda^{GL(n, \mathbb{C})}$ are the irreps of $GL(n, \mathbb{C})$ indexed by $\lambda \vdash k$, and $S^\lambda$ are the irreps of $S_k$ indexed by the same partition.
\end{theorem}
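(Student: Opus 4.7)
The plan is to establish the decomposition by computing the joint character of the commuting $GL(n,\mathbb{C}) \times S_k$ action on $V^{\otimes k}$ and expanding it via the power sum expansion of Schur polynomials given in Equation \eqref{eqn:power_sum_schur}. This route fits naturally with the symmetric polynomial machinery developed in Appendix \ref{app:symmetricpolynomials} and complements the Double Centralizer Theorem viewpoint already stated.

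First I would fix $U \in GL(n,\mathbb{C})$ with eigenvalues $x = (x_1,\dots,x_n)$ and $\sigma \in S_k$ of cycle type $\mu = (\mu_1,\mu_2,\dots)$. Using that permutations of tensor factors pair cyclically with $U^{\otimes k}$, the trace of the joint action on $V^{\otimes k}$ factors across cycles as
\begin{equation}
\chi_{V^{\otimes k}}(U,\sigma) \;=\; \prod_{i} \operatorname{tr}\bigl(U^{\mu_i}\bigr) \;=\; \prod_{i} p_{\mu_i}(x) \;=\; p_\mu(x).
\end{equation}
Next I would substitute the expansion $p_\mu(x) = \sum_{\lambda \vdash k} \chi^\lambda_\mu\, s_\lambda(x)$ from Equation \eqref{eqn:power_sum_schur}, which rewrites the joint character as
\begin{equation}
\chi_{V^{\otimes k}}(U,\sigma) \;=\; \sum_{\lambda \vdash k} s_\lambda(x)\,\chi^\lambda(\sigma).
\end{equation}
Identifying $s_\lambda(x)$ as the character of the $GL(n,\mathbb{C})$ irrep $V_\lambda^{GL(n,\mathbb{C})}$ (as stated in Section \ref{sec:sym}) and $\chi^\lambda(\sigma)$ as the character of the $S_k$ Specht module $S^\lambda$, the right-hand side is exactly the character of $\bigoplus_\lambda V_\lambda^{GL(n,\mathbb{C})} \otimes S^\lambda$. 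Because the group algebras of $S_k$ and $GL(n,\mathbb{C})$ (via its polynomial representations) are semisimple, characters determine representations up to isomorphism, which yields the claimed decomposition.

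The $\ell(\lambda) \leq n$ cutoff is then handled by noting that $s_\lambda(x_1,\dots,x_n)$ vanishes identically when $\ell(\lambda) > n$ (as seen directly from the bialternant formula of Theorem \ref{thm:bialternant}, where the numerator determinant has a repeated or zero row). Thus partitions with more than $n$ parts contribute nothing to the character sum and can be dropped from the index set, matching the statement of the theorem.

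The main obstacle is justifying that we are allowed to read off an isomorphism of bimodules from equality of joint characters. I would address this by appealing to the Double Centralizer Theorem of Section \ref{sec:double_centralizer}: since $\mathbb{C}[S_k]$ and the image of $\mathbb{C}[GL(n,\mathbb{C})]$ in $\operatorname{End}(V^{\otimes k})$ are mutual centralizers, and both are semisimple, any $GL(n,\mathbb{C}) \times S_k$ module admits a multiplicity-free bi-isotypic decomposition $\bigoplus_\lambda W_\lambda \otimes U_\lambda$ with $W_\lambda$ irreducible for $GL(n,\mathbb{C})$ and $U_\lambda$ irreducible for $S_k$. Matching characters then forces $W_\lambda \cong V_\lambda^{GL(n,\mathbb{C})}$ and $U_\lambda \cong S^\lambda$ whenever the pair appears. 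A secondary subtlety is ensuring that the power-sum-to-Schur transition matrix $\chi^\lambda_\mu$ in Equation \eqref{eqn:power_sum_schur} really coincides with the irreducible $S_k$ characters; this is a standard fact from the Frobenius character formula, and I would invoke it as a black box rather than re-derive it here.
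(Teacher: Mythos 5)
Your proposal is correct and follows essentially the same route as the paper: compute the joint character of the commuting $GL(n,\mathbb{C})\times S_k$ action as a product of power sums over the cycles of $\sigma$, expand via $p_\mu = \sum_\lambda \chi^\lambda_\mu s_\lambda$, and combine with the Double Centralizer Theorem to promote the character identity to the bimodule decomposition. Your explicit handling of the $\ell(\lambda)\leq n$ cutoff via vanishing of $s_\lambda(x_1,\dots,x_n)$ is a small addition the paper leaves implicit, but the argument is otherwise the same.
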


This aligns with our general framework: the irreps of $A$ are $S^\lambda$ (partitions $\lambda \vdash k$), and those of $B$ are $V_\lambda^{GL(n, \mathbb{C})}$ (the same $\lambda$). The Double Centralizer Theorem enforces that these partitions match. The multiplicity of $S^\lambda$ in $V$ is exactly the dimension of $V_\lambda^{GL(n, \mathbb{C})}$, ensuring consistency in the irreducible decompositions.

The main focus of this work is the link between the symmetric polynomials and the characters of the representations, therefore we present a form of Schur-Weyl duality that uses the symmetric polynomials.

\begin{proposition}
  The character of the reducible representation \( V^{\otimes n} \) under the joint action of \( U(n) \) and \( S_n \) is given by:
  \[
    \chi_{V^{\otimes n}}(g, \sigma) = p_{\rho(\sigma)}(z),
  \]
  where \( p_{\rho} = p_{\rho_1} p_{\rho_2} \dots \) is the product of power sum polynomials for a partition \( \rho \) representing the cycle structure of the permutation \(\sigma\).
\end{proposition}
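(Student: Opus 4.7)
The plan is to compute the trace directly in a convenient basis, exploiting the fact that $U(n)$ acts diagonally (after a unitary change of basis) and $S_n$ acts by permuting tensor factors.

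First I would use that every $g \in U(n)$ is conjugate to a diagonal matrix and the character is a class function, so without loss of generality I may take $g$ to act on the standard basis by $g\,e_i = z_i\,e_i$ with eigenvalues $z = (z_1,\dots,z_n)$. Then $V^{\otimes n}$ has the product basis $\{e_{i_1} \otimes \cdots \otimes e_{i_n} \mid i_k \in \{1,\dots,n\}\}$, on which $g^{\otimes n}$ is diagonal with eigenvalue $z_{i_1}\cdots z_{i_n}$, and $\sigma \in S_n$ acts by $\sigma(e_{i_1} \otimes \cdots \otimes e_{i_n}) = e_{i_{\sigma^{-1}(1)}} \otimes \cdots \otimes e_{i_{\sigma^{-1}(n)}}$. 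Composing, the joint representation matrix in this basis satisfies
\[
\bigl(g^{\otimes n}\,\sigma\bigr)\,e_{i_1}\otimes\cdots\otimes e_{i_n} \;=\; \Bigl(\prod_{k=1}^n z_{i_{\sigma^{-1}(k)}}\Bigr)\,e_{i_{\sigma^{-1}(1)}} \otimes \cdots \otimes e_{i_{\sigma^{-1}(n)}}.
\]

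Next I would extract the diagonal entries: a basis vector contributes to the trace precisely when $i_k = i_{\sigma^{-1}(k)}$ for every $k$, i.e.\ the index function $k \mapsto i_k$ is constant on each cycle of $\sigma$. Writing the cycle type of $\sigma$ as $\rho(\sigma) = (\rho_1, \rho_2, \ldots)$, a fixed index $j_c$ must be assigned to each cycle $c$, and the corresponding diagonal eigenvalue factorises as $\prod_c z_{j_c}^{\rho_c}$, since the cycle of length $\rho_c$ contributes the factor $z_{j_c}$ exactly $\rho_c$ times. Summing freely over the independent choices $j_c \in \{1,\dots,n\}$ for each cycle then gives
\[
\chi_{V^{\otimes n}}(g,\sigma) \;=\; \prod_{c} \sum_{j_c=1}^{n} z_{j_c}^{\rho_c} \;=\; \prod_{c} p_{\rho_c}(z) \;=\; p_{\rho(\sigma)}(z),
\]
which is the claim.

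This argument is essentially bookkeeping, so I do not expect a substantive obstacle; the only thing requiring care is the consistent use of the $\sigma$ vs.\ $\sigma^{-1}$ convention so that the fixed-point condition genuinely corresponds to being constant on cycles (the statement is invariant under this choice because $\sigma$ and $\sigma^{-1}$ share cycle type). Everything else reduces to the elementary identity $\sum_i z_i^{\ell} = p_\ell(z)$ and the multiplicativity of the power sum basis across disjoint cycles.
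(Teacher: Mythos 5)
Your proposal is correct and follows essentially the same route as the paper: diagonalise $g$, compute the trace of $g^{\otimes n}\sigma$ in the product eigenbasis, observe that only index tuples constant on the cycles of $\sigma$ contribute, and sum each cycle independently to obtain $\prod_c p_{\rho_c}(z) = p_{\rho(\sigma)}(z)$. Your extra care with the $\sigma$ versus $\sigma^{-1}$ convention is a harmless refinement of the same bookkeeping argument.
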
 
  
\begin{proof}
The character of the reducible representation \( V^{\otimes n} \) under the joint action of \( U(n) \) and \( S_n \) is given by the trace
\[
    \chi_{V^{\otimes n}}(g, \sigma) = \operatorname{Tr}_{V^{\otimes n}}(g \sigma).
\]
The combined action of the operators is the composition of the diagonal action of \( g \) and the permutation action of \(\sigma\):
\[
    g \sigma \,\bigl(v_1 \otimes v_2 \otimes \dots \otimes v_n\bigr) 
    = g\,v_{\sigma^{-1}(1)} \otimes g\,v_{\sigma^{-1}(2)} \otimes \dots \otimes g\,v_{\sigma^{-1}(n)}.
\]
Consider a standard eigenbasis \(e_I = e_{i_1} \otimes e_{i_2} \otimes \dots \otimes e_{i_n}\) with \(g \, e_{i_k} = z_{i_k} \, e_{i_k}\). The character in this basis is
\[
  g \sigma \,\bigl(e_I\bigr) 
  = z_{i_{\sigma^{-1}(1)}} e_{i_{\sigma^{-1}(1)}} \otimes z_{i_{\sigma^{-1}(2)}} e_{i_{\sigma^{-1}(2)}} 
  \otimes \dots \otimes z_{i_{\sigma^{-1}(n)}} e_{i_{\sigma^{-1}(n)}}.
\]
Thus, 
\[
    \chi_{V^{\otimes n}}(g, \sigma) 
    = \sum_{I} \langle e_I \mid g \sigma \mid e_I \rangle.
\]

\noindent Where the sum is over all possible eigenbasis and since \(\{e_I\}\) is an eigenbasis for \(g\), \(\langle e_I \mid e_J \rangle = \delta_{IJ}\). For \(\langle e_I \mid g \sigma \mid e_I \rangle\) to be nonzero, we require \(\sigma \cdot I = I\), which implies \(i_k = i_{\sigma(k)}\). Consequently, \( i_k\) must be constant along the cycles of \(\sigma\). The trace becomes
\[
  \chi_{V^{\otimes n}}(g, \sigma) 
  = \sum_{I \,\text{fixed by}\, \sigma} 
    \bigl(\prod_{k=1}^n z_{i_{\sigma^{-1}(k)}} \bigr).
\]
For each cycle \(c\) in \(\sigma\), if all indices in that cycle are the same, the product is \(z_{i_c}^{|c|}\). Therefore, 
\[
  \chi_{V^{\otimes n}}(g, \sigma) 
  = \prod_{\text{cycles } c} 
    \Bigl(\sum_{i=1}^{n_c} z_i^{|c|}\Bigr)
  = \prod_{\text{cycles } c} p_{|c|}(z).
\]
Since the lengths of the cycles of \(\sigma\) form a partition \(\rho(\sigma)\) of \(n\), we write
\[
  \chi_{V^{\otimes n}}(g, \sigma) 
  = p_{\rho(\sigma)}(z) = \sum_\lambda \chi^\lambda_{\rho(\sigma)} s_\lambda(x),
\]
where \( p_{\rho} = p_{\rho_1} p_{\rho_2} \dots \) is the product of power sum polynomials. Consequently, using the identity from Equation~\ref{eqn:power_sum_schur}, we this can be realted to the character of the irreps of the symmetric group \(S_n\) and the unitary group \(U(n)\):
\[
  \chi_{V^{\otimes n}}(g, \sigma) 
  = \sum_{\lambda} \chi^\lambda_{U(n)}(g)\,\chi^\lambda_{S_n}(\sigma)  
\] 
\end{proof}

Therefore, using the relation 
\(\prod_{i} p_{\rho_i}(z)\) together with Schur-Weyl duality, we see that the irreducible representations of \(U(n)\) and \(S_n\) appear multiplicity-free in the decomposition of the reducible representation \(V^{\otimes n}\), linking the power sum symmetric polynomials and the corresponding characters. 

\subsection{Howe Duality}

Unitary-unitary and therefore Paldus duality can be viewed as a special case of $GL(n) \times GL(m)$ Howe duality \cite{Howe1989,Howe_1995}, which also is an example of a double centralizer. The general derivation is a consequence of the fundamental theorem of invariants presented by Weyl and uses quite different machinery to what was presented in section \ref{sec:uuduality}, however we note that our results hold for antisymmetric representations in this more general case. One way to think about why is to note that for matrix Lie groups, the invariant subspaces of irreps are the same as the invariant subspaces of their Lie algebra. On the other hand, the $\mathfrak{gl}(n)$ Lie algebra is obtained through the \textit{complexification} of the $\mathfrak{u}(n)$ Lie algebra, a process which does not affect the irrep structure (for this reason, mathematicians often choose to study the group $GL(n)$ over $U(n)$). In fact, Schur polynomials are also used to describe the characters of $GL(n)$ irreps, so the same derivation applies in that case. 

In short, $GL(n) \times GL(m)$ Howe duality can be summarised with the following theorem:

\begin{theorem}[Howe Duality]
  Consider the standard action of the group $GL(n) \times GL(m)$ on the vector space $\mathbb{C}^n \otimes \mathbb{C}^m$. The two groups form a reductive dual pair, and their action on the symmetric and exterior algebra of $\mathbb{C}^n \otimes \mathbb{C}^m$ forms the following multiplicity-free decomposition:

  \begin{align}
    \text{Sym}(\mathbb{C}^n \otimes \mathbb{C}^m) &\cong \bigoplus_{k=0}^\infty \bigoplus_{\substack{\lambda \vdash k \\ \ell(\lambda) \leq n \\ \ell(\lambda) \leq m}} 
 V^{GL(n)}_\lambda \otimes V^{GL(m)}_\lambda, \\
    \bigwedge(\mathbb{C}^n \otimes \mathbb{C}^m) &\cong \bigoplus_{k=0}^{n \times m} \bigoplus_{\substack{\lambda \vdash k \\ \ell(\lambda) \le n \\ \ell(\tilde{\lambda}) \le m }} W^{GL(n)}_{\lambda} \otimes W^{GL(m)}_{\tilde{\lambda}}
  \end{align}
\label{thm:howe_duality}
\end{theorem}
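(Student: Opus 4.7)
The plan is to extend the character-theoretic proof used for Theorem~\ref{thm:ffspace_umn_decomp}, replacing the two Cauchy identity applications (exterior alone) with both identities (exterior and symmetric) derived in Appendix~\ref{app:symmetricpolynomials}. The strategy hinges on three standard facts: the finite-dimensional polynomial irreducible representations of $GL(n,\mathbb{C})$ are indexed by partitions $\lambda$ with $\ell(\lambda)\le n$, their characters are the Schur polynomials $s_\lambda(x_1,\dots,x_n)$ in the eigenvalues of $g$, and two polynomial representations with equal characters are isomorphic. Under these conditions the decomposition into $GL(n)\times GL(m)$ irreps is determined entirely by a symmetric-polynomial identity in two independent alphabets $x = (x_1,\dots,x_n)$ and $y = (y_1,\dots,y_m)$.

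For the exterior algebra the calculation mirrors Theorem~\ref{thm:ffspace_umn_decomp} almost verbatim. Because $(g,h)\in GL(n)\times GL(m)$ acts on $\mathbb{C}^n\otimes\mathbb{C}^m$ with eigenvalues $x_i y_j$, the generating function for the characters of the graded pieces $\bigwedge^k$ is
\begin{equation}
\chi_{GL(n)\times GL(m)}\bigl(\bigwedge(\mathbb{C}^n\otimes\mathbb{C}^m)\bigr) = \prod_{i=1}^n \prod_{j=1}^m (1 + x_i y_j).
\end{equation}
Applying the dual Cauchy identity (Theorem~\ref{thm:dualcauchy}) rewrites this as $\sum_\lambda s_\lambda(x)\,s_{\tilde\lambda}(y)$, with the summation automatically restricted to partitions satisfying $\ell(\lambda)\le n$ and $\ell(\tilde\lambda)\le m$ because a Schur polynomial vanishes outside its legal index range. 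Matching characters then identifies the irreducible summand at each degree $k$ as $W^{GL(n)}_\lambda \otimes W^{GL(m)}_{\tilde\lambda}$.

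For the symmetric algebra, the analogous computation uses that $\operatorname{Sym}^k$ of a space contributes the complete homogeneous polynomial $h_k$ in the eigenvalues, so the full character generating function factorises as $\prod_{i,j}(1 - x_i y_j)^{-1}$. The (ordinary) Cauchy identity then gives $\sum_\lambda s_\lambda(x)\,s_\lambda(y)$, with the constraint $\ell(\lambda)\le \min(n,m)$, yielding the desired decomposition into $V^{GL(n)}_\lambda \otimes V^{GL(m)}_\lambda$. The main obstacle compared with the exterior case is that $\operatorname{Sym}(\mathbb{C}^n\otimes\mathbb{C}^m)$ is infinite-dimensional, so the identification of representations has to be carried out grade-by-grade; fortunately each $\operatorname{Sym}^k$ is finite-dimensional and polynomial, which is enough for the character argument to apply. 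A secondary conceptual point worth addressing alongside the character calculation is that $GL(n)\times GL(m)$ genuinely forms a reductive dual pair in the ambient general linear group acting on $\mathbb{C}^n\otimes\mathbb{C}^m$, so that multiplicity-freeness is guaranteed a priori by the double centralizer theorem of Appendix~\ref{sec:double_centralizer}; the Cauchy identities then merely enumerate which pairs of irreps actually appear.
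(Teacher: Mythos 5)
Your proposal is correct and is essentially the paper's own route: the exterior case is exactly the dual-Cauchy character argument of Theorem \ref{thm:ffspace_umn_decomp}, carried over to $GL$ because Schur polynomials are also the characters of the polynomial $GL(n)$ irreps, and the symmetric case is the same argument applied grade-by-grade with the ordinary Cauchy identity — precisely the justification the paper sketches in Appendix \ref{app:dualities} while deferring full details to the literature. The only overstatement is the side remark that multiplicity-freeness is guaranteed \emph{a priori} by the double centralizer theorem: that would require first establishing that the two group algebras are mutual centralizers (Weyl's fundamental theorem of invariants, the ``different machinery'' the paper alludes to), but since your character computation already yields multiplicity one directly, this does not affect the validity of the argument.
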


\noindent Howe duality has been remarked on in the context of quantum computing as recently as Gross, Nezami and Walter's work on Schur-Weyl duality for the Clifford group \cite{Gross_2021} and as early as Aram Harrow's PhD thesis \cite{harrowphd}. In fact, Harrow notes that implementing the symmetric algebra decomposition can be achieved with the Clebsch-Gordan transform, however this will always require a truncation of the direct sum to some value of $k$. In our case, we are interested in the antisymmetric algebra and thus obtain the \textit{full} decomposition, and to our knowledge this is the first time the antisymmetric variant has been considered in the context of quantum computing. We finally note that studying the symmetric algebra decomposition may be useful in the context of bosonic systems, where a similar irrep decomposition can be obtained.

\section{The Unitary Group \texorpdfstring{$U(n)$}{U(n)} and Branching Rules}
\label{app:unitarygroup}

This section delves into the foundational theory of representations of the unitary group of matrices denoted $U(n)$. There will be particular emphasis on the characters of the weight space representations which can expressed through Schur polynomials, as discussed in Appendix~\ref{app:symmetricpolynomials}.

\begin{definition}[Unitary Group of Matrices]
The unitary group $U(n)$ is the group of $n \times n$ unitary matrices:

\begin{equation}
    U(n) = \{ U \in \mathbb{C}^{n \times n} | UU^\dagger = I \}.
\end{equation}

\end{definition}

\noindent It is a subgroup of the general linear group $GL(n)$, which is the group of all invertible $n \times n$ matrices explained in more detail in Appendix \ref{app:liegroups}. The unitary group Lie algebra $\mathfrak{u}(n)$ is also defined in Definition \ref{def:unitaryliealgebra}, and the standard irrep of $U(n)$ is given in Definition \ref{def:unitary_standard_irrep} of the main text.

\begin{definition}[Cartan, Raising, and Lowering Operators] \label{def:cartan}
The Cartan subalgebra of the Lie algebra \(\mathfrak{u}(n)\) is spanned by the diagonal generators \(e_{ii}\), referred to as the Cartan operators. The remaining off-diagonal generators \(e_{ij}\) with \(i < j\) are called raising operators, while those with \(i > j\) are called lowering operators.
\end{definition}

\begin{definition}[Weight Vectors]
A weight vector of the Lie algebra $\mathfrak{u}(n)$ is an eigenvector of the Cartan operator $e_{ii}$, satisfying 
\[
e_{ii}\,|\lambda\rangle \;=\;\lambda_i\,|\lambda\rangle.
\]
Here, $|\lambda\rangle$ is labeled by the weight $\lambda = (\lambda_1, \lambda_2, \dots, \lambda_n)$ with $\lambda_1 \ge \lambda_2 \ge \cdots \ge \lambda_n$. The highest weight vector, annihilated by all raising operators, characterises an irreducible representation of $U(n)$ and is identified with a partition $\lambda$ of (up to) $n$ parts.
\end{definition}

\begin{remark}
  The Cartan operators commute with each other and therefore can be simultaneously diagonalised and share the same eigenvectors.
\end{remark}

\begin{definition}[Weyl Character Formula \cite{Weyl1925}]
  \label{def:weyl_character_formula}

The Weyl character formula gives the character of the irreducible representation of the unitary group $U(n)$ in terms of the Schur polynomials. The character of the irrep $V_\lambda$ of $U(n)$ is given by the Schur polynomial $s_\lambda$ evaluated at the eigenvalues of the Cartan operators:

\[
\chi^\lambda(x_1,\dots,x_n) = \frac{\displaystyle \sum_{w\in S_n} \text{sgn}(w)\, x_{w(1)}^{\lambda_1+n-1}\,x_{w(2)}^{\lambda_2+n-2}\cdots x_{w(n)}^{\lambda_n}}{\displaystyle \sum_{w\in S_n} \text{sgn}(w)\, x_{w(1)}^{n-1}\, x_{w(2)}^{n-2} \cdots x_{w(n)}^{0}} = \frac{a_{\lambda+\delta}(x_1,\dots,x_n)}{a_\delta(x_1,\dots,x_n)} = s_\lambda(x_1,\dots,x_n)
\]
This is equivalent to the bialternant definition of the Schur polynomials. For a full derivation of the Weyl character formula see Ref.~\cite{fultonandharris}.
\end{definition}

\subsection{\texorpdfstring{$U(n) \downarrow U(n-1)$}{U(n) to U(n-1)} Branching}

The foundational mathematical structure of the Paldus transform is based on a subgraph of the Bratelli diagrams within the $U(n) \supset U(n-1) \supset \cdots \supset U(1)$ chain. The branching of $U(n)\downarrow U(n-1)$ can  be explained with an inclusion map and Pieri's formula\cite{MacdonaldSymPoly,fultonandharris}. The inculsion map of elements of \( U(n-1) \hookrightarrow U(n) \) can be represented as block-diagonal matrices in \( U(n) \) with a \( U(n-1) \) matrix embedded in the top left, acting trivially on the $n$-th element:
\[
U(n-1) \hookrightarrow U(n):
\begin{pmatrix}
  \mathbf{U}_{(n-1) \times (n-1)} & \mathbf{0} \\
  \mathbf{0}^\dagger & 1
  \end{pmatrix}
\]
This allows the characters in the restriction of the irrep  $V_\lambda: U(n) \downarrow U(n-1)$ to be given as the Schur polynomials where the last entry is fixed to 1:
\[
\chi_{\lambda}^{U(n)}(z_1, z_2, \dots, z_{n-1},1) = s_\lambda(z_1, z_2, \dots, z_{n-1}, 1)
\]
It is straightforward to see that fixing of the diagonal elements will give a subgroup of $U(n)$ conjugacy classes. $\lambda$ is a partition given by $\lambda = (\lambda_1, \lambda_2, \dots, \lambda_k)$ with $k \leq n$. The Branching rule is given by the restriction:

\[
  \chi_{\lambda}^{U(n)}(z_1, z_2, \dots, z_{n-1},1) = \sum_{\mu} \chi_{\mu}^{U(n-1)}(z_1, z_2, \dots, z_{n-1})
\]

\begin{theorem}
  The branching rule of $U(n)$ is given by the betweenness condition of the partitions $\lambda$ and $\mu$:

  \[
  V_\lambda: U(n) \downarrow U(n-1) = \bigoplus_{\mu} V_\mu
  \]
  
  \noindent where the sum runs over all partitions $\mu$ which obey the betweenness (interleaving) conditions given by:

  \[
  \lambda_1 \geq \mu_1 \geq \lambda_2 \geq \mu_2 \geq \dots \geq \lambda_{k-1} \geq \mu_{k-1} \geq \lambda_k \geq \cdots \geq \lambda_{n-1} \geq \mu_{n-1} \geq \lambda_n
  \]
  
  \noindent where $ \mu_{k-1}$ and $\lambda_k$ are the last non zero parts of their respective partitions.
\label{thm:gt_branching_betweenness}
\end{theorem}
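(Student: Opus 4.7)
The plan is to translate the branching rule into an identity of Schur polynomials and prove that identity combinatorially via the SSYT description of $s_\lambda$ given in Definition \ref{def:schur_polynomial}. By Weyl's character formula (Definition \ref{def:weyl_character_formula}), the character of $V_\lambda$ on $U(n)$ is $s_\lambda(z_1,\dots,z_n)$, and the inclusion $U(n-1)\hookrightarrow U(n)$ discussed in the preceding paragraph amounts to setting $z_n=1$. By linear independence of the $U(n-1)$ characters $s_\mu(z_1,\dots,z_{n-1})$, it therefore suffices to prove
\[
s_\lambda(z_1,\dots,z_{n-1},1) \;=\; \sum_{\mu} s_\mu(z_1,\dots,z_{n-1}),
\]
where the sum runs over exactly the $\mu$ satisfying the stated interleaving conditions.

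To establish the identity, I would first partition the SSYT defining $s_\lambda(z_1,\dots,z_n)$ according to the position of the boxes labelled $n$. Because columns of an SSYT are strictly increasing, each column contains at most one entry equal to $n$, so these boxes form a horizontal strip $\lambda/\mu$ for a uniquely determined subshape $\mu\subseteq\lambda$. Erasing these boxes gives a bijection between SSYT of shape $\lambda$ with entries in $\{1,\dots,n\}$ and pairs consisting of a shape $\mu\subseteq\lambda$ with $\lambda/\mu$ a horizontal strip together with an SSYT of shape $\mu$ with entries in $\{1,\dots,n-1\}$. Tracking monomials and specialising $z_n=1$ then gives
\[
s_\lambda(z_1,\dots,z_{n-1},1) \;=\; \sum_{\mu\,:\,\lambda/\mu\text{ horizontal strip}} s_\mu(z_1,\dots,z_{n-1}).
\]
(Equivalently, one can read this off Pieri's rule applied term by term to the expansion of $s_\lambda$ in the larger alphabet, but the SSYT bijection is the most direct route.)

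The final step is purely combinatorial: verify that ``$\lambda/\mu$ is a horizontal strip with $\mu\subseteq\lambda$'' is the same condition as the betweenness chain
\[
\lambda_1 \ge \mu_1 \ge \lambda_2 \ge \mu_2 \ge \cdots \ge \lambda_{n-1} \ge \mu_{n-1} \ge \lambda_n.
\]
The inclusion $\mu\subseteq\lambda$ is exactly $\mu_i\le\lambda_i$, while the no-two-boxes-in-a-column condition forces every removed box in row $i+1$ to sit strictly to the left of every removed box in row $i$, which rearranges to $\mu_i\ge\lambda_{i+1}$. Combining these inequalities reproduces the stated interleaving pattern, and the branching rule follows.

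The only real subtlety is the combinatorial bijection in the middle step, in particular the claim that erasing the $n$-boxes leaves a valid SSYT of shape $\mu$ and that the erased cells always form a horizontal strip; this needs a one-line appeal to the row-weak/column-strict conditions of Definition \ref{def:ssyt}. Everything else is bookkeeping: once the Schur polynomial identity is in place, the character argument immediately upgrades it to the representation-theoretic direct-sum decomposition claimed in the theorem.
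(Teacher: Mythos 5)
Your proof is correct, but it takes a genuinely different route from the paper's. The paper works entirely with the bialternant (Weyl character) form: it sets $x_n=1$ in the ratio of alternants, subtracts columns to extract the factors $(x_i-1)$, cancels them against the Vandermonde determinant, performs recursive row subtractions and a cofactor expansion, and reads the interleaving condition off the limits of the resulting geometric sums -- a determinantal manipulation in the style of Proctor. You instead specialise $z_n=1$ in the SSYT generating function and use the bijection obtained by erasing the cells labelled $n$, which necessarily form a horizontal strip $\lambda/\mu$, then check that ``horizontal strip'' is exactly the betweenness chain; both the bijection and that translation are sound, including the subtlety you flag (column-strictness forces the erased cells into distinct columns and leaves a partition shape). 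The one caveat if your argument were placed inside this paper: here the character of $V_\lambda$ is given by the Weyl/bialternant formula (Theorem \ref{thm:bialternant}, Definition \ref{def:weyl_character_formula}), and the identification of that character with the SSYT sum is established only downstream of this very branching theorem (Theorem \ref{thm:un_branching_poly}); so to avoid circularity in the paper's development you would need an independent proof that the SSYT sum equals the bialternant ratio (a classical fact with standalone combinatorial proofs, but not one the paper supplies elsewhere). Granting that input, your route is arguably more illuminating -- it proves the branching rule and simultaneously explains the bijection between Gelfand patterns and SSYT -- whereas the paper's determinant computation buys a derivation that is self-contained given only the character formula.
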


\begin{proof}
The proof expands and generalises the example given by Procotor~\cite{PROCTOR1989135}. By using the properties of determiants we will show:
\[
  \chi_{\lambda}^{U(n)}(z_1, z_2, \dots, z_{n-1},1) = \sum_{\mu} \chi_{\mu}^{U(n-1)}(z_1, z_2, \dots, z_{n-1})
\]
where $\lambda$ and the set $\{\mu\}$ obey the intervealing conditions. Using the Bialternant definition given in Theorem \ref{thm:bialternant} of the Schur polynomials:

\[
s_\lambda(x_1,..,x_n) = \frac{
  \begin{vmatrix}
  x_1^{\lambda_1+n-1} & x_2^{\lambda_1+n-1} & \cdots & x_n^{\lambda_1+n-1} \\
  x_1^{\lambda_2+n-2} & x_2^{\lambda_2+n-2} & \cdots & x_n^{\lambda_2+n-2} \\
  \vdots & \vdots & \ddots & \vdots \\
  x_1^{\lambda_n +n -n} & x_2^{\lambda_n +n -n} & \cdots & x_n^{\lambda_n +n -n} \\
  \end{vmatrix}
}{
  \Delta(x_1, x_2, \ldots, x_n)
}
\]
Where $\Delta(x_1, x_2, \ldots, x_n)$ is the Vandermonde determinant given by $\prod_{1 \leq i < j \leq n} (x_i - x_j)$. Applying the restriction property by setting $x_n = 1$ and using the properties of determinants we can subtract the last column from the first $n-1$ columns without changing the value of the determinant:
\[
\frac{
  \begin{vmatrix}
  x_1^{\lambda_1+n-1} & x_2^{\lambda_1+n-1} & \cdots & 1 \\
  x_1^{\lambda_2+n-2} & x_2^{\lambda_2+n-2} & \cdots & 1 \\
  \vdots & \vdots & \ddots & \vdots \\
  x_1^{\lambda_n +n -n} & x_2^{\lambda_n +n -n} & \cdots & 1 \\
  \end{vmatrix}
}{
  \Delta(x_1, x_2, \ldots, x_{n-1},1)
} =
\frac{
  \begin{vmatrix}
  x_1^{\lambda_1+n-1}-1 & x_2^{\lambda_1+n-1}-1 & \cdots & 1 \\
  x_1^{\lambda_2+n-2}-1 & x_2^{\lambda_2+n-2}-1 & \cdots & 1 \\
  \vdots & \vdots & \ddots & \vdots \\
  x_1^{\lambda_n +n -n}-1 & x_2^{\lambda_n +n -n}-1 & \cdots & 1 \\
  \end{vmatrix}
}{
  \Delta(x_1, x_2, \ldots, x_{n-1},1)
}
\]
In the numerator a factor of $(x_i-1)$ has been extracted from each column respectively using the identity $x_i^n - 1 = (x_i-1)(x_i^{n-1} + x_i^{n-2} + \cdots + x_i + 1) = (x_i-1)\sum_{m=0}^{n-1} x_i^m$. The same factor can be exracted from the denominator using the following identity:
\[
\Delta(x_1, x_2, \ldots, x_{n-1},1) = \prod_{i=1}^{n-1} (x_i - 1) \cdot \prod_{1 \leq i < j \leq n-1} (x_i - x_j) =  \prod_{i=1}^{n-1} (x_i - 1) \Delta(x_1, x_2, \ldots, x_{n-1})
\]
As the same factor is present in the numerator and denominator they cancel out: 
\[
\frac{\prod_{i=1}^{n-1} (x_i - 1)
  \begin{vmatrix}
  \sum^{\lambda_1+n-1-1}_{m=0} x_1^m & \sum^{\lambda_1+n-1-1}_{m=0} x_2^m & \cdots & 1 \\
  \sum^{\lambda_2+n-2-1}_{m=0} x_1^m & \sum^{\lambda_2+n-2-1}_{m=0} x_2^m & \cdots & 1 \\
  \vdots & \vdots & \ddots & \vdots \\
  \sum^{\lambda_n+n-n-1}_{m=0} x_1^m & \sum^{\lambda_n+n-n-1}_{m=0} x_2^m & \cdots & 1 \\
  \end{vmatrix}
}{
  \prod_{i=1}^{n-1} (x_i - 1) \Delta(x_1, x_2, \ldots, x_{n-1})
}
=
\frac{
  \begin{vmatrix}
  \sum^{\lambda_1+n-1-1}_{m=0} x_1^m & \sum^{\lambda_1+n-1-1}_{m=0} x_2^m & \cdots & 1 \\
  \sum^{\lambda_2+n-2-1}_{m=0} x_1^m & \sum^{\lambda_2+n-2-1}_{m=0} x_2^m & \cdots & 1 \\
  \vdots & \vdots & \ddots & \vdots \\
  \sum^{\lambda_n+n-n-1}_{m=0} x_1^m & \sum^{\lambda_n+n-n-1}_{m=0} x_2^m & \cdots & 1 \\
  \end{vmatrix}
}{
  \Delta(x_1, x_2, \ldots, x_{n-1})
}
\]

\noindent If the $i^\text{th}$ row is recursively subtracted from the $(i-1)^\text{th}$ row until the first row is reached, this leaves only a single nonzero term in the final row of the last column. Therefore, applying a cofactor expansion along the last column can reduce the dimensionality of the determinant matrix on the numerator to $n-1$:
\[
  \frac{
    \begin{vmatrix}
    \sum^{\lambda_1+n-1-1}_{m=\lambda_2+n-2} x_1^m & \sum^{\lambda_1+n-1-1}_{m=\lambda_2+n-2} x_2^m & \cdots & 0 \\
    \sum^{\lambda_2+n-2-1}_{m=\lambda_3+n-3} x_1^m & \sum^{\lambda_2+n-2-1}_{m=\lambda_3+n-3} x_2^m & \cdots & 0 \\
    \vdots & \vdots & \ddots & \vdots \\
    \sum^{\lambda_n+n-n-1}_{m=\lambda_{n-1}+n-{n-1}} x_1^m & \sum^{\lambda_n+n-n-1}_{m=\lambda_{n-1}+n-{n-1}} x_2^m & \cdots & 1 \\
    \end{vmatrix}
  }{
    \Delta(x_1, x_2, \ldots, x_{n-1})
  }
  =
  \sum_{\substack{\lambda_1 \geq \mu_1 \geq \lambda_2 \\
  \lambda_2 \geq \mu_2 \geq \lambda_3 \\
  \vdots \\
  \lambda_{n-1} \geq \mu_{n-1} \geq \lambda_{n}}
  }
  \frac{
  \begin{vmatrix}
      x_1^{\mu_1+(n-1)-1} & x_2^{\mu_1+(n-1)-1} & \cdots & x_n^{\mu_1+(n-1)-1} \\
      x_1^{\mu_2+(n-1)-2} & x_2^{\mu_2+(n-1)-2} & \cdots & x_n^{\mu_2+(n-1)-2} \\
      \vdots & \vdots & \ddots & \vdots \\
      x_1^{\mu_n +(n-1) -n} & x_2^{\mu_n +(n-1) -n} & \cdots & x_n^{\mu_n +(n-1) -n} \\
    \end{vmatrix}
  }{
    \Delta(x_1, x_2, \ldots, x_{n-1})
  }
\]
Crucially the row $i$ summation $\sum^{\lambda_i+n-1-i}_{m=\lambda_{i+1}+n-(i+1)} x_i^m$ can be rewritten as $\sum_{\mu_i = \lambda_i}^{\lambda_{i+1}} x^{\mu_i+(n-1)-i}$. Furthermore, due to the linearity of a summation over a row we can extract the sum out of the determinant, repeating this for multiple rows leads to a nested summation. This leads to the Schur polynomial and the interleaving condition:
\begin{equation}
  \begin{split}
s_{(\lambda_1,..,\lambda_n)} (x_1,..,x_{n-1},1) &= \sum_{\substack{\lambda_1 \geq \mu_1 \geq \lambda_2 \\
  \lambda_2 \geq \mu_2 \geq \lambda_3 \\
  \vdots \\
  \lambda_{n-1} \geq \mu_{n-1} \geq \lambda_{n}}
}
\frac{\det(x_i^{\mu_j +(n-1)-j})_{1 \leq i, j \leq n-1}}
{\Delta(x_1, x_2, \ldots, x_{n-1})} \\
&= \sum_{\substack{\lambda_1 \geq \mu_1 \geq \lambda_2 \\
  \lambda_2 \geq \mu_2 \geq \lambda_3 \\
  \vdots \\
  \lambda_{n-1} \geq \mu_{n-1} \geq \lambda_{n}}
} s_{(\mu_1,..,\mu_{n-1})}(x_1,..,x_{n-1})
\end{split}
\end{equation}
In the general case, for each row \(i\), the summation ranges from \(\lambda_{i+1} + n - (i+1)\) to \(\lambda_i + n - 1 - i\). This range enforces the betweenness condition on the partitions \(\lambda\) and \(\mu\), ensuring that only new terms not included in the preceding row contribute to the determinant. Consequently, all permissible \(\mu\) partitions satisfy
\[
\lambda_1 \;\ge\; \mu_1 \;\ge\; \lambda_2 \;\ge\; \mu_2 \;\ge\;\cdots\;\ge\; \lambda_{n-1}\;\ge\;\mu_{n-1}\;\ge\;\lambda_n,
\]
which completes the proof.
\end{proof}

\begin{proposition} The dimension of the irrep $V_\lambda$ of $U(n)$ is the sum of the dimensions of the irreps $V_\mu$ of $U(n-1)$ that obey the betweenness condition of the partitions $\lambda$ and $\mu$: 
\end{proposition}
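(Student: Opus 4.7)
The plan is to derive this as an immediate corollary of Theorem~\ref{thm:gt_branching_betweenness}, invoking only two elementary facts: that restriction of a representation to a subgroup leaves the underlying vector space untouched, and that dimension is additive over direct sums.

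First I would recall the branching decomposition
\[
V_\lambda \big|_{U(n-1)} \;\cong\; \bigoplus_{\mu} V_\mu,
\]
where the sum runs over partitions $\mu = (\mu_1,\dots,\mu_{n-1})$ satisfying the interleaving inequalities $\lambda_i \geq \mu_i \geq \lambda_{i+1}$ for $i = 1,\dots,n-1$. The key observation is that the left-hand side, regarded purely as a complex vector space, is identical to $V_\lambda$; only the acting group has been changed from $U(n)$ to its embedded subgroup $U(n-1)$. In particular, $\dim(V_\lambda\big|_{U(n-1)}) = \dim V_\lambda$.

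Next I would take dimensions of both sides and use the additivity of dimension over a direct sum, $\dim\bigl(\bigoplus_\mu V_\mu\bigr) = \sum_\mu \dim V_\mu$, to obtain
\[
\dim V_\lambda \;=\; \sum_{\substack{\mu:\ \lambda_i \geq \mu_i \geq \lambda_{i+1} \\ i = 1,\dots,n-1}} \dim V_\mu,
\]
which is the claimed identity. There is no genuine obstacle here; the result is logically trivial once Theorem~\ref{thm:gt_branching_betweenness} is in hand, since that theorem already supplies both the decomposition and the indexing set. If a slightly more explicit form were desired, one could substitute the Weyl dimension formula (or equivalently the hook-content formula) for each $\dim V_\mu$, but this is cosmetic and not required for the statement as given. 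The main pedagogical value of the corollary is that it provides a recursive, combinatorial route to computing $\dim V_\lambda$ by iterating down the subgroup chain $U(n) \supset U(n-1) \supset \cdots \supset U(1)$, counting Gelfand--Tsetlin patterns at the base, which is precisely the enumeration carried out in Appendix~\ref{app:dimension_formula} for the $U(d)\times SU(2)$ case of interest.
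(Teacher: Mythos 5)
Your proof is correct, and it is a legitimately tighter route than the one the paper takes. You treat the proposition as an immediate corollary of the branching decomposition in Theorem~\ref{thm:gt_branching_betweenness}: restriction to $U(n-1)$ does not change the underlying vector space, and dimension is additive over the direct sum, so $\dim V_\lambda = \sum_\mu \dim V_\mu$ over interleaving $\mu$ follows with no further input. (Equivalently, one could evaluate the character identity $s_\lambda(x_1,\dots,x_{n-1},1)=\sum_\mu s_\mu(x_1,\dots,x_{n-1})$ at $x_1=\cdots=x_{n-1}=1$, since a character at the identity element is the dimension.) The paper instead invokes the Weyl dimension formula, writing
\begin{equation}
\prod_{1 \leq i < j \leq n} \frac{(\lambda_i - \lambda_j) + (j - i)}{j - i} \;=\; \sum_{\mu} \prod_{1 \leq i < j \leq n-1} \frac{(\mu_i - \mu_j) + (j - i)}{j - i},
\end{equation}
and identifying the two sides with $\dim V_\lambda$ and $\sum_\mu \dim V_\mu$; this buys an explicit closed-form product expression that can be checked numerically (as in the paper's $64 = 8+3+6+15+3+6+15+8$ example) and hints at an independent combinatorial verification, but as written it essentially restates the claim in formula form rather than deriving it. Your argument is more economical, relying only on the decomposition already established, and involves no gap.
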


\begin{proof}

The general case can be proven using the Weyl dimension formula~\cite{fultonandharris}:

\begin{equation}
  \begin{split}
  \prod_{1 \leq i < j \leq n} \frac{(\lambda_i - \lambda_j) + (j - i)}{j - i} &= \sum_{\mu} \prod_{1 \leq i < j \leq n-1} \frac{(\mu_i - \mu_j) + (j - i)}{j - i} 
  \end{split}
\end{equation}

\noindent The left hand side is the dimension of the irrep $V_\lambda$ of $U(n)$, and the right hand side is the sum of the dimensions of the irreps $V_\mu$ of $U(n-1)$ that obey the betweenness condition of the partitions $\lambda$ and $\mu$.

\end{proof}

\begin{example}
  For $n=4$, $\lambda = (3,2,1)$ and $k=6$ the branching rule is given by the following partitions:

\begin{equation}
  \begin{split}
  \begin{ytableau}
    \, & \, & \, \\
    \, & \, \\
    \,
  \end{ytableau} : U(4) \downarrow U(3)
&= \begin{ytableau}
  \, & \, & \, \\
  \, & \, \\
  \,
\end{ytableau} \\
&\oplus \begin{ytableau}
  \, & \, \\
  \, & \, \\
  \,
\end{ytableau}
\oplus \begin{ytableau}
  \, & \, & \, \\
  \,  \\
  \,
\end{ytableau}
\oplus \begin{ytableau}
  \, & \, & \, \\
  \, & \, \\
\end{ytableau} \\
& \oplus \begin{ytableau}
  \, & \, \\
  \, \\
  \, 
\end{ytableau}
\oplus \begin{ytableau}
  \, & \, \\
  \, & \,  \\
\end{ytableau}
\oplus \begin{ytableau}
  \, & \, & \,  \\
  \, \\
\end{ytableau} \\
& \oplus \begin{ytableau}
  \, & \, \\
  \, \\
\end{ytableau}
\end{split}
\end{equation}
Resulting in the set of partitions $\{\mu\}$. The dimensions of the irreps can be calculated using the Weyl dimension formula resulting in $64 = 8 + 3 + 6 + 15 + 3 + 6 + 15 + 8$

\end{example}

\begin{theorem}
The branching rule for $U(n) \downarrow U(n-1)\times U(1)$ is given by:

\[
V_\lambda: U(n) \downarrow U(n-1) \times U(1) = \bigoplus_{\mu} \left(V_\mu \otimes D_\nu\right)
\]

\noindent where the direct sum runs over all partitions $\mu$ which obey the betweenness (interleaving) conditions and $D_\nu$ is the standard 1-dimensional irrep of $U(1)$ with weight $|\nu| = |\lambda| - |\mu|$.
\label{thm:un_branching}
\end{theorem}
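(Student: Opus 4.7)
The plan is to mirror the structure of Theorem~\ref{thm:gt_branching_betweenness}, but rather than setting the last eigenvalue to $1$ (as one does for the restriction to $U(n-1)$ alone), keep it as a free variable $z$ that tracks the $U(1)$ factor. Under the block-diagonal inclusion
\[
U(n-1)\times U(1)\hookrightarrow U(n):\;\;(U',z)\longmapsto
\begin{pmatrix}U' & 0\\ 0 & z\end{pmatrix},
\]
the character of the restriction $V_\lambda\!\downarrow\!U(n-1)\times U(1)$ is the Schur polynomial $s_\lambda(x_1,\ldots,x_{n-1},z)$, whereas characters of $V_\mu\otimes D_\nu$ on the right-hand side are $s_\mu(x_1,\ldots,x_{n-1})\,z^{|\nu|}$. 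Therefore, by linear independence of characters, it suffices to establish the polynomial identity
\[
s_\lambda(x_1,\ldots,x_{n-1},z)\;=\;\sum_{\mu}\,s_\mu(x_1,\ldots,x_{n-1})\,z^{|\lambda|-|\mu|},
\]
where the sum runs over partitions $\mu$ satisfying the interleaving condition $\lambda_1\ge\mu_1\ge\lambda_2\ge\cdots\ge\mu_{n-1}\ge\lambda_n$.

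First I would expand the left-hand side through the bialternant formula of Theorem~\ref{thm:bialternant} with $x_n=z$, and subtract the last column from each of the first $n-1$ columns in both numerator and denominator. The key algebraic identity
\[
x_i^{k}-z^{k}\;=\;(x_i-z)\sum_{m=0}^{k-1}x_i^{m}\,z^{\,k-1-m}
\]
extracts a common factor $\prod_{i=1}^{n-1}(x_i-z)$ from the numerator, which cancels with the same factor arising from the factorisation $\Delta(x_1,\ldots,x_{n-1},z)=\prod_{i=1}^{n-1}(x_i-z)\,\Delta(x_1,\ldots,x_{n-1})$. Next, I would perform iterative row subtractions exactly as in the proof of Theorem~\ref{thm:gt_branching_betweenness}, collapsing the telescoping sums in each row so that row $i$ carries the range $\lambda_{i+1}+n-(i+1)\le m\le \lambda_i+n-1-i$; and apply cofactor expansion along the final column, which reduces the determinant to an $(n-1)\times(n-1)$ one.

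The crucial difference from the $x_n=1$ case is that each summand no longer collapses to $1$: reintroducing $z$ through the identity above, the monomial in row $i$ now reads $\sum_{\mu_i=\lambda_{i+1}}^{\lambda_i} x_i^{\mu_i+(n-1)-i}\,z^{\lambda_i-\mu_i}$. Pulling the sums outside the determinant, the determinant factor becomes precisely $s_\mu(x_1,\ldots,x_{n-1})\,\Delta(x_1,\ldots,x_{n-1})$, while the accompanying power of $z$ is $\prod_i z^{\lambda_i-\mu_i}=z^{|\lambda|-|\mu|}$. Collecting terms yields the desired identity, and identifying $z^{|\lambda|-|\mu|}$ with the character of the $U(1)$ irrep $D_\nu$ of weight $|\nu|=|\lambda|-|\mu|$ completes the proof by character theory.

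The main obstacle is the careful bookkeeping of the powers of $z$ through the sequence of column subtractions, extractions of $(x_i-z)$ factors, and row telescoping: one must verify that after the cofactor expansion the total $z$-degree in each surviving summand equals $\sum_i(\lambda_i-\mu_i)=|\lambda|-|\mu|$ exactly, with no spurious contributions from the last row or from the Vandermonde cancellation. Once this accounting is done, the interleaving conditions on $\mu$ emerge from the summation ranges verbatim from Theorem~\ref{thm:gt_branching_betweenness}, and the identification of the $U(1)$ weight is immediate.
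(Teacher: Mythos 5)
Your target identity and overall strategy are sound, and your route is genuinely different from the paper's. The paper does not redo the determinant manipulation with $x_n=z$ generic: it simply combines the already-proved identity $s_\lambda(x_1,\dots,x_{n-1},1)=\sum_\mu s_\mu(x_1,\dots,x_{n-1})$ of Theorem~\ref{thm:gt_branching_betweenness} with homogeneity, observing that $s_\lambda$ is homogeneous of degree $|\lambda|$ and each $s_\mu$ of degree $|\mu|$, so the coefficient of $x_n^{k}$ in the expansion of $s_\lambda$ must be $\sum_{|\mu|=|\lambda|-k}s_\mu$. That degree argument is shorter; your computation buys a self-contained re-derivation in which the $U(1)$ weight is tracked explicitly through the alternants rather than inferred by degree counting.

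However, one step as you describe it would fail. After extracting $\prod_{i=1}^{n-1}(x_i-z)$, the row-$j$ entry in column $i$ is $\sum_{m=0}^{A_j-1}x_i^{m}z^{A_j-1-m}$ with $A_j=\lambda_j+n-j$, and performing the row subtractions \emph{exactly as} in the proof of Theorem~\ref{thm:gt_branching_betweenness} (row $j$ minus row $j+1$) does not telescope: equal powers of $x_i$ carry different $z$-exponents in the two rows ($z^{A_j-1-m}$ versus $z^{A_{j+1}-1-m}$), and the last-column entries $z^{A_j}-z^{A_{j+1}}$ do not vanish, so the cofactor expansion along the final column does not reduce to a single minor. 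The fix is to subtract $z^{A_j-A_{j+1}}=z^{\lambda_j-\lambda_{j+1}+1}$ times row $j+1$ from row $j$, which is still a determinant-preserving elementary operation. Then row $j$ collapses to $\sum_{m=A_{j+1}}^{A_j-1}x_i^{m}z^{A_j-1-m}=\sum_{\mu_j=\lambda_{j+1}}^{\lambda_j}x_i^{\mu_j+(n-1)-j}z^{\lambda_j-\mu_j}$, the last column becomes $(0,\dots,0,z^{\lambda_n})^{T}$, and the cofactor expansion supplies the factor $z^{\lambda_n}$ needed so that the total $z$-degree of each surviving summand is exactly $\sum_{j\le n-1}(\lambda_j-\mu_j)+\lambda_n=|\lambda|-|\mu|$. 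With this $z$-weighted row operation your bookkeeping closes and the rest of your argument (multilinearity in rows, identification of the minor with $s_\mu\,\Delta(x_1,\dots,x_{n-1})$, and the character-theoretic conclusion) goes through as you outline.
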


\begin{proof}
We can obtain $V_\lambda$ of $U(n)$ by lifting from interleaving irreps $\mu$ of $\lambda:U(n-1)$ via multiplication of the standard irrep of $U(1)$ with weight $|\nu| = |\lambda| - |\mu|$, where the variable of the $U(1)$ irrep is $x_n$:
\[
s_\lambda(x_1,\ldots,x_n) = \sum_\mu s_\mu(x_1,\ldots,x_{n-1})\,x_n^{|\lambda| - |\mu|}
\]
The left-hand side is a homogeneous function of degree $|\lambda|$ in the variables $x_1,\ldots,x_n$, and the right-hand side is a homogeneous function of degree $|\lambda|$ in the variables $x_1,\ldots,x_{n-1}$ and $x_n$.
\(\sum_\mu s_\mu(x_1,\ldots,x_{n-1})\) in the restriction is the LHS with \(x_n = 1\). Therefore, the two sides can be made equal by including \(x_n\) on the right-hand side with the appropriate degree to recover the LHS degree.
\end{proof}

\begin{remark}
  This provides a rigorous framework for the recursive lifting of representations of \( U(n-1) \) to \( U(n) \), and is used in the quantum Paldus transform to lift the representations of \( U(1) \) to \( U(d) \).
\end{remark}

\begin{example}
  For $n=3$, $\lambda = (4,2,1)$ the ratio of alternants has the following form:
  \[
\frac{
    \begin{vmatrix}
    x^{4+2} & y^{4+2} & z^{4+2} \\
    x^{2+1} & y^{2+1} & z^{2+1} \\
    x^{1+0} & y^{1+0} & z^{1+0}
    \end{vmatrix}
}{
    \begin{vmatrix}
    x^2 & y^2 & z^2 \\
    x^1 & y^1 & z^1 \\
    x^0 & y^0 & z^0
    \end{vmatrix}
}
=
\sum_{
    \substack{
    4 \geq \mu_1 \geq 2 \\
    2 \geq \mu_2 \geq 1
    }
}
\frac{
    \begin{vmatrix}
    x^{\mu_1+1} & y^{\mu_1+1} \\
    x^{\mu_2+0} & y^{\mu_2+0}
    \end{vmatrix}
}{
    \begin{vmatrix}
    x^1 & y^1 \\
    x^0 & y^0
    \end{vmatrix}
}
z^{\lambda_1 - \mu_1 + \lambda_2 - \mu_2 + \lambda_3}.
\]
To obtain this we can use the restriction property by setting $z=1$ and using the properties of determinants we can subtract the last column from the first two columns without changing the value of the determinant:
\[
\frac{
  \begin{vmatrix}
  x^{4+2} & y^{4+2} & 1 \\
  x^{2+1} & y^{2+1} & 1 \\
  x^{1+0} & y^{1+0} & 1
  \end{vmatrix}
}{
  \begin{vmatrix}
  x^2 & y^2 & 1 \\
  x^1 & y^1 & 1 \\
  x^0 & y^0 & 1
  \end{vmatrix}
}
=
\frac{
  \begin{vmatrix}
  x^{4+2}-1 & y^{4+2}-1 & 1 \\
  x^{2+1}-1 & y^{2+1}-1 & 1 \\
  x^{1+0}-1 & y^{1+0}-1 & 1
  \end{vmatrix}
}{
  \begin{vmatrix}
  x^2-1 & y^2-1 & 1 \\
  x^1-1 & y^1-1 & 1 \\
  x^0-1 & y^0-1 & 1
  \end{vmatrix}
}.
\]
The factors $(x-1)$ and $(y-1)$ can be factored out of the first and second column respectively using the identity $x^n - 1 = (x-1)(x^{n-1} + x^{n-2} + \cdots + x + 1)$:
\[
\frac{
    \begin{vmatrix}
    x^5 + x^4 + \cdots + x + 1 & y^5 + y^4 + \cdots + y + 1 & 1 \\
    x^2 + x + 1 & y^2 + y + 1 & 1 \\
    1 & 1 & 1
    \end{vmatrix}
}{
    \begin{vmatrix}
    x+1 & y+1 & 1 \\
    1 & 1 & 1 \\
    0 & 0 & 1
    \end{vmatrix}
}.
\]
Then, we subtract the $n^\text{th}$ row from the $(n-1)^\text{th}$ row for $n=2$ and $n=1$ recursively until the second row is subtracted from the first row:
\[
  \frac{
    \begin{vmatrix}
    x^5 + x^4 + x^3 & y^5 + y^4 + y^3 & 0 \\
    x^2 + x  & y^2 + y & 0 \\
    1 & 1 & 1
    \end{vmatrix}
}{
    \begin{vmatrix}
    x+1 & y+1 & 0 \\
    1 & 1 & 0 \\
    0 & 0 & 1
    \end{vmatrix}
}
=
\frac{
    \begin{vmatrix}
    x^5 + x^4 + x^3 & y^5 + y^4 + y^3 \\
    x^2 + x^1 & y^2 + y^1
    \end{vmatrix}
}{
    \begin{vmatrix}
    x^1 & y^1 \\
    x^0 & y^0
    \end{vmatrix}
}
=
\sum_{\mu_1 = 2}^{4}\sum_{\mu_2 = 1}^{2}
\frac{
    \begin{vmatrix}
    x^{\mu_1 + 1} & y^{\mu_1 + 1} \\
    x^{\mu_2 + 0} & y^{\mu_2 + 0}
    \end{vmatrix}
}{
    \begin{vmatrix}
    x^1 & y^1 \\
    x^0 & y^0
    \end{vmatrix}
}.
\]
The dimension reduction is obtained via a cofactor expansion along the final column. The result is therefore a sum over Schur polynomials and we have expressed the restriction from $V_{(4,2,1)}: U(3) \downarrow U(2)$ as a sum of Schur polynomials. From the example it is clear that the betweenness condition is a consequence of the row subtractions. Because the original expression is a homogeneous polynomial, we can substitute the correct degree of $z$ to match the overall degree of the expression:
\[
  \sum_{
    \substack{
    4 \geq \mu_1 \geq 2 \\
    2 \geq \mu_2 \geq 1
    }
}
\frac{
    \begin{vmatrix}
    x^{\mu_1+1} & y^{\mu_1+1} \\
    x^{\mu_2+0} & y^{\mu_2+0}
    \end{vmatrix}
}{
    \begin{vmatrix}
    x^1 & y^1 \\
    x^0 & y^0
    \end{vmatrix}
}
z^{4 - \mu_1 + 2 - \mu_2 + 1}.
\]

Where $(\lambda_1, \lambda_2, \lambda_3) = (4,2,1)$ and $\{(\mu_1, \mu_2)\} = \{(4,2), (4,1), (3,2), (3,1), (2,2), (2,1)\}$. The inbetweenness condition is satisfied by the partitions $\mu$ and $\lambda$. This can be represented by the Young diagrams:

\begin{equation}
  \begin{split}
  \begin{ytableau}
    \, & \, & \, & \,\\
    \, & \, \\
    \,
  \end{ytableau} : U(3) \downarrow U(2) \times U(1)
&= \begin{ytableau}
  \, & \, & \, & \,\\
  \, & \, \\
  3 \\
\end{ytableau} 
\oplus \begin{ytableau}
  \, & \, & \, & \,\\
  \, & 3 \\
  3 \\
\end{ytableau}
\oplus \begin{ytableau}
  \, & \, & \, & 3\\
  \, & \, \\
  3 \\
\end{ytableau} \\
&\oplus \begin{ytableau}
  \, & \, & \, & 3\\
  \, & 3 \\
  3 \\
\end{ytableau} 
\oplus \begin{ytableau}
  \, & \, & 3 & 3\\
  \, & \, \\
  3 \\
\end{ytableau} 
\oplus \begin{ytableau}
  \, & \, & 3 & 3\\
  \, & 3 \\
  3 \\
\end{ytableau} 
\end{split}
\end{equation}

\noindent with the boxes labeled $3$ denoting the order of the $z$ variable in the resulting polynomial. Hence, removing the boxes labeled $3$ gives the partitions $\mu$.

\end{example}

\begin{theorem} 
  $s_\lambda(x_1,,.., x_n) = \sum_T x^{T}$, where the sum is over all SSYTs $T$ of shape $\lambda$ and $x^T = x_1^{a_1(T)} x_2^{a_2(T)} \cdots x_n^{a_n(T)} $. $ a_i(T) $ denotes the number of times $ i $ appears in the tableau $T$.
  \label{thm:un_branching_poly}
\end{theorem}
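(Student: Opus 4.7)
The plan is to prove this by induction on the number of variables $n$, using the branching rule from Theorem \ref{thm:un_branching} as the inductive engine. The base case $n=1$ is immediate: if $\lambda$ has more than one row, there are no SSYT of shape $\lambda$ with entries in $\{1\}$ (columns must strictly increase) and also $s_\lambda(x_1) = 0$ by the bialternant formula; if $\lambda = (k)$ is a single row, the unique SSYT is the all-ones filling contributing $x_1^k$, matching $s_{(k)}(x_1) = x_1^k$.

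For the inductive step, I would assume the combinatorial identity for $n-1$ variables and apply Theorem \ref{thm:un_branching} to write
\begin{equation}
s_\lambda(x_1,\ldots,x_n) \;=\; \sum_{\mu} s_\mu(x_1,\ldots,x_{n-1})\, x_n^{|\lambda|-|\mu|},
\end{equation}
where $\mu$ ranges over partitions interleaving $\lambda$, i.e.\ $\lambda_1\geq \mu_1 \geq \lambda_2 \geq \mu_2 \geq \cdots \geq \lambda_n$. Using the inductive hypothesis for each $s_\mu(x_1,\ldots,x_{n-1})$ gives
\begin{equation}
s_\lambda(x_1,\ldots,x_n) \;=\; \sum_{\mu}\;\sum_{T'\in\mathrm{SSYT}(\mu,\{1,\ldots,n-1\})} x^{T'}\, x_n^{|\lambda|-|\mu|}.
\end{equation}

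The crux is then a bijection between SSYTs $T$ of shape $\lambda$ with entries in $\{1,\ldots,n\}$ and pairs $(\mu,T')$ where $\mu$ interleaves $\lambda$ and $T'\in\mathrm{SSYT}(\mu,\{1,\ldots,n-1\})$. Given $T$, let $\mu_i$ be the number of entries in row $i$ that are strictly less than $n$; removing the cells containing $n$ yields $T'$ of shape $\mu$. The weakly-increasing-rows condition forces all $n$'s in a row to sit at its right end, so $\mu$ is indeed a partition shape and $\mu_i \leq \lambda_i$. The strictly-increasing-columns condition forces $\mu_i \geq \lambda_{i+1}$, since the cell immediately above any $n$ in row $i+1$ must contain an entry $<n$, hence must be part of $T'$. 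This gives exactly the interleaving inequalities, and the map is clearly invertible: given $(\mu,T')$, append $\lambda_i - \mu_i$ boxes labelled $n$ to the right of row $i$ of $T'$ for each $i$. Under this bijection $x^T = x^{T'} \cdot x_n^{|\lambda|-|\mu|}$, which finishes the induction.

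The main obstacle, and the only nontrivial piece, is verifying this bijection carefully, specifically the equivalence between the SSYT admissibility of the filling of $T$ and the interleaving inequalities on $(\lambda,\mu)$. Once established, the rest of the argument is a routine reindexing. Note that this is essentially the content of Definition \ref{def:ssyt} combined with the subgroup-chain viewpoint $U(n) \downarrow U(n-1)\times U(1)$ already built up in the main text, so no additional machinery is required.
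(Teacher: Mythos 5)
Your proposal is correct and follows essentially the same route as the paper: induction on $n$ via the $U(n)\downarrow U(n-1)\times U(1)$ branching rule of Theorem~\ref{thm:un_branching}, identifying the boxes of $\lambda/\mu$ with the cells labelled $n$ so that each SSYT corresponds to a unique branching chain. You simply spell out the bijection and the equivalence between the interleaving inequalities and the SSYT conditions, which the paper's proof leaves as a sketch.
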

  
\begin{proof}
  Using Theorem~\ref{thm:un_branching}, the proof proceeds via induction starting with a partition $\lambda$ if we restrict to the partition the sets of partitions $\{\mu\}$ that obey the betweenness condition of the partitions $\lambda$ and $\mu$ is we fill the missing boxes between $\lambda$ and $\mu$ with the number $n$ and recursively down to $n=1$ we can recover the SSYTs of shape $\lambda$ as each SSYT corresponds to a unique branching down the recursion.
\end{proof}

\begin{remark}
  Each SSYT is isomorphic to a monomial, which in turn corresponds to a basis element of the irreducible representation $V_\lambda$ of $U(n)$. Hence, the number of SSYTs of shape $\lambda$ is equal to the dimension of the irreducible representation $V_\lambda$ of $U(n)$, where each monomial is a term in the trace of the irreducible representation, giving the character. This recursion through sequential addition of boxes can be seen in Figure~\ref{fgr:CSFALL}.
\end{remark}

The Branching rule can be applied in a recursive fashion for all subgroups of $ U(n) $ until the irreps of $ U(1) $ are reached, resulting in a Bratelli diagram

\begin{equation}
  U(n) \supset U(n-1) \supset \cdots \supset U(1).
\label{eq:unbratelli}
\end{equation}

\noindent Interestingly, a unique branch of along the Bratelli diagram of the following branching rule of $U(n)$
\[
U(n) \supset U(n-1) \times U(1) \supset U(n-2) \times U(1) \times U(1) \supset \cdots \supset U(1)^{\times n},
\]
leads to an irrep of $U(1)^{\times n}$, whose character is a monomial in the Schur polynomial of the character of the original irrep of $U(n)$. 

Following a unique branch along the subgroup chain in Equation~\ref{eq:unbratelli}, at each level $n$ we select a single irreducible representation $\lambda_i$. This procedure yields the Gelfand tableaux, which provide a canonical, one-to-one correspondence with the basis elements of the irreducible representation of $U(n)$.

\begin{definition}[Gelfand Tableaux]
  The Gelfand tableau is a unique representation of the basis elements of the $U(n)$ irrep. It is built through a unique branch of the recursion of the branching rule of $U(n) \downarrow U(n-1)$, obeying the interleaving constraint. The $n^\text{th}$ row of the tableau corresponds to a partition $\lambda_n$ corresponding to an irrep of $U(n)$.

  \begin{equation}
    \mathbf{m} = 
  \begin{bmatrix}
  \lambda_n\\
  \lambda_{n-1}\\\
  \vdots\\
  \lambda_2\\
  \lambda_1
  \end{bmatrix} 
  = 
  \begin{bmatrix}
  \lambda_{1,n} &  & \lambda_{2,n} &  & \cdots & & \lambda_{n-1,n} &  & \lambda_{n,n}  \\
   & \lambda_{1,n-1} & &  \lambda_{2,n-1} & & \cdots &  &  \lambda_{n-1,n-1} &  \\
   &  & \ddots &  & \cdots &  & \iddots &  &  \\
   &  &  & \lambda_{1,2} &  & \lambda_{2,2} &  &  &  \\
   &  &  &  & \lambda_{1,1} &  &  &  &   
  \end{bmatrix} 
  \end{equation}

\noindent Where the $\mathbf{m}$ represents and orthogonal basis element of the $U(n)$ irrep. An example is given in Figure \ref{fgr:gt_example}.

\end{definition}
  
    \begin{figure}
      \centering
      \includegraphics[width=\textwidth]{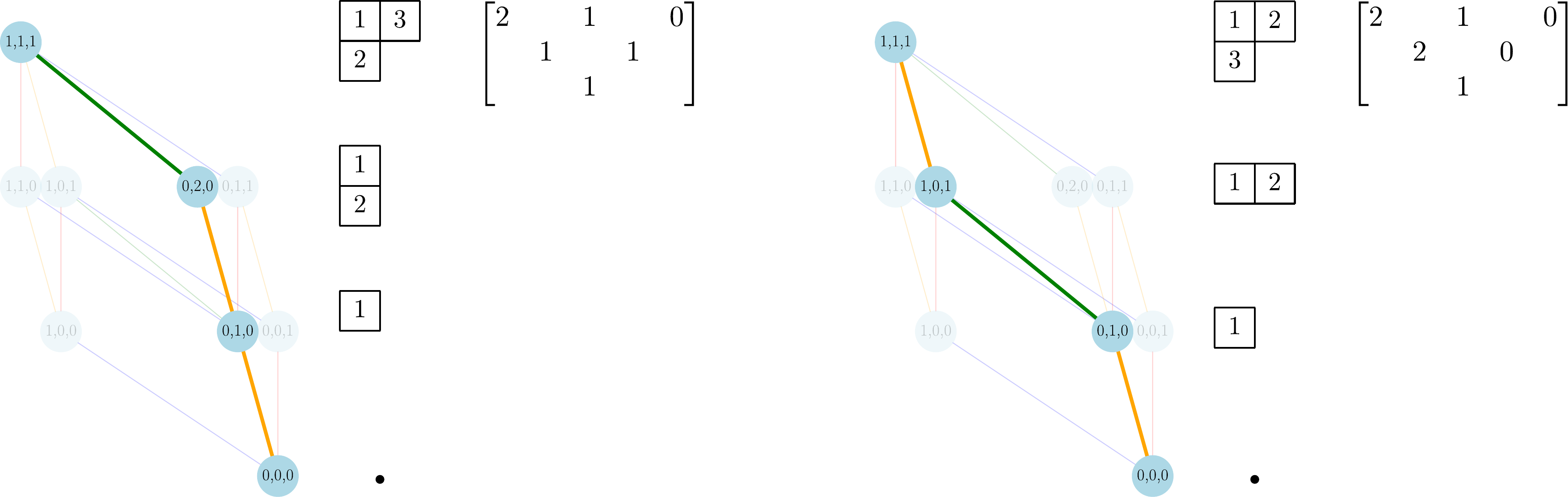}
      \caption{Two examples of Gelfand Tableaux for the irrep of $U(3)$ with partition $(2,1,0)$ and their corresponding walks on the Shavitt graph.} 
      \label{fgr:gt_example}
    \end{figure}

\subsection{\texorpdfstring{$U(2) \downarrow SU(2)$}{U(2) to SU(2)} Branching}\label{app:u2su2}

The $SU(2)$ matrices are given by the set $\{ U \in U(2) \mid \det(U) = 1 \}$. The $U(2)$ matrices are given by the set $\{ U \in GL(2) \mid U U^\dagger = I, \ \det(U) = e^{i\theta} \}$. The branching rule describes how the \( U(2) \) irrep \( V^{U(2)}_{(\lambda_1, \lambda_2)} \) decomposes into irreps of \( SU(2) \) when restricted to \( SU(2) \). \(U(2)\) can be described as a quotient of \(SU(2)\) by kernel of the homomorphism, the center of \(SU(2)\), \(Z(SU(2)) = \{ \pm I \}\):

\[
U(2) = SU(2) / Z(SU(2))
\]

\noindent This is a $2$-to-$1$ homomorphism, \(\phi: SU(2) \rightarrow U(2)\), where \(\phi(V) = e^{i\theta} V\). This extra phase in $U(2)$ is physically irrelevant upon measurement and this serves as some intuition as to why $U(2)$ can often be used interchangeably with $SU(2)$. The standard irrep of U(2) with all conjugacy classes parameterised by \(\theta_1\) and \(\theta_2\) is given by:

\[ 
U(e^{i\theta_1}, e^{i\theta_2}) = \begin{pmatrix}
  e^{i\theta_1} & 0 \\
  0 & e^{i\theta_2}
\end{pmatrix}
\]

\noindent For the Lie algebra \( \mathfrak{su}(2) \), the Cartan subalgebra is the maximal Abelian (commuting) subalgebra. In \( \mathfrak{su}(2) \), the Cartan subalgebra is generated by a single element, typically chosen to be the \( z \)-component of the angular momentum operator, \( J_z \). Thus, we can consider \( J_z \) as generating the Cartan subalgebra of \( \mathfrak{su}(2) \). The \( J_z \) element is given by:

\[
J_z = \frac{1}{2} \begin{pmatrix}
  1 & 0 \\
  0 & -1
\end{pmatrix}
\]

\noindent For an irrep with spin \( j \), the \( J_z \) matrix is given by:

\[ 
J_z = \begin{pmatrix}
  j & 0 & \cdots & 0 \\
  0 & j-1 & \cdots & 0 \\
  \vdots & \vdots & \ddots & \vdots \\
  0 & 0 & \cdots & -j
\end{pmatrix}
\]

\noindent This generates the diagonal elements of $SU(2)$ via the exponential map \( e^{i\theta J_z} \), and since all conjugacy classes are represented by a diagonal matrix, the character of the standard irrep of $SU(2)$ is given by:

\[
U(g) = \begin{pmatrix}
  e^{i j \theta} & 0 & \cdots & 0 \\
  0 & e^{i (j-1) \theta} & \cdots & 0 \\
  \vdots & \vdots & \ddots & \vdots \\
  0 & 0 & \cdots & e^{-i j \theta}
\end{pmatrix}
\]

\noindent Therefore, the trace is given by:

\[
\chi^{SU(2)}_j(e^{i\theta}) = \sum_{m=-j}^{j} e^{i m \theta}
\]

\noindent where \( m \) is incremented by $1$ from \( -j \) to \( j \).

\begin{theorem}
  The branching rule for an irrep of \( U(2) \) decomposing an irreps of \( SU(2) \) is given by:
  \[
    V^{U(2)}_{(\lambda_1, \lambda_2)} \downarrow V^{SU(2)}_j
  \]
  where \((\lambda_1, \lambda_2)\) are the partitions labeling the highest weight of \(U(2)\), and \(j = \frac{\lambda_1 - \lambda_2}{2}\) labels the irreps of \(SU(2)\).
  \label{thm:u2su2}
\end{theorem}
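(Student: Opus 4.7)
The plan is to compare characters on both sides, leveraging the Schur polynomial description of $U(2)$ characters established in Definition \ref{def:weyl_character_formula}. For a partition $(\lambda_1, \lambda_2)$, the character of $V^{U(2)}_{(\lambda_1, \lambda_2)}$ evaluated on a diagonal element $\mathrm{diag}(z_1, z_2) \in U(2)$ is the two-variable Schur polynomial, which by the bialternant identity (Theorem \ref{thm:bialternant}) takes the form
\begin{equation}
s_{(\lambda_1, \lambda_2)}(z_1, z_2) = \frac{z_1^{\lambda_1 + 1} z_2^{\lambda_2} - z_1^{\lambda_2} z_2^{\lambda_1 + 1}}{z_1 - z_2}.
\end{equation}

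First I would impose the unit determinant condition $z_1 z_2 = 1$ that defines the inclusion $SU(2) \hookrightarrow U(2)$, parameterising eigenvalues as $z_1 = q$, $z_2 = q^{-1}$. Because every $SU(2)$ element is conjugate to such a diagonal matrix, evaluating characters on this one-parameter torus is sufficient to identify the restricted representation. Direct substitution, after cancelling the common factor $q^{\lambda_2}$ in numerator and denominator and applying the standard geometric series identity, yields
\begin{equation}
s_{(\lambda_1, \lambda_2)}(q, q^{-1}) = \frac{q^{\lambda_1 - \lambda_2 + 1} - q^{-(\lambda_1 - \lambda_2 + 1)}}{q - q^{-1}} = \sum_{m = -j}^{j} q^{2m},
\end{equation}
where $j = (\lambda_1 - \lambda_2)/2$.

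Next I would identify the right-hand side with the $SU(2)$ Weyl character of the spin-$j$ irrep. In the conventions of the main text with $q = e^{i\theta/2}$, the exponents $2m$ recover the spin-$j$ weights $e^{i m \theta}$ for $m = -j, \ldots, j$, so the restricted character coincides with $\chi^{SU(2)}_j$ exactly. Because distinct irreducible characters of $SU(2)$ are linearly independent, this establishes that the restricted representation is irreducible and isomorphic to $V^{SU(2)}_j$, with no multiplicity in the decomposition.

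The computation itself is routine; the only subtle point is verifying that $j$ is genuinely a nonnegative half-integer, which is immediate from $\lambda_1 \ge \lambda_2$ and the integrality of partition entries. Conceptually, the reason a single irrep appears is that the overall determinantal factor $\det(U)^{\lambda_2} = (z_1 z_2)^{\lambda_2}$ becomes trivial upon restriction, reflecting the quotient structure $U(2) \cong (SU(2) \times U(1))/\mathbb{Z}_2$ so that the $U(1)$ phase part of the $U(2)$ irrep is projected out, leaving just the $SU(2)$ content. The main obstacle is purely notational, namely keeping consistent conventions between the additive partition labels $(\lambda_1, \lambda_2)$ and the half-integer spin $j$, which is resolved by the character identification above.
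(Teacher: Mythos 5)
Your proof is correct and follows essentially the same route as the paper: both restrict the bialternant $U(2)$ character $s_{(\lambda_1,\lambda_2)}(z_1,z_2)$ to the determinant-one torus (your $z_1=q,\,z_2=q^{-1}$ is the paper's $\theta_2=-\theta_1$ substitution) and identify the resulting sum $\sum_{m=-j}^{j}q^{2m}$ with the spin-$j$ character, $j=\tfrac{1}{2}(\lambda_1-\lambda_2)$. Your added remarks on linear independence of characters and the trivialised $\det$ factor are fine but do not change the argument.
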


\begin{proof}
Using the Weyl character formula \cite{Weyl1925}, the character of the \(U(2)\) irrep is given by:

\[
  \chi^{U(2)}_{\lambda_1,\lambda_2}(z_1,z_2) = \frac{z_1^{\lambda_1+1}z_2^{\lambda_2} - z_1^{\lambda_2}z_2^{\lambda_1+1}}{z_1 - z_2}.
\]

\noindent Then making the substitution \(z_1 = e^{i\theta_1/2}\) and \(z_2 = e^{i\theta_2/2}\) and setting \(\theta_2 = -\theta_1\), the character of the \(SU(2)\) irrep is given by:

\[
  \chi^{SU(2)}_{\lambda_1,\lambda_2}(\theta_1) = \frac{e^{i(\lambda_1 - \lambda_2 +1)\theta_1/2} - e^{i(\lambda_2 - \lambda_1 +1)\theta_1/2}}{e^{i\theta_1/2} - e^{-i\theta_1/2}} = \frac{e^{i(2j +1)\theta_1/2} - e^{i(2j +1)\theta_1/2}}{e^{i\theta_1/2} - e^{-i\theta_1/2}} = \frac{\sin((2j +1)\theta_1/2)}{\sin(\theta_1/2)} =\sum_{m = -j}^{j} e^{im\theta_1}
\]

\noindent Which is the known character of $SU(2)$, where \(j = \frac{\lambda_1 - \lambda_2}{2}\) labels the irreps of $SU(2)$
\end{proof}

\noindent When restricting $U(2) \downarrow SU(2)$, the coupling of irreps described was described by the Pieri rule for the characters in Theorem \ref{thm:dualpieri}, however in $SU(2)$ irrep coupling can be described by explicitly by Clebsch Gordon coupling \cite{Brink:1975}:

\begin{equation}
  V^{SU(2)}_j \otimes V^{SU(2)}_{j'} = \bigoplus_{j'' = |j - j'|}^{j + j'} V^{SU(2)}_{j''}
\end{equation}
This Clebsch-Gordan coupling replaces the Pieri rule when restricting to \( SU(2) \), because it also describes the coupling of a pair of irreducible representations into a direct sum of irreducible components. This is evident because the $j$ values can be derived directly from the partition $\lambda$ of the $U(2)$ irrep via $j = \frac{\lambda_1 + \lambda_2}{2}$, so the Young diagram notation can be used interchangeably with the spin notation. This results in the $U(n) \times SU(2)$ of the Paldus transform. Hence this leads to the presence of the genealogical coupling of $SU(2)$ in the Paldus transform.

\subsection{Shavitt Branching}
\label{app:shavitt_branching}

In this section we derive the branching rule for the antisymmetric representation of $U(d) \times U(2)$ restricted to $U(d-1) \times U(2)$, which is used in the Paldus transform. The branching rule is derived through recursively restricting the antisymmetric representation, and given by the following theorem:

\begin{theorem}[Shavitt Branching]
  Let $\mathbb{C}^d$ and $\mathbb{C}^2$ be $d$ and $2$-dimensional Hilbert spaces respectively, and let $U(d)$ denote the unitary group on a $d$–dimensional complex space. The \textit{lifting} of the antisymmetric representation of $U(d-1)\times U(2)$ to $U(d)\times U(2)$ is achieved by four branching constraints via the subgroup $(U(d-1)\times U(1))\times U(2)$:
  
  \begin{align}
  \bigwedge (\mathbb{C}^d \otimes \mathbb{C}^2) \cong \bigoplus_{\lambda}\Bigl( V^{U(d)}_{\lambda} \otimes V^{U(2)}_{\tilde{\lambda}} \Bigr) &\cong\bigoplus_{\mu}\Biggl\{ \Bigl( V^{U(d-1)\times U(1)}_{\mu+d_0} \otimes V^{U(2)}_{\tilde{\mu}+(0,0)} \Bigr)
  \oplus \Bigl( V^{U(d-1)\times U(1)}_{\mu+d_1} \otimes V^{U(2)}_{\tilde{\mu}+(1,0)} \Bigr) \\
  &\quad\quad\quad\oplus \Bigl( V^{U(d-1)\times U(1)}_{\mu+d_2} \otimes V^{U(2)}_{\tilde{\mu}+(0,1)} \Bigr)
  \oplus \Bigl( V^{U(d-1)\times U(1)}_{\mu+d_3} \otimes V^{U(2)}_{\tilde{\mu}+(1,1)} \Bigr)
  \Biggr\},
  \end{align}
  
  \noindent where $\mu + \{d_0, d_1, d_2, d_3\}$ represents the additions of boxes to the two–column irreducible representations of $U(d-1)$ that are conjugate to the additions of boxes to the two–row $U(2)$ Young diagrams given by $\tilde{\mu} + \{(0,0), (1,0), (0,1), (1,1)\}$. Here, $\mu$ denotes the Young diagram corresponding to an irrep of $U(d-1)$, and $\lambda$, $\tilde{\lambda}$ denote the Young diagrams corresponding to the irreps of $U(d)$ and $U(2)$ respectively. The terms in the direct sum must not violate the constrain of the $U(2)$ irrep having at most at most two rows, and the conjugate $U(d)$ irreps having at most two columns. 
  \label{thm:shavitt_branching}
  \end{theorem}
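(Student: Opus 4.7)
The plan is to combine Paldus duality (Theorem \ref{thm:paldus_duality}) with the classical branching rule for $U(d) \downarrow U(d-1) \times U(1)$ stated in Theorem \ref{thm:un_branching}, and then exploit the two--column restriction inherent to the antisymmetric representation to severely constrain the combinatorics. Paldus duality expresses the antisymmetric representation as a direct sum of $V^{U(d)}_\lambda \otimes V^{U(2)}_{\tilde\lambda}$ over partitions $\lambda$ with at most two columns, which I would write in the compact form $\lambda = (2^a, 1^b, 0^c)$ with $a + b + c = d$, and whose conjugate reads off immediately as $\tilde\lambda = (a+b, a)$.

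Next I would apply Theorem \ref{thm:un_branching} to each $V^{U(d)}_\lambda$, decomposing it as $\bigoplus_\nu V^{U(d-1)}_\nu \otimes D_{|\lambda|-|\nu|}$ summed over partitions $\nu$ interleaving $\lambda$. Because the interleaving inequalities $\lambda_i \geq \nu_i \geq \lambda_{i+1}$ force $\nu_i \in \{0,1,2\}$, the partition $\nu$ is automatically two--column and may itself be written as $\nu = (2^{a'}, 1^{b'}, 0^{c'})$ with $a' + b' + c' = d - 1$. Swapping the order of summation so that $\mu = \nu$ is held fixed, the problem reduces to the combinatorial question: which two--column $\lambda$ of total size $d$ admit $\mu$ as an interleaving restriction, and with what $U(1)$ weight $|\lambda| - |\mu|$?

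A direct translation of the interleaving inequalities into the $(a,b,c)$ versus $(a',b',c')$ multiplicities collapses them to two compact constraints, namely $a \in \{a', a'+1\}$ together with $a+b \in \{a'+b', a'+b'+1\}$, producing exactly four admissible shifts which I would identify with the $d_0, d_1, d_2, d_3$ tabulated in Table \ref{fig:step_vectors_binary}. Conjugating these four shifts in $\lambda$ reproduces the claimed shifts $\{(0,0), (1,0), (0,1), (1,1)\}$ in $\tilde\lambda$, and the box count $|\lambda| - |\mu| \in \{0, 1, 1, 2\}$ encodes the $U(1)$ weight of the lifting. Since the decomposition is multiplicity--free by Howe duality (Theorem \ref{thm:howe_duality}) and the double centralizer property of the antisymmetric dual pair discussed in Section \ref{sec:shavitt}, the character identity promotes directly to the stated direct--sum decomposition of representations.

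The main obstacle will be treating boundary cases rigorously: when one of $a'$, $b'$, or $c'$ vanishes, some of the four candidate $\lambda$'s cease to be well--defined (for instance, the $d_2$ shift requires $b' \geq 1$, and the $d_0, d_2$ shifts require $c'$ to be strictly positive only insofar as the total part count does not exceed $d$), and one must verify that each such degeneracy simply drops a term from the direct sum rather than producing any extraneous contribution not captured by the four standard moves. Once this finite case--check is completed and combined with the multiplicity--free decomposition inherited from Paldus duality, the theorem follows.
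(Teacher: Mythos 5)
Your proposal is correct, but it runs the argument in the opposite direction to the paper. The paper works \emph{bottom-up} at the level of characters: it factorises $\chi\bigl(\bigwedge(\mathbb{C}^d\otimes\mathbb{C}^2)\bigr)=\prod_{i=1}^{d-1}\prod_{j=1}^{2}(1+x_iy_j)\cdot\prod_{k=1}^{2}(1+x_dy_k)$, applies the dual Cauchy identity to the first factor to get $\sum_\mu s_\mu(x)s_{\tilde\mu}(y)$, expands the last-orbital factor as $1+x_dy_1+x_dy_2+x_d^2y_1y_2$, and then reads the four resulting terms as liftings via $(U(d-1)\times U(1))\times U(2)$, using Theorem \ref{thm:un_branching} on the $x$ side and the dual Pieri rule (Theorem \ref{thm:dualpieri}) on the $y$ side to regroup $s_{\tilde\mu}(y)\,y_1+s_{\tilde\mu}(y)\,y_2$ into genuine Schur characters. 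You instead go \emph{top-down}: start from Paldus duality at level $d$, restrict each two-column $U(d)$ irrep $\lambda$ to $U(d-1)\times U(1)$ via the interleaving rule of Theorem \ref{thm:un_branching}, observe that $\mu$ is again two-column, and classify the admissible pairs by the column conditions $a-a'\in\{0,1\}$, $(a+b)-(a'+b')\in\{0,1\}$, which after swapping the order of summation and conjugating yields exactly the four steps $d_0,\dots,d_3$ with $U(1)$ weights $0,1,1,2$ and $U(2)$ shifts $(0,0),(1,0),(0,1),(1,1)$ — I checked these identifications and they are right. What your route buys is that the $U(2)$ labels come for free by conjugation, so no Pieri regrouping is needed (avoiding the paper's slightly delicate step where the individual monomial terms are not themselves Schur polynomials); what it costs is the explicit finite combinatorial case analysis, including the boundary checks you flag (only $d_2$ genuinely requires $b'\geq 1$; the row-count bound $a+b\leq d$ is automatic, and the statement itself already excludes invalid terms). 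Two minor remarks: since Paldus duality and Theorem \ref{thm:un_branching} are both stated as isomorphisms of representations, your appeal to Howe duality for multiplicity-freeness is not strictly needed to promote anything — composing the two decompositions already gives the result (and for compact groups characters determine decompositions regardless); and your phrasing of which shifts constrain $c'$ is muddled, though harmless given the theorem's built-in validity constraint.
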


  \begin{proof} By using the internal recursion of the exterior product and then decomposing it into irreps using Theorem~\ref{thm:ffspace_umn_decomp} we obtain:
  
  \begin{align}
  \bigwedge^{\bullet 2(d-1)} (\mathbb{C}^d \otimes \mathbb{C}^2) \otimes \bigwedge^{\bullet 2} (\mathbb{C}^d \otimes \mathbb{C}^2) &\cong \bigoplus_{\mu}\Biggl\{V^{U(d-1 )}_\mu \otimes V^{U(2)}_{\tilde{\mu}}\Biggl\}\otimes \Biggl\{I \oplus \Big(D_{(1)}^{U(1)} \otimes  V^{SU(2)}_{(1)}\Big) \oplus \Big(D_{(2)}^{U(1)} \otimes  V^{U(2)}_{(1,1)}\Big)\Biggl\}
  \end{align}
  
  \noindent where the expansion of the second factor in the tensor product,
  
  \begin{equation}
  \bigwedge^{\bullet 2} (\mathbb{C}^d \otimes \mathbb{C}^2)
  \cong \Bigl( I \otimes I \Bigr) \oplus \Bigl( V_{(1)}^{U(1)} \otimes V^{U(2)}_{(1)} \Bigr)
  \oplus \Bigl( V_{(2)}^{U(1)} \otimes V^{U(2)}_{(1,1)} \Bigr),
  \end{equation}
  
  \noindent follows from Theorem \ref{thm:paldus_duality}. Then, using the characters of the exterior power representation and the dual Cauchy identity in Appendix~\ref{thm:dualcauchy}, with the $U(d-1)$ variables given as $(x_1,...,x_{d-1})$, the $U(2)$ variables as $y_1,y_2$ and $U(1)$ variable as $x_d$, we decompose the character as:
  
    \begin{align}
    \chi_{U(d) \times U(2)}\Bigl( \bigwedge^{\bullet 2d} (\mathbb{C}^d \otimes \mathbb{C}^2) \Bigr) &= \chi_{U(d-1) \times U(2)}\left(  \bigwedge^{\bullet 2(d-1)} (\mathbb{C}^d \otimes \mathbb{C}^2) \right) \cdot \chi_{U(1) \times U(2)}\Bigl( \bigwedge^{\bullet 2} (\mathbb{C}^d \otimes \mathbb{C}^2) \Bigr) \\
    &= \prod_{i=1}^{d-1} \prod_{j=1}^{2} \bigl(1 + x_i y_j\bigr) \cdot \prod_{k=1}^{2} \bigl(1 + x_d y_k\bigr) \\
    &= \left(\sum_{\mu} s_\mu(x_1,\dots,x_{d-1})\, s_{\tilde{\mu}}(y_1,y_2)\right) \cdot \left(1 + x_d y_1 + x_d y_2 + x_d^2 y_1 y_2\right),
    \end{align}
  
    \noindent with $\mu$ running over the partitions related to the $(d-1)$-dimensional sum. This decomposition leads to four distinct ways to extend $\mu$ and its conjugate $\tilde{\mu}$ in the $d^\text{th}$ recursion step:

  \begin{align}
    \chi_{U(d) \times U(2)}\Bigl( \bigwedge^{\bullet 2d} (\mathbb{C}^d \otimes \mathbb{C}^2) \Bigr) = &\sum_{\mu}\Bigl(s_\mu(x_1,\dots,x_{d-1})\, s_{\tilde{\mu}}(y_1,y_2) \\
    & \quad \quad + s_\mu(x_1,\dots,x_{d-1})\, x_d \cdot s_{\tilde{\mu}}(y_1,y_2)\, y_1 \\
    & \quad \quad + s_\mu(x_1,\dots,x_{d-1})\, x_d \cdot s_{\tilde{\mu}}(y_1,y_2)\, y_2 \\
    & \quad \quad + s_\mu(x_1,\dots,x_{d-1})\, x_d^2 \cdot s_{\tilde{\mu}}(y_1,y_2)\, y_1 y_2
    \Bigr) \\
    & = \sum_\lambda \Bigl( s_\lambda(x_1,\dots,x_{n})\, s_{\tilde{\lambda}}(y_1,y_2) \Bigr).
  \end{align}
  The $U(d-1)$ to $U(d)$ lifting is achieved via a direct sum over intermediate representations of $U(d-1)\times U(1)$ -- as derived in Theorem~\ref{thm:un_branching} -- where the products $s_\mu(x_1,\dots,x_{d-1})\, x_d$ are expressed in terms of Young diagrams with additional elements corresponding to $d$. The Schur polynomials $s_{\tilde{\mu}}$ are multiplied by monomials in $y_1^i$ and $y_2^j$, with $i,j \in \{0,1\}$, as dictated by the dual Pieri rule coupling derived in Theorem~\ref{thm:dualpieri}. The four terms correspond to the four distinct methods for extending the Young diagram $\mu$ and its conjugate $\tilde{\mu}$ in the $d^\text{th}$ recursion step. The constraint that the partition $\tilde{\mu}$ has at most two rows implies, by conjugation, that the partition $\mu$ contains at most two columns. Each term in the sum represents the character of an irreducible representation of $(U(d-1) \times U(1)) \times U(2)$, where the $U(1)$ character is realised as the Schur polynomial $x_d$. Since the branching is multiplicity-free by Howe duality (see Theorem~\ref{thm:howe_duality}), the characters yield the irreducible representation decomposition precisely, thereby completing the proof.
  \end{proof}

  \begin{figure}[htbp!]
    \centering
      \includegraphics[width=15cm]{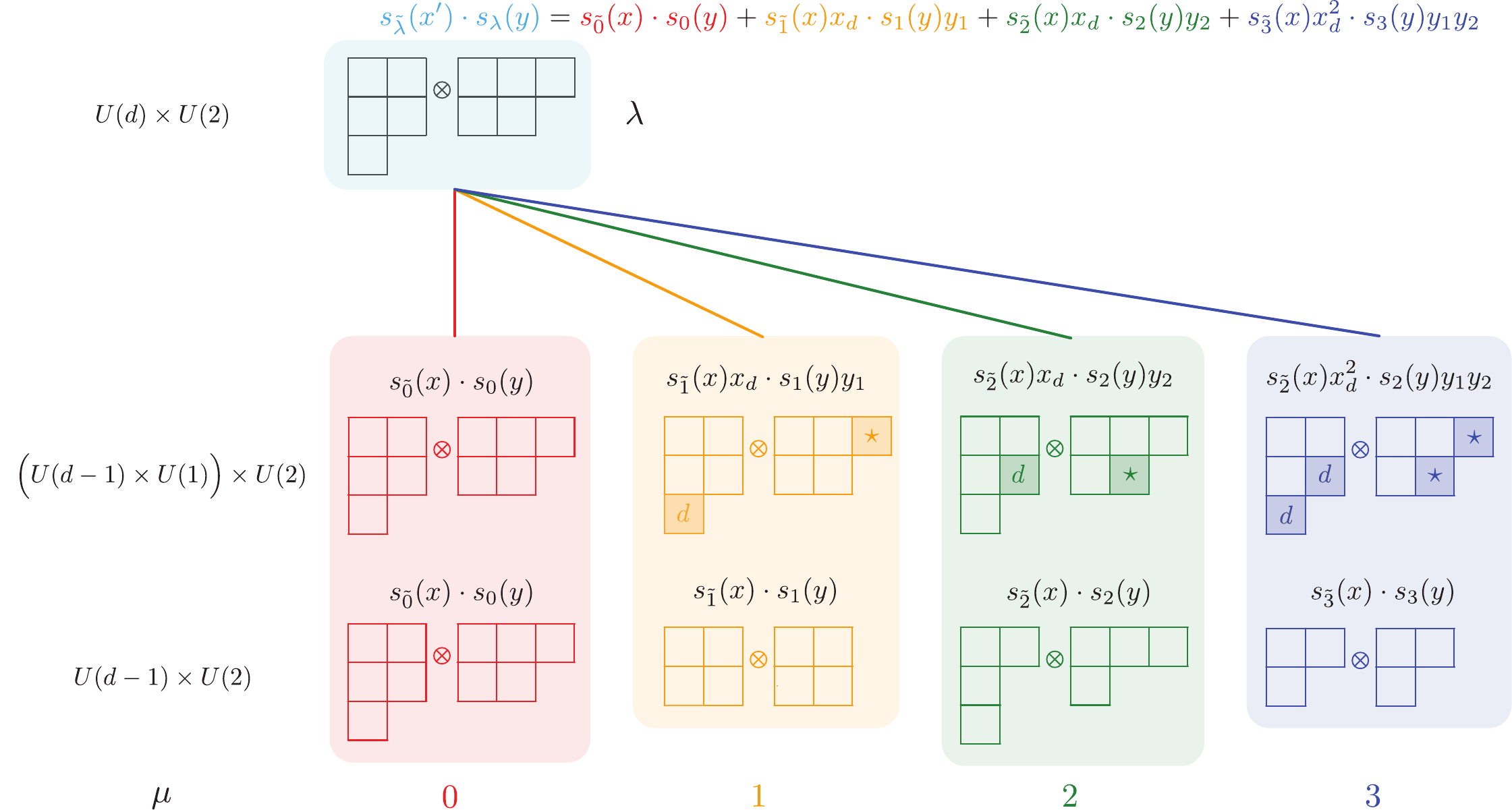}
      \caption{Creation of a new $U(d) \times U(2)$ irrep $\lambda$ via incoming $U(d-1) \times U(2)$ irreps $\{\mu_0, \mu_1, \mu_2, \mu_3 \}$, where the $U(d) \times U(2)$ irrep is created from an $U(d-1) \times U(2)$ irrep via a summation over $\left(U(d-1) \times U(1) \right) \times U(2)$ irreps (directed upwards in the diagram). The $U(1)$ variable is the Schur polynomial $x_d$ and the $\star$ is the Pieri rule coupling for the $U(2)$ group (See Example~\ref{ex:dual_pieri}). $s_{\tilde{\lambda}}(x')$ is the Schur polynomial character of the $U(d)$ irrep, where $x' = x_1, x_2, \hdots, x_{d}$. $s_{\tilde{\mu}}(x)$ is the Schur polynomial character of the $U(d-1)$ irrep, where $x = x_1, x_2, \hdots, x_{d-1}$. $s_{\mu}(y)$ and $s_{\lambda}(y)$ are the Schur polynomial characters of $U(2)$ irreps, where $y = y_1, y_2$.}
      \label{fgr:schur_poly_d_branching_new}
    \end{figure}
  
    \begin{figure}[htbp!]
      \centering
        \includegraphics[width=12cm]{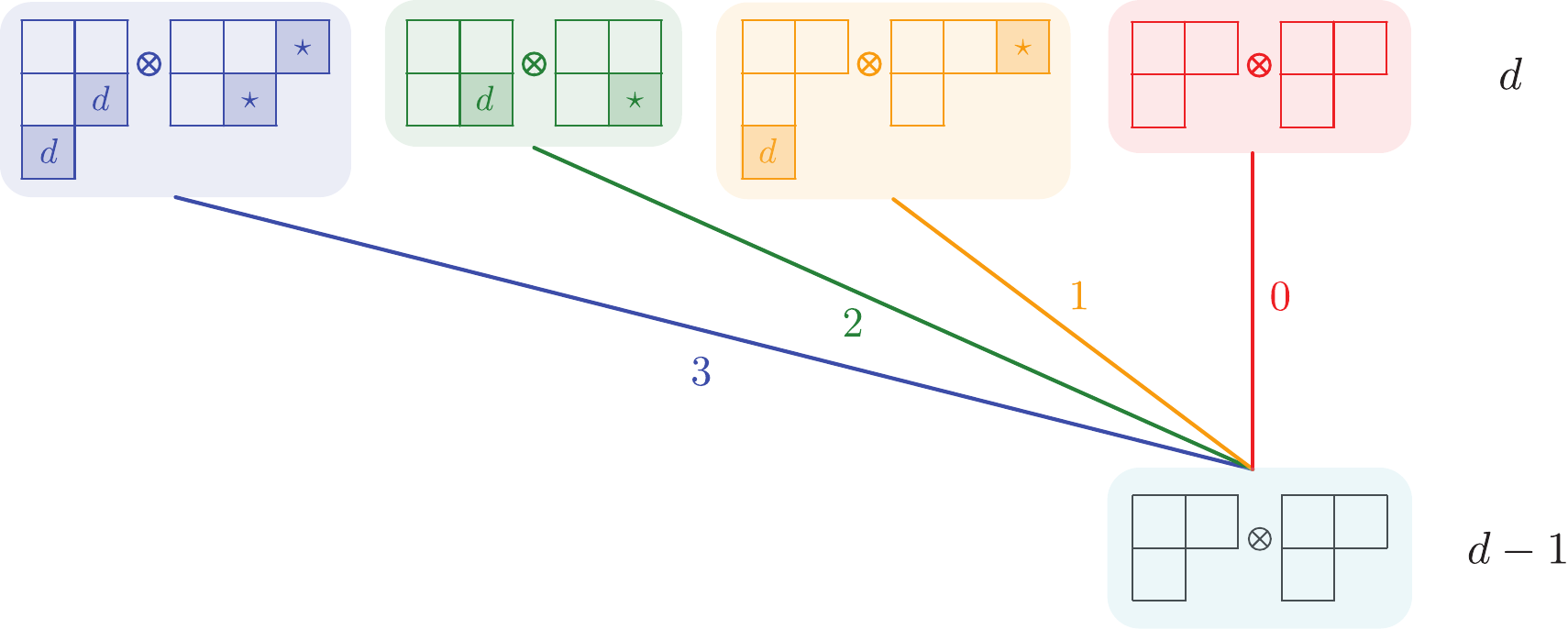}
        \caption{A graphical representation of lifting a $U(d-1) \otimes U(2)$ irrep pair $\tilde{\mu}, \mu$ to $U(d) \otimes U(2)$ irreps via $\mathbf{d}_i$ steps. The $\star$ indicates an application of the Pieri rule. The $\mathbf{d}_i$ represent four possible couplings of a $1$-dimensional irrep of $U(1)$ to the $U(d-1)$ irrep.}
        \label{fgr:yt_d_branching}
      \end{figure}
  
  This branching process is illustrated in Figure~\ref{fgr:schur_poly_d_branching_new}, where the decomposition from $U(d-1) \times U(2)$ to $U(d) \times U(2)$ irreducible representations is achieved via an intermediate $U(d-1) \times U(1)$ representation. In the figure, the Young diagrams correspond to the Schur polynomials as discussed previously (as well as in Appendix~\ref{app:symmetricpolynomials} and Theorem \ref{thm:un_branching}). The $U(1)$ variable is represented by the Schur polynomial $x_d$, and the $\star$ symbol denotes the dual Pieri rule coupling for the $U(2)$ group (see Theorem~\ref{thm:dualpieri}). The coupling of the $U(d-1) \times U(1)$ representations is also detailed in Theorem~\ref{thm:gt_branching_betweenness}. Furthermore, the Schur polynomial character of the $U(d)$ irreducible representation is given by $s_{\tilde{\lambda}}(x')$, where $x' = (x_1,x_2,\dots,x_{d})$, while that of the $U(d-1)$ irreducible representation is given by $s_{\tilde{\mu}}(x)$, where $x = (x_1,x_2,\dots,x_{d-1})$. The Schur polynomial characters of the $U(2)$ representations are denoted by $s_{\mu}(y)$ and $s_{\lambda}(y)$. Theorem~\ref{thm:shavitt_branching} thus identifies four possible ${d}_i$ branching directions (subject to constraints), where the index $d_i$ indicates the coupling of a one-dimensional representation of $U(1)$ to the $U(d-1)$ representation.

  Starting with a $U(1)$ irrep, the consecutive lifting up the subgroup chain $U(1) \subset  U(2) \subset  \cdots \subset  U(d)$ can be described by a sequence of \textit{step vectors} $\mathbf{d}$, where $\mathbf{d}_i \in \{0, 1, 2, 3\}$ represents the four possible ways to embed the $U(i)$ irrep into a $U(i+1)$ irrep (i.e, the four different ways to `add a box' to its Young diagram). This can be represented graphically, in what is known in the UGA literature as a \textit{Shavitt graph} \cite{Shavitt1978,Shepard2006}, leading to a Bratellli diagram-like structure and forming a graphical representation of the GT basis deployed in the Paldus transform. A single embedding is shown in Figure \ref{fgr:yt_d_branching}, and a Shavitt graph with multiple embeddings is shown in Figure \ref{fgr:CSFALL}.

\section{Summation of the Dimension Formula} \label{app:dimension_formula}

Recall that in the decomposition of the vector space into antisymmetric irreps of $U(d) \times U(2)$, every irreducible representation can be labelled by quanum numbers $N, S$. For every irrep, the number of distinct $U(2)$ GT patterns is given by $2S + 1$, whereas the number of distinct $U(d)$ GT paterns (i.e. step vectors $\mathbf{d}$) can be found from the Weyl dimension formula \cite{PALDUS1976131}:

\begin{equation}
  T^d_{S, N} = \frac{2S+1}{d+1} {d + 1 \choose \frac{1}{2} N - S} {d + 1 \choose d - \frac{1}{2} N - S}, \label{eq:weyl_dimension}
\end{equation}

\noindent where for a given $S$ the permitted particle numbers are $N \in \{ 2S, 2S + 2, ..., 2d - 2S\}$. Taking a sum of $T^d_{S, N}$ over all $N$ gives us $T^d_{S}$, which is the number of all distinct step vectors with a common value of $S$:

\begin{align*}
  T^d_{S} &= \frac{2S+1}{d+1} \sum_{N}  {d + 1 \choose \frac{1}{2} N - S} {d + 1 \choose d - \frac{1}{2} N - S} \\
&= \frac{2S+1}{d+1} \sum_{k=0}^{d - 2S} {d + 1 \choose k} {d + 1 \choose d - 2S - k} \\
&= \frac{2S+1}{d+1} {2d + 2 \choose d - 2S}
\end{align*}

\noindent where in the first line we have set $N = 2S + 2k$ and in the second line we have used the Vandermonde identity. Using basic binomial identities, $T^d_{S}$ can be rewritten as a difference of two binomial coefficients:

\begin{align*}
  T^d_{S} &= \frac{2S+1}{d+1} \times \frac{(2d + 2)}{(2d+2) - 2(d - 2S)} \left( {(2d + 2) -1 \choose d - 2S} - {(2d + 2) - 1  \choose (d - 2S) - 1}\right) \\
  &=  {2d + 1 \choose d - 2S} - {2d + 1  \choose d - 2S - 1}
\end{align*}

We can sum $T^d_{S}$ over all values of $S$ to obtain $T^d$, which is the total number distinct step vectors (i.e, the sum of dimensions of all \textit{inequivalent} $U(d)$ irreps appearing in the decomposition). This is a simple telescoping sum:

\begin{align*}
  T^d &= \sum_S T_{d, S} \\
  &= \sum_{2S = 0}^d \left( {2d + 1 \choose d - 2S} - {2d + 1  \choose d - 2S - 1} \right) \\
  &= {2d + 1 \choose d} - {2d + 1 \choose d-1} + {2d + 1 \choose d - 1} + \hdots + {2d + 1 \choose 0} -  {2d + 1 \choose 0}\\
  &= {2d + 1 \choose d}.
\end{align*}

To convince ourselves that the total number of $U(d) \times U(2)$ GT states is indeed $2^{2d}$, we can count them by taking the following sum:

\begin{align*}
  \sum_{S, N} (2S+1) \times T^d_{S, N} &= \sum_{S} (2S+1) \times T_{d, S} \\
   &= \sum_{2S=0}^d (2S+1)  \left( {2d + 1 \choose d - 2S} - {2d + 1  \choose d - 2S - 1} \right) \\
  & = \sum_{k=0}^d {2d + 1 \choose k} \\
  & = \frac{1}{2} \sum_{k=0}^{2d + 1} {2d + 1 \choose k} \\
  & = 2^{2d}.
\end{align*}

Where the first line is again a telescoping sum, which may be re- written as the second line. The second line forms exactly one half of the sum over $2d+2$ binomial coefficients, and can be re-written as the third line.

\section{Circuits} \label{sec:circuit_compilation}
In this section we introduce some of the leading compilation strategies for the abstract primitives utilised in the quantum Paldus transform. These strategies are essential for optimizing the implementation of the transform, and the corresponding gate counts are provided to support the resource estimation discussed in the main text.
\subsection{\texorpdfstring{$\select$}{SELECT}}

The $\select$ circuit depicted in Figure~\ref{fgr:select_box} implements a series of indexed multi-controlled unitaries, denoted as $\{U_i\}$. In this framework, the control register encodes a binary index that sequentially activates the corresponding unitary operation, thereby realizing the transformation known as a $\select(U)$ operation. The $\select(U)$ operation is defined as
\[
\select(U) = \sum^{I}_i |i\rangle \langle i| \otimes U_i.
\] 
where the index $i$ corresponds to the binary expansion of the integer stored within the control register of $m = \lceil \log_2 I \rceil$ qubits, which acts on a state register containing $|\psi\rangle$ on $q$ qubits. Figure \ref{fgr:select_box} provides a schematic representation of the $\select(U)$ construction.

\begin{figure}[htbp!]
    \centering
    \begin{adjustbox}{width=\textwidth}
    \begin{quantikz}
    \lstick{$|i\rangle$}     &\qwbundle{m}  &   \mctrl{1} & \qw \\
    \lstick{$|\psi\rangle$}  &\qwbundle{q}  & \gate{\sel(U)}  &\qw \\ 
    \end{quantikz}
    \hspace{0.25cm}
    =
    \begin{quantikz}[wire types={q,n,q,q,q,q}]
    \lstick{$|i\rangle_0$}     &             & \octrl{1}     & \octrl{1}             & \octrl{1}      & \octrl{1}    &   \octrl{1}   & \octrl{1}     & \octrl{1}      & \octrl{1}     & \cdots\\
                               &             & \vdots\vqw{1} & \vdots\vqw{1}         & \vdots\vqw{1}  & \vdots\vqw{1}& \vdots\vqw{1} & \vdots\vqw{1}  &\vdots\vqw{1}  &\vdots\vqw{1}  & \\
    \lstick{$|i\rangle_{m-3}$} &             & \octrl{1}     & \octrl{1}             & \octrl{1}      & \octrl{1}    & \ctrl{1}      & \ctrl{1}       & \ctrl{1}      & \ctrl{1}      & \cdots\\
    \lstick{$|i\rangle_{m-2}$} &             &\octrl{1}      & \octrl{1}             & \ctrl{1}       & \ctrl{1}     & \octrl{1}     & \octrl{1}      & \ctrl{1}      & \ctrl{1}      & \cdots\\
    \lstick{$|i\rangle_{m-1}$} &             &\octrl{1}      & \ctrl{1}              & \octrl{1}      & \ctrl{1}     &  \octrl{1}    & \ctrl{1}       & \octrl{1}     & \ctrl{1}      & \cdots\\
    \lstick{$|\psi\rangle$}    &\qwbundle{q} &\gate{U_0}     & \gate{U_1}            & \gate{U_2}     &  \gate{U_3}  & \gate{U_4}    &  \gate{U_5}    &  \gate{U_6}   &  \gate{U_7}   & \cdots& \rstick{$U_i|\psi\rangle$} \\
    \end{quantikz}
    \end{adjustbox}

    \caption{$\select(U)$ construction over the set of unitaries $\{U_i\}$. $|i\rangle$ are the $m$ index qubits and the $|\psi\rangle$ is the $q$ qubit state register that $\{U_i\}$ acts on.}
    \label{fgr:select_box}
\end{figure}
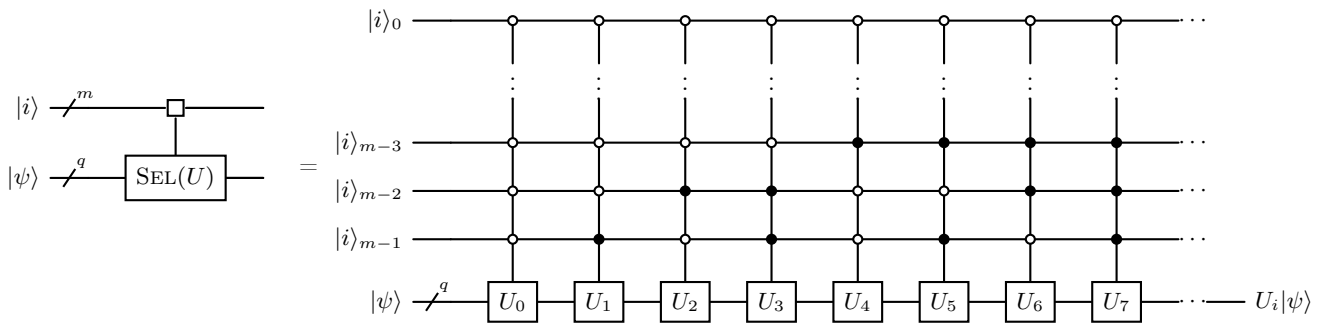

The $\select(U)$ can be compiled efficiently using the unary iteration technique~\cite{Babbush2018LinearT}. This method implements the $\select(U)$ with a decreased gate count and reduced circuit depth. The unary iteration approach leverages the observation that an $m$-controlled unitary can be synthesised via a Toffoli cascade, with $m-1$ work qubits. Moreover, the sequential application of this cascade induces multiple adjacent cancellations, ultimately lowering the overall $T$ gate count to approximately $4I$.  
    
    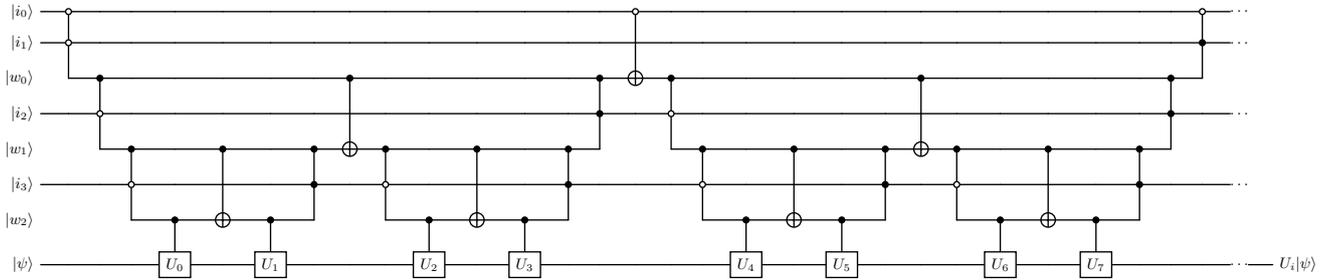
\begin{figure}[htbp!]
        \centering
        \begin{adjustbox}{width=\textwidth}
        \begin{quantikz}
        \lstick{$|i_0\rangle$}  & \octrl{1} &           &           &           &           &           &           &           &           &           &           &           &           &           & \octrl{2} &           &           &           &           &           &           &           &           &           &           &           &           &           & \octrl{1} & \cdots \\
        \lstick{$|i_1\rangle$}  & \octrl{1} &           &           &           &           &           &           &           &           &           &           &           &           &           &           &           &           &           &           &           &           &           &           &           &           &           &           &           & \ctrl{1}  & \cdots \\
        \lstick{$|w_0\rangle$}  & \nw       & \ctrl{1}  &           &           &           &           &           & \ctrl{2}  &           &           &           &           &           & \ctrl{1}  & \targ{0}  & \ctrl{1}  &           &           &           &           &           & \ctrl{2}  &           &           &           &           &           & \ctrl{1}  &           & \nw    \\
        \lstick{$|i_2\rangle$}  &           & \octrl{1} &           &           &           &           &           &           &           &           &           &           &           & \ctrl{1}  &           & \octrl{1} &           &           &           &           &           &           &           &           &           &           &           & \ctrl{1}  &           & \cdots \\
        \lstick{$|w_1\rangle$}  & \nw       & \nw       & \ctrl{1}  &           & \ctrl{2}  &           & \ctrl{1}  & \targ{0}  & \ctrl{1}  &           & \ctrl{2}  &           & \ctrl{1}  &           & \nw       & \nw       & \ctrl{1}  &           & \ctrl{2}  &           & \ctrl{1}  & \targ{0}  & \ctrl{1}  &           & \ctrl{2}  &           & \ctrl{1}  &           &  \nw      & \nw    \\
        \lstick{$|i_3\rangle$}  &           &           & \octrl{1} &           &           &           & \ctrl{1}  &           & \octrl{1} &           &           &           & \ctrl{1}  &           &           &           & \octrl{1} &           &           &           & \ctrl{1}  &           & \octrl{1} &           &           &           & \ctrl{1}  &           &           & \cdots \\
        \lstick{$|w_2\rangle$}  &\nw        &\nw        & \nw       & \ctrl{1}  & \targ{0}  & \ctrl{1}  &           & \nw       & \nw       & \ctrl{1}  & \targ{0}  & \ctrl{1}  &           & \nw       &  \nw      & \nw       &   \nw     & \ctrl{1}  & \targ{0}  & \ctrl{1}  &           &  \nw      & \nw       & \ctrl{1}  & \targ{0}  & \ctrl{1}  &           &  \nw      &  \nw      & \nw    \\
        \lstick{$|\psi\rangle$}    &           &           &           & \gate{U_0}&           & \gate{U_1}&           &           &           & \gate{U_2}&           & \gate{U_3}&           &           &           &           &           & \gate{U_4}&           & \gate{U_5}&           &           &           & \gate{U_6}&           & \gate{U_7}&           &           &           & \cdots&\rstick{$U_i|\psi\rangle$} \\ 
        \end{quantikz}
        \end{adjustbox}
        \caption{\textit{Unary Iteration} construction on an Index Box over the set of unitaries $\{U_i\}$ showning 8 indices of a 16 element set. $|w_i\rangle$ are $3$ work qubits, $|i\rangle$ are the $4$ index qubits and the $|\psi\rangle$ is the state register which $\{U_i\}$ act on.} 
        \label{fgr:unary_iteration}
    \end{figure}

    The $4I$ $T$ gate complexity arises from the construction of the compute Toffoli acting on the $|0\rangle$ state, as shown in Figure~\ref{fig:compute_toffoli}. The uncompute Toffoli can be further optimised using the measurement-based uncomputation technique, as illustrated in Figure~\ref{fig:uncompute_toffoli}, which results in it not being included in the overall Toffoli cost. Consequently, the total cost of the $\select(U)$ operation is $I$ Toffoli gates (excluding the cost of the controlled unitaries $U_i$). Alternatively, if mid-circuit measurement is unavailable, a standard Toffoli gate can be employed, which decomposes into $7$ $T$ gates.
    \begin{figure}[htbp!]
        \centering
        \begin{quantikz}
             & \ctrl{1}  & \\
             & \ctrl{1}  & \\
             & \nw       & \\
        \end{quantikz}
        =
        \begin{quantikz}
                                & \ctrl{1}  & \\
                                & \ctrl{1}  & \\
        \lstick{$|0\rangle$}    & \targ{0}  & \\
        \end{quantikz}
        =
        \begin{quantikz}
                                &            &                  & \ctrl{2}  &                   &               &                   &           &                   &     \\
                                & \ctrl{1}   &                  &           &                   & \ctrl{1}      &                   &           &                   &     \\
        \lstick{$|T\rangle$}    & \targ{0}   & \gate{T^\dagger} & \targ{0}  & \gate{T^\dagger}  & \targ{0}      & \gate{T^\dagger}  & \gate{H}  & \gate{S^\dagger}  &      \\
        \end{quantikz}
        \caption{Computation Toffoli, where the target state is always $|0\rangle$ using a $T$ state and $3$ $T$ Gates}
        \label{fig:compute_toffoli}
        \end{figure}
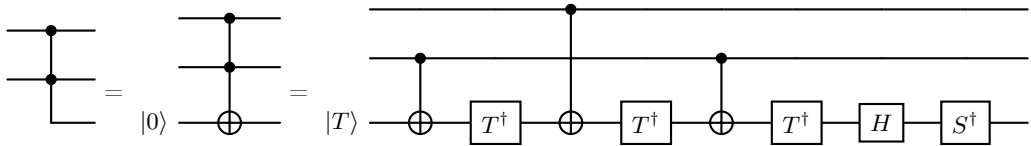

        \begin{figure}[!htbp]
            \centering
            \begin{quantikz}
                & \ctrl{1}   & \\
                & \ctrl{1}   & \\
                &           & \nw   \\
            \end{quantikz}
            =
            \begin{quantikz}
                & \ctrl{1}   & \\
                & \ctrl{1}   & \\
                & \targ{1}   & \rstick{$|0\rangle$} \\
        \end{quantikz}
        =
        \begin{quantikz}
            &           &           & \ctrl{1}                 &\\
            &           &           & \gate{Z}                  &\\
            &\gate{H}   & \meter{}  & \ctrl[vertical wire=c]{-1} \setwiretype{c} \\
        \end{quantikz}
            \caption{Measurement-based uncomputation Toffoli gate, where the final target state is $|0\rangle$}

            \label{fig:uncompute_toffoli}
        \end{figure}
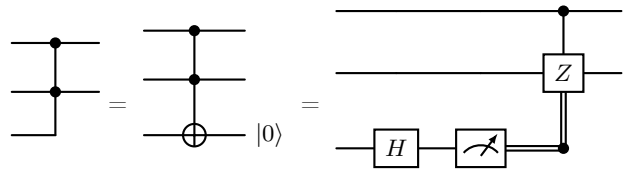

\subsection{Data Lookup}
\label{sec:data_lookup}

This section presents methods for storing indexed binary data in a quantum register. The data lookup operation is a fundamental building block in quantum algorithms, and it will be used in the quantum Paldus transform to implement controlled rotations. The $\select$ construction can be used to implement a $\data$ look up operation (sometimes called QROM~\cite{Babbush2018LinearT}), where the unitaries $\{U_i\}$ encode binary data $D_i$. We can write the data as $D_i = \bigotimes^{b-1}_{j=0} X^{i}_j$, which is just a series of $X$ gates on the data qubits that encodes the binary number into the qubit register. This can be seen in Figure~\ref{fig:data_box}.

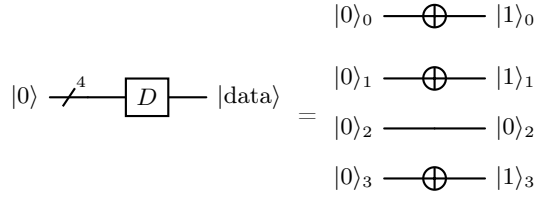
\begin{figure}[htbp!]
    \centering
    \begin{quantikz}
        \lstick{$|0\rangle$} &\qwbundle{4} & \gate{D}   & \rstick{$|\text{data}\rangle$} \\
    \end{quantikz}
    =
    \begin{quantikz}
       \lstick{$|0\rangle_0$}  &\targ{}    & \rstick{$|1\rangle_0$} \\
       \lstick{$|0\rangle_1$}  &\targ{}    & \rstick{$|1\rangle_1$} \\
       \lstick{$|0\rangle_2$}  &           & \rstick{$|0\rangle_2$}  \\
       \lstick{$|0\rangle_3$}  &\targ{}    & \rstick{$|1\rangle_3$}   \\
    \end{quantikz}
    
    \caption{Data box circuit for a $4$-bit index $(1,1,0,1)$ } \label{fig:data_box}
\end{figure}

\noindent In the data lookup oracle the initial state must be in the $|0\rangle^{\otimes n}$ state to encode the data, which can be seen in Figure~\ref{fgr:controlled_data_box}. Choosing an index $i$ on the control register will apply the data box $D_i$ to the target state $|0\rangle^{\otimes n}$.
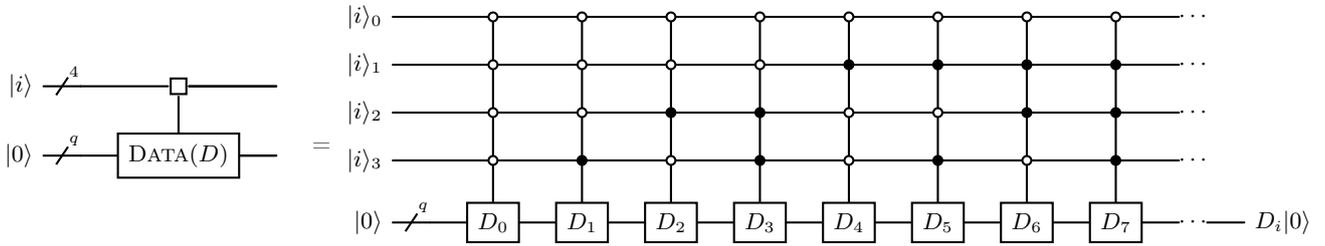
\begin{figure}[htbp!]
    \centering
    \begin{adjustbox}{width=\textwidth}
    \begin{quantikz}
    \lstick{$|i\rangle$}  & \qwbundle{4} &    \mctrl{1} & \qw \\
    \lstick{$|0\rangle$}  &\qwbundle{q}  & \gate{\data(D)}  &\qw \\ 
    \end{quantikz}
    \hspace{0.25cm}
    =
    \begin{quantikz}
    \lstick{$|i\rangle_0$} &              &  \octrl{1} & \octrl{1} & \octrl{1} & \octrl{1} &  \octrl{1} & \octrl{1} & \octrl{1} & \octrl{1} & \cdots\\ 
    \lstick{$|i\rangle_1$} &              &  \octrl{1} & \octrl{1} & \octrl{1} & \octrl{1} & \ctrl{1} & \ctrl{1} & \ctrl{1} & \ctrl{1} & \cdots\\
    \lstick{$|i\rangle_2$} &              & \octrl{1} & \octrl{1} & \ctrl{1} & \ctrl{1} & \octrl{1} & \octrl{1} & \ctrl{1} & \ctrl{1} & \cdots\\
    \lstick{$|i\rangle_3$} &              & \octrl{1} & \ctrl{1} & \octrl{1} & \ctrl{1} & \octrl{1} & \ctrl{1} & \octrl{1} & \ctrl{1} & \cdots\\
    \lstick{$|0\rangle$}   &\qwbundle{q}  & \gate{D_0}  &  \gate{D_1}  &  \gate{D_2}  &  \gate{D_3}  & \gate{D_4}  &  \gate{D_5}  &  \gate{D_6}  &  \gate{D_7}  & \cdots & \rstick{$D_i|0\rangle$}\\
    \end{quantikz}
    \end{adjustbox}
    \caption{\textit{Data Lookup} over the set of unitaries $\{U_i\}$ showing 8 indices of a 16 element set, which can be compiled by unary iteration, where we choose an index $i$ and apply the data box $D_i$ to the target $|0\rangle^{\otimes q}$ state.} \label{fgr:controlled_data_box}
    \end{figure}

\subsubsection{Unary Iteration}
\label{sec:unary_iteration}

The most straightforward compilation of the data lookup operation is to use unary iteration, which is depicted in Figure~\ref{fgr:unary_iteration}. Compared to the $\select$ operation, the main difference here is that the controlled unitaries are now $D_i$, and the target state is $|0\rangle$.

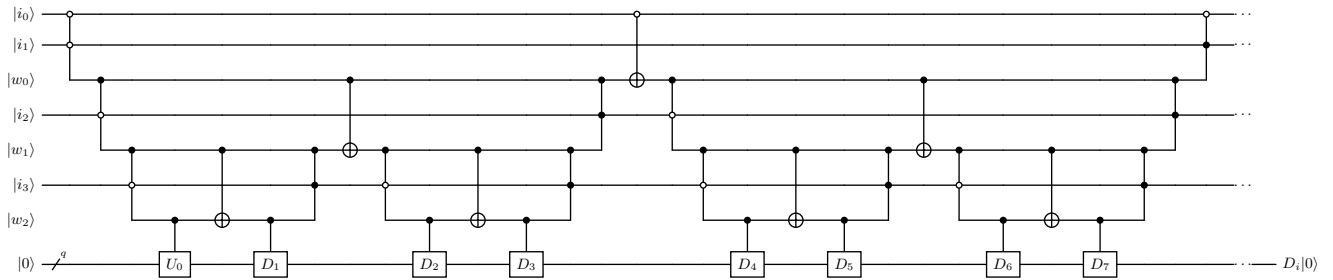
\begin{figure}[htbp!]
    \centering
    \begin{adjustbox}{width=\textwidth}
    \begin{quantikz}
    \lstick{$|i_0\rangle$}  & \octrl{1} &           &           &           &           &           &           &           &           &           &           &           &           &           & \octrl{2} &           &           &           &           &           &           &           &           &           &           &           &           &           & \octrl{1} & \cdots \\
    \lstick{$|i_1\rangle$}  & \octrl{1} &           &           &           &           &           &           &           &           &           &           &           &           &           &           &           &           &           &           &           &           &           &           &           &           &           &           &           & \ctrl{1}  & \cdots \\
    \lstick{$|w_0\rangle$}  & \nw       & \ctrl{1}  &           &           &           &           &           & \ctrl{2}  &           &           &           &           &           & \ctrl{1}  & \targ{0}  & \ctrl{1}  &           &           &           &           &           & \ctrl{2}  &           &           &           &           &           & \ctrl{1}  &           & \nw    \\
    \lstick{$|i_2\rangle$}  &           & \octrl{1} &           &           &           &           &           &           &           &           &           &           &           & \ctrl{1}  &           & \octrl{1} &           &           &           &           &           &           &           &           &           &           &           & \ctrl{1}  &           & \cdots \\
    \lstick{$|w_1\rangle$}  & \nw       & \nw       & \ctrl{1}  &           & \ctrl{2}  &           & \ctrl{1}  & \targ{0}  & \ctrl{1}  &           & \ctrl{2}  &           & \ctrl{1}  &           & \nw       & \nw       & \ctrl{1}  &           & \ctrl{2}  &           & \ctrl{1}  & \targ{0}  & \ctrl{1}  &           & \ctrl{2}  &           & \ctrl{1}  &           &  \nw      & \nw    \\
    \lstick{$|i_3\rangle$}  &           &           & \octrl{1} &           &           &           & \ctrl{1}  &           & \octrl{1} &           &           &           & \ctrl{1}  &           &           &           & \octrl{1} &           &           &           & \ctrl{1}  &           & \octrl{1} &           &           &           & \ctrl{1}  &           &           & \cdots \\
    \lstick{$|w_2\rangle$}  &\nw        &\nw        & \nw       & \ctrl{1}  & \targ{0}  & \ctrl{1}  &           & \nw       & \nw       & \ctrl{1}  & \targ{0}  & \ctrl{1}  &           & \nw       &  \nw      & \nw       &   \nw     & \ctrl{1}  & \targ{0}  & \ctrl{1}  &           &  \nw      & \nw       & \ctrl{1}  & \targ{0}  & \ctrl{1}  &           &  \nw      &  \nw      & \nw    \\
    \lstick{$|0\rangle$}    & \qwbundle{q}   &           &           & \gate{D_0}&           & \gate{D_1}&           &           &           & \gate{D_2}&           & \gate{D_3}&           &           &           &           &           & \gate{D_4}&           & \gate{D_5}&           &           &           & \gate{D_6}&           & \gate{D_7}&           &           &           & \cdots & \rstick{$D_i|0\rangle$} \\ 
    \end{quantikz}
    \end{adjustbox}
    \caption{$\data(D)$ operation \textit{Unary Iteration} over the set of unitaries $\{D_i\}$ showning 8 indices of a 16 element set. $|w_i\rangle$ are work qubits of the $3$ qubit register, $|i\rangle$ are the index qubits of the $4$ qubit index register and the $|\psi\rangle$ is the state register that $\{D_i\}$ act on.} 
    \label{fgr:unary_iteration_data}
\end{figure}

\noindent The cost of the control compilation will be $4I$ $T$ gates or $I$ Toffolis. The controlled data boxes $\{D_i\}$ are a sequential set of $CX$ gates depending on the binary representation chosen. We assume that the mean binary string will have half of the number of data qubits $q$ encoded with $X$ gates. We can thus approximate that each controlled $U$ will be $q / 2$ $CX$ gates. For the purpose of our resource estimation, we denote this as $\Theta(In)$ Clifford gates.

\subsubsection{Clean \texorpdfstring{$\selswap$}{SelectSwap}}

        Using $k$ registers of $q$ clean qubits, the gate count can be reduced following the procedure of \cite{Low2024tradingtgatesdirty}. This combines a $\select$ and a series of controlled register $\swup$ operations over multiple registers.

            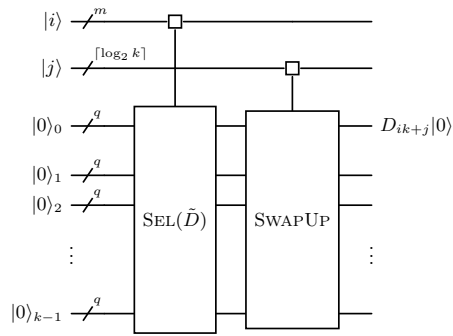
\begin{figure}[!htbp]
                \centering
            \begin{adjustbox}{width=6cm}
            \begin{quantikz}[wire types={q,q,q,q,q,n,q}]
            \lstick{$|i\rangle$}      & \qwbundle{m}               &\mctrl{2}      &                        & \\
            \lstick{$|j\rangle$}      & \qwbundle{\lceil\log_2 k \rceil}              &               &\mctrl{1}               & \\
            \lstick{$|0\rangle_0$}      & \qwbundle{q}&\gate[5]{\sel(\tilde{D})} &\gate[5]{\swup}           & \rstick{$D_{ik+j}|0\rangle$}  \\
            \lstick{$|0\rangle_1$}      & \qwbundle{q}&               &                        & \\
            \lstick{$|0\rangle_2$}      & \qwbundle{q}&               &                        & \\
            \vdots                    &             &               &                        & \vdots \\
            \lstick{$|0\rangle_{k-1}$}      & \qwbundle{q}&               &                        & \\
            \end{quantikz}
            \end{adjustbox}
            \caption{Clean $\selswap$ operation for a $data$ lookup. $k$ is the number of clean target registers, $q$ is the number of qubits in each register. $\sel(\tilde{D})$ is the combined product of data boxes $\tilde{D}_i = \otimes^k_j D_{ik+j}$. The $\swup$ operation swaps the $j^\text{th}$ element of the set of $k$ indexed with $i$ to the top register and results in $D_{ik+j}|0\rangle$.}
            \end{figure}

            The $\select(\tilde{D})$ operation can be compiled using unary iteration (Section~\ref{sec:unary_iteration}); however, now each controlled operation is $\tilde{D}_i = \otimes^k_j D_{ik+j}$. This means that every controlled element contains $k$ data boxes, where $k$ must be a power of $2$ (See Ref~\cite{Low2024tradingtgatesdirty} for values not powers of $2$). This is illustrated in Figure~\ref{fgr:sel_swap_example}. The controlled $\swup$ box essentially takes the register $|q\rangle_i$, with $i$ indexed on the control register, and swaps it up to the top qubit. A register swap is simply a qubit-wise swap operation for each pair of qubits.

            \begin{figure}[!htbp]
                \centering
            \begin{adjustbox}{width=9cm}
            \begin{quantikz}
                \lstick{$|i\rangle$}   & \qwbundle{3}   &\mctrl{1}      &  \\
                \lstick{$|q\rangle_0$} &                &\gate[8]{\swup}  & \rstick{$|q\rangle_i$}  \\ 
                \lstick{$|q\rangle_1$} &                &               &  \\ 
                \lstick{$|q\rangle_2$} &                &               &  \\ 
                \lstick{$|q\rangle_3$} &                &               &  \\ 
                \lstick{$|q\rangle_4$} &                &               &  \\ 
                \lstick{$|q\rangle_5$} &                &               &  \\ 
                \lstick{$|q\rangle_6$} &                &               &  \\ 
                \lstick{$|q\rangle_7$} &                &               &  \\ 
                \end{quantikz}
                \hspace{0.25cm}
                =
            \begin{quantikz}
            \lstick{$|i\rangle_0$} & {} & {} & {} & {} & {} & {} & \ctrl{3} & \\
            \lstick{$|i\rangle_1$} & {} & {} & {} & {} & \ctrl{3} & \ctrl{2} & {} & \\
            \lstick{$|i\rangle_2$} & \ctrl{4} & \ctrl{3} & \ctrl{2} & \ctrl{1} & {} & {} & {} & \\
            \lstick{$|q\rangle_0$} & {} & {} & {} & \swap{4} & {} & \swap{2} & \swap{1} & \rstick{$|q\rangle_i$} \\
            \lstick{$|q\rangle_1$} & {} & {} & \swap{4} & {} & \swap{2} & {} & \targX{} & \\
            \lstick{$|q\rangle_2$} & {} & \swap{4} & {} & {} & {} & \targX{} & {} & \\
            \lstick{$|q\rangle_3$} & \swap{4} & {} & {} & {} & \targX{} & {} & {} &\\
            \lstick{$|q\rangle_4$} & {} & {} & {} & \targX{} & {} & {} & {}& \\
            \lstick{$|q\rangle_5$} & {} & {} & \targX{} & {} & {} & {} & {}& \\
            \lstick{$|q\rangle_6$} & {} & \targX{} & {} & {} & {} & {} & {}& \\
            \lstick{$|q\rangle_7$} & \targX{} & {} & {} & {} & {} & {} & {}& \\
            \end{quantikz}   
            \end{adjustbox}
            \caption{$\swup$ operation, where the $|q\rangle_i$ is the $i^\text{th}$ qubit of the set of 8 qubits indexed with $i$ and is swapped to the top position.} \label{fig:swap}
            \end{figure}
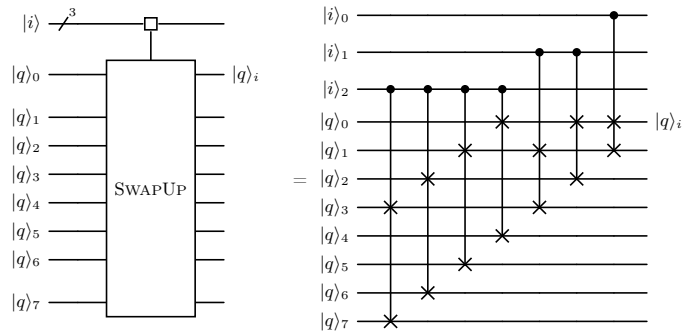

            To make this clearer, an example is presented in Figure~\ref{fgr:sel_swap_example} with an $I=16$ element set spread across $k=4$ registers. Hence, each control on the $i$ register has 4 data boxes. The controlled $\swup$ operation takes the $j^\text{th}$ element of the set of 4 indexed with $i$ and brings it to the top register.

            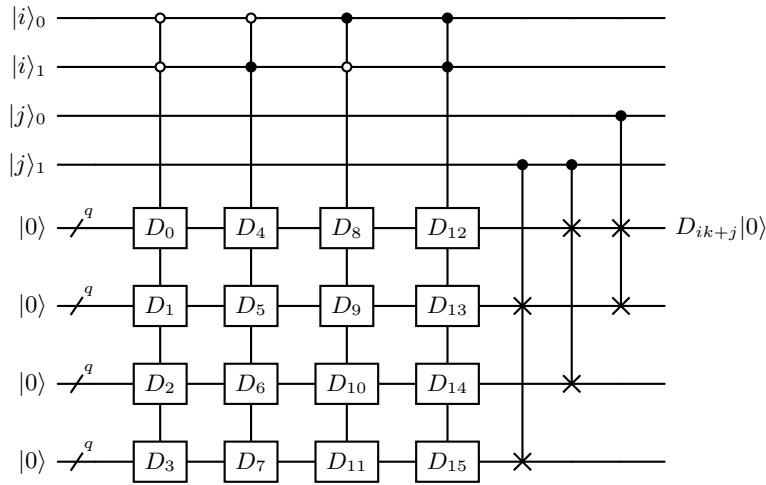
\begin{figure}[htbp!]
                \centering
                \begin{quantikz}
                \lstick{$|i\rangle_0$}  &            & \octrl{1}   & \octrl{1}      & \ctrl{1}          & \ctrl{1}          &         &         &          & \\
                \lstick{$|i\rangle_1$}  &            & \octrl{6}   & \ctrl{6}       & \octrl{6}         & \ctrl{6}          &         &         &          & \\
                \lstick{$|j\rangle_0$}  &            &             &                &                   &                   &         &         &\ctrl{2}  & \\
                \lstick{$|j\rangle_1$}  &            &             &                &                   &                   &\ctrl{2} &\ctrl{1} &          & \\
                \lstick{$|0\rangle$}    &\qwbundle{q}& \gate{D_0}  &  \gate{D_4}    &  \gate{D_8}       &\gate{D_{12}}      &         &\swap{2} &\swap{1}  & \rstick{$D_{ik+j}|0\rangle$} \\
                \lstick{$|0\rangle$}    &\qwbundle{q}& \gate{D_1}  &  \gate{D_5}    &  \gate{D_{9}}     &\gate{D_{13}}      &\swap{2} &         &\targX{}  & \\
                \lstick{$|0\rangle$}    &\qwbundle{q}& \gate{D_2}  &  \gate{D_6}    &  \gate{D_{10}}    &\gate{D_{14}}      &         &\targX{} &          & \\
                \lstick{$|0\rangle$}    &\qwbundle{q}& \gate{D_3}  &  \gate{D_7}    &  \gate{D_{11}}    &\gate{D_{15}}      &\targX{} &         &          & \\
            \end{quantikz}
            
            \caption{Example of $\selswap$ operation with $L=16$ elements and $k=4$ registers. The $D_{ik+j}$ data boxes are indexed via the double index $i,j$. It can be seen that each $i$ index has 4 data boxes, one for each register. The $j$ index then performs the $\swup$ operation to the top qubit to obtain $D_{ik+j}|0\rangle$.} \label{fgr:sel_swap_example}
        \end{figure}

        \begin{figure}[!htbp]
            \centering
            \begin{quantikz}
                & \ctrl{1}   & \\
                & \swap{1}   & \\
                & \targX{}   & \\
        \end{quantikz}
        =
        \begin{quantikz}
           & & \ctrl{1}   &  &\\
           & \targ{1} & \ctrl{1}   &\targ{1} &\\
           & \ctrl{-1} & \targ{1}   &\ctrl{-1}& \\
        \end{quantikz}
        $\approx$
        \begin{quantikz}
            &           &  &          & &\ctrl{2} &   &           & &             &\\
            & \targ{1}  &  &\ctrl{1}  & &         &   &\ctrl{1}   & &\targ{1}     &\\
            & \ctrl{-1} &\gate{G^\dagger}   &\targ{1}  &\gate{G^\dagger}  &\targ{1} &\gate{G}    &\targ{1}   & \gate{G} &\ctrl{-1}    & \\
         \end{quantikz}
            \caption{Controlled SWAP $G = S^\dagger H T H S$, correct up to phase. Where a single application in the clean $\selswap$ acts on $|0\rangle$ or $|1\rangle$ and the phase is negligible. In the dirty $\selswap$, this is applied in pairs of conjugate operations so the phase error cancels.}
            \label{fig:enter-label}
        \end{figure}
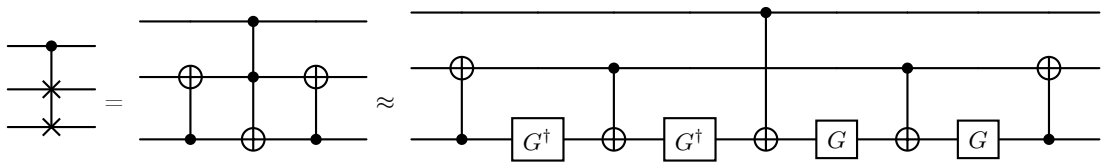
        Alternatively, it can also be helpful to view the data boxes as elements in a rectangular array with elements $D_{i,j}$, which is a recasting of the original linear array, with the $i$ and $j$ registers indexing the columns and rows respectively. The Toffoli cost of computing the $\selswap$ operation is $\lceil I / k \rceil + q(k - 1)$, where the $\select$ operation is $I/k$ Toffolis and the $\swup$ operation is $q(k-1)$ Toffolis. The Toffoli cost can be minimised by choosing the correct $I$ and $k$ values, depending on the clean qubit resources available. The cost of the uncompute $\selswap$ is $\lceil I/k\rceil + k$ Toffolis, which uses a measurement-based uncomputation procedure, compiled with a phase fixup operation (see Appendix C of \cite{Berry2019qubitizationof} for more details).

\subsubsection{Dirty \texorpdfstring{ $\selswap$}{SelectSwap}}

When there are $k-1$ registers of $q$ idling \textit{dirty} qubits (already in some state $|\psi\rangle$), they can be employed in a modified $\selswap$ workflow with a single clean register and $k-1$ dirty registers. These dirty qubits are used for computation and then restored to their original state $|\psi\rangle_i$, which still results in an overall gate complexity comparable to unary iteration. This is illustrated in Figure~\ref{fgr:dirty_qubit_selswap}. There are $k-1$ dirty qubit registers of $q$ qubits. In this approach, the clean register is initialised in the $|0\rangle$ state, then an $H$ gate is applied to all qubits in the register, and the $\swup$ operation is reversed to move the clean qubits in the $|+\rangle$ state to the other positions. The data boxes are applied, and the $\swup$ operation is performed again to return the clean qubits to the top register; finally, $H$ gates are applied to return the clean qubits to their original state $|0\rangle$. Because the sequence of operations satisfies $HXH|0\rangle = |0\rangle$, the top qubit remains in the $|0\rangle$ state after the first half of the procedure. However, dirty qubits that were not acted on by the $H$ gate may have accumulated various $X$ operations from the data box. The operations in the first and second parts of the algorithm are conjugate from the perspective of the dirty qubits, so applying them again returns these qubits to $|\psi\rangle_i$. In the second half of the procedure, the clean qubits do not undergo $H$ gates. A swap-down step disperses the clean qubits among all registers, analogous to the beginning of a clean $\selswap$ operation. The combined product of data boxes $\tilde{D}_i = \otimes^k_j D_{ik+j}$ is then applied according to the index $i$, and the element $j$ is swapped up to the top register. This achieves the same result as a clean $\selswap$ and returns the dirty qubits to their original state. There are $4$ $\swup$ operations costing $4n(k-1)$ Toffolis and $2$ $\select$ operations costing $2I/k$ Toffolis. The total cost of the dirty $\selswap$ is $4q(k-1) + 2I/k$ Toffolis and $\Theta(In)$ Clifford gates.

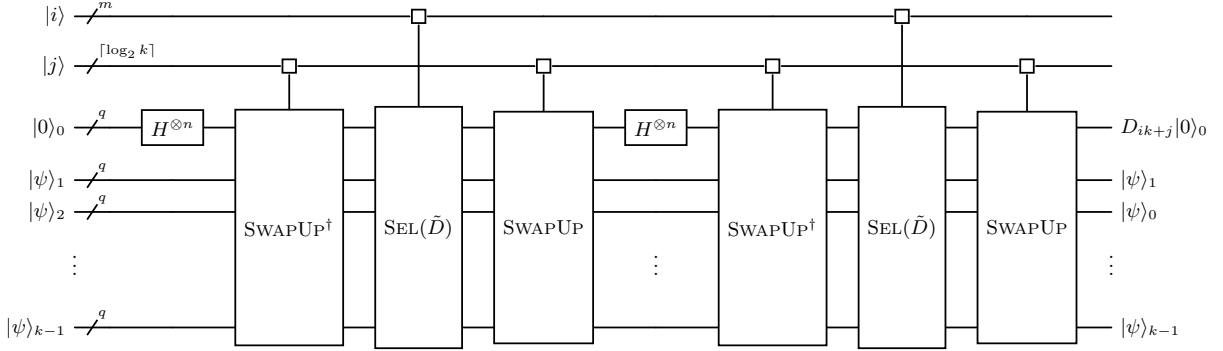
\begin{figure}[!htbp]
    \centering
\begin{adjustbox}{width=16cm}
\begin{quantikz}[wire types={q,q,q,q,q,n,q}]                    
\lstick{$|i\rangle$}       &  \qwbundle{m}                &                       &                        &\mctrl{2}                  &                       &                       &                        &\mctrl{2}       &                      & \\
\lstick{$|j\rangle$}       &  \qwbundle{\lceil\log_2 k \rceil}              &                       &\mctrl{1}               &                           &\mctrl{1}              &                       &\mctrl{1}               &               &\mctrl{1}              & \\
\lstick{$|0\rangle_0$}       &\qwbundle{q}    &\gate{H^{\otimes n}}   &\gate[5]{\swup^\dagger} &\gate[5]{\sel(\tilde{D})}  &\gate[5]{\swup}        &\gate{H^{\otimes n}}   &\gate[5]{\swup^\dagger} &\gate[5]{\sel(\tilde{D})}  &\gate[5]{\swup}& \rstick{$D_{ik+j}|0\rangle_0$}   \\
\lstick{$|\psi\rangle_1$}  &\qwbundle{q}    &                       &                        &                           &                       &                       &                        &               &                      &\rstick{$|\psi\rangle_1$} \\
\lstick{$|\psi\rangle_2$}  &\qwbundle{q}    &                       &                        &                           &                       &                       &                        &               &                      &\rstick{$|\psi\rangle_0$} \\
\vdots                     &                &                       &                        &                           &                       &\vdots                 &                        &               &                      &\vdots \\
\lstick{$|\psi\rangle_{k-1}$}  &\qwbundle{q}    &                       &                        &                           &                       &                       &                        &               &                      &\rstick{$|\psi\rangle_{k-1}$} \\
\end{quantikz}
\end{adjustbox}
\caption{Dirty $\selswap$ operation. There are a single clean register in the $|0\rangle$ state, $k-1$ dirty target registers, and $q$ is the number of qubits in each register. $\sel(\tilde{D})$ is the combined product of data boxes $\tilde{D}_i = \otimes^k_j D_{ik+j}$. The $\swup$ operation swaps the $j^\text{th}$ element of the set of $k$ indexed with $i$ to the top register and results in $D_{ik+j}|0\rangle$.}
\end{figure}

Depending on the number of dirty qubits available, the gate complexity can be minimised by choosing the correct $I$ and $k$ values. The cost of the uncompute is $2\lceil I/k \rceil + 4k$ Toffolis, which uses an advanced measurement-based uncomputation procedure, compiled with phase fix-up operations (see Appendix C of \cite{Berry2019qubitizationof} for more details). 
The scaling of the data lookup methods is summarised in Table~\ref{tab:table_lookup_methods}.

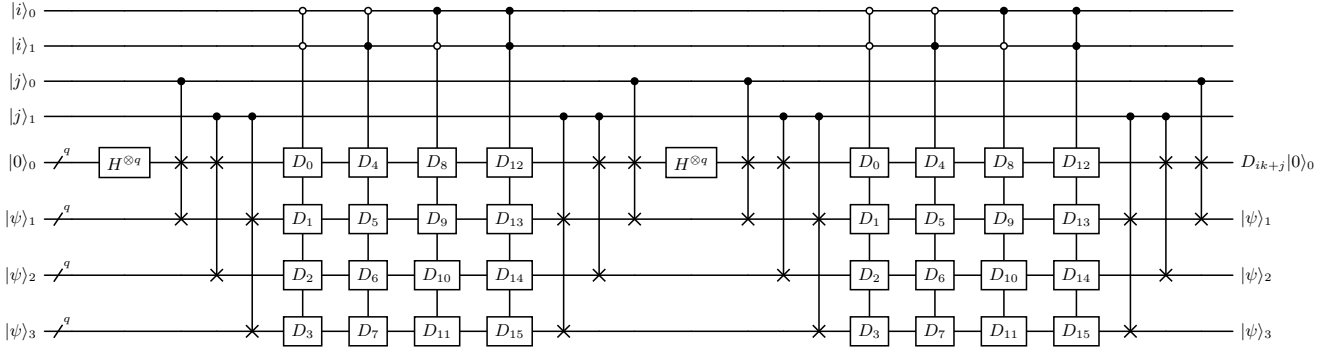
\begin{figure}[htbp!]
    \centering
    \begin{adjustbox}{width=\textwidth}
    \begin{quantikz}
    \lstick{$|i\rangle_0$}   &              &                      &           &         &         & \octrl{1} & \octrl{1}      & \ctrl{1}          & \ctrl{1}          &         &         &           &                          &           &         &         & \octrl{1} & \octrl{1}      & \ctrl{1}          & \ctrl{1}          &         &         &           & \\
    \lstick{$|i\rangle_1$}   &              &                      &           &         &         & \octrl{6} & \ctrl{6}       & \octrl{6}         & \ctrl{6}          &         &         &           &                          &           &         &         & \octrl{6} & \ctrl{6}       & \octrl{6}         & \ctrl{6}          &         &         &           & \\
    \lstick{$|j\rangle_0$}   &              &                      &\ctrl{2}   &         &         &           &                &                   &                   &         &         &\ctrl{2}   &                          &\ctrl{2}   &         &         &           &                &                   &                   &         &         &\ctrl{2}   & \\
    \lstick{$|j\rangle_1$}   &              &                      &           &\ctrl{1} &\ctrl{2} &           &                &                   &                   &\ctrl{2} &\ctrl{1} &           &                          &           &\ctrl{1} &\ctrl{2} &           &                &                   &                   &\ctrl{2} &\ctrl{1} &           & \\
    \lstick{$|0\rangle_0$}     &\qwbundle{q}  &\gate{H^{\otimes q}}  &\swap{1}   &\swap{2} &         & \gate{D_0}&  \gate{D_4}    &  \gate{D_8}       &  \gate{D_{12}}    &         &\swap{2} &\swap{1}   &\gate{H^{\otimes q}}      &\swap{1}   &\swap{2} &         & \gate{D_0}&  \gate{D_4}    &  \gate{D_8}       &  \gate{D_{12}}    &         &\swap{2} &\swap{1}   & \rstick{$D_{ik+j}|0\rangle_0$}  \\
    \lstick{$|\psi\rangle_1$}&\qwbundle{q}  &                      &\targX{}   &         &\swap{2} & \gate{D_1}&  \gate{D_5}    &  \gate{D_{9}}     &  \gate{D_{13}}    &\swap{2} &         &\targX{}   &                          &\targX{}   &         &\swap{2} & \gate{D_1}&  \gate{D_5}    &  \gate{D_{9}}     &  \gate{D_{13}}    &\swap{2} &         &\targX{}   &\rstick{$|\psi\rangle_1$}  \\
    \lstick{$|\psi\rangle_2$}&\qwbundle{q}  &                      &           &\targX{} &         & \gate{D_2}&  \gate{D_6}    &  \gate{D_{10}}    &  \gate{D_{14}}    &         &\targX{} &           &                          &           &\targX{} &         & \gate{D_2}&  \gate{D_6}    &  \gate{D_{10}}    &  \gate{D_{14}}    &         &\targX{} &           &  \rstick{$|\psi\rangle_2$} \\
    \lstick{$|\psi\rangle_3$}&\qwbundle{q}  &                      &           &         &\targX{} & \gate{D_3}&  \gate{D_7}    &  \gate{D_{11}}    &  \gate{D_{15}}    &\targX{} &         &           &                          &           &         &\targX{} & \gate{D_3}&  \gate{D_7}    &  \gate{D_{11}}    &  \gate{D_{15}}    &\targX{} &         &           & \rstick{$|\psi\rangle_3$} \\
    \end{quantikz}
    \end{adjustbox}
    
    \caption{Example of dirty $\selswap$ operation with $I=16$ elements and $k=4$ registers. The $D_i$ are the data boxes for each index $i$. It can be seen that each $i$ index has 4 data boxes, one for each register. The $j$ index then performs the $\swup$ operation to the top qubit to obtain $D_{ik+j}|0\rangle$. }
    \label{fgr:dirty_qubit_selswap}
\end{figure}

\begin{table}[!htbp]
    \centering
    \begin{ruledtabular}
    \begin{tabular}{l r r r}
    Compilation & Toffoli Count & Clean Qubits & Dirty Qubits \\
    \colrule
        Unary Iteration~\cite{Babbush2018LinearT}& $I$ & $2 \lceil \log_2(I) \rceil + q - 1$ & 0\\
        Clean $\selswap$ (Compute)~\cite{Low2024tradingtgatesdirty} & $\lceil I / k \rceil + q(k - 1)$  & $\lceil \log_2(I) \rceil  + \lceil \log_2(\lceil I / k \rceil) \rceil  + kq -1$& 0\\
        Clean $\selswap$ (Uncompute)~~\cite{Berry2019qubitizationof}  &$\lceil I/k\rceil + k$  & & \\
        Dirty $\selswap$ (Compute)~\cite{Berry2019qubitizationof} & $2\lceil I/k\rceil  + 4q(k - 1)$  & $\lceil \log_2(I) \rceil +  \lceil \log_2(\lceil I / k \rceil) \rceil + q -1$& $(k -1)q$\\
        Dirty $\selswap$ (Uncompute)~\cite{Berry2019qubitizationof} & $2\lceil I/k \rceil + 4k$  &  & \\
    \end{tabular}
    \end{ruledtabular}
        \caption{Resource estimation for various data lookup methods. Here, $I$ denotes the number of elements in the data set, $q$ is the number of qubits per register, and $k$ is the number of registers. The table reports the counts for Toffoli gates (with measurement-based uncomputation), where the $T$ count is approximately 4 times the Toffoli count, as well as the numbers of clean and dirty qubits required for each operation. The factor of $ \lceil \log_2(\lceil I / k \rceil) \rceil -1$ comes from the anncllas in the unary iteration over the smaller set of $\lceil I/k \rceil$ elements.}
    \label{tab:table_lookup_methods}
    \end{table}

Further to the tabulated results using the construction presented in Appendix C of \cite{Berry2019qubitizationof}, the clean $\selswap$ uncompute can be done with $\lceil I/k\rceil + k$ Toffoli and dirty $\selswap$ $2\lceil I/k \rceil + 4k$ using measurement-based uncomputation and phase correction. There is no $q$ dependence on the uncompute, which leads to a significant reduction in Toffoli complexity.

\subsection{Adder}
\label{sec:adder}

Adder circuits enable the bitwise addition of one register to another. If a carry bit is included, overflow becomes possible. However, for the purpose of this work modular addition is considered, as it is used in conjunction with a phase gradient to implement rotations. Specifically, a doubly-controlled adder is required, and its construction is shown in Figure~\ref{fig:gidney_adder_2controlled}. Only in-place adders are considered, where one of the input registers is overwritten with the sum. The basic concept is illustrated in Figure~\ref{fig:intro_adder}. The adder circuit takes two registers $|a\rangle$ and $|b\rangle$ and adds them together, producing a new register $|b + a\rangle$. The $|a\rangle$ register remains unchanged.

\begin{figure}[!htbp]
    \centering
\begin{quantikz}
    \lstick{$|a\rangle$} & \gate[2]{\adder(a + b)}   & \rstick{$|a\rangle$} \\
    \lstick{$|b\rangle$} &                          & \rstick{$|b+a\rangle$}\\
\end{quantikz}
    \caption{An in-place adder dircuit takes the $|a\rangle$ and adds $b$ to it. The $|a\rangle$ register is left unchanged.}
    \label{fig:intro_adder}
\end{figure}
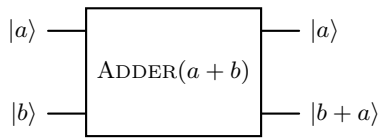

The Gidney adder is the most efficient way to perform a controlled addition of two qubit registers. This is a measurement-based adder that scales with a cost of $q$ Toffolis, where $q$ is the number of qubits used to store the integers $a$ and $b$. It is based on the Cuccaro adder~\cite{cuccaro2004newquantumripplecarryaddition}, but uses measurement-based uncomputation to remove the cost of the uncompute Toffoli. A carry-out bit can be incorporated to allow for integer overflow; however, here we use a modular adder. The Gidney adder is shown in Figure~\ref{fig:gidney_adder}. It is defined recursively as using a carry-in bit $c_{k}$ and a carry-out bit $c_{k+1}$, which is the input to the next step of the iterations. The single controlled and the double controlled adders are shown in Figures~\ref{fig:gidney_adder_1controlled} and \ref{fig:gidney_adder_2controlled} respectively, which when using measurement-based uncomputation and dynamic qubit allocation scale as $2n$ and $3n$ respectively.

\begin{figure}[!htbp]
    \centering
    \begin{quantikz}
    \lstick{$c_k$}     & \ctrl{2}  &           & \ctrl{3}  &                 &          &                 & \ctrl{3}  &           & \ctrl{1} &         &\rstick{$c_k$} \\
    \lstick{$a_k$}     & \targ{}   &\ctrl{1}   &           &                 &          &                 &           & \ctrl{1}  & \targ{}  &\ctrl{1} &\rstick{$a_k$}   \\ 
    \lstick{$b_k$}     & \targ{}   &\ctrl{1}   &           &                 &          &                 &           & \ctrl{1}  &          &\targ{}  &\rstick{$b_k + a_k$}   \\ 
    \nw                & \nw       &\nw        & \targ{}   & \push{c_{k+1}}  & \cdots   &\push{c_{k+1}}   & \targ{}   &           & \nw      &\nw      &\nw               \\
    \end{quantikz}
    \caption{$q$ qubit in-place Gidney adder circuit bitwise component. The $c_k$ qubit is the carry-in bit, the qubit $a_k$ is added to the $b_k$ qubit, forming the sum bit $b_k + a_k$ where the bit overflow goes to $c_{k+1}$. The process then repeats onto the next bit. If a final carry bit is not provided, the adder is modular. When measurement-based uncomputation is used only the compute Toffoli needs to be counted, giving a Toffoli cost of $q$, with a single dynamically allocated qubit.}
    \label{fig:gidney_adder}
\end{figure}
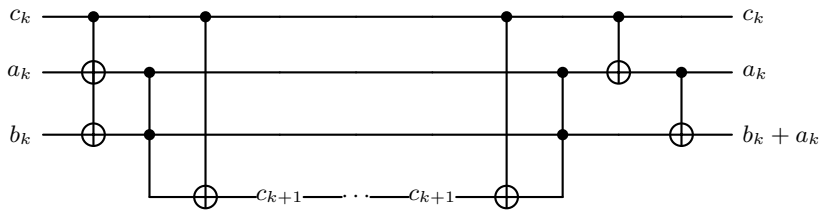

\begin{figure}[!htbp]
    \centering
    \begin{quantikz}
    \lstick{$z$}       &           &           &           &                 &          &                 &           &           &\ctrl{2}  &        &\ctrl{2}  &          &         &\rstick{$z$} \\
    \lstick{$c_k$}     & \ctrl{3}  &           & \ctrl{4}  &                 &          &                 & \ctrl{4}  &           &          &        &          & \ctrl{3} &         &\rstick{$c_k$} \\
    \lstick{$a_k$}     & \targ{}   &\ctrl{2}   &           &                 &          &                 &           & \ctrl{2}  &\ctrl{1}  &        &\ctrl{1}  & \targ{}  &         &\rstick{$a_k$}   \\ 
    \nw                &\nw        &\nw        &\nw        &\nw              &\nw       &\nw              &\nw        & \nw       &\nw       &\ctrl{1}&          & \nw      & \nw     &\nw             \\ 
    \lstick{$b_k$}     & \targ{}   &\ctrl{1}   &           &                 &          &                 &           & \ctrl{1}  &          &\targ{} &          & \targ{}  &         &\rstick{$b_k + a_k\cdot z$}   \\ 
    \nw                & \nw       &\nw        & \targ{}   & \push{c_{k+1}}  & \cdots   &\push{c_{k+1}}   & \targ{}   &           &\nw       &\nw     &\nw       & \nw      &\nw      &\nw               \\
    \end{quantikz}
     \caption{Controlled $q$ qubit in-place Gidney adder circuit bitwise component. The ancilla is $z$, the $c_k$ qubit is the carry-in bit, the qubit $a_k$ is added to the $b_k$ qubit, forming the sum bit $b_k + a_k$ where the bit overflow goes to $c_{k+1}$. The process then repeats onto the next bit. If a final carry bit is not provided, the adder is modular. When measurement-based uncomputation is used only the compute Toffoli needs to be counted, giving a Toffoli cost of $2n$, with two dynamically allocated qubits.}
    \label{fig:gidney_adder_1controlled}
\end{figure}

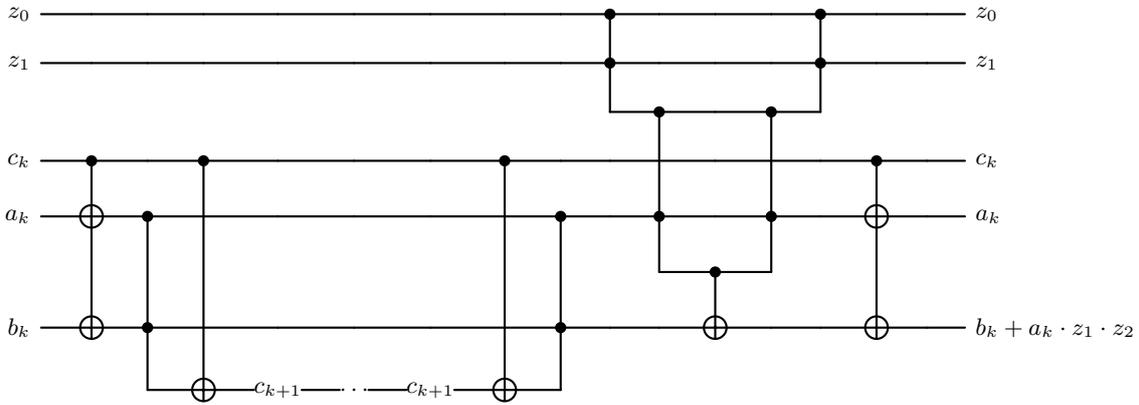
\begin{figure}[!htbp]
    \centering
    \begin{quantikz}
    \lstick{$z_0$}       &           &           &           &                 &          &                 &           &           &\ctrl{1} &          &        &         &\ctrl{1}  &          &         &\rstick{$z_0$} \\
    \lstick{$z_1$}       &           &           &           &                 &          &                 &           &           &\ctrl{1} &          &        &         &\ctrl{1}  &          &         &\rstick{$z_1$} \\
    \nw                &\nw        &\nw        &\nw        &\nw              &\nw       &\nw              &\nw        & \nw       &\nw      &\ctrl{2}  &        &\ctrl{2} &          & \nw      & \nw     &\nw             \\ 
    \lstick{$c_k$}     & \ctrl{3}  &           & \ctrl{4}  &                 &          &                 & \ctrl{4}  &           &         &          &        &         &          & \ctrl{3} &         &\rstick{$c_k$} \\
    \lstick{$a_k$}     & \targ{}   &\ctrl{2}   &           &                 &          &                 &           & \ctrl{2}  &         &\ctrl{1}  &        &\ctrl{1} &          & \targ{}  &         &\rstick{$a_k$}   \\ 
    \nw                &\nw        &\nw        &\nw        &\nw              &\nw       &\nw              &\nw        & \nw       &\nw      &\nw       &\ctrl{1}&         &\nw       &\nw       & \nw     &\nw             \\ 
    \lstick{$b_k$}     & \targ{}   &\ctrl{1}   &           &                 &          &                 &           & \ctrl{1}  &         &          &\targ{} &         &          & \targ{}  &         &\rstick{$b_k + a_k \cdot z_1 \cdot z_2$}   \\ 
    \nw                & \nw       &\nw        & \targ{}   & \push{c_{k+1}}  & \cdots   &\push{c_{k+1}}   & \targ{}   &           &\nw       &\nw       &\nw     &\nw      &\nw       & \nw      &\nw      &\nw               \\
    \end{quantikz}
    \caption{Doubly-controlled $q$-qubit in-place Gidney adder circuit bitwise component. The ancillas are $z_1, z_2$, the $c_k$ qubit is the carry-in bit, the qubit $a_k$ is added to the $b_k$ qubit, forming the sum bit $b_k + a_k$ where the bit overflow goes to $c_{k+1}$. The process then repeats onto the next bit. If a final carry bit is not provided, the adder is modular. When measurement-based uncomputation is used only the compute Toffoli needs to be counted, giving a Toffoli cost of $3n$, with three dynamically allocated qubits.}
    \label{fig:gidney_adder_2controlled}
\end{figure}

\subsection{Multi Index Data \texorpdfstring{$\select$}{SELECT}}
\label{sec:multi_index_data_select}

This section follows the construction in Appendix VII.B of \cite{MSFTQROMRotations}. Naively, for a data lookup over a qubit register $i$ over $I$ elements and qubit register $j$ over $J$ elements one could combine then into a single index and perform unary iteration over $IJ$ elements. This is illustrated in Figure~\ref{fig:double_index_unary}.

\begin{figure}[!htbp]
    \centering
\begin{quantikz}
    \lstick{$|i\rangle$}    &\qwbundle{\lceil \log_2 I \rceil} & \mctrl{1}             & \\
    \lstick{$|j\rangle$}    &\qwbundle{\lceil \log_2 J \rceil} & \mctrl{1}             & \\
    \lstick{$|0\rangle$} &\qwbundle{q} & \gate{\data(D)}   & \rstick{$D_{ij}|0\rangle$} \\
 \end{quantikz}
    \caption{Circuit for iterating over $IJ$ elements for a data lookup over all $K$ and $I$ elements. The data lookup is performed over the combined index $i,j$ and the data boxes are applied to the $|0\rangle$ register.}
    \label{fig:double_index_unary}
\end{figure}
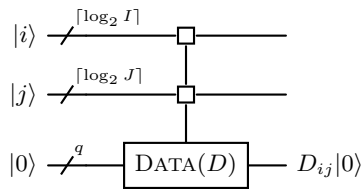

In the case of the quantum Paldus transform we do not iterate over all index pairs, as there are only certain allowed values for $S,M$. Consider the case where for each $i$ where only $J_i \le J$ elements are allowed. This means that the data lookup is defined over $L = \sum^{J}_i I_i \le I J  $ elements, hence the circuit shown in Figure~\ref{fig:double_index_unary} is over inefficiently many $IJ$ Toffolis. We therefore seek a circuit which will only encode valid $L$ elements. The double index $(i,j)$ must be mapped to $l = j + \sum_{\alpha \in [i-1]} I_\alpha$, where $ \sum_{\alpha \in [i-1]} I_\alpha = L_j$, which is stored as a binary $\lceil \log_2 L \rceil$ bit integer controlled on the $|i\rangle$ register. The set $\{L_j\}$ is calculated in the pre-computation step and stored in a data lookup. The circuit is shown in Figure~\ref{fig:double_index_unary}. 

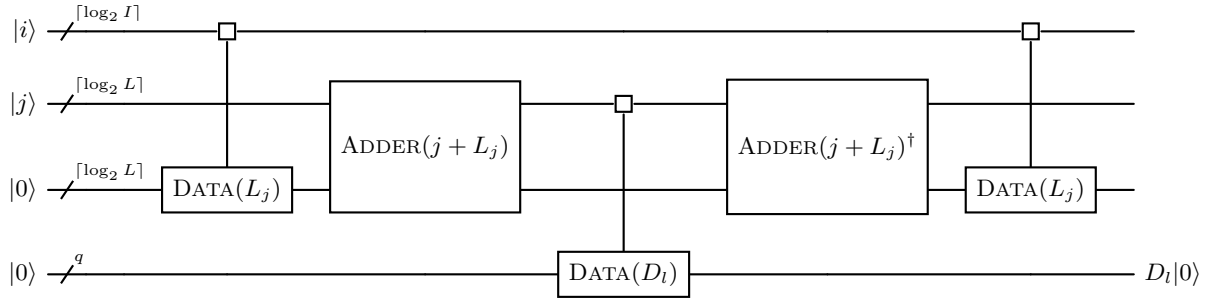
\begin{figure}[!htbp]
    \centering
\begin{quantikz}
    \lstick{$|i\rangle$}    &\qwbundle{\lceil \log_2 I \rceil} & &  \mctrl{2}       &                          &                        &                           &\mctrl{2}              & \\
    \lstick{$|j\rangle$}    &\qwbundle{\lceil \log_2 L \rceil} & &                  &\gate[2]{\adder(j + L_j)} &\mctrl{2}               &\gate[2]{\adder(j + L_j)^\dagger}   &                       & \\
    \lstick{$|0\rangle$}    &\qwbundle{\lceil \log_2 L \rceil} & & \gate{\data(L_j)} &                          &                        &                           &\gate{\data(L_j)}   & \\
    \lstick{$|0\rangle$}    &\qwbundle{q}                    & &                  &                          &\gate{\data(D_l)}     &                           &                       & \rstick{$D_{l}|0\rangle$}\\
 \end{quantikz}
    \caption{Circuit for iterating over sparse multicontrolled $data$ operation over double index set $(i,j) \in IJ$, where only $L$ defined elements are allowed, which runs over the $q$ index where $L < IJ$. $L_j = \sum_{\alpha \in [i-1]} I_\alpha$ encodes the shift along a flattened array indexed by $l$ for each set of allowed $L_j$ elements. The adder implements this index shift to the new index $l$. The adder and the $\data(L_j)$ operation must be uncomputed to return the index registers to their original state.}
    \label{fig:double_index_data_lookup}
\end{figure}

There is clearly lots of scope to choose how the controlled data boxes are implemented from Table~\ref{tab:table_lookup_methods}. The $\data(Q_j)$ operation is done with clean qubit $\selswap$ and the $\sel(U)$ operation is done with dirty qubit $\selswap$. The $\adder$ operation is a Gidney adder \cite{Gidney2018halvingcostof} at the cost of $\mathcal{O}(\lceil \log_2 L \rceil)$ Toffolis. 

It is assumed that $\lceil \log_2 L \rceil > \log_2 J$ and the cost of the adder is $\lceil \log_2 L \rceil$ Toffolis (with $4\lceil \log_2 L \rceil$ $T$ gates) using the Gidney adder, which uses measurement-based uncomputation~\cite{Gidney2018halvingcostof} (The $j$ register can be padded with extra bits). The overall Toffoli cost of the double index data lookup is dominated by 4 components: 2 adders, the compute and uncompute of $\data(Q_j)$ using either a clean $\selswap$ or unary iteration, and the $\sel(U)$ dirty $\selswap$. These components are summarised in Table~\ref{tab:double_index_cost}. Typically, the term involving $L$ is dominant in the $\data(D_l)$ operation that runs over $l$, however, due to the extra qubit register needed, this should only be used for problems where $L \ll IJ$.

\begin{table}[!htbp]
    \centering
    \begin{ruledtabular}
    \begin{tabular}{l r}
        \textbf{Component} & \textbf{Toffoli Cost} \\
        \hline
        2 adders & $2\lceil \log_2 L \rceil$ \\
        Compute $\data(L_j)$ (clean \selswap) & $\lceil I/k_c \rceil+ \lceil \log_2 L \rceil(k_c-1)$ \\
        Uncompute $\data(L_j)$ & $\left\lceil I/k_c \right\rceil + k_c$ \\
        Compute/Uncompute $\data(L_j)$ (unary iteration) & $I$ \\
        Compute $\data(D_l)$ (dirty \selswap) & $2 \lceil L/k_d \rceil + 4q(k_d-1)$ \\
        \hline
        \textbf{Total}$^{*}$ & $2\lceil \log_2 L \rceil + 2\lceil L/k_d \rceil + 4q(k_d-1) + 2\lceil I/k_c \rceil + \lceil \log_2 L \rceil(k_c-1) + k_c$ \\
        \textbf{Total}$^{**}$ & $2\lceil \log_2 L \rceil + 2\lceil L/k_d \rceil + 4q(k_d-1) + 2I$ \\
    \end{tabular}
    \end{ruledtabular}
    \caption{Summary of Toffoli costs for each component in a double index data lookup indexed by $i \in \{I\}$ and $j \in \{J\}$, with $L < IJ$ their overall set. $k_c$ and $k_d$ are the number of clean and dirty qubit registers respectively (which must be a power of $2$). $I$ is the number of elements in the larger index data set, $\lceil \log_2 L \rceil$ is the number of qubits encoding the data lookup $L_j$, which is the cost of the adder, and $q$ is the number qubits in the data lookup over the flattened indices $L$. $^*$ denotes the compilation using clean $\data(L_i)$ $\selswap$ and dirty $\data(D_l)$ $\selswap$. $^{**}$ denotes the compilation using $\data(L_j)$ unary iteration and dirty $\data(D_l)$ $\selswap$. The $\data(L_j)$ only carries a compute step.}
    \label{tab:double_index_cost}
\end{table}

\subsection{Rotations}
\label{sec:rotations}

In the quantum Paldus transform one of the essential primitives is a multiplexed sequence of controlled $R_y$ rotations. This can be represented as a $\select$ operation:

\[
\select(R_y) = \sum^{I}_i |i\rangle \langle i| \otimes R_y(\theta_i)
\] 

In a fault-tolerant setting, rotations must be discretised because continuous angles are not directly accessible. Therefore, it is beneficial to store the angles as fixed-point binary fractions using a data lookup object, which then enables the parallelisation of rotations via controlled increments of the rotation angle. Following the presented implementation in Appendix VII.B of \cite{MSFTQROMRotations}, the lookup stores the angles as a fixed-point binary fraction with $q$ bits of precision, i.e. $[\theta_0, \theta_1, ..., \theta_{q-1}]$. The rotation is then approximated by a series of controlled increments, where each controlled rotation increment activates according to the $q^\text{th}$ binary fraction in the lookup table. This is illustrated in Figure~\ref{fig:QROM_rotations}.
\begin{figure}[!htbp]
    \centering
    \includegraphics[width=\textwidth]{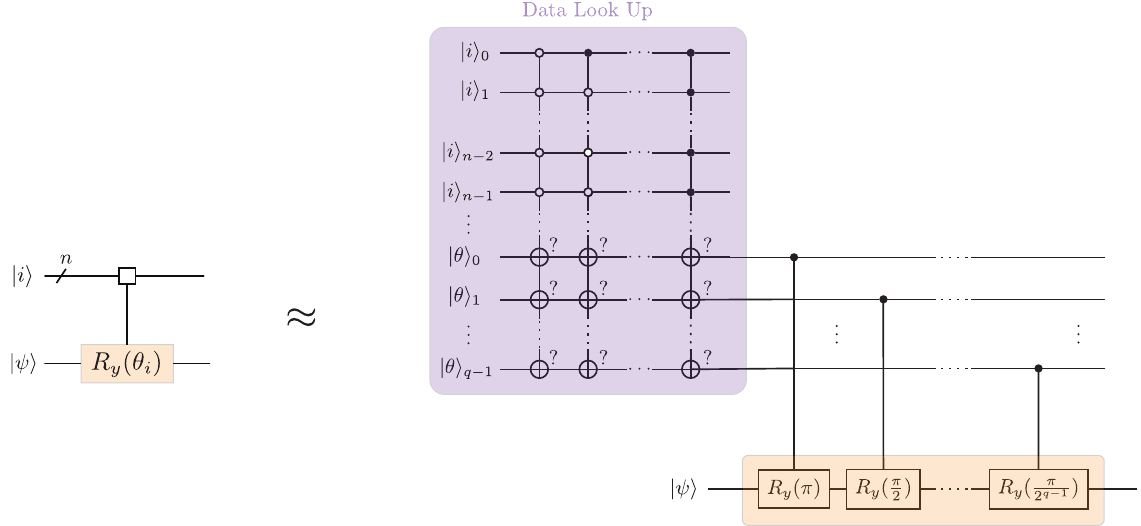}
    \caption{Data Lookup rotations. The angles are stored as fixed-point binary fractions with $q$ bits of precision. The rotation is approximated by a series of controlled increments, where each controlled rotation increment activates according to the $q^\text{th}$ binary fraction in the lookup table. Each $\theta$ qubit in the data lookup is initialised to $|0\rangle$.}
    \label{fig:QROM_rotations}
\end{figure}

An efficient way to calculate the multicontrolled rotations is with the phase gradient method~\cite{Gidney2018halvingcostof}. The phase gradient method is a technique for implementing controlled rotations using a combination of controlled adder and phase gradient registers. The basic idea is to use a controlled adder to add the fixed-point binary fraction of the angle of rotation to a phase gradient register, which is then used to implement the rotation on a target qubit. The phase gradient applies an incremented phase on each bitstring of a register of $q$ qubits:

\begin{equation}
    \mathcal{F} = \frac{1}{\sqrt{2^q}} \sum_{k=0}^{2^q-1} e^{-2\pi i k / 2^q} \ket{k}.
\end{equation}

This can be implemented very efficiently as a product of $R_z$ rotations on each qubit at the cost of $\mathcal{O}(q\log(1/\epsilon))$ $T$ gates~\cite{Gidney2018halvingcostof}. The circuit for this shown in Figure~\ref{fig:phase_gradient}. 

\begin{figure}[!htbp]
    \centering
\begin{quantikz}
    \lstick{$|0\rangle$} & \gate{\Delta} & \rstick{$|\mathcal{F}\rangle$} \\
\end{quantikz}
= 
\begin{quantikz}[wire types={q,q,n,q,q}]
    \lstick{$|0\rangle_0$} & \gate{R_z(\pi/2^0)} &  \\
    \lstick{$|0\rangle_1$} & \gate{R_z(\pi/2^1)} &  \\
    \vdots & \vdots & \vdots \\
    \lstick{$|0\rangle_{d-2}$} & \gate{R_z(\pi/2^{d-2})} &  \\
    \lstick{$|0\rangle_{d-1}$} & \gate{R_z(\pi/2^{d-1})} &  \\
\end{quantikz}
    \caption{\textit{Phase Gradient Circuit}. The $R_z$ gates are applied to each qubit in the register. }
    \label{fig:phase_gradient}
\end{figure}
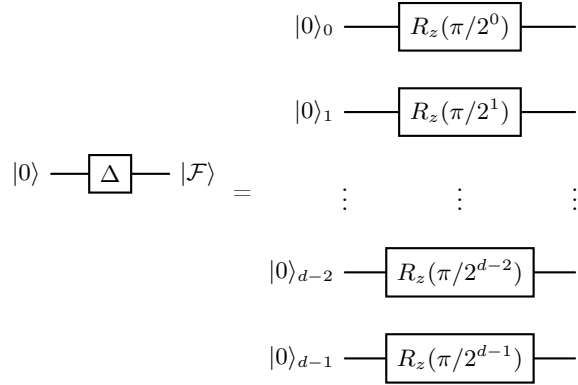

\noindent Using a an adder circuit at the cost of $\mathcal{O}(q)$ Toffolis \cite{Gidney2018halvingcostof} performs modular addition on the $q$ qubit register.

\begin{equation}
    \adder\ket{x} \ket{y} = \ket{x} \ket{y + x \mod 2^q}
\end{equation}

\noindent The adder circuit can be applied to the phase gradient register to obtain the following transformation:

\begin{equation}
    \begin{split}
    \adder |x\rangle\ket{\mathcal{F}} &= |x\rangle \frac{1}{\sqrt{2^d}} \sum_{k=0}^{2^d-1} e^{-2\pi i k / 2^d} \ket{k+x \mod 2^d} \\
    &=  e^{2\pi i x} |x\rangle \frac{1}{\sqrt{2^d}} \sum_{k=0}^{2^d-1} e^{-2\pi i (k +x) / 2^d} \ket{k+x \mod 2^d} \\
    &= e^{2\pi i x / 2^d} \ket{x} \ket{\mathcal{F}}.
    \end{split}
\end{equation}

Where the constant phase $e^{2\pi i x}$ is factored out. The adder circuit can be used to implement the rotation $R_y(\theta)$ as a controlled operation on the phase gradient register, the addition register $|x\rangle$ and the control register $|z\rangle$. The controlled adder circuit is shown in Figure~\ref{fig:addder_qrom_rotations}.

\begin{equation}
    C\adder \ket{x} \ket{\mathcal{F}} \ket{z} = \ket{x} \ket{\mathcal{F}} e^{2\pi i x / 2^d Z} \ket{z}.
\end{equation}

In this case, the controlled adder circuit is applied to the phase gradient register and the $|z\rangle$ register, which has an $R_z(x/2^d)$ rotation applied to it. Basis changes can be also applied to the target register to obtain $R_y$ or $R_x$ rotations. This strategy can now be applied in conjunction with the data lookup methods shown in Section~\ref{sec:data_lookup} to do addititon in superposition, enabling the application of multiple rotations.

\begin{figure}[!htbp]
    \centering
\begin{quantikz}
    \lstick{$|i\rangle$} & \mctrl{1} & \rstick{$|i\rangle$} \\
    \lstick{$|\psi\rangle$} & \gate{R_y(\theta_i)} & \rstick{$ R_y(\theta_i)|\psi\rangle$} \\
\end{quantikz}
=
\begin{quantikz}
    \lstick{$|i\rangle$}            &\qwbundle{\lceil \log_2 I \rceil}  & \mctrl{1}             &           &                                       &           & \mctrl{1}             & \rstick{$|i\rangle$}   \\
    \lstick{$|0\rangle$}            &\qwbundle{q}                       & \gate{\data(\theta)}  &           &\gate[2]{\adder(\theta + \mathcal{F})} &           & \gate{\data(\theta)}  & \rstick{$|0\rangle$}   \\
    \lstick{$|\mathcal{F}\rangle$}  &\qwbundle{q}                       &                       &           &                                       &           &                       & \rstick{$|\mathcal{F}\rangle$}  \\
    \lstick{$|\psi\rangle $}        &                                   &                       & \gate{H}  &\ctrl{-1}                              &\gate{H}   &                       & \rstick{$R_y(\theta_i)|\psi\rangle $}\\
 \end{quantikz}
    \caption{Circuit for the controlled adder. The $|i\rangle$ register is the index register, the $|\mathcal{F}\rangle$ register is the phase gradient register and the $|\psi\rangle$ register is the target register. The data lookup is performed over the $\theta$ angles.}
    \label{fig:addder_qrom_rotations}
\end{figure}
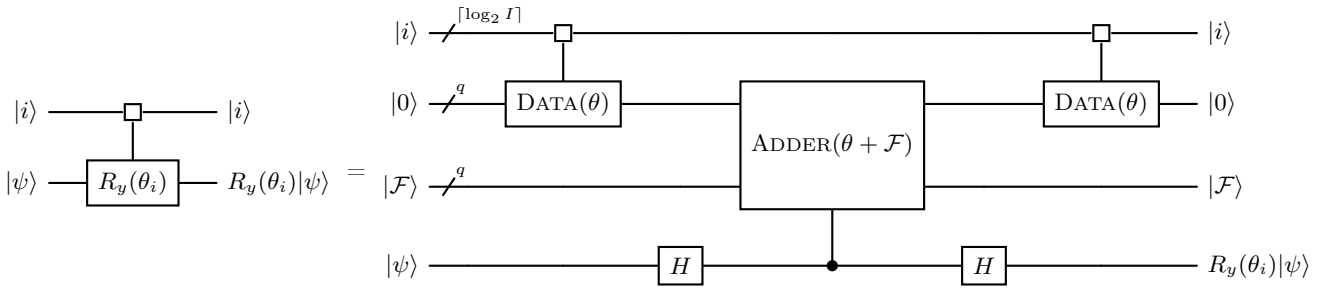

The Toffoli cost is calculated by assuming the phase gradient $\mathcal{F}$ as a resource of constant $\mathcal{O}(q\log(1/\epsilon))$ $T$ gate cost which is ignored. The main cost is the computation and uncomputation of the data lookup and the controlled adder. From Ref~\cite{Gidney2018halvingcostof} the cost of the controlled adder is $2q$ Toffolis (with measurement-based uncomputation). The number of data qubits $q$ is directly related to the error by $\log_2(1/\epsilon)$ (the precision of encoding the rotation angle). Assuming that the Fourier state is prepared perfectly, this approximates \( U \) with error
\[
\left\| U - U' \right\| \leq 
\left\| \sum_{i=0}^{I} |i\rangle\langle i| \otimes \left( e^{i 2\pi \theta_i Z} - e^{i 2\pi \theta'_i / 2^q Z} \right) \right\| 
\leq \max_{|y| < 1/2^q} \left| e^{i 2\pi y} - 1 \right| < \frac{2\pi}{2^q}.
\]

Where $\theta'_i$ is the binary approximation of the continuous angle. The cost of this operation for some fixed error $\epsilon$ resulting in $q$ data qubits and $I$ element is shown in Table~\ref{tab:rotation_cost}.

\begin{table}[!htbp]
    \centering
    \begin{ruledtabular}
    \begin{tabular}{l r r r}
    Compilation & Toffoli Count & Clean Qubits & Dirty Qubits \\
    \colrule
        Unary Iteration~\cite{Babbush2018LinearT}                                   & $2I + 2q$                                        & $2\lceil \log_2(I) \rceil + 2q$           & 0 \\
        Clean $\selswap$~\cite{Low2024tradingtgatesdirty,Berry2019qubitizationof}   & $ 2\lceil I/k\rceil + q(k - 1)+ k + 2q$          & $2\lceil \log_2(\lceil I\rceil) \rceil + \lceil \log_2(\lceil I/k\rceil) + \rceil + (k+2)q $   & 0\\        
        Dirty $\selswap$~\cite{Berry2019qubitizationof}                             & $2\lceil I/k\rceil+ 4q(k - 1) + 4k + 4q$         & $2\lceil \log_2(\lceil I\rceil) \rceil + \lceil \log_2(\lceil I/k\rceil) + 3q $          & $(k -1)q$\\
    \end{tabular}
    \end{ruledtabular}
        \caption{Resource estimation for data lookup rotation methods. $I$ is the number of controlled rotations, $q$ is the number of qubits which is equal to $\log_2(1/\epsilon)$, $k$ is the number of extra data lookup registers used in the $\selswap$. The Toffoli count is the number of Toffoli gates required for the operation. $T$ counts can be approximated from this as $4 \times$ Toffoli gates. These resource estimates include measurement-based uncomputation. The factor of $\lceil \log_2(\lceil I/k\rceil) -1$ comes from the unary iteration of a reduced set of elements. The adder contributes $3q$ clean qubits due to the extra $d$ ancilla cost in the Toffoli cascade.}
    \label{tab:rotation_cost}
    \end{table}

These results are used in slightly modified fashion in the main text for the quantum Pladus transform, which is presented using controlled Givens rotations.

\subsection{Incrementer}
\label{sec:incrementer}

In the quantum Paldus transform, the incrementer is the primitive that enables traversing along the GT state branching diagrams and spin coupling graphs. The $(N, S, M)$ update operations are all constructed from combinations of controlled incrementer circuits. The incrementer is a circuit that takes a qubit register $|y\rangle$ and increments it by one (modular), $|y\rangle \rightarrow |y+1\rangle$. This is done by applying a series of multi-controlled $X$ gates. A controlled incrementer is a circuit that takes a qubit register $|y\rangle$ and increments it by one, controlled on a qubit $|x\rangle$, resulting in the transformation $|x\rangle |y\rangle \rightarrow |x\rangle |y \oplus x\rangle$. The controlled incrementer is a primitive that enables graph traversal where the control qubit is the branch direction on the graph, as illustrated in Figure~\ref{fig:incrementer}. 
\begin{figure*}[htb!]
    \includegraphics[width=8cm]{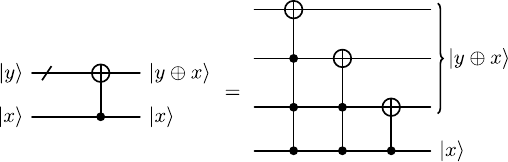}
    \caption{\textit{Controlled Incrementer Circuit} used a a component of the $N,S,M$ branching operations in the quantum Paldus transform.}
    \label{fig:incrementer}
  \end{figure*}

 An efficient method for compiling the $C_nX$ gates employs a technique known as conditionally clean qubits \cite{nie2024quantumcircuitmultiqubittoffoli}. This is demonstrated in Figure~\ref{fig:clean_mcx}.

\begin{figure}[!htbp]
    \centering
\begin{adjustbox}{width=\textwidth}
\begin{quantikz}
    \lstick{$c_0$}    & \ctrl{1}  & \\
    \lstick{$c_1$}    & \ctrl{1}  & \\
    \lstick{$c_2$}    & \ctrl{1}  & \\
    \lstick{$c_3$}    & \ctrl{1}  & \\
    \lstick{$c_4$}    & \ctrl{1}  & \\
    \lstick{$c_5$}    & \ctrl{1}  & \\
    \lstick{$c_6$}    & \ctrl{1}  & \\
    \lstick{$c_7$}    & \ctrl{1}  & \\
    \lstick{$|t\rangle$}    & \targ{}   & \\
\end{quantikz} 
=
\begin{quantikz}
    \lstick{$c_0$}    &\ctrl{1}   &       &           &       &           &       &         &         &         &\targ{}  &\targ{}    &\ctrl{9}   &\targ{}    &\targ{}    &           &           &          &        &           &           &           &           &\ctrl{1} &           \\
    \lstick{$c_1$}    &\ctrl{7}   &\targ{}&\targ{}    &       &           &       &         &         &         &         &\ctrl{-1}  &           &\ctrl{-1}  &           &           &           &          &        &           &           &\targ{}    &\targ{}    &\ctrl{7} &           \\
    \lstick{$c_2$}    &           &       &\ctrl{-1}  &       &           &       &         &\targ{}  &\targ{}  &         &\ctrl{-1}  &           &\ctrl{-1}  &           &\targ{}    &\targ{}    &          &        &           &           &\ctrl{-1}  &           &         &           \\
    \lstick{$c_3$}    &           &       &\ctrl{-1}  &\targ{}&\targ{}    &       &         &         &\ctrl{-1}&         &           &           &           &           &\ctrl{-1}  &           &          &        &\targ{}    &\targ{}    &\ctrl{-1}  &           &         &           \\
    \lstick{$c_4$}    &           &       &           &       &\ctrl{-1}  &       &         &         &         &         &           &           &           &           &           &           &          &        &\ctrl{-1}  &           &           &           &         &           \\
    \lstick{$c_5$}    &           &       &           &       &\ctrl{-1}  &\targ{}&\targ{}  &         &\ctrl{-2}&         &           &           &           &           &\ctrl{-2}  &           &\targ{}   &\targ{} &\ctrl{-1}  &           &           &           &         &           \\
    \lstick{$c_6$}    &           &       &           &       &           &       &\ctrl{-1}&         &         &         &           &           &           &           &           &           &\ctrl{-1} &        &           &           &           &           &         &           \\
    \lstick{$c_7$}    &           &       &           &       &           &       &\ctrl{-1}&         &         &         &           &           &           &           &           &           &\ctrl{-1} &        &           &           &           &           &         &           \\
 \lstick{$|0\rangle$}   &\targ{}    &       &           &       &           &       &         &         &         &         &           &\ctrl{1}   &           &           &           &           &          &        &           &           &           &           &\targ{}  &\meterD{0} \\
 \lstick{$|t\rangle$}   &           &       &           &       &           &       &         &         &         &         &           &\targ{}    &           &           &           &           &          &        &           &           &           &           &         &           \\
 \end{quantikz}
\end{adjustbox}
    \caption{$C_8 X$ gate using a single conditionally clean qubit. The compute stage is shown on the left and the uncompute stage is shown on the right. The $C_8 X$ gate is implemented using a single Toffoli, as the other qubits are conditioned to $|1\rangle$ on the clean qubit.}
    \label{fig:clean_mcx}
\end{figure}
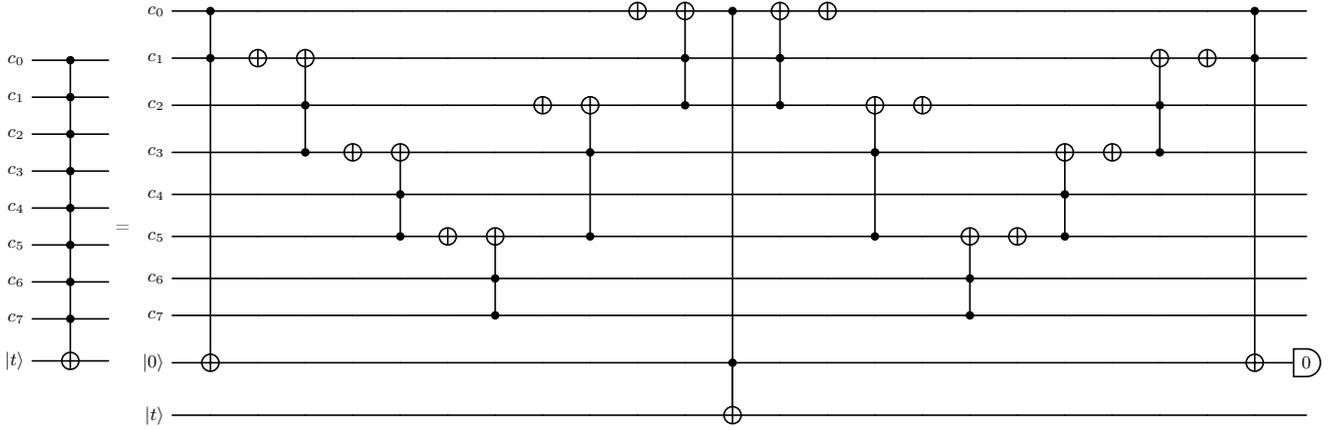
  
A leading implementation of the incrementer is given by Khattar and Gidney~\cite{khattar2024riseconditionallycleanancillae}. Their approach requires an extra $\log_2(n)$ conditionally clean qubits (max $5$) and scales with $3(n)$ Toffoli gates, as shown in Figure~\ref{fig:incrementer_circuit}, where $n$ denotes the number of target qubits. As a controlled incremented on $(n-1)$ elements has the same cost as a incremementer on $n$ elements (minus a single not gate). We assume the cost of the controlled $(n-1)$ qubit incrementer is the same as $n$ qubit incrementer (see Section \ref{sec:incrementer} for the $n$ incrementer circuit). 

\begin{figure}[htbp!]
    \centering
    \begin{adjustbox}{width=\textwidth}
    \begin{quantikz}
    \lstick{$q_0$} &  \ctrl{1}        & \ctrl{1}       & \ctrl{1}       &  \ctrl{1}      & \ctrl{1}       & \targ{}         &   \\
    \lstick{$q_1$} &  \ctrl{1}        & \ctrl{1}       & \ctrl{1}       &  \ctrl{1}      & \targ{}        &        &  \\
    \lstick{$q_2$} &  \ctrl{1}        & \ctrl{1}       & \ctrl{1}       &  \targ{}        &        &        &  \\
    \lstick{$q_3$} &  \ctrl{1}        & \ctrl{1}       & \targ{}       &        &        &        &  \\
    \lstick{$q_4$} &  \ctrl{1}        & \targ{}        &        &        &        &        &     \\
    \lstick{$q_5$} &  \targ{}        &                &        &        &        &        &    \\
    \nw &  \nw &  \nw &  \nw &  \nw &  \nw &  \nw &  \nw & \nw \\
    \end{quantikz} =
    \begin{quantikz}
    \lstick{$q_0$}          &\ctrl{1}   &\targ{}    &           &\targ{}    &\ctrl{5}   &\targ{}    &\targ{}    &           &           &           &\ctrl{1}   &\ctrl{1}   &\targ{}    & \\
    \lstick{$q_1$}          &\ctrl{5}   &\targ{}    &\targ{}    &\ctrl{-1}  &           &\ctrl{-1}  &\ctrl{3}   &\targ{}    &\targ{}    &           &\ctrl{5}   &\targ{}    &           & \\
    \lstick{$q_2$}          &           &           &\ctrl{-1}  &           &           &           &           &\ctrl{-1}  &\ctrl{1}   &\targ{}    &           &           &           & \\
    \lstick{$q_3$}          &           &           &\ctrl{-1}  &           &           &           &           &\ctrl{-1}  &\targ{}    &           &           &           &           & \\
    \lstick{$q_4$}          &           &           &           &\ctrl{-3}  &           &\ctrl{-3}  &\targ{}    &           &           &           &           &           &           & \\
    \lstick{$q_5$}          &           &           &           &           &\targ{}    &           &           &           &           &           &           &           &           & \\
    \lstick{$|0\rangle$}    &\targ{}    &           &           &           &\ctrl{-1}  &           &\ctrl{-2}  &           &\ctrl{-3}  &\ctrl{-4}  &\targ{}    &\meterD{0} &\nw        &\nw  \\
    \end{quantikz}
    \end{adjustbox}
    \caption{Six qubit conditionally clean incrementater circuit. The LHS of the circuit shows the accumulated conditonally clean qubits and the RHS shows the step-wise uncomputation of the conditionally clean qubits.}
    \label{fig:incrementer_circuit}
\end{figure}
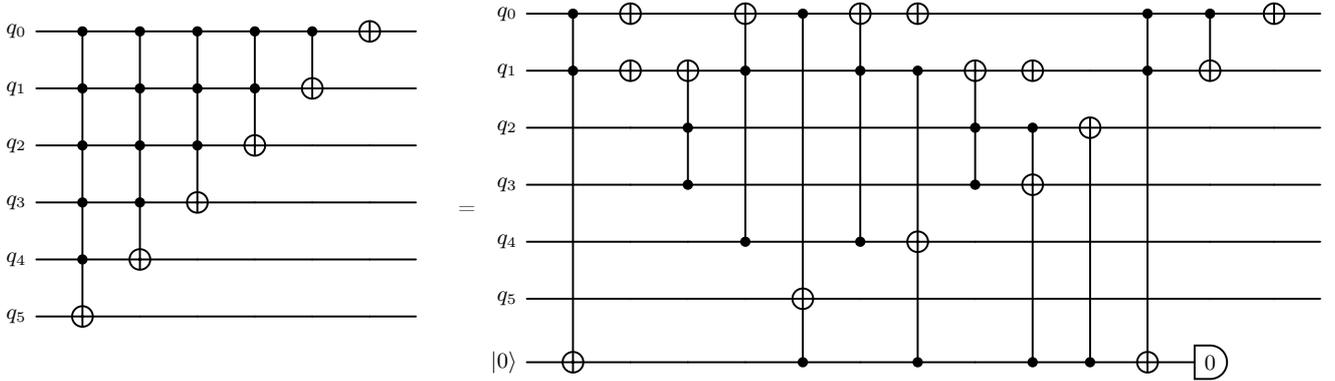

\noindent For large incrementers a recursive approach is used which scales at a $3n$ Toffoli cost (omitting constant factors) -- see \cite{khattar2024riseconditionallycleanancillae} for more details -- and $12n$ $T$ cost, where a maximum of 5 conditionally clean qubits are needed in practice.

\subsection{Resource Estimates}
\label{sec:total_gate_complexity}

The total Toffoli gate complexity can be determined by summing the costs of the incrementers and controlled Givens rotations, multiplied by $d$ steps of the algorithm. We assume the cost of incrementation remains constant at each step $d$, while the cost of the controlled Givens rotations scales with $d$. Specifically, the cost of the controlled Givens rotation is $O(d^2)$ Toffolis, and the cost of incrementation is $O(d)$ Toffolis. Consequently, the asymptotic complexity of the Paldus transform is $O(d^3)$ Toffolis. In this work, we present four distinct compilation strategies and analyze their impact on the constant factors of the Toffoli gate complexity.

Using the conditionally clean compilation shown in Figure~\ref{fig:incrementer_circuit}, a controlled incrementer on $(n-1)$ elements has the same cost as an incrementer on $n$ elements (minus a single $NOT$ gate). We assume the cost of the controlled $(n-1)$ qubit incrementer is the same as the $n$ qubit incrementer (see Section~\ref{sec:incrementer} for the $n$ incrementer circuit). The number of $N$ elements to be incremented is $2d +1$, as the $d$ spatial orbitals can be doubly-occupied, and there is also a $0$ particle state. These can therefore be contained in $\log_2(2d +1)$ qubits. As there are two controlled increments acting on this set, there are $6(\log_2(2d +1) +1)$ Toffoli gates. The number of conditionally clean qubits is then $\log_2(\log_2(2d +1)+1)$. The number of $S$ elements to be incremented is the same as the $N$ set. These can therefore be contained in $\log_2(2d +1)$. As there are also two controlled incrementers acting on this set, the number of conditionally clean qubits is $\log_2(\log_2(2d +1)+1)$. The number of $M$ elements in this set is double the number of $S$ elements minus one. This gives $2(2d +1) - 1 = 4d + 1$ elements. This can be contained in $\log_2(4d + 1)$ qubits. This also contains two controlled incrementers; therefore, the number of Toffoli gates is $6(\log_2(4d + 1) +1)$, and the number of conditionally clean qubits is $\log_2(\log_2(4d +1)+1)$. This is summarised in Table~\ref{tab:inc_cost}.

\begin{table}[!htbp]
    \centering
    \begin{ruledtabular}
    \begin{tabular}{l r r r}
    Register & Elements & Toffoli Count & Conditionally Clean Qubits \\
    \colrule
        $N$   & $2d + 1$      & $6(\log_2(2d +1) +1)$     & $\log_2(\log_2(2d +1)+1)$  \\
        $S$   & $2d + 1$      & $6(\log_2(2d +1) +1)$     & $\log_2(\log_2(2d +1)+1)$  \\
        $M$   & $4d + 1$      & $6(\log_2(4d + 1) +1)$    & $\log_2(\log_2(4d +1)+1)$  \\
    \end{tabular}
    \end{ruledtabular}
        \caption{Cost of incrementation for $d$ spatial orbitals using conditionally clean qubits \cite{khattar2024riseconditionallycleanancillae}.}
    \label{tab:inc_cost}
    \end{table}
  
\noindent Therefore, the total cost of the incrementers in each step of the Paldus transform using a maximum of $\log_2(\log_2(4d +1)+1)$ reusable conditionally clean qubits is $C_{inc} = 12(\log_2(2d +1) +1) + 6(\log_2(4d + 1) +1)$ Toffoli gates. 

The cost of the controlled Givens rotation is more complicated to analyse, as there are variable numbers of clean and dirty qubit registers that can be integrated into the workflow to manipulate the Toffoli cost. Some possible compilation strategies are shown in Table~\ref{tab:full_gate_counts} in the main text. The gate counts can therefore be calculated by combining the cost of the controlled Givens rotations with the cost of the incrementers and summing over all intermediate $d$ values in the iteration.

\begin{table}[!htbp]
    \centering
    \begin{ruledtabular}
    \begin{tabular}{l l}
    Compilation & Toffoli Count  \\
    \colrule
        Unary Iteration~\cite{Babbush2018LinearT}                                   & $\sum^{\max(d)}_d(2I_d + 3q + C_{inc})$                                              \\
        Clean $\selswap$~\cite{Low2024tradingtgatesdirty,Berry2019qubitizationof}   & $\sum^{\max(d)}_d(2\lceil I_d/k\rceil + q(k - 1)+ k + 3q + C_{inc})$                 \\        
        Dirty $\selswap$~\cite{Berry2019qubitizationof}                             & $\sum^{\max(d)}_d(2\lceil I_d/k\rceil+ 4q(k - 1) + 4k + 3q + C_{inc})$               \\
        $^{**}$Multi Index $\data$ (Section~\ref{sec:multi_index_data_select})                                                 & \makecell{$\sum^{\max(d)}_d (2(2\lceil \log_2 L_d \rceil + 2\lceil L_d/k \rceil$ \\ $+ 4q(k-1) + 4(d+2))+ 3q + C_{inc})$}            \\
    \end{tabular}
    \end{ruledtabular}
        \caption{\textit{Costs for the Fault-Tolerant Quantum Paldus Transform.} Here, $I = 8d^2 + 6d + 1$, which loops over all $SM$ combinations, and $L = \frac{1}{2}(d^2 + 3d + 2)$, which includes only the allowed $(S, M)$ combinations. $C_{inc} = 12(\log_2(2d +1) +1) + 6(\log_2(4d + 1) +1)$ Toffoli gates, which remains constant for each $d$ iteration. The algorithm applies an incrementation step and a controlled Givens rotation step, scaling with the intermediate $d$ values. The cost is calculated by summing over all $d$ values up to a maximum $\max(d)$. The $^{**}$Multi Index $\data$ Look Up is used from Table~\ref{tab:double_index_cost}, derived in Appendix~\ref{sec:multi_index_data_select}, where the smaller index corresponds to the $S$ elements, scaling as $2d +1$, and has been substituted into $I$ of Table~\ref{tab:double_index_cost}. $T$ counts can be approximated as $4\times$ the Toffoli gates. These resource estimates include measurement-based uncomputation. $q$ is the number of qubits required to implement the rotation up to some precision $\log_2(1/\epsilon)$. $k$ is the number of qubit registers used in the data lookup (either dirty or clean).}
    \label{tab:full_gate_counts}
    \end{table}
The number of clean and dirty qubits can be determined from the maximum $d$ value by referring to Table~\ref{tab:resouces_cost_gives}. Due to the leading constant factor of the Multi Index $\data$ Lookup, which scales with $L$ and is approximately 16 times smaller than the other methods that scale with $I$, this approach is expected to be the most efficient. Furthermore, it offers the flexibility to utilise dirty qubits, which are expected to be abundant due to the cascading nature of the algorithm, resulting in many idle dirty qubits.

\bibliography{references.bib}

\begin{thebibliography}{125}%
\makeatletter
\providecommand \@ifxundefined [1]{%
 \@ifx{#1\undefined}
}%
\providecommand \@ifnum [1]{%
 \ifnum #1\expandafter \@firstoftwo
 \else \expandafter \@secondoftwo
 \fi
}%
\providecommand \@ifx [1]{%
 \ifx #1\expandafter \@firstoftwo
 \else \expandafter \@secondoftwo
 \fi
}%
\providecommand \natexlab [1]{#1}%
\providecommand \enquote  [1]{``#1''}%
\providecommand \bibnamefont  [1]{#1}%
\providecommand \bibfnamefont [1]{#1}%
\providecommand \citenamefont [1]{#1}%
\providecommand \href@noop [0]{\@secondoftwo}%
\providecommand \href [0]{\begingroup \@sanitize@url \@href}%
\providecommand \@href[1]{\@@startlink{#1}\@@href}%
\providecommand \@@href[1]{\endgroup#1\@@endlink}%
\providecommand \@sanitize@url [0]{\catcode `\\12\catcode `\$12\catcode `\&12\catcode `\#12\catcode `\^12\catcode `\_12\catcode `\%12\relax}%
\providecommand \@@startlink[1]{}%
\providecommand \@@endlink[0]{}%
\providecommand \url  [0]{\begingroup\@sanitize@url \@url }%
\providecommand \@url [1]{\endgroup\@href {#1}{\urlprefix }}%
\providecommand \urlprefix  [0]{URL }%
\providecommand \Eprint [0]{\href }%
\providecommand \doibase [0]{https://doi.org/}%
\providecommand \selectlanguage [0]{\@gobble}%
\providecommand \bibinfo  [0]{\@secondoftwo}%
\providecommand \bibfield  [0]{\@secondoftwo}%
\providecommand \translation [1]{[#1]}%
\providecommand \BibitemOpen [0]{}%
\providecommand \bibitemStop [0]{}%
\providecommand \bibitemNoStop [0]{.\EOS\space}%
\providecommand \EOS [0]{\spacefactor3000\relax}%
\providecommand \BibitemShut  [1]{\csname bibitem#1\endcsname}%
\let\auto@bib@innerbib\@empty
\bibitem [{\citenamefont {Mosca}\ and\ \citenamefont {Ekert}(1999)}]{mosca1999}%
  \BibitemOpen
  \bibfield  {author} {\bibinfo {author} {\bibfnamefont {M.}~\bibnamefont {Mosca}}\ and\ \bibinfo {author} {\bibfnamefont {A.}~\bibnamefont {Ekert}},\ }\href {https://arxiv.org/abs/quant-ph/9903071} {\bibinfo {title} {The hidden subgroup problem and eigenvalue estimation on a quantum computer}} (\bibinfo {year} {1999}),\ \Eprint {https://arxiv.org/abs/quant-ph/9903071} {arXiv:quant-ph/9903071 [quant-ph]} \BibitemShut {NoStop}%
\bibitem [{\citenamefont {Jozsa}(2001)}]{Jozsa_2001}%
  \BibitemOpen
  \bibfield  {author} {\bibinfo {author} {\bibfnamefont {R.}~\bibnamefont {Jozsa}},\ }\bibfield  {title} {\bibinfo {title} {Quantum factoring, discrete logarithms, and the hidden subgroup problem},\ }\href {https://doi.org/10.1109/5992.909000} {\bibfield  {journal} {\bibinfo  {journal} {Computing in Science {\&} Engineering}\ }\textbf {\bibinfo {volume} {3}},\ \bibinfo {pages} {34–43} (\bibinfo {year} {2001})}\BibitemShut {NoStop}%
\bibitem [{\citenamefont {Lomont}(2004)}]{lomont2004}%
  \BibitemOpen
  \bibfield  {author} {\bibinfo {author} {\bibfnamefont {C.}~\bibnamefont {Lomont}},\ }\href {https://arxiv.org/abs/quant-ph/0411037} {\bibinfo {title} {The hidden subgroup problem - review and open problems}} (\bibinfo {year} {2004}),\ \Eprint {https://arxiv.org/abs/quant-ph/0411037} {arXiv:quant-ph/0411037 [quant-ph]} \BibitemShut {NoStop}%
\bibitem [{\citenamefont {Harrow}(2005)}]{harrowphd}%
  \BibitemOpen
  \bibfield  {author} {\bibinfo {author} {\bibfnamefont {A.~W.}\ \bibnamefont {Harrow}},\ }\href {https://arxiv.org/abs/quant-ph/0512255} {\bibinfo {title} {Applications of coherent classical communication and the schur transform to quantum information theory}} (\bibinfo {year} {2005}),\ \Eprint {https://arxiv.org/abs/quant-ph/0512255} {arXiv:quant-ph/0512255 [quant-ph]} \BibitemShut {NoStop}%
\bibitem [{\citenamefont {Bacon}\ \emph {et~al.}(2005)\citenamefont {Bacon}, \citenamefont {Chuang},\ and\ \citenamefont {Harrow}}]{BCH_2005}%
  \BibitemOpen
  \bibfield  {author} {\bibinfo {author} {\bibfnamefont {D.}~\bibnamefont {Bacon}}, \bibinfo {author} {\bibfnamefont {I.~L.}\ \bibnamefont {Chuang}},\ and\ \bibinfo {author} {\bibfnamefont {A.~W.}\ \bibnamefont {Harrow}},\ }\href {https://arxiv.org/abs/quant-ph/0601001} {\bibinfo {title} {The quantum schur transform: I. efficient qudit circuits}} (\bibinfo {year} {2005}),\ \Eprint {https://arxiv.org/abs/quant-ph/0601001} {arXiv:quant-ph/0601001 [quant-ph]} \BibitemShut {NoStop}%
\bibitem [{\citenamefont {Childs}\ \emph {et~al.}()\citenamefont {Childs}, \citenamefont {Harrow},\ and\ \citenamefont {Wocjan}}]{childs2007}%
  \BibitemOpen
  \bibfield  {author} {\bibinfo {author} {\bibfnamefont {A.~M.}\ \bibnamefont {Childs}}, \bibinfo {author} {\bibfnamefont {A.~W.}\ \bibnamefont {Harrow}},\ and\ \bibinfo {author} {\bibfnamefont {P.}~\bibnamefont {Wocjan}},\ }\bibinfo {title} {Weak fourier-schur sampling, the hidden subgroup problem, and the quantum collision problem},\ in\ \href {https://doi.org/10.1007/978-3-540-70918-3_51} {\emph {\bibinfo {booktitle} {STACS 2007}}}\ (\bibinfo  {publisher} {Springer Berlin Heidelberg})\ p.\ \bibinfo {pages} {598–609}\BibitemShut {NoStop}%
\bibitem [{\citenamefont {Bravyi}\ \emph {et~al.}(2024)\citenamefont {Bravyi}, \citenamefont {Chowdhury}, \citenamefont {Gosset}, \citenamefont {Havlicek},\ and\ \citenamefont {Zhu}}]{bravyi2024kronecker}%
  \BibitemOpen
  \bibfield  {author} {\bibinfo {author} {\bibfnamefont {S.}~\bibnamefont {Bravyi}}, \bibinfo {author} {\bibfnamefont {A.}~\bibnamefont {Chowdhury}}, \bibinfo {author} {\bibfnamefont {D.}~\bibnamefont {Gosset}}, \bibinfo {author} {\bibfnamefont {V.}~\bibnamefont {Havlicek}},\ and\ \bibinfo {author} {\bibfnamefont {G.}~\bibnamefont {Zhu}},\ }\href {https://arxiv.org/abs/2302.11454} {\bibinfo {title} {Quantum complexity of the kronecker coefficients}} (\bibinfo {year} {2024}),\ \Eprint {https://arxiv.org/abs/2302.11454} {arXiv:2302.11454 [quant-ph]} \BibitemShut {NoStop}%
\bibitem [{\citenamefont {Larocca}\ and\ \citenamefont {Havlicek}(2025)}]{larocca2025}%
  \BibitemOpen
  \bibfield  {author} {\bibinfo {author} {\bibfnamefont {M.}~\bibnamefont {Larocca}}\ and\ \bibinfo {author} {\bibfnamefont {V.}~\bibnamefont {Havlicek}},\ }\href {https://arxiv.org/abs/2407.17649} {\bibinfo {title} {Quantum algorithms for representation-theoretic multiplicities}} (\bibinfo {year} {2025}),\ \Eprint {https://arxiv.org/abs/2407.17649} {arXiv:2407.17649 [quant-ph]} \BibitemShut {NoStop}%
\bibitem [{\citenamefont {Hu}\ \emph {et~al.}(2024)\citenamefont {Hu}, \citenamefont {Cervero-Martín}, \citenamefont {Theil}, \citenamefont {Mančinska},\ and\ \citenamefont {Tomamichel}}]{hu2024}%
  \BibitemOpen
  \bibfield  {author} {\bibinfo {author} {\bibfnamefont {Y.}~\bibnamefont {Hu}}, \bibinfo {author} {\bibfnamefont {E.}~\bibnamefont {Cervero-Martín}}, \bibinfo {author} {\bibfnamefont {E.}~\bibnamefont {Theil}}, \bibinfo {author} {\bibfnamefont {L.}~\bibnamefont {Mančinska}},\ and\ \bibinfo {author} {\bibfnamefont {M.}~\bibnamefont {Tomamichel}},\ }\href {https://arxiv.org/abs/2410.16220} {\bibinfo {title} {Sample optimal and memory efficient quantum state tomography}} (\bibinfo {year} {2024}),\ \Eprint {https://arxiv.org/abs/2410.16220} {arXiv:2410.16220 [quant-ph]} \BibitemShut {NoStop}%
\bibitem [{\citenamefont {Montanaro}\ and\ \citenamefont {de~Wolf}(2018)}]{montanaro2018}%
  \BibitemOpen
  \bibfield  {author} {\bibinfo {author} {\bibfnamefont {A.}~\bibnamefont {Montanaro}}\ and\ \bibinfo {author} {\bibfnamefont {R.}~\bibnamefont {de~Wolf}},\ }\href {https://arxiv.org/abs/1310.2035} {\bibinfo {title} {A survey of quantum property testing}} (\bibinfo {year} {2018}),\ \Eprint {https://arxiv.org/abs/1310.2035} {arXiv:1310.2035 [quant-ph]} \BibitemShut {NoStop}%
\bibitem [{\citenamefont {Bacon}\ \emph {et~al.}(2006)\citenamefont {Bacon}, \citenamefont {Chuang},\ and\ \citenamefont {Harrow}}]{BCH_2006}%
  \BibitemOpen
  \bibfield  {author} {\bibinfo {author} {\bibfnamefont {D.}~\bibnamefont {Bacon}}, \bibinfo {author} {\bibfnamefont {I.~L.}\ \bibnamefont {Chuang}},\ and\ \bibinfo {author} {\bibfnamefont {A.~W.}\ \bibnamefont {Harrow}},\ }\bibfield  {title} {\bibinfo {title} {Efficient quantum circuits for schur and clebsch-gordan transforms},\ }\bibfield  {journal} {\bibinfo  {journal} {Physical Review Letters}\ }\textbf {\bibinfo {volume} {97}},\ \href {https://doi.org/10.1103/physrevlett.97.170502} {10.1103/physrevlett.97.170502} (\bibinfo {year} {2006})\BibitemShut {NoStop}%
\bibitem [{\citenamefont {Krovi}(2019)}]{Krovi_2019}%
  \BibitemOpen
  \bibfield  {author} {\bibinfo {author} {\bibfnamefont {H.}~\bibnamefont {Krovi}},\ }\bibfield  {title} {\bibinfo {title} {An efficient high dimensional quantum schur transform},\ }\href {https://doi.org/10.22331/q-2019-02-14-122} {\bibfield  {journal} {\bibinfo  {journal} {Quantum}\ }\textbf {\bibinfo {volume} {3}},\ \bibinfo {pages} {122} (\bibinfo {year} {2019})}\BibitemShut {NoStop}%
\bibitem [{\citenamefont {Wills}\ and\ \citenamefont {Strelchuk}(2024)}]{wills2024}%
  \BibitemOpen
  \bibfield  {author} {\bibinfo {author} {\bibfnamefont {A.}~\bibnamefont {Wills}}\ and\ \bibinfo {author} {\bibfnamefont {S.}~\bibnamefont {Strelchuk}},\ }\href {https://arxiv.org/abs/2305.04069} {\bibinfo {title} {Generalised coupling and an elementary algorithm for the quantum schur transform}} (\bibinfo {year} {2024}),\ \Eprint {https://arxiv.org/abs/2305.04069} {arXiv:2305.04069 [quant-ph]} \BibitemShut {NoStop}%
\bibitem [{\citenamefont {Nguyen}(2023)}]{nguyen2023}%
  \BibitemOpen
  \bibfield  {author} {\bibinfo {author} {\bibfnamefont {Q.~T.}\ \bibnamefont {Nguyen}},\ }\href {https://arxiv.org/abs/2310.01613} {\bibinfo {title} {The mixed schur transform: efficient quantum circuit and applications}} (\bibinfo {year} {2023}),\ \Eprint {https://arxiv.org/abs/2310.01613} {arXiv:2310.01613 [quant-ph]} \BibitemShut {NoStop}%
\bibitem [{\citenamefont {Grinko}\ \emph {et~al.}(2023)\citenamefont {Grinko}, \citenamefont {Burchardt},\ and\ \citenamefont {Ozols}}]{grinko2023}%
  \BibitemOpen
  \bibfield  {author} {\bibinfo {author} {\bibfnamefont {D.}~\bibnamefont {Grinko}}, \bibinfo {author} {\bibfnamefont {A.}~\bibnamefont {Burchardt}},\ and\ \bibinfo {author} {\bibfnamefont {M.}~\bibnamefont {Ozols}},\ }\href {https://arxiv.org/abs/2310.02252} {\bibinfo {title} {Gelfand-tsetlin basis for partially transposed permutations, with applications to quantum information}} (\bibinfo {year} {2023}),\ \Eprint {https://arxiv.org/abs/2310.02252} {arXiv:2310.02252 [quant-ph]} \BibitemShut {NoStop}%
\bibitem [{\citenamefont {Grinko}(2025)}]{Grinko_2025}%
  \BibitemOpen
  \bibfield  {author} {\bibinfo {author} {\bibfnamefont {D.~A.}\ \bibnamefont {Grinko}},\ }\emph {\bibinfo {title} {Mixed Schur–Weyl duality in quantum information}},\ \href {https://eprints.illc.uva.nl/id/eprint/2346/} {\bibinfo {type} {Doctoral dissertation}},\ \bibinfo  {school} {University of Amsterdam} (\bibinfo {year} {2025})\BibitemShut {NoStop}%
\bibitem [{\citenamefont {Keyl}\ and\ \citenamefont {Werner}(2001)}]{Keyl_2001}%
  \BibitemOpen
  \bibfield  {author} {\bibinfo {author} {\bibfnamefont {M.}~\bibnamefont {Keyl}}\ and\ \bibinfo {author} {\bibfnamefont {R.~F.}\ \bibnamefont {Werner}},\ }\bibfield  {title} {\bibinfo {title} {Estimating the spectrum of a density operator},\ }\bibfield  {journal} {\bibinfo  {journal} {Physical Review A}\ }\textbf {\bibinfo {volume} {64}},\ \href {https://doi.org/10.1103/physreva.64.052311} {10.1103/physreva.64.052311} (\bibinfo {year} {2001})\BibitemShut {NoStop}%
\bibitem [{\citenamefont {Christandl}\ and\ \citenamefont {Mitchison}(2005)}]{christandl2006spectra}%
  \BibitemOpen
  \bibfield  {author} {\bibinfo {author} {\bibfnamefont {M.}~\bibnamefont {Christandl}}\ and\ \bibinfo {author} {\bibfnamefont {G.}~\bibnamefont {Mitchison}},\ }\bibfield  {title} {\bibinfo {title} {The spectra of quantum states and the kronecker coefficients of the symmetric group},\ }\href {https://doi.org/10.1007/s00220-005-1435-1} {\bibfield  {journal} {\bibinfo  {journal} {Communications in Mathematical Physics}\ }\textbf {\bibinfo {volume} {261}},\ \bibinfo {pages} {789–797} (\bibinfo {year} {2005})}\BibitemShut {NoStop}%
\bibitem [{\citenamefont {Bartlett}\ \emph {et~al.}(2003)\citenamefont {Bartlett}, \citenamefont {Rudolph},\ and\ \citenamefont {Spekkens}}]{Bartlett_2003}%
  \BibitemOpen
  \bibfield  {author} {\bibinfo {author} {\bibfnamefont {S.~D.}\ \bibnamefont {Bartlett}}, \bibinfo {author} {\bibfnamefont {T.}~\bibnamefont {Rudolph}},\ and\ \bibinfo {author} {\bibfnamefont {R.~W.}\ \bibnamefont {Spekkens}},\ }\bibfield  {title} {\bibinfo {title} {Classical and quantum communication without a shared reference frame},\ }\bibfield  {journal} {\bibinfo  {journal} {Physical Review Letters}\ }\textbf {\bibinfo {volume} {91}},\ \href {https://doi.org/10.1103/physrevlett.91.027901} {10.1103/physrevlett.91.027901} (\bibinfo {year} {2003})\BibitemShut {NoStop}%
\bibitem [{\citenamefont {Kempe}\ \emph {et~al.}(2001)\citenamefont {Kempe}, \citenamefont {Bacon}, \citenamefont {Lidar},\ and\ \citenamefont {Whaley}}]{Kempe_2001}%
  \BibitemOpen
  \bibfield  {author} {\bibinfo {author} {\bibfnamefont {J.}~\bibnamefont {Kempe}}, \bibinfo {author} {\bibfnamefont {D.}~\bibnamefont {Bacon}}, \bibinfo {author} {\bibfnamefont {D.~A.}\ \bibnamefont {Lidar}},\ and\ \bibinfo {author} {\bibfnamefont {K.~B.}\ \bibnamefont {Whaley}},\ }\bibfield  {title} {\bibinfo {title} {Theory of decoherence-free fault-tolerant universal quantum computation},\ }\bibfield  {journal} {\bibinfo  {journal} {Physical Review A}\ }\textbf {\bibinfo {volume} {63}},\ \href {https://doi.org/10.1103/physreva.63.042307} {10.1103/physreva.63.042307} (\bibinfo {year} {2001})\BibitemShut {NoStop}%
\bibitem [{\citenamefont {Lidar}(2014)}]{lidar_2014}%
  \BibitemOpen
  \bibfield  {author} {\bibinfo {author} {\bibfnamefont {D.~A.}\ \bibnamefont {Lidar}},\ }\href {https://doi.org/10.1002/9781118742631} {\emph {\bibinfo {title} {Quantum Information and Computation for Chemistry}}}\ (\bibinfo  {publisher} {Wiley},\ \bibinfo {year} {2014})\BibitemShut {NoStop}%
\bibitem [{\citenamefont {Gu}\ \emph {et~al.}(2021)\citenamefont {Gu}, \citenamefont {Somma},\ and\ \citenamefont {Şahinoğlu}}]{Gu_2021}%
  \BibitemOpen
  \bibfield  {author} {\bibinfo {author} {\bibfnamefont {S.}~\bibnamefont {Gu}}, \bibinfo {author} {\bibfnamefont {R.~D.}\ \bibnamefont {Somma}},\ and\ \bibinfo {author} {\bibfnamefont {B.}~\bibnamefont {Şahinoğlu}},\ }\bibfield  {title} {\bibinfo {title} {Fast-forwarding quantum evolution},\ }\href {https://doi.org/10.22331/q-2021-11-15-577} {\bibfield  {journal} {\bibinfo  {journal} {Quantum}\ }\textbf {\bibinfo {volume} {5}},\ \bibinfo {pages} {577} (\bibinfo {year} {2021})}\BibitemShut {NoStop}%
\bibitem [{\citenamefont {Zheng}\ \emph {et~al.}(2025)\citenamefont {Zheng}, \citenamefont {Li}, \citenamefont {Strelchuk}, \citenamefont {Kondor},\ and\ \citenamefont {Liu}}]{zheng2022}%
  \BibitemOpen
  \bibfield  {author} {\bibinfo {author} {\bibfnamefont {H.}~\bibnamefont {Zheng}}, \bibinfo {author} {\bibfnamefont {Z.}~\bibnamefont {Li}}, \bibinfo {author} {\bibfnamefont {S.}~\bibnamefont {Strelchuk}}, \bibinfo {author} {\bibfnamefont {R.}~\bibnamefont {Kondor}},\ and\ \bibinfo {author} {\bibfnamefont {J.}~\bibnamefont {Liu}},\ }\href {https://arxiv.org/abs/2207.07250} {\bibinfo {title} {Towards super-polynomial quantum speedup of equivariant quantum algorithms with su($d$) symmetry}} (\bibinfo {year} {2025}),\ \Eprint {https://arxiv.org/abs/2207.07250} {arXiv:2207.07250 [quant-ph]} \BibitemShut {NoStop}%
\bibitem [{\citenamefont {Zheng}\ \emph {et~al.}(2023)\citenamefont {Zheng}, \citenamefont {Li}, \citenamefont {Liu}, \citenamefont {Strelchuk},\ and\ \citenamefont {Kondor}}]{Zheng_2023}%
  \BibitemOpen
  \bibfield  {author} {\bibinfo {author} {\bibfnamefont {H.}~\bibnamefont {Zheng}}, \bibinfo {author} {\bibfnamefont {Z.}~\bibnamefont {Li}}, \bibinfo {author} {\bibfnamefont {J.}~\bibnamefont {Liu}}, \bibinfo {author} {\bibfnamefont {S.}~\bibnamefont {Strelchuk}},\ and\ \bibinfo {author} {\bibfnamefont {R.}~\bibnamefont {Kondor}},\ }\bibfield  {title} {\bibinfo {title} {Speeding up learning quantum states through group equivariant convolutional quantum ansätze},\ }\bibfield  {journal} {\bibinfo  {journal} {PRX Quantum}\ }\textbf {\bibinfo {volume} {4}},\ \href {https://doi.org/10.1103/prxquantum.4.020327} {10.1103/prxquantum.4.020327} (\bibinfo {year} {2023})\BibitemShut {NoStop}%
\bibitem [{\citenamefont {Nguyen}\ \emph {et~al.}(2024)\citenamefont {Nguyen}, \citenamefont {Schatzki}, \citenamefont {Braccia}, \citenamefont {Ragone}, \citenamefont {Coles}, \citenamefont {Sauvage}, \citenamefont {Larocca},\ and\ \citenamefont {Cerezo}}]{Nguyen_2024}%
  \BibitemOpen
  \bibfield  {author} {\bibinfo {author} {\bibfnamefont {Q.~T.}\ \bibnamefont {Nguyen}}, \bibinfo {author} {\bibfnamefont {L.}~\bibnamefont {Schatzki}}, \bibinfo {author} {\bibfnamefont {P.}~\bibnamefont {Braccia}}, \bibinfo {author} {\bibfnamefont {M.}~\bibnamefont {Ragone}}, \bibinfo {author} {\bibfnamefont {P.~J.}\ \bibnamefont {Coles}}, \bibinfo {author} {\bibfnamefont {F.}~\bibnamefont {Sauvage}}, \bibinfo {author} {\bibfnamefont {M.}~\bibnamefont {Larocca}},\ and\ \bibinfo {author} {\bibfnamefont {M.}~\bibnamefont {Cerezo}},\ }\bibfield  {title} {\bibinfo {title} {Theory for equivariant quantum neural networks},\ }\bibfield  {journal} {\bibinfo  {journal} {PRX Quantum}\ }\textbf {\bibinfo {volume} {5}},\ \href {https://doi.org/10.1103/prxquantum.5.020328} {10.1103/prxquantum.5.020328} (\bibinfo {year} {2024})\BibitemShut {NoStop}%
\bibitem [{\citenamefont {Paldus}(2021{\natexlab{a}})}]{Paldus2020}%
  \BibitemOpen
  \bibfield  {author} {\bibinfo {author} {\bibfnamefont {J.}~\bibnamefont {Paldus}},\ }\bibfield  {title} {\bibinfo {title} {Matrix elements of unitary group generators in many-fermion correlation problem. i. tensorial approaches},\ }\href {https://doi.org/10.1007/s10910-020-01172-9} {\bibfield  {journal} {\bibinfo  {journal} {Journal of Mathematical Chemistry}\ }\textbf {\bibinfo {volume} {59}},\ \bibinfo {pages} {1} (\bibinfo {year} {2021}{\natexlab{a}})}\BibitemShut {NoStop}%
\bibitem [{\citenamefont {Paldus}(2021{\natexlab{b}})}]{Paldus2020a}%
  \BibitemOpen
  \bibfield  {author} {\bibinfo {author} {\bibfnamefont {J.}~\bibnamefont {Paldus}},\ }\bibfield  {title} {\bibinfo {title} {Matrix elements of unitary group generators in many-fermion correlation problem. ii. graphical methods of spin algebras},\ }\href {https://doi.org/10.1007/s10910-020-01173-8} {\bibfield  {journal} {\bibinfo  {journal} {Journal of Mathematical Chemistry}\ }\textbf {\bibinfo {volume} {59}},\ \bibinfo {pages} {37} (\bibinfo {year} {2021}{\natexlab{b}})}\BibitemShut {NoStop}%
\bibitem [{\citenamefont {Paldus}(2021{\natexlab{c}})}]{Paldus2020b}%
  \BibitemOpen
  \bibfield  {author} {\bibinfo {author} {\bibfnamefont {J.}~\bibnamefont {Paldus}},\ }\bibfield  {title} {\bibinfo {title} {Matrix elements of unitary group generators in many-fermion correlation problem. iii. green-gould approach},\ }\href {https://doi.org/10.1007/s10910-020-01174-7} {\bibfield  {journal} {\bibinfo  {journal} {Journal of Mathematical Chemistry}\ }\textbf {\bibinfo {volume} {59}},\ \bibinfo {pages} {72} (\bibinfo {year} {2021}{\natexlab{c}})}\BibitemShut {NoStop}%
\bibitem [{\citenamefont {Robb}\ and\ \citenamefont {Niazi}(1984)}]{Robb1984}%
  \BibitemOpen
  \bibfield  {author} {\bibinfo {author} {\bibfnamefont {M.~A.}\ \bibnamefont {Robb}}\ and\ \bibinfo {author} {\bibfnamefont {U.}~\bibnamefont {Niazi}},\ }\bibfield  {title} {\bibinfo {title} {The unitary group approach in configuration interaction (ci) methods},\ }\href {https://doi.org/https://doi.org/10.1016/0167-7977(84)90007-8} {\bibfield  {journal} {\bibinfo  {journal} {Computer Physics Reports}\ }\textbf {\bibinfo {volume} {1}},\ \bibinfo {pages} {127} (\bibinfo {year} {1984})}\BibitemShut {NoStop}%
\bibitem [{\citenamefont {Howe}(1989{\natexlab{a}})}]{Howe1989}%
  \BibitemOpen
  \bibfield  {author} {\bibinfo {author} {\bibfnamefont {R.}~\bibnamefont {Howe}},\ }\bibfield  {title} {\bibinfo {title} {Remarks on classical invariant theory},\ }\href {http://www.jstor.org/stable/2001418} {\bibfield  {journal} {\bibinfo  {journal} {Transactions of the American Mathematical Society}\ }\textbf {\bibinfo {volume} {313}},\ \bibinfo {pages} {539} (\bibinfo {year} {1989}{\natexlab{a}})}\BibitemShut {NoStop}%
\bibitem [{\citenamefont {Howe}(1995)}]{Howe_1995}%
  \BibitemOpen
  \bibfield  {author} {\bibinfo {author} {\bibfnamefont {R.}~\bibnamefont {Howe}},\ }\href@noop {} {\bibinfo {title} {Perspectives on invariant theory: Schur duality, multiplicity-free actions and beyond}} (\bibinfo {year} {1995})\BibitemShut {NoStop}%
\bibitem [{\citenamefont {Cauchy}(2009)}]{Cauchy_2009}%
  \BibitemOpen
  \bibfield  {author} {\bibinfo {author} {\bibfnamefont {A.-L.}\ \bibnamefont {Cauchy}},\ }\bibinfo {title} {Mémoire sur les fonctions qui ne peuvent obtenir que deux valeurs},\ in\ \href@noop {} {\emph {\bibinfo {booktitle} {Oeuvres complètes: Series 2}}},\ \bibinfo {series and number} {Cambridge Library Collection}\ (\bibinfo  {publisher} {Cambridge University Press},\ \bibinfo {year} {2009})\ p.\ \bibinfo {pages} {91–169}\BibitemShut {NoStop}%
\bibitem [{\citenamefont {Weyl}(1925)}]{Weyl1925}%
  \BibitemOpen
  \bibfield  {author} {\bibinfo {author} {\bibfnamefont {H.}~\bibnamefont {Weyl}},\ }\bibfield  {title} {\bibinfo {title} {Theorie der darstellung kontinuierlcher halb-einfacher gruppen durch lineare transformationen.},\ }\href {http://eudml.org/doc/167863} {\bibfield  {journal} {\bibinfo  {journal} {Mathematische Zeitschrift}\ }\textbf {\bibinfo {volume} {23}},\ \bibinfo {pages} {271} (\bibinfo {year} {1925})}\BibitemShut {NoStop}%
\bibitem [{\citenamefont {Hamermesh}(1962)}]{hamermesh1962group}%
  \BibitemOpen
  \bibfield  {author} {\bibinfo {author} {\bibfnamefont {M.}~\bibnamefont {Hamermesh}},\ }\href {https://books.google.co.uk/books?id=SWxszwEACAAJ} {\emph {\bibinfo {title} {Group Theory and Its Application to Physical Problems}}},\ Addison-Wesley series in physics\ (\bibinfo  {publisher} {Addison-Wesley Publishing Company},\ \bibinfo {year} {1962})\BibitemShut {NoStop}%
\bibitem [{\citenamefont {Paldus}\ and\ \citenamefont {Wormer}(1979)}]{Paldus1979}%
  \BibitemOpen
  \bibfield  {author} {\bibinfo {author} {\bibfnamefont {J.}~\bibnamefont {Paldus}}\ and\ \bibinfo {author} {\bibfnamefont {P.~E.~S.}\ \bibnamefont {Wormer}},\ }\bibfield  {title} {\bibinfo {title} {{Configuration interaction matrix elements. II. Graphical approach to the unitary group generators and permutations}},\ }\href {https://doi.org/https://doi.org/10.1002/qua.560160611} {\bibfield  {journal} {\bibinfo  {journal} {Int. J. Quantum Chem.}\ }\textbf {\bibinfo {volume} {16}},\ \bibinfo {pages} {1321} (\bibinfo {year} {1979})}\BibitemShut {NoStop}%
\bibitem [{\citenamefont {Paldus}\ and\ \citenamefont {Boyle}(1980)}]{Paldus1980}%
  \BibitemOpen
  \bibfield  {author} {\bibinfo {author} {\bibfnamefont {J.}~\bibnamefont {Paldus}}\ and\ \bibinfo {author} {\bibfnamefont {M.~J.}\ \bibnamefont {Boyle}},\ }\bibfield  {title} {\bibinfo {title} {Unitary group approach to the many-electron correlation problem via graphical methods of spin algebras},\ }\href {https://doi.org/10.1088/0031-8949/21/3-4/012} {\bibfield  {journal} {\bibinfo  {journal} {Physica Scripta}\ }\textbf {\bibinfo {volume} {21}},\ \bibinfo {pages} {295} (\bibinfo {year} {1980})}\BibitemShut {NoStop}%
\bibitem [{\citenamefont {Howe}(1989{\natexlab{b}})}]{MathSoc1989}%
  \BibitemOpen
  \bibfield  {author} {\bibinfo {author} {\bibfnamefont {R.}~\bibnamefont {Howe}},\ }\bibfield  {title} {\bibinfo {title} {Transcending classical invariant theory},\ }\href {https://doi.org/10.1090/S0894-0347-1989-0985172-6} {\bibfield  {journal} {\bibinfo  {journal} {Journal of the American Mathematical Society}\ }\textbf {\bibinfo {volume} {2}},\ \bibinfo {pages} {535} (\bibinfo {year} {1989}{\natexlab{b}})}\BibitemShut {NoStop}%
\bibitem [{\citenamefont {Shavitt}(1978)}]{Shavitt1978}%
  \BibitemOpen
  \bibfield  {author} {\bibinfo {author} {\bibfnamefont {I.}~\bibnamefont {Shavitt}},\ }\bibfield  {title} {\bibinfo {title} {{Matrix element evaluation in the unitary group approach to the electron correlation problem}},\ }\href {https://doi.org/10.1002/qua.560140803} {\bibfield  {journal} {\bibinfo  {journal} {International Journal of Quantum Chemistry Symposium}\ ,\ \bibinfo {pages} {5}} (\bibinfo {year} {1978})}\BibitemShut {NoStop}%
\bibitem [{\citenamefont {Shepard}(2006)}]{Shepard2006}%
  \BibitemOpen
  \bibfield  {author} {\bibinfo {author} {\bibfnamefont {R.}~\bibnamefont {Shepard}},\ }\bibfield  {title} {\bibinfo {title} {{Hamiltonian matrix and reduced density matrix construction with nonlinear wave functions}},\ }\href {https://doi.org/10.1021/jp060336g} {\bibfield  {journal} {\bibinfo  {journal} {Journal of Physical Chemistry A}\ }\textbf {\bibinfo {volume} {110}},\ \bibinfo {pages} {8880} (\bibinfo {year} {2006})}\BibitemShut {NoStop}%
\bibitem [{\citenamefont {Brink}\ and\ \citenamefont {Satchler}(1975)}]{Brink:1975}%
  \BibitemOpen
  \bibfield  {author} {\bibinfo {author} {\bibfnamefont {D.~M.}\ \bibnamefont {Brink}}\ and\ \bibinfo {author} {\bibfnamefont {G.~R.}\ \bibnamefont {Satchler}},\ }\href@noop {} {\emph {\bibinfo {title} {{Angular Momentum}}}}\ (\bibinfo  {publisher} {Oxford, UK: Clarendon},\ \bibinfo {year} {1975})\BibitemShut {NoStop}%
\bibitem [{\citenamefont {Pauncz}(1979)}]{Pauncz1979}%
  \BibitemOpen
  \bibfield  {author} {\bibinfo {author} {\bibfnamefont {R.}~\bibnamefont {Pauncz}},\ }\href {https://doi.org/10.1007/978-1-4684-8526-4} {\emph {\bibinfo {title} {Spin eigenfunctions: Construction and Use}}}\ (\bibinfo {year} {1979})\BibitemShut {NoStop}%
\bibitem [{\citenamefont {Moshinsky}\ and\ \citenamefont {Seligman}(1971)}]{Moshinsky1971}%
  \BibitemOpen
  \bibfield  {author} {\bibinfo {author} {\bibfnamefont {M.}~\bibnamefont {Moshinsky}}\ and\ \bibinfo {author} {\bibfnamefont {T.}~\bibnamefont {Seligman}},\ }\bibfield  {title} {\bibinfo {title} {Group theory and second quantization for nonorthogonal orbitals},\ }\href {https://doi.org/https://doi.org/10.1016/0003-4916(71)90191-6} {\bibfield  {journal} {\bibinfo  {journal} {Annals of Physics}\ }\textbf {\bibinfo {volume} {66}},\ \bibinfo {pages} {311} (\bibinfo {year} {1971})}\BibitemShut {NoStop}%
\bibitem [{\citenamefont {Gandon}\ \emph {et~al.}(2024)\citenamefont {Gandon}, \citenamefont {Baiardi}, \citenamefont {Rossmannek}, \citenamefont {Dobrautz},\ and\ \citenamefont {Tavernelli}}]{gandon2024}%
  \BibitemOpen
  \bibfield  {author} {\bibinfo {author} {\bibfnamefont {A.}~\bibnamefont {Gandon}}, \bibinfo {author} {\bibfnamefont {A.}~\bibnamefont {Baiardi}}, \bibinfo {author} {\bibfnamefont {M.}~\bibnamefont {Rossmannek}}, \bibinfo {author} {\bibfnamefont {W.}~\bibnamefont {Dobrautz}},\ and\ \bibinfo {author} {\bibfnamefont {I.}~\bibnamefont {Tavernelli}},\ }\href {https://arxiv.org/abs/2412.14797} {\bibinfo {title} {Quantum computing in spin-adapted representations for efficient simulations of spin systems}} (\bibinfo {year} {2024}),\ \Eprint {https://arxiv.org/abs/2412.14797} {arXiv:2412.14797 [quant-ph]} \BibitemShut {NoStop}%
\bibitem [{\citenamefont {Dobrautz}\ \emph {et~al.}(2019)\citenamefont {Dobrautz}, \citenamefont {Smart},\ and\ \citenamefont {Alavi}}]{Dobrautz2019}%
  \BibitemOpen
  \bibfield  {author} {\bibinfo {author} {\bibfnamefont {W.}~\bibnamefont {Dobrautz}}, \bibinfo {author} {\bibfnamefont {S.~D.}\ \bibnamefont {Smart}},\ and\ \bibinfo {author} {\bibfnamefont {A.}~\bibnamefont {Alavi}},\ }\bibfield  {title} {\bibinfo {title} {{Efficient formulation of full configuration interaction quantum Monte Carlo in a spin eigenbasis via the graphical unitary group approach}},\ }\href {https://doi.org/10.1063/1.5108908} {\bibfield  {journal} {\bibinfo  {journal} {The Journal of Chemical Physics}\ }\textbf {\bibinfo {volume} {151}},\ \bibinfo {pages} {094104} (\bibinfo {year} {2019})}\BibitemShut {NoStop}%
\bibitem [{\citenamefont {Nielsen}\ and\ \citenamefont {Chuang}(2010)}]{Nielsen2010}%
  \BibitemOpen
  \bibfield  {author} {\bibinfo {author} {\bibfnamefont {M.~A.}\ \bibnamefont {Nielsen}}\ and\ \bibinfo {author} {\bibfnamefont {I.~L.}\ \bibnamefont {Chuang}},\ }\href {https://doi.org/10.1017/CBO9780511976667} {\emph {\bibinfo {title} {Quantum Computation and Quantum Information}}}\ (\bibinfo {year} {2010})\ p.\ \bibinfo {pages} {702}\BibitemShut {NoStop}%
\bibitem [{\citenamefont {Litinski}(2019)}]{Litinski2019magicstate}%
  \BibitemOpen
  \bibfield  {author} {\bibinfo {author} {\bibfnamefont {D.}~\bibnamefont {Litinski}},\ }\bibfield  {title} {\bibinfo {title} {Magic {S}tate {D}istillation: {N}ot as {C}ostly as {Y}ou {T}hink},\ }\href {https://doi.org/10.22331/q-2019-12-02-205} {\bibfield  {journal} {\bibinfo  {journal} {{Quantum}}\ }\textbf {\bibinfo {volume} {3}},\ \bibinfo {pages} {205} (\bibinfo {year} {2019})}\BibitemShut {NoStop}%
\bibitem [{\citenamefont {Kliuchnikov}\ \emph {et~al.}(2023)\citenamefont {Kliuchnikov}, \citenamefont {Lauter}, \citenamefont {Minko}, \citenamefont {Paetznick},\ and\ \citenamefont {Petit}}]{Kliuchnikov2023shorterquantum}%
  \BibitemOpen
  \bibfield  {author} {\bibinfo {author} {\bibfnamefont {V.}~\bibnamefont {Kliuchnikov}}, \bibinfo {author} {\bibfnamefont {K.}~\bibnamefont {Lauter}}, \bibinfo {author} {\bibfnamefont {R.}~\bibnamefont {Minko}}, \bibinfo {author} {\bibfnamefont {A.}~\bibnamefont {Paetznick}},\ and\ \bibinfo {author} {\bibfnamefont {C.}~\bibnamefont {Petit}},\ }\bibfield  {title} {\bibinfo {title} {Shorter quantum circuits via single-qubit gate approximation},\ }\href {https://doi.org/10.22331/q-2023-12-18-1208} {\bibfield  {journal} {\bibinfo  {journal} {{Quantum}}\ }\textbf {\bibinfo {volume} {7}},\ \bibinfo {pages} {1208} (\bibinfo {year} {2023})}\BibitemShut {NoStop}%
\bibitem [{\citenamefont {Gidney}(2018)}]{Gidney2018halvingcostof}%
  \BibitemOpen
  \bibfield  {author} {\bibinfo {author} {\bibfnamefont {C.}~\bibnamefont {Gidney}},\ }\bibfield  {title} {\bibinfo {title} {Halving the cost of quantum addition},\ }\href {https://doi.org/10.22331/q-2018-06-18-74} {\bibfield  {journal} {\bibinfo  {journal} {{Quantum}}\ }\textbf {\bibinfo {volume} {2}},\ \bibinfo {pages} {74} (\bibinfo {year} {2018})}\BibitemShut {NoStop}%
\bibitem [{\citenamefont {Babbush}\ \emph {et~al.}(2018{\natexlab{a}})\citenamefont {Babbush}, \citenamefont {Gidney}, \citenamefont {Berry}, \citenamefont {Wiebe}, \citenamefont {McClean}, \citenamefont {Paler}, \citenamefont {Fowler},\ and\ \citenamefont {Neven}}]{Babbush2018LinearT}%
  \BibitemOpen
  \bibfield  {author} {\bibinfo {author} {\bibfnamefont {R.}~\bibnamefont {Babbush}}, \bibinfo {author} {\bibfnamefont {C.}~\bibnamefont {Gidney}}, \bibinfo {author} {\bibfnamefont {D.~W.}\ \bibnamefont {Berry}}, \bibinfo {author} {\bibfnamefont {N.}~\bibnamefont {Wiebe}}, \bibinfo {author} {\bibfnamefont {J.}~\bibnamefont {McClean}}, \bibinfo {author} {\bibfnamefont {A.}~\bibnamefont {Paler}}, \bibinfo {author} {\bibfnamefont {A.}~\bibnamefont {Fowler}},\ and\ \bibinfo {author} {\bibfnamefont {H.}~\bibnamefont {Neven}},\ }\bibfield  {title} {\bibinfo {title} {Encoding electronic spectra in quantum circuits with linear t complexity},\ }\href {https://doi.org/10.1103/PhysRevX.8.041015} {\bibfield  {journal} {\bibinfo  {journal} {Phys. Rev. X}\ }\textbf {\bibinfo {volume} {8}},\ \bibinfo {pages} {041015} (\bibinfo {year} {2018}{\natexlab{a}})}\BibitemShut {NoStop}%
\bibitem [{\citenamefont {Low}\ \emph {et~al.}(2024)\citenamefont {Low}, \citenamefont {Kliuchnikov},\ and\ \citenamefont {Schaeffer}}]{Low2024tradingtgatesdirty}%
  \BibitemOpen
  \bibfield  {author} {\bibinfo {author} {\bibfnamefont {G.~H.}\ \bibnamefont {Low}}, \bibinfo {author} {\bibfnamefont {V.}~\bibnamefont {Kliuchnikov}},\ and\ \bibinfo {author} {\bibfnamefont {L.}~\bibnamefont {Schaeffer}},\ }\bibfield  {title} {\bibinfo {title} {Trading {T} gates for dirty qubits in state preparation and unitary synthesis},\ }\href {https://doi.org/10.22331/q-2024-06-17-1375} {\bibfield  {journal} {\bibinfo  {journal} {{Quantum}}\ }\textbf {\bibinfo {volume} {8}},\ \bibinfo {pages} {1375} (\bibinfo {year} {2024})}\BibitemShut {NoStop}%
\bibitem [{\citenamefont {Berry}\ \emph {et~al.}(2019)\citenamefont {Berry}, \citenamefont {Gidney}, \citenamefont {Motta}, \citenamefont {McClean},\ and\ \citenamefont {Babbush}}]{Berry2019qubitizationof}%
  \BibitemOpen
  \bibfield  {author} {\bibinfo {author} {\bibfnamefont {D.~W.}\ \bibnamefont {Berry}}, \bibinfo {author} {\bibfnamefont {C.}~\bibnamefont {Gidney}}, \bibinfo {author} {\bibfnamefont {M.}~\bibnamefont {Motta}}, \bibinfo {author} {\bibfnamefont {J.~R.}\ \bibnamefont {McClean}},\ and\ \bibinfo {author} {\bibfnamefont {R.}~\bibnamefont {Babbush}},\ }\bibfield  {title} {\bibinfo {title} {Qubitization of {A}rbitrary {B}asis {Q}uantum {C}hemistry {L}everaging {S}parsity and {L}ow {R}ank {F}actorization},\ }\href {https://doi.org/10.22331/q-2019-12-02-208} {\bibfield  {journal} {\bibinfo  {journal} {{Quantum}}\ }\textbf {\bibinfo {volume} {3}},\ \bibinfo {pages} {208} (\bibinfo {year} {2019})}\BibitemShut {NoStop}%
\bibitem [{\citenamefont {Cervero-Martín}\ and\ \citenamefont {Mančinska}(2023)}]{cervero2023}%
  \BibitemOpen
  \bibfield  {author} {\bibinfo {author} {\bibfnamefont {E.}~\bibnamefont {Cervero-Martín}}\ and\ \bibinfo {author} {\bibfnamefont {L.}~\bibnamefont {Mančinska}},\ }\href {https://arxiv.org/abs/2309.11947} {\bibinfo {title} {Weak schur sampling with logarithmic quantum memory}} (\bibinfo {year} {2023}),\ \Eprint {https://arxiv.org/abs/2309.11947} {arXiv:2309.11947 [quant-ph]} \BibitemShut {NoStop}%
\bibitem [{\citenamefont {Cervero-Martín}\ \emph {et~al.}(2024)\citenamefont {Cervero-Martín}, \citenamefont {Mančinska},\ and\ \citenamefont {Theil}}]{cervero2024}%
  \BibitemOpen
  \bibfield  {author} {\bibinfo {author} {\bibfnamefont {E.}~\bibnamefont {Cervero-Martín}}, \bibinfo {author} {\bibfnamefont {L.}~\bibnamefont {Mančinska}},\ and\ \bibinfo {author} {\bibfnamefont {E.}~\bibnamefont {Theil}},\ }\href {https://arxiv.org/abs/2410.15793} {\bibinfo {title} {A memory and gate efficient algorithm for unitary mixed schur sampling}} (\bibinfo {year} {2024}),\ \Eprint {https://arxiv.org/abs/2410.15793} {arXiv:2410.15793 [quant-ph]} \BibitemShut {NoStop}%
\bibitem [{\citenamefont {Bastidas}\ \emph {et~al.}(2025)\citenamefont {Bastidas}, \citenamefont {Fitzpatrick}, \citenamefont {Joven}, \citenamefont {Rossi}, \citenamefont {Islam}, \citenamefont {Van~Voorhis}, \citenamefont {Chuang},\ and\ \citenamefont {Liu}}]{Fitzpatrick_2025}%
  \BibitemOpen
  \bibfield  {author} {\bibinfo {author} {\bibfnamefont {V.~M.}\ \bibnamefont {Bastidas}}, \bibinfo {author} {\bibfnamefont {N.}~\bibnamefont {Fitzpatrick}}, \bibinfo {author} {\bibfnamefont {K.~J.}\ \bibnamefont {Joven}}, \bibinfo {author} {\bibfnamefont {Z.~M.}\ \bibnamefont {Rossi}}, \bibinfo {author} {\bibfnamefont {S.}~\bibnamefont {Islam}}, \bibinfo {author} {\bibfnamefont {T.}~\bibnamefont {Van~Voorhis}}, \bibinfo {author} {\bibfnamefont {I.~L.}\ \bibnamefont {Chuang}},\ and\ \bibinfo {author} {\bibfnamefont {Y.}~\bibnamefont {Liu}},\ }\bibfield  {title} {\bibinfo {title} {Unification of finite symmetries in the simulation of many-body systems on quantum computers},\ }\href {https://doi.org/10.1103/PhysRevA.111.052433} {\bibfield  {journal} {\bibinfo  {journal} {Phys. Rev. A}\ }\textbf {\bibinfo {volume} {111}},\ \bibinfo {pages} {052433} (\bibinfo {year} {2025})}\BibitemShut {NoStop}%
\bibitem [{\citenamefont {Babbush}\ \emph {et~al.}(2018{\natexlab{b}})\citenamefont {Babbush}, \citenamefont {Wiebe}, \citenamefont {McClean}, \citenamefont {McClain}, \citenamefont {Neven},\ and\ \citenamefont {Chan}}]{Babbush_2018}%
  \BibitemOpen
  \bibfield  {author} {\bibinfo {author} {\bibfnamefont {R.}~\bibnamefont {Babbush}}, \bibinfo {author} {\bibfnamefont {N.}~\bibnamefont {Wiebe}}, \bibinfo {author} {\bibfnamefont {J.}~\bibnamefont {McClean}}, \bibinfo {author} {\bibfnamefont {J.}~\bibnamefont {McClain}}, \bibinfo {author} {\bibfnamefont {H.}~\bibnamefont {Neven}},\ and\ \bibinfo {author} {\bibfnamefont {G.~K.-L.}\ \bibnamefont {Chan}},\ }\bibfield  {title} {\bibinfo {title} {Low-depth quantum simulation of materials},\ }\bibfield  {journal} {\bibinfo  {journal} {Physical Review X}\ }\textbf {\bibinfo {volume} {8}},\ \href {https://doi.org/10.1103/physrevx.8.011044} {10.1103/physrevx.8.011044} (\bibinfo {year} {2018}{\natexlab{b}})\BibitemShut {NoStop}%
\bibitem [{\citenamefont {Paldus}(1976{\natexlab{a}})}]{PALDUS1976131}%
  \BibitemOpen
  \bibfield  {author} {\bibinfo {author} {\bibfnamefont {J.}~\bibnamefont {Paldus}},\ }\bibinfo {title} {Many-electron correlation problem. a group theoretical approach}\ (\bibinfo  {publisher} {Elsevier},\ \bibinfo {year} {1976})\ pp.\ \bibinfo {pages} {131--290}\BibitemShut {NoStop}%
\bibitem [{\citenamefont {Paldus}(1974)}]{Paldus1974}%
  \BibitemOpen
  \bibfield  {author} {\bibinfo {author} {\bibfnamefont {J.}~\bibnamefont {Paldus}},\ }\bibfield  {title} {\bibinfo {title} {{Group theoretical approach to the configuration interaction and perturbation theory calculations for atomic and molecular systems}},\ }\href {https://doi.org/10.1063/1.1681883} {\bibfield  {journal} {\bibinfo  {journal} {The Journal of Chemical Physics}\ }\textbf {\bibinfo {volume} {61}},\ \bibinfo {pages} {5321} (\bibinfo {year} {1974})}\BibitemShut {NoStop}%
\bibitem [{\citenamefont {Matsen}(1964)}]{Matsen}%
  \BibitemOpen
  \bibfield  {author} {\bibinfo {author} {\bibfnamefont {F.}~\bibnamefont {Matsen}},\ }\bibinfo {title} {Spin-free quantum chemistry}\ (\bibinfo  {publisher} {Academic Press},\ \bibinfo {year} {1964})\ pp.\ \bibinfo {pages} {59--114}\BibitemShut {NoStop}%
\bibitem [{\citenamefont {Jozsa}\ and\ \citenamefont {Miyake}(2008)}]{Jozsa_2008}%
  \BibitemOpen
  \bibfield  {author} {\bibinfo {author} {\bibfnamefont {R.}~\bibnamefont {Jozsa}}\ and\ \bibinfo {author} {\bibfnamefont {A.}~\bibnamefont {Miyake}},\ }\bibfield  {title} {\bibinfo {title} {Matchgates and classical simulation of quantum circuits},\ }\href {https://doi.org/10.1098/rspa.2008.0189} {\bibfield  {journal} {\bibinfo  {journal} {Proceedings of the Royal Society A: Mathematical, Physical and Engineering Sciences}\ }\textbf {\bibinfo {volume} {464}},\ \bibinfo {pages} {3089–3106} (\bibinfo {year} {2008})}\BibitemShut {NoStop}%
\bibitem [{\citenamefont {Valiant}(2002)}]{valiant_2002}%
  \BibitemOpen
  \bibfield  {author} {\bibinfo {author} {\bibfnamefont {L.~G.}\ \bibnamefont {Valiant}},\ }\bibfield  {title} {\bibinfo {title} {Quantum circuits that can be simulated classically in polynomial time},\ }\href {https://doi.org/10.1137/S0097539700377025} {\bibfield  {journal} {\bibinfo  {journal} {SIAM Journal on Computing}\ }\textbf {\bibinfo {volume} {31}},\ \bibinfo {pages} {1229} (\bibinfo {year} {2002})}\BibitemShut {NoStop}%
\bibitem [{\citenamefont {Knill}(2001)}]{knill2001}%
  \BibitemOpen
  \bibfield  {author} {\bibinfo {author} {\bibfnamefont {E.}~\bibnamefont {Knill}},\ }\href {https://arxiv.org/abs/quant-ph/0108033} {\bibinfo {title} {Fermionic linear optics and matchgates}} (\bibinfo {year} {2001}),\ \Eprint {https://arxiv.org/abs/quant-ph/0108033} {arXiv:quant-ph/0108033 [quant-ph]} \BibitemShut {NoStop}%
\bibitem [{\citenamefont {Burkat}\ and\ \citenamefont {Strelchuk}(2024)}]{burkat2024}%
  \BibitemOpen
  \bibfield  {author} {\bibinfo {author} {\bibfnamefont {J.}~\bibnamefont {Burkat}}\ and\ \bibinfo {author} {\bibfnamefont {S.}~\bibnamefont {Strelchuk}},\ }\href {https://arxiv.org/abs/2404.07974} {\bibinfo {title} {A lightweight protocol for matchgate fidelity estimation}} (\bibinfo {year} {2024}),\ \Eprint {https://arxiv.org/abs/2404.07974} {arXiv:2404.07974 [quant-ph]} \BibitemShut {NoStop}%
\bibitem [{\citenamefont {Terhal}\ and\ \citenamefont {DiVincenzo}(2002)}]{Terhal_2002}%
  \BibitemOpen
  \bibfield  {author} {\bibinfo {author} {\bibfnamefont {B.~M.}\ \bibnamefont {Terhal}}\ and\ \bibinfo {author} {\bibfnamefont {D.~P.}\ \bibnamefont {DiVincenzo}},\ }\bibfield  {title} {\bibinfo {title} {Classical simulation of noninteracting-fermion quantum circuits},\ }\bibfield  {journal} {\bibinfo  {journal} {Physical Review A}\ }\textbf {\bibinfo {volume} {65}},\ \href {https://doi.org/10.1103/physreva.65.032325} {10.1103/physreva.65.032325} (\bibinfo {year} {2002})\BibitemShut {NoStop}%
\bibitem [{\citenamefont {Majsak}\ \emph {et~al.}(2024)\citenamefont {Majsak}, \citenamefont {McNulty},\ and\ \citenamefont {Oszmaniec}}]{majsak2024}%
  \BibitemOpen
  \bibfield  {author} {\bibinfo {author} {\bibfnamefont {J.}~\bibnamefont {Majsak}}, \bibinfo {author} {\bibfnamefont {D.}~\bibnamefont {McNulty}},\ and\ \bibinfo {author} {\bibfnamefont {M.}~\bibnamefont {Oszmaniec}},\ }\href {https://arxiv.org/abs/2402.19230} {\bibinfo {title} {A simple and efficient joint measurement strategy for estimating fermionic observables and hamiltonians}} (\bibinfo {year} {2024}),\ \Eprint {https://arxiv.org/abs/2402.19230} {arXiv:2402.19230 [quant-ph]} \BibitemShut {NoStop}%
\bibitem [{\citenamefont {Wan}\ \emph {et~al.}(2023)\citenamefont {Wan}, \citenamefont {Huggins}, \citenamefont {Lee},\ and\ \citenamefont {Babbush}}]{Wan_2023}%
  \BibitemOpen
  \bibfield  {author} {\bibinfo {author} {\bibfnamefont {K.}~\bibnamefont {Wan}}, \bibinfo {author} {\bibfnamefont {W.~J.}\ \bibnamefont {Huggins}}, \bibinfo {author} {\bibfnamefont {J.}~\bibnamefont {Lee}},\ and\ \bibinfo {author} {\bibfnamefont {R.}~\bibnamefont {Babbush}},\ }\bibfield  {title} {\bibinfo {title} {Matchgate shadows for fermionic quantum simulation},\ }\href {https://doi.org/10.1007/s00220-023-04844-0} {\bibfield  {journal} {\bibinfo  {journal} {Communications in Mathematical Physics}\ }\textbf {\bibinfo {volume} {404}},\ \bibinfo {pages} {629–700} (\bibinfo {year} {2023})}\BibitemShut {NoStop}%
\bibitem [{\citenamefont {Kivlichan}\ \emph {et~al.}(2018)\citenamefont {Kivlichan}, \citenamefont {McClean}, \citenamefont {Wiebe}, \citenamefont {Gidney}, \citenamefont {Aspuru-Guzik}, \citenamefont {Chan},\ and\ \citenamefont {Babbush}}]{Kivlichan2018QuantumForm}%
  \BibitemOpen
  \bibfield  {author} {\bibinfo {author} {\bibfnamefont {I.~D.}\ \bibnamefont {Kivlichan}}, \bibinfo {author} {\bibfnamefont {J.}~\bibnamefont {McClean}}, \bibinfo {author} {\bibfnamefont {N.}~\bibnamefont {Wiebe}}, \bibinfo {author} {\bibfnamefont {C.}~\bibnamefont {Gidney}}, \bibinfo {author} {\bibfnamefont {A.}~\bibnamefont {Aspuru-Guzik}}, \bibinfo {author} {\bibfnamefont {G.~K.-L.}\ \bibnamefont {Chan}},\ and\ \bibinfo {author} {\bibfnamefont {R.}~\bibnamefont {Babbush}},\ }\bibfield  {title} {\bibinfo {title} {Quantum simulation of electronic structure with linear depth and connectivity},\ }\bibfield  {journal} {\bibinfo  {journal} {Physical Review Letters}\ }\textbf {\bibinfo {volume} {120}},\ \href {https://doi.org/10.1103/physrevlett.120.110501} {10.1103/physrevlett.120.110501} (\bibinfo {year} {2018})\BibitemShut {NoStop}%
\bibitem [{\citenamefont {Meshkov}\ \emph {et~al.}(1965)\citenamefont {Meshkov}, \citenamefont {Glick},\ and\ \citenamefont {Lipkin}}]{meshkov1965validity}%
  \BibitemOpen
  \bibfield  {author} {\bibinfo {author} {\bibfnamefont {N.}~\bibnamefont {Meshkov}}, \bibinfo {author} {\bibfnamefont {A.~J.}\ \bibnamefont {Glick}},\ and\ \bibinfo {author} {\bibfnamefont {H.~J.}\ \bibnamefont {Lipkin}},\ }\bibfield  {title} {\bibinfo {title} {Validity of many-body approximation methods for a solvable model:(ii). linearization procedures},\ }\href@noop {} {\bibfield  {journal} {\bibinfo  {journal} {Nuclear Physics}\ }\textbf {\bibinfo {volume} {62}},\ \bibinfo {pages} {199} (\bibinfo {year} {1965})}\BibitemShut {NoStop}%
\bibitem [{\citenamefont {Hegarty}\ and\ \citenamefont {Robb}(1979)}]{Hegarty01121979}%
  \BibitemOpen
  \bibfield  {author} {\bibinfo {author} {\bibfnamefont {D.}~\bibnamefont {Hegarty}}\ and\ \bibinfo {author} {\bibfnamefont {M.~A.}\ \bibnamefont {Robb}},\ }\bibfield  {title} {\bibinfo {title} {Application of unitary group methods to configuration interaction calculations},\ }\href {https://doi.org/10.1080/00268977900102871} {\bibfield  {journal} {\bibinfo  {journal} {Molecular Physics}\ }\textbf {\bibinfo {volume} {38}},\ \bibinfo {pages} {1795} (\bibinfo {year} {1979})},\ \Eprint {https://arxiv.org/abs/https://doi.org/10.1080/00268977900102871} {https://doi.org/10.1080/00268977900102871} \BibitemShut {NoStop}%
\bibitem [{\citenamefont {Puig}\ \emph {et~al.}(2025)\citenamefont {Puig}, \citenamefont {Drudis}, \citenamefont {Thanasilp},\ and\ \citenamefont {Holmes}}]{Puig_2025}%
  \BibitemOpen
  \bibfield  {author} {\bibinfo {author} {\bibfnamefont {R.}~\bibnamefont {Puig}}, \bibinfo {author} {\bibfnamefont {M.}~\bibnamefont {Drudis}}, \bibinfo {author} {\bibfnamefont {S.}~\bibnamefont {Thanasilp}},\ and\ \bibinfo {author} {\bibfnamefont {Z.}~\bibnamefont {Holmes}},\ }\bibfield  {title} {\bibinfo {title} {Variational quantum simulation: A case study for understanding warm starts},\ }\bibfield  {journal} {\bibinfo  {journal} {PRX Quantum}\ }\textbf {\bibinfo {volume} {6}},\ \href {https://doi.org/10.1103/prxquantum.6.010317} {10.1103/prxquantum.6.010317} (\bibinfo {year} {2025})\BibitemShut {NoStop}%
\bibitem [{\citenamefont {Truger}\ \emph {et~al.}(2024)\citenamefont {Truger}, \citenamefont {Barzen}, \citenamefont {Bechtold}, \citenamefont {Beisel}, \citenamefont {Leymann}, \citenamefont {Mandl},\ and\ \citenamefont {Yussupov}}]{Truger_2024}%
  \BibitemOpen
  \bibfield  {author} {\bibinfo {author} {\bibfnamefont {F.}~\bibnamefont {Truger}}, \bibinfo {author} {\bibfnamefont {J.}~\bibnamefont {Barzen}}, \bibinfo {author} {\bibfnamefont {M.}~\bibnamefont {Bechtold}}, \bibinfo {author} {\bibfnamefont {M.}~\bibnamefont {Beisel}}, \bibinfo {author} {\bibfnamefont {F.}~\bibnamefont {Leymann}}, \bibinfo {author} {\bibfnamefont {A.}~\bibnamefont {Mandl}},\ and\ \bibinfo {author} {\bibfnamefont {V.}~\bibnamefont {Yussupov}},\ }\bibfield  {title} {\bibinfo {title} {Warm-starting and quantum computing: A systematic mapping study},\ }\href {https://doi.org/10.1145/3652510} {\bibfield  {journal} {\bibinfo  {journal} {ACM Computing Surveys}\ }\textbf {\bibinfo {volume} {56}},\ \bibinfo {pages} {1–31} (\bibinfo {year} {2024})}\BibitemShut {NoStop}%
\bibitem [{\citenamefont {Lidar}\ and\ \citenamefont {Birgitta~Whaley}(2003)}]{Lidar_2003}%
  \BibitemOpen
  \bibfield  {author} {\bibinfo {author} {\bibfnamefont {D.~A.}\ \bibnamefont {Lidar}}\ and\ \bibinfo {author} {\bibfnamefont {K.}~\bibnamefont {Birgitta~Whaley}},\ }\bibinfo {title} {Decoherence-free subspaces and subsystems},\ in\ \href {https://doi.org/10.1007/3-540-44874-8_5} {\emph {\bibinfo {booktitle} {Irreversible Quantum Dynamics}}}\ (\bibinfo  {publisher} {Springer Berlin Heidelberg},\ \bibinfo {year} {2003})\ p.\ \bibinfo {pages} {83–120}\BibitemShut {NoStop}%
\bibitem [{\citenamefont {Gerlach}\ and\ \citenamefont {Stern}(1922)}]{sterng1922}%
  \BibitemOpen
  \bibfield  {author} {\bibinfo {author} {\bibfnamefont {W.}~\bibnamefont {Gerlach}}\ and\ \bibinfo {author} {\bibfnamefont {O.}~\bibnamefont {Stern}},\ }\bibfield  {title} {\bibinfo {title} {Der experimentelle nachweis der richtungsquantelung im magnetfeld},\ }\href {https://doi.org/10.1007/BF01326983} {\bibfield  {journal} {\bibinfo  {journal} {Zeitschrift f{\"u}r Physik}\ }\textbf {\bibinfo {volume} {9}},\ \bibinfo {pages} {349} (\bibinfo {year} {1922})}\BibitemShut {NoStop}%
\bibitem [{\citenamefont {Heisenberg}(1926{\natexlab{a}})}]{Heisenberg1926_0}%
  \BibitemOpen
  \bibfield  {author} {\bibinfo {author} {\bibfnamefont {W.}~\bibnamefont {Heisenberg}},\ }\bibfield  {title} {\bibinfo {title} {Mehrk{\"o}rperproblem und resonanz in der quantenmechanik},\ }\href {https://doi.org/10.1007/BF01397160} {\bibfield  {journal} {\bibinfo  {journal} {Zeitschrift f{\"u}r Physik}\ }\textbf {\bibinfo {volume} {38}},\ \bibinfo {pages} {411} (\bibinfo {year} {1926}{\natexlab{a}})}\BibitemShut {NoStop}%
\bibitem [{\citenamefont {Heisenberg}(1926{\natexlab{b}})}]{Heisenberg1926_1}%
  \BibitemOpen
  \bibfield  {author} {\bibinfo {author} {\bibfnamefont {W.}~\bibnamefont {Heisenberg}},\ }\bibfield  {title} {\bibinfo {title} {{\"U}ber die spektra von atomsystemen mit zwei elektronen},\ }\href {https://doi.org/10.1007/BF01322090} {\bibfield  {journal} {\bibinfo  {journal} {Zeitschrift f{\"u}r Physik}\ }\textbf {\bibinfo {volume} {39}},\ \bibinfo {pages} {499} (\bibinfo {year} {1926}{\natexlab{b}})}\BibitemShut {NoStop}%
\bibitem [{\citenamefont {Heisenberg}(1928)}]{Heisenberg1928}%
  \BibitemOpen
  \bibfield  {author} {\bibinfo {author} {\bibfnamefont {W.}~\bibnamefont {Heisenberg}},\ }\bibfield  {title} {\bibinfo {title} {Zur theorie des ferromagnetismus},\ }\href {https://doi.org/10.1007/BF01328601} {\bibfield  {journal} {\bibinfo  {journal} {Zeitschrift f{\"u}r Physik}\ }\textbf {\bibinfo {volume} {49}},\ \bibinfo {pages} {619} (\bibinfo {year} {1928})}\BibitemShut {NoStop}%
\bibitem [{\citenamefont {Heitler}\ and\ \citenamefont {London}(1927)}]{Heitler1927}%
  \BibitemOpen
  \bibfield  {author} {\bibinfo {author} {\bibfnamefont {W.}~\bibnamefont {Heitler}}\ and\ \bibinfo {author} {\bibfnamefont {F.}~\bibnamefont {London}},\ }\bibfield  {title} {\bibinfo {title} {{Wechselwirkung neutraler Atome und hom??opolare Bindung nach der Quantenmechanik}},\ }\href {https://doi.org/10.1007/BF01397394} {\bibfield  {journal} {\bibinfo  {journal} {Zeitschrift f{\"{u}}r Physik}\ }\textbf {\bibinfo {volume} {44}},\ \bibinfo {pages} {455} (\bibinfo {year} {1927})}\BibitemShut {NoStop}%
\bibitem [{\citenamefont {Pauling}(1931)}]{Pauling1931}%
  \BibitemOpen
  \bibfield  {author} {\bibinfo {author} {\bibfnamefont {L.}~\bibnamefont {Pauling}},\ }\bibfield  {title} {\bibinfo {title} {{The nature of chemical bond. Application of results obtained from a theory of parmagnetic susceptibility to the structure of molecules}},\ }\href {https://doi.org/10.1021/ja01355a027} {\bibfield  {journal} {\bibinfo  {journal} {Journal of American Chemical Society}\ }\textbf {\bibinfo {volume} {53}},\ \bibinfo {pages} {1367} (\bibinfo {year} {1931})}\BibitemShut {NoStop}%
\bibitem [{\citenamefont {Slater}(1931{\natexlab{a}})}]{Slater1931a}%
  \BibitemOpen
  \bibfield  {author} {\bibinfo {author} {\bibfnamefont {J.~C.}\ \bibnamefont {Slater}},\ }\bibfield  {title} {\bibinfo {title} {{Molecular energy levels and valence bonds}},\ }\href {https://doi.org/10.1103/PhysRev.38.1109} {\bibfield  {journal} {\bibinfo  {journal} {Physical Review}\ }\textbf {\bibinfo {volume} {38}},\ \bibinfo {pages} {1109} (\bibinfo {year} {1931}{\natexlab{a}})}\BibitemShut {NoStop}%
\bibitem [{\citenamefont {Slater}(1931{\natexlab{b}})}]{Slater1931}%
  \BibitemOpen
  \bibfield  {author} {\bibinfo {author} {\bibfnamefont {J.~C.}\ \bibnamefont {Slater}},\ }\bibfield  {title} {\bibinfo {title} {{Directed valence in polyatomic molecules}},\ }\href {https://doi.org/10.1103/PhysRev.37.481} {\bibfield  {journal} {\bibinfo  {journal} {Physical Review}\ }\textbf {\bibinfo {volume} {37}},\ \bibinfo {pages} {481} (\bibinfo {year} {1931}{\natexlab{b}})}\BibitemShut {NoStop}%
\bibitem [{\citenamefont {MacDonald}(1933)}]{PhysRev.43.830}%
  \BibitemOpen
  \bibfield  {author} {\bibinfo {author} {\bibfnamefont {J.~K.~L.}\ \bibnamefont {MacDonald}},\ }\bibfield  {title} {\bibinfo {title} {Successive approximations by the rayleigh-ritz variation method},\ }\href {https://doi.org/10.1103/PhysRev.43.830} {\bibfield  {journal} {\bibinfo  {journal} {Phys. Rev.}\ }\textbf {\bibinfo {volume} {43}},\ \bibinfo {pages} {830} (\bibinfo {year} {1933})}\BibitemShut {NoStop}%
\bibitem [{\citenamefont {M.~Born}(1926)}]{Heisenberg1926}%
  \BibitemOpen
  \bibfield  {author} {\bibinfo {author} {\bibfnamefont {P.~J.}\ \bibnamefont {M.~Born}, \bibfnamefont {W.~Heisenberg}},\ }\bibfield  {title} {\bibinfo {title} {{On Quantum Mechanics II}},\ }\href {http://fisica.ciens.ucv.ve/~svincenz/SQM333.pdf} {\bibfield  {journal} {\bibinfo  {journal} {Z. Phys}\ }\textbf {\bibinfo {volume} {35}},\ \bibinfo {pages} {557–615} (\bibinfo {year} {1926})}\BibitemShut {NoStop}%
\bibitem [{\citenamefont {Weyl}(1927)}]{weyl}%
  \BibitemOpen
  \bibfield  {author} {\bibinfo {author} {\bibfnamefont {H.}~\bibnamefont {Weyl}},\ }\bibfield  {title} {\bibinfo {title} {Quantenmechanik und gruppentheorie},\ }\href {https://doi.org/10.1007/BF02055756} {\bibfield  {journal} {\bibinfo  {journal} {Zeitschrift f{\"u}r Physik}\ }\textbf {\bibinfo {volume} {46}},\ \bibinfo {pages} {1} (\bibinfo {year} {1927})}\BibitemShut {NoStop}%
\bibitem [{\citenamefont {Kotani}\ \emph {et~al.}(1953)\citenamefont {Kotani}, \citenamefont {Ishiguro}, \citenamefont {Hijikata}, \citenamefont {Nakamura},\ and\ \citenamefont {Amemiya}}]{kotanispin}%
  \BibitemOpen
  \bibfield  {author} {\bibinfo {author} {\bibfnamefont {M.}~\bibnamefont {Kotani}}, \bibinfo {author} {\bibfnamefont {E.}~\bibnamefont {Ishiguro}}, \bibinfo {author} {\bibfnamefont {K.}~\bibnamefont {Hijikata}}, \bibinfo {author} {\bibfnamefont {T.}~\bibnamefont {Nakamura}},\ and\ \bibinfo {author} {\bibfnamefont {A.}~\bibnamefont {Amemiya}},\ }\bibfield  {title} {\bibinfo {title} {Tables of integrals useful for the calculations of molecular energies. iii},\ }\href {https://doi.org/10.1143/JPSJ.8.463} {\bibfield  {journal} {\bibinfo  {journal} {Journal of the Physical Society of Japan}\ }\textbf {\bibinfo {volume} {8}},\ \bibinfo {pages} {463} (\bibinfo {year} {1953})},\ \Eprint {https://arxiv.org/abs/https://doi.org/10.1143/JPSJ.8.463} {https://doi.org/10.1143/JPSJ.8.463} \BibitemShut {NoStop}%
\bibitem [{\citenamefont {L\"owdin}(1967)}]{RevModPhys.39.259}%
  \BibitemOpen
  \bibfield  {author} {\bibinfo {author} {\bibfnamefont {P.-O.}\ \bibnamefont {L\"owdin}},\ }\bibfield  {title} {\bibinfo {title} {Group algebra, convolution algebra, and applications to quantum mechanics},\ }\href {https://doi.org/10.1103/RevModPhys.39.259} {\bibfield  {journal} {\bibinfo  {journal} {Rev. Mod. Phys.}\ }\textbf {\bibinfo {volume} {39}},\ \bibinfo {pages} {259} (\bibinfo {year} {1967})}\BibitemShut {NoStop}%
\bibitem [{\citenamefont {Coleman}(1968)}]{Coleman1968}%
  \BibitemOpen
  \bibfield  {author} {\bibinfo {author} {\bibfnamefont {A.~J.}\ \bibnamefont {Coleman}},\ }\bibfield  {title} {\bibinfo {title} {{The Symmetric Group Made Easy}},\ }\href {https://doi.org/10.1016/S0065-3276(08)60390-1} {\bibfield  {journal} {\bibinfo  {journal} {Advances in Quantum Chemistry}\ }\textbf {\bibinfo {volume} {4}},\ \bibinfo {pages} {83} (\bibinfo {year} {1968})}\BibitemShut {NoStop}%
\bibitem [{\citenamefont {Klein}\ \emph {et~al.}(1970)\citenamefont {Klein}, \citenamefont {Carlisle},\ and\ \citenamefont {Matsen}}]{Klein1970}%
  \BibitemOpen
  \bibfield  {author} {\bibinfo {author} {\bibfnamefont {D.~J.}\ \bibnamefont {Klein}}, \bibinfo {author} {\bibfnamefont {C.~H.}\ \bibnamefont {Carlisle}},\ and\ \bibinfo {author} {\bibfnamefont {F.~A.}\ \bibnamefont {Matsen}},\ }\bibfield  {title} {\bibinfo {title} {{Symmetry Adaptation to Sequences of Finite Groups}},\ }\href {https://doi.org/10.1016/S0065-3276(08)60341-X} {\bibfield  {journal} {\bibinfo  {journal} {Advances in Quantum Chemistry}\ }\textbf {\bibinfo {volume} {5}},\ \bibinfo {pages} {219} (\bibinfo {year} {1970})}\BibitemShut {NoStop}%
\bibitem [{\citenamefont {Harter}(1973)}]{Harter1973}%
  \BibitemOpen
  \bibfield  {author} {\bibinfo {author} {\bibfnamefont {W.~G.}\ \bibnamefont {Harter}},\ }\bibfield  {title} {\bibinfo {title} {{Alternative basis for the theory of complex spectra}},\ }\href {https://doi.org/10.1103/PhysRevA.8.2819} {\bibfield  {journal} {\bibinfo  {journal} {Physical Review A}\ }\textbf {\bibinfo {volume} {8}},\ \bibinfo {pages} {2819} (\bibinfo {year} {1973})}\BibitemShut {NoStop}%
\bibitem [{\citenamefont {Patterson}\ and\ \citenamefont {Harter}(1977)}]{Patterson1977}%
  \BibitemOpen
  \bibfield  {author} {\bibinfo {author} {\bibfnamefont {C.~W.}\ \bibnamefont {Patterson}}\ and\ \bibinfo {author} {\bibfnamefont {W.~G.}\ \bibnamefont {Harter}},\ }\bibfield  {title} {\bibinfo {title} {Alternative basis for the theory of complex spectra. iii},\ }\href {https://doi.org/10.1103/PhysRevA.15.2372} {\bibfield  {journal} {\bibinfo  {journal} {Phys. Rev. A}\ }\textbf {\bibinfo {volume} {15}},\ \bibinfo {pages} {2372} (\bibinfo {year} {1977})}\BibitemShut {NoStop}%
\bibitem [{\citenamefont {Drake}\ and\ \citenamefont {Schlesinger}(1977)}]{PhysRevA.15.1990}%
  \BibitemOpen
  \bibfield  {author} {\bibinfo {author} {\bibfnamefont {G.~W.~F.}\ \bibnamefont {Drake}}\ and\ \bibinfo {author} {\bibfnamefont {M.}~\bibnamefont {Schlesinger}},\ }\bibfield  {title} {\bibinfo {title} {Vector-coupling approach to orbital and spin-dependent tableau matrix elements in the theory of complex spectra},\ }\href {https://doi.org/10.1103/PhysRevA.15.1990} {\bibfield  {journal} {\bibinfo  {journal} {Phys. Rev. A}\ }\textbf {\bibinfo {volume} {15}},\ \bibinfo {pages} {1990} (\bibinfo {year} {1977})}\BibitemShut {NoStop}%
\bibitem [{\citenamefont {Salmon}(1974)}]{Salmon1974}%
  \BibitemOpen
  \bibfield  {author} {\bibinfo {author} {\bibfnamefont {W.~I.}\ \bibnamefont {Salmon}},\ }\bibfield  {title} {\bibinfo {title} {{Genealogical Electronic Spin Eigenfunctions and Antisymmetric Many-Electron Wavefunctions Generated Directly from Young Diagrams}},\ }\href {https://doi.org/10.1016/S0065-3276(08)60059-3} {\bibfield  {journal} {\bibinfo  {journal} {Advances in Quantum Chemistry}\ }\textbf {\bibinfo {volume} {8}},\ \bibinfo {pages} {37} (\bibinfo {year} {1974})}\BibitemShut {NoStop}%
\bibitem [{\citenamefont {Sarma}\ and\ \citenamefont {Rettrup}(1977)}]{Sarma1977}%
  \BibitemOpen
  \bibfield  {author} {\bibinfo {author} {\bibfnamefont {C.~R.}\ \bibnamefont {Sarma}}\ and\ \bibinfo {author} {\bibfnamefont {S.}~\bibnamefont {Rettrup}},\ }\bibfield  {title} {\bibinfo {title} {{A programmable spin-free method for configuration interaction}},\ }\href {https://doi.org/10.1007/BF02401410} {\bibfield  {journal} {\bibinfo  {journal} {Theoretica Chimica Acta}\ }\textbf {\bibinfo {volume} {46}},\ \bibinfo {pages} {63} (\bibinfo {year} {1977})}\BibitemShut {NoStop}%
\bibitem [{\citenamefont {Duch}\ and\ \citenamefont {Karwowski}(1982)}]{Duch1982}%
  \BibitemOpen
  \bibfield  {author} {\bibinfo {author} {\bibfnamefont {W.}~\bibnamefont {Duch}}\ and\ \bibinfo {author} {\bibfnamefont {J.}~\bibnamefont {Karwowski}},\ }\bibfield  {title} {\bibinfo {title} {{Symmetric group graphical approach to the direct configuration interaction method}},\ }\href {https://doi.org/10.1002/qua.560220411} {\bibfield  {journal} {\bibinfo  {journal} {International Journal of Quantum Chemistry}\ }\textbf {\bibinfo {volume} {22}},\ \bibinfo {pages} {783} (\bibinfo {year} {1982})}\BibitemShut {NoStop}%
\bibitem [{\citenamefont {Duch}\ and\ \citenamefont {Karwowski}(1985)}]{Duch1985a}%
  \BibitemOpen
  \bibfield  {author} {\bibinfo {author} {\bibfnamefont {W.}~\bibnamefont {Duch}}\ and\ \bibinfo {author} {\bibfnamefont {J.}~\bibnamefont {Karwowski}},\ }\bibfield  {title} {\bibinfo {title} {{Symmetric group approach to configuration interaction methods}},\ }\href {https://doi.org/10.1016/0167-7977(85)90001-2} {\bibfield  {journal} {\bibinfo  {journal} {Computer Physics Reports}\ }\textbf {\bibinfo {volume} {2}},\ \bibinfo {pages} {93} (\bibinfo {year} {1985})}\BibitemShut {NoStop}%
\bibitem [{\citenamefont {Duch}(1985)}]{Duch1985}%
  \BibitemOpen
  \bibfield  {author} {\bibinfo {author} {\bibfnamefont {W.}~\bibnamefont {Duch}},\ }\bibfield  {title} {\bibinfo {title} {{Graphical representation of Salter determinants}},\ }\href {https://doi.org/10.1088/0305-4470/18/17/010} {\bibfield  {journal} {\bibinfo  {journal} {Journal of Physics A: General Physics}\ }\textbf {\bibinfo {volume} {18}},\ \bibinfo {pages} {3283} (\bibinfo {year} {1985})}\BibitemShut {NoStop}%
\bibitem [{\citenamefont {Karwowski}(1998)}]{Karwowski1998}%
  \BibitemOpen
  \bibfield  {author} {\bibinfo {author} {\bibfnamefont {J.}~\bibnamefont {Karwowski}},\ }\bibfield  {title} {\bibinfo {title} {{Symmetric group based methods in quantum chemistry}},\ }\href {https://doi.org/10.1023/A:1019196502971} {\bibfield  {journal} {\bibinfo  {journal} {Journal of Mathematical Chemistry}\ }\textbf {\bibinfo {volume} {23}},\ \bibinfo {pages} {127} (\bibinfo {year} {1998})}\BibitemShut {NoStop}%
\bibitem [{\citenamefont {Moshinsky}(1963)}]{osti_4877459}%
  \BibitemOpen
  \bibfield  {author} {\bibinfo {author} {\bibfnamefont {M.}~\bibnamefont {Moshinsky}},\ }\bibfield  {title} {\bibinfo {title} {Group theory and the many body problem}\ }(\bibinfo {organization} {Universidad de Mexico, Mexico City (Mexico)},\ \bibinfo {year} {1963})\BibitemShut {NoStop}%
\bibitem [{\citenamefont {Gelfand}\ and\ \citenamefont {Tsetlin}(1950)}]{Gelfand:1950ihs}%
  \BibitemOpen
  \bibfield  {author} {\bibinfo {author} {\bibfnamefont {I.~M.}\ \bibnamefont {Gelfand}}\ and\ \bibinfo {author} {\bibfnamefont {M.~L.}\ \bibnamefont {Tsetlin}},\ }\bibfield  {title} {\bibinfo {title} {{Finite-dimensional representations of the group of unimodular matrices}},\ }\href@noop {} {\bibfield  {journal} {\bibinfo  {journal} {Dokl. Akad. Nauk SSSR}\ }\textbf {\bibinfo {volume} {71}},\ \bibinfo {pages} {825} (\bibinfo {year} {1950})}\BibitemShut {NoStop}%
\bibitem [{\citenamefont {Paldus}(1976{\natexlab{b}})}]{Paldus1976}%
  \BibitemOpen
  \bibfield  {author} {\bibinfo {author} {\bibfnamefont {J.}~\bibnamefont {Paldus}},\ }\bibfield  {title} {\bibinfo {title} {{Unitary-group approach to the many-electron correlation problem: Relation of Gelfand and Weyl tableau formulations}},\ }\href {https://doi.org/10.1103/PhysRevA.14.1620} {\bibfield  {journal} {\bibinfo  {journal} {Physical Review A}\ }\textbf {\bibinfo {volume} {14}},\ \bibinfo {pages} {1620} (\bibinfo {year} {1976}{\natexlab{b}})}\BibitemShut {NoStop}%
\bibitem [{\citenamefont {Matsen}(1978)}]{MATSEN1978223}%
  \BibitemOpen
  \bibfield  {author} {\bibinfo {author} {\bibfnamefont {F.}~\bibnamefont {Matsen}},\ }\href {https://doi.org/https://doi.org/10.1016/S0065-3276(08)60238-5} {\bibinfo {title} {The unitary group and the many-body problem}} (\bibinfo {year} {1978})\BibitemShut {NoStop}%
\bibitem [{\citenamefont {Shavitt}(1977)}]{ISI:A1977DX38800015}%
  \BibitemOpen
  \bibfield  {author} {\bibinfo {author} {\bibfnamefont {I.}~\bibnamefont {Shavitt}},\ }\bibfield  {title} {\bibinfo {title} {Graph theoretical concepts for the unitary group approach to the many-electron correlation problem},\ }\href {https://doi.org/https://doi.org/10.1002/qua.560120819} {\bibfield  {journal} {\bibinfo  {journal} {International Journal of Quantum Chemistry}\ }\textbf {\bibinfo {volume} {12}},\ \bibinfo {pages} {131} (\bibinfo {year} {1977})}\BibitemShut {NoStop}%
\bibitem [{\citenamefont {Wormer}\ and\ \citenamefont {Paldus}(2006)}]{Wormer2006}%
  \BibitemOpen
  \bibfield  {author} {\bibinfo {author} {\bibfnamefont {P.~E.}\ \bibnamefont {Wormer}}\ and\ \bibinfo {author} {\bibfnamefont {J.}~\bibnamefont {Paldus}},\ }\bibfield  {title} {\bibinfo {title} {{Angular Momentum Diagrams}},\ }\href {https://doi.org/10.1016/S0065-3276(06)51002-0} {\bibfield  {journal} {\bibinfo  {journal} {Advances in Quantum Chemistry}\ }\textbf {\bibinfo {volume} {51}},\ \bibinfo {pages} {59} (\bibinfo {year} {2006})}\BibitemShut {NoStop}%
\bibitem [{\citenamefont {Zarrabian}\ \emph {et~al.}(1989)\citenamefont {Zarrabian}, \citenamefont {Sarma},\ and\ \citenamefont {Paldus}}]{Zarrabian1989}%
  \BibitemOpen
  \bibfield  {author} {\bibinfo {author} {\bibfnamefont {S.}~\bibnamefont {Zarrabian}}, \bibinfo {author} {\bibfnamefont {C.~R.}\ \bibnamefont {Sarma}},\ and\ \bibinfo {author} {\bibfnamefont {J.}~\bibnamefont {Paldus}},\ }\bibfield  {title} {\bibinfo {title} {{Vectorizable approach to molecular CI problems using determinantal basis}},\ }\href {https://doi.org/10.1016/0009-2614(89)85346-1} {\bibfield  {journal} {\bibinfo  {journal} {Chemical Physics Letters}\ }\textbf {\bibinfo {volume} {155}},\ \bibinfo {pages} {183} (\bibinfo {year} {1989})}\BibitemShut {NoStop}%
\bibitem [{\citenamefont {Li}\ and\ \citenamefont {Paldus}(1990)}]{Li1990}%
  \BibitemOpen
  \bibfield  {author} {\bibinfo {author} {\bibfnamefont {X.}~\bibnamefont {Li}}\ and\ \bibinfo {author} {\bibfnamefont {J.}~\bibnamefont {Paldus}},\ }\bibfield  {title} {\bibinfo {title} {{Unitary group tensor operator algebras for many-electron systems: I. Clebsch-Gordan and Racah coefficients}},\ }\href {https://doi.org/10.1007/BF01170018} {\bibfield  {journal} {\bibinfo  {journal} {Journal of Mathematical Chemistry}\ }\textbf {\bibinfo {volume} {4}},\ \bibinfo {pages} {295} (\bibinfo {year} {1990})}\BibitemShut {NoStop}%
\bibitem [{\citenamefont {Li}\ and\ \citenamefont {Paldus}(1993)}]{Li1993}%
  \BibitemOpen
  \bibfield  {author} {\bibinfo {author} {\bibfnamefont {X.}~\bibnamefont {Li}}\ and\ \bibinfo {author} {\bibfnamefont {J.}~\bibnamefont {Paldus}},\ }\bibfield  {title} {\bibinfo {title} {{Unitary group tensor operator algebras for many-electron systems: II. One- and two-body matrix elements}},\ }\href {https://doi.org/10.1007/BF01165571} {\bibfield  {journal} {\bibinfo  {journal} {Journal of Mathematical Chemistry}\ }\textbf {\bibinfo {volume} {13}},\ \bibinfo {pages} {273} (\bibinfo {year} {1993})}\BibitemShut {NoStop}%
\bibitem [{\citenamefont {Roche}(1981)}]{Roche1981}%
  \BibitemOpen
  \bibfield  {author} {\bibinfo {author} {\bibfnamefont {M.}~\bibnamefont {Roche}},\ }\href {https://doi.org/10.1007/978-3-642-93163-5} {\emph {\bibinfo {title} {{The Unitary Group for the Evaluation of Electronic Energy Matrix Elements, Lecture Notes in Chemistry}}}},\ Vol.~\bibinfo {volume} {22}\ (\bibinfo {year} {1981})\BibitemShut {NoStop}%
\bibitem [{\citenamefont {Shepard}\ and\ \citenamefont {Minkoff}(2006)}]{Shepard2006a}%
  \BibitemOpen
  \bibfield  {author} {\bibinfo {author} {\bibfnamefont {R.}~\bibnamefont {Shepard}}\ and\ \bibinfo {author} {\bibfnamefont {M.}~\bibnamefont {Minkoff}},\ }\bibfield  {title} {\bibinfo {title} {{Optimisation of Nonlinear Wave Function Parameters}},\ }\href {https://doi.org/10.1002/qua} {\bibfield  {journal} {\bibinfo  {journal} {Physics World}\ }\textbf {\bibinfo {volume} {13}},\ \bibinfo {pages} {31} (\bibinfo {year} {2006})}\BibitemShut {NoStop}%
\bibitem [{\citenamefont {Shepard}\ and\ \citenamefont {Minkoff}(2007)}]{Shepard2007}%
  \BibitemOpen
  \bibfield  {author} {\bibinfo {author} {\bibfnamefont {R.}~\bibnamefont {Shepard}}\ and\ \bibinfo {author} {\bibfnamefont {M.}~\bibnamefont {Minkoff}},\ }\bibfield  {title} {\bibinfo {title} {{Some comments on the DIIS method}},\ }\href {https://doi.org/10.1080/00268970701691611} {\bibfield  {journal} {\bibinfo  {journal} {Molecular Physics}\ }\textbf {\bibinfo {volume} {105}},\ \bibinfo {pages} {2839} (\bibinfo {year} {2007})}\BibitemShut {NoStop}%
\bibitem [{\citenamefont {Shepard}(2005)}]{Shepard2008}%
  \BibitemOpen
  \bibfield  {author} {\bibinfo {author} {\bibfnamefont {R.}~\bibnamefont {Shepard}},\ }\bibfield  {title} {\bibinfo {title} {{A general nonlinear expansion form for electronic wave functions}},\ }\href {https://doi.org/10.1021/jp0543431} {\bibfield  {journal} {\bibinfo  {journal} {Journal of Physical Chemistry A}\ }\textbf {\bibinfo {volume} {109}},\ \bibinfo {pages} {11629} (\bibinfo {year} {2005})}\BibitemShut {NoStop}%
\bibitem [{\citenamefont {Shepard}\ and\ \citenamefont {Gidofalvi}(2010)}]{Shepard2010}%
  \BibitemOpen
  \bibfield  {author} {\bibinfo {author} {\bibfnamefont {R.}~\bibnamefont {Shepard}}\ and\ \bibinfo {author} {\bibfnamefont {G.}~\bibnamefont {Gidofalvi}},\ }\bibfield  {title} {\bibinfo {title} {{Computation of Determinant Expansion Coefficients Within the Graphically Contracted Function Method}},\ }\href {https://doi.org/10.1002/jcc} {\bibfield  {journal} {\bibinfo  {journal} {Journal of computational chemistry}\ }\textbf {\bibinfo {volume} {31}},\ \bibinfo {pages} {2967} (\bibinfo {year} {2010})}\BibitemShut {NoStop}%
\bibitem [{\citenamefont {Shepard}\ \emph {et~al.}(2014)\citenamefont {Shepard}, \citenamefont {Gidofalvi},\ and\ \citenamefont {Brozell}}]{Shepard2014}%
  \BibitemOpen
  \bibfield  {author} {\bibinfo {author} {\bibfnamefont {R.}~\bibnamefont {Shepard}}, \bibinfo {author} {\bibfnamefont {G.}~\bibnamefont {Gidofalvi}},\ and\ \bibinfo {author} {\bibfnamefont {S.~R.}\ \bibnamefont {Brozell}},\ }\bibfield  {title} {\bibinfo {title} {{The multifacet graphically contracted function method. I. A general procedure for the parameterization of orthogonal matrices and its application to arc factors}},\ }\href {https://doi.org/10.1063/1.4890735} {\bibfield  {journal} {\bibinfo  {journal} {The Journal of Chemical Physics}\ }\textbf {\bibinfo {volume} {141}},\ \bibinfo {pages} {064106} (\bibinfo {year} {2014})}\BibitemShut {NoStop}%
\bibitem [{\citenamefont {Shepard}\ and\ \citenamefont {Brozell}(2015)}]{Shepard2015}%
  \BibitemOpen
  \bibfield  {author} {\bibinfo {author} {\bibfnamefont {R.}~\bibnamefont {Shepard}}\ and\ \bibinfo {author} {\bibfnamefont {S.~R.}\ \bibnamefont {Brozell}},\ }\bibfield  {title} {\bibinfo {title} {{The Representation and Parametrization of Orthogonal Matrices}},\ }\href {https://doi.org/10.1021/acs.jpca.5b02015} {\bibfield  {journal} {\bibinfo  {journal} {The Journal of Physical Chemistry A}\ ,\ \bibinfo {pages} {150602071513001}} (\bibinfo {year} {2015})}\BibitemShut {NoStop}%
\bibitem [{\citenamefont {Shepard}\ \emph {et~al.}(2019)\citenamefont {Shepard}, \citenamefont {Brozell},\ and\ \citenamefont {Gidofalvi}}]{Shepard2019}%
  \BibitemOpen
  \bibfield  {author} {\bibinfo {author} {\bibfnamefont {R.}~\bibnamefont {Shepard}}, \bibinfo {author} {\bibfnamefont {S.~R.}\ \bibnamefont {Brozell}},\ and\ \bibinfo {author} {\bibfnamefont {G.}~\bibnamefont {Gidofalvi}},\ }\bibfield  {title} {\bibinfo {title} {{Representations of Shavitt Graphs Within the Graphical Unitary Group Approach}},\ }\href {https://doi.org/10.1002/jcc.26080} {\bibfield  {journal} {\bibinfo  {journal} {Journal of Computational Chemistry}\ ,\ \bibinfo {pages} {1}} (\bibinfo {year} {2019})}\BibitemShut {NoStop}%
\bibitem [{\citenamefont {Lischka}\ \emph {et~al.}(2020)\citenamefont {Lischka}, \citenamefont {Shepard}, \citenamefont {M{\"u}ller}, \citenamefont {Szalay}, \citenamefont {Pitzer}, \citenamefont {Aquino}, \citenamefont {Ara{\'u}jo~do Nascimento}, \citenamefont {Barbatti}, \citenamefont {Belcher}, \citenamefont {Blaudeau}, \citenamefont {Borges}, \citenamefont {Brozell}, \citenamefont {Carter}, \citenamefont {Das}, \citenamefont {Gidofalvi}, \citenamefont {Gonz{\'a}lez}, \citenamefont {Hase}, \citenamefont {Kedziora}, \citenamefont {Kertesz}, \citenamefont {Kossoski}, \citenamefont {Machado}, \citenamefont {Matsika}, \citenamefont {do~Monte}, \citenamefont {Nachtigallov{\'a}}, \citenamefont {Nieman}, \citenamefont {Oppel}, \citenamefont {Parish}, \citenamefont {Plasser}, \citenamefont {Spada}, \citenamefont {Stahlberg}, \citenamefont {Ventura}, \citenamefont {Yarkony},\ and\ \citenamefont {Zhang}}]{columbus}%
  \BibitemOpen
  \bibfield  {author} {\bibinfo {author} {\bibfnamefont {H.}~\bibnamefont {Lischka}}, \bibinfo {author} {\bibfnamefont {R.}~\bibnamefont {Shepard}}, \bibinfo {author} {\bibfnamefont {T.}~\bibnamefont {M{\"u}ller}}, \bibinfo {author} {\bibfnamefont {P.~G.}\ \bibnamefont {Szalay}}, \bibinfo {author} {\bibfnamefont {R.~M.}\ \bibnamefont {Pitzer}}, \bibinfo {author} {\bibfnamefont {A.~J.~A.}\ \bibnamefont {Aquino}}, \bibinfo {author} {\bibfnamefont {M.~M.}\ \bibnamefont {Ara{\'u}jo~do Nascimento}}, \bibinfo {author} {\bibfnamefont {M.}~\bibnamefont {Barbatti}}, \bibinfo {author} {\bibfnamefont {L.~T.}\ \bibnamefont {Belcher}}, \bibinfo {author} {\bibfnamefont {J.-P.}\ \bibnamefont {Blaudeau}}, \bibinfo {author} {\bibfnamefont {I.}~\bibnamefont {Borges}}, \bibinfo {author} {\bibfnamefont {S.~R.}\ \bibnamefont {Brozell}}, \bibinfo {author} {\bibfnamefont {E.~A.}\ \bibnamefont {Carter}}, \bibinfo {author} {\bibfnamefont {A.}~\bibnamefont {Das}}, \bibinfo {author} {\bibfnamefont {G.}~\bibnamefont {Gidofalvi}},
  \bibinfo {author} {\bibfnamefont {L.}~\bibnamefont {Gonz{\'a}lez}}, \bibinfo {author} {\bibfnamefont {W.~L.}\ \bibnamefont {Hase}}, \bibinfo {author} {\bibfnamefont {G.}~\bibnamefont {Kedziora}}, \bibinfo {author} {\bibfnamefont {M.}~\bibnamefont {Kertesz}}, \bibinfo {author} {\bibfnamefont {F.}~\bibnamefont {Kossoski}}, \bibinfo {author} {\bibfnamefont {F.~B.~C.}\ \bibnamefont {Machado}}, \bibinfo {author} {\bibfnamefont {S.}~\bibnamefont {Matsika}}, \bibinfo {author} {\bibfnamefont {S.~A.}\ \bibnamefont {do~Monte}}, \bibinfo {author} {\bibfnamefont {D.}~\bibnamefont {Nachtigallov{\'a}}}, \bibinfo {author} {\bibfnamefont {R.}~\bibnamefont {Nieman}}, \bibinfo {author} {\bibfnamefont {M.}~\bibnamefont {Oppel}}, \bibinfo {author} {\bibfnamefont {C.~A.}\ \bibnamefont {Parish}}, \bibinfo {author} {\bibfnamefont {F.}~\bibnamefont {Plasser}}, \bibinfo {author} {\bibfnamefont {R.~F.~K.}\ \bibnamefont {Spada}}, \bibinfo {author} {\bibfnamefont {E.~A.}\ \bibnamefont {Stahlberg}}, \bibinfo {author} {\bibfnamefont
  {E.}~\bibnamefont {Ventura}}, \bibinfo {author} {\bibfnamefont {D.~R.}\ \bibnamefont {Yarkony}},\ and\ \bibinfo {author} {\bibfnamefont {Z.}~\bibnamefont {Zhang}},\ }\bibfield  {title} {\bibinfo {title} {The generality of the guga mrci approach in columbus for treating complex quantum chemistry},\ }\href {https://doi.org/10.1063/1.5144267} {\bibfield  {journal} {\bibinfo  {journal} {The Journal of Chemical Physics}\ }\textbf {\bibinfo {volume} {152}},\ \bibinfo {pages} {134110} (\bibinfo {year} {2020})},\ \Eprint {https://arxiv.org/abs/https://doi.org/10.1063/1.5144267} {https://doi.org/10.1063/1.5144267} \BibitemShut {NoStop}%
\bibitem [{\citenamefont {Fulton}\ and\ \citenamefont {Harris}(1991)}]{fultonandharris}%
  \BibitemOpen
  \bibfield  {author} {\bibinfo {author} {\bibfnamefont {W.}~\bibnamefont {Fulton}}\ and\ \bibinfo {author} {\bibfnamefont {J.}~\bibnamefont {Harris}},\ }\href {https://books.google.co.uk/books?id=TuQZAQAAIAAJ} {\emph {\bibinfo {title} {Representation Theory: A First Course}}},\ Graduate Texts in Mathematics\ (\bibinfo  {publisher} {Springer New York},\ \bibinfo {year} {1991})\BibitemShut {NoStop}%
\bibitem [{\citenamefont {Hall}(2003)}]{hall2003lie}%
  \BibitemOpen
  \bibfield  {author} {\bibinfo {author} {\bibfnamefont {B.}~\bibnamefont {Hall}},\ }\href {https://books.google.co.uk/books?id=m1VQi8HmEwcC} {\emph {\bibinfo {title} {Lie Groups, Lie Algebras, and Representations: An Elementary Introduction}}},\ Graduate Texts in Mathematics\ (\bibinfo  {publisher} {Springer},\ \bibinfo {year} {2003})\BibitemShut {NoStop}%
\bibitem [{\citenamefont {Lukas}(2020)}]{lukas2020}%
  \BibitemOpen
  \bibfield  {author} {\bibinfo {author} {\bibfnamefont {A.}~\bibnamefont {Lukas}},\ }\href {https://www-thphys.physics.ox.ac.uk/people/AndreLukas/GroupsandRepresentations/} {\bibinfo {title} {Groups and representations: Mmathphys/mscmtp lecture course}} (\bibinfo {year} {2020})\BibitemShut {NoStop}%
\bibitem [{\citenamefont {Bargmann}(1954)}]{Bargmann1954}%
  \BibitemOpen
  \bibfield  {author} {\bibinfo {author} {\bibfnamefont {V.}~\bibnamefont {Bargmann}},\ }\bibfield  {title} {\bibinfo {title} {{On Unitary ray representations of continuous groups}},\ }\href {https://doi.org/10.2307/1969831} {\bibfield  {journal} {\bibinfo  {journal} {Annals Math.}\ }\textbf {\bibinfo {volume} {59}},\ \bibinfo {pages} {1} (\bibinfo {year} {1954})}\BibitemShut {NoStop}%
\bibitem [{\citenamefont {Bouchard}(2020)}]{bouchard2020}%
  \BibitemOpen
  \bibfield  {author} {\bibinfo {author} {\bibfnamefont {V.}~\bibnamefont {Bouchard}},\ }\href {https://sites.ualberta.ca/~vbouchar/MAPH464/front.html} {\bibinfo {title} {Ma ph 464 - group theory in physics: Lecture notes}} (\bibinfo {year} {2020})\BibitemShut {NoStop}%
\bibitem [{\citenamefont {Macdonald}(1995)}]{MacdonaldSymPoly}%
  \BibitemOpen
  \bibfield  {author} {\bibinfo {author} {\bibfnamefont {I.~G.}\ \bibnamefont {Macdonald}},\ }\href {https://doi.org/10.1093/oso/9780198534891.001.0001} {\emph {\bibinfo {title} {Symmetric Functions and Hall Polynomials}}}\ (\bibinfo  {publisher} {Oxford University Press},\ \bibinfo {year} {1995})\BibitemShut {NoStop}%
\bibitem [{\citenamefont {Gross}\ \emph {et~al.}(2021)\citenamefont {Gross}, \citenamefont {Nezami},\ and\ \citenamefont {Walter}}]{Gross_2021}%
  \BibitemOpen
  \bibfield  {author} {\bibinfo {author} {\bibfnamefont {D.}~\bibnamefont {Gross}}, \bibinfo {author} {\bibfnamefont {S.}~\bibnamefont {Nezami}},\ and\ \bibinfo {author} {\bibfnamefont {M.}~\bibnamefont {Walter}},\ }\bibfield  {title} {\bibinfo {title} {Schur–weyl duality for the clifford group with applications: Property testing, a robust hudson theorem, and de finetti representations},\ }\href {https://doi.org/10.1007/s00220-021-04118-7} {\bibfield  {journal} {\bibinfo  {journal} {Communications in Mathematical Physics}\ }\textbf {\bibinfo {volume} {385}},\ \bibinfo {pages} {1325–1393} (\bibinfo {year} {2021})}\BibitemShut {NoStop}%
\bibitem [{\citenamefont {Proctor}(1989)}]{PROCTOR1989135}%
  \BibitemOpen
  \bibfield  {author} {\bibinfo {author} {\bibfnamefont {R.~A.}\ \bibnamefont {Proctor}},\ }\bibfield  {title} {\bibinfo {title} {Equivalence of the combinatorial and the classical definitions of schur functions},\ }\href {https://doi.org/https://doi.org/10.1016/0097-3165(89)90086-1} {\bibfield  {journal} {\bibinfo  {journal} {Journal of Combinatorial Theory, Series A}\ }\textbf {\bibinfo {volume} {51}},\ \bibinfo {pages} {135} (\bibinfo {year} {1989})}\BibitemShut {NoStop}%
\bibitem [{\citenamefont {Cuccaro}\ \emph {et~al.}(2004)\citenamefont {Cuccaro}, \citenamefont {Draper}, \citenamefont {Kutin},\ and\ \citenamefont {Moulton}}]{cuccaro2004newquantumripplecarryaddition}%
  \BibitemOpen
  \bibfield  {author} {\bibinfo {author} {\bibfnamefont {S.~A.}\ \bibnamefont {Cuccaro}}, \bibinfo {author} {\bibfnamefont {T.~G.}\ \bibnamefont {Draper}}, \bibinfo {author} {\bibfnamefont {S.~A.}\ \bibnamefont {Kutin}},\ and\ \bibinfo {author} {\bibfnamefont {D.~P.}\ \bibnamefont {Moulton}},\ }\href {https://arxiv.org/abs/quant-ph/0410184} {\bibinfo {title} {A new quantum ripple-carry addition circuit}} (\bibinfo {year} {2004}),\ \Eprint {https://arxiv.org/abs/quant-ph/0410184} {arXiv:quant-ph/0410184 [quant-ph]} \BibitemShut {NoStop}%
\bibitem [{\citenamefont {von Burg}\ \emph {et~al.}(2021)\citenamefont {von Burg}, \citenamefont {Low}, \citenamefont {H\"aner}, \citenamefont {Steiger}, \citenamefont {Reiher}, \citenamefont {Roetteler},\ and\ \citenamefont {Troyer}}]{MSFTQROMRotations}%
  \BibitemOpen
  \bibfield  {author} {\bibinfo {author} {\bibfnamefont {V.}~\bibnamefont {von Burg}}, \bibinfo {author} {\bibfnamefont {G.~H.}\ \bibnamefont {Low}}, \bibinfo {author} {\bibfnamefont {T.}~\bibnamefont {H\"aner}}, \bibinfo {author} {\bibfnamefont {D.~S.}\ \bibnamefont {Steiger}}, \bibinfo {author} {\bibfnamefont {M.}~\bibnamefont {Reiher}}, \bibinfo {author} {\bibfnamefont {M.}~\bibnamefont {Roetteler}},\ and\ \bibinfo {author} {\bibfnamefont {M.}~\bibnamefont {Troyer}},\ }\bibfield  {title} {\bibinfo {title} {Quantum computing enhanced computational catalysis},\ }\href {https://doi.org/10.1103/PhysRevResearch.3.033055} {\bibfield  {journal} {\bibinfo  {journal} {Phys. Rev. Res.}\ }\textbf {\bibinfo {volume} {3}},\ \bibinfo {pages} {033055} (\bibinfo {year} {2021})}\BibitemShut {NoStop}%
\bibitem [{\citenamefont {Nie}\ \emph {et~al.}(2024)\citenamefont {Nie}, \citenamefont {Zi},\ and\ \citenamefont {Sun}}]{nie2024quantumcircuitmultiqubittoffoli}%
  \BibitemOpen
  \bibfield  {author} {\bibinfo {author} {\bibfnamefont {J.}~\bibnamefont {Nie}}, \bibinfo {author} {\bibfnamefont {W.}~\bibnamefont {Zi}},\ and\ \bibinfo {author} {\bibfnamefont {X.}~\bibnamefont {Sun}},\ }\href {https://arxiv.org/abs/2402.05053} {\bibinfo {title} {Quantum circuit for multi-qubit toffoli gate with optimal resource}} (\bibinfo {year} {2024}),\ \Eprint {https://arxiv.org/abs/2402.05053} {arXiv:2402.05053 [quant-ph]} \BibitemShut {NoStop}%
\bibitem [{\citenamefont {Khattar}\ and\ \citenamefont {Gidney}(2024)}]{khattar2024riseconditionallycleanancillae}%
  \BibitemOpen
  \bibfield  {author} {\bibinfo {author} {\bibfnamefont {T.}~\bibnamefont {Khattar}}\ and\ \bibinfo {author} {\bibfnamefont {C.}~\bibnamefont {Gidney}},\ }\href {https://arxiv.org/abs/2407.17966} {\bibinfo {title} {Rise of conditionally clean ancillae for optimizing quantum circuits}} (\bibinfo {year} {2024}),\ \Eprint {https://arxiv.org/abs/2407.17966} {arXiv:2407.17966 [quant-ph]} \BibitemShut {NoStop}%
\end{thebibliography}%

\end{document}